\documentclass[aps,pra,11pt,amsmath,onecolumn,tightenlines,notitlepage,floatfix,superscriptaddress,longbibliography,eqsecnum]{revtex4-1}
\usepackage[dvipsnames]{xcolor}
\usepackage[colorlinks=true,hypertexnames=false,linkcolor=NavyBlue,urlcolor=NavyBlue,citecolor=NavyBlue,breaklinks]{hyperref}
\usepackage{amsmath,amsfonts,amssymb,amsthm}
\usepackage{mathtools}
\usepackage{thmtools}
\renewcommand\thmcontinues[1]{\textbf{continued}}
\usepackage{graphicx}
\usepackage{physics}
\usepackage{newtxtext,inconsolata}
\usepackage{enumitem}
\usepackage{longtable}

\newtheorem{theorem}{Theorem}[section]
\newtheorem{corollary}{Corollary}[section]
\newtheorem{proposition}{Proposition}[section]
\newtheorem{lemma}{Lemma}[section]
\newtheorem{example}{Example}[section]

\newcommand{\intall}{\int_{-\infty}^{\infty}}
\newcommand{\avg}[1]{\langle#1\rangle}

\newcommand{\Avg}[1]{\left\langle#1\right\rangle}

\newcommand{\bk}[1]{\qty(#1)}
\newcommand{\Bk}[1]{\qty[#1]}
\newcommand{\BK}[1]{\qty{#1}}

\newcommand{\mc}[1]{\mathcal #1}
\newcommand{\mb}[1]{\mathbb #1}

\newcommand{\insing}{e}
\newcommand{\outsing}{o}
\renewcommand{\norm}[1]{\Vert #1 \Vert}
\newcommand{\Norm}[1]{\left\Vert #1 \right\Vert}

\newcommand{\Real}{\mathfrak R}
\newcommand{\imag}{\mathfrak I}
\newcommand{\conj}{\mathfrak C}
\newcommand{\effinf}{\nabla\beta}
\newcommand{\effinfout}{\nabla_\out\beta}
\newcommand{\sdiff}{\Delta\beta}
\newcommand{\sdiffout}{\Delta_\out\beta}
\newcommand{\out}{\mathrm{out}}
\newcommand{\noise}{\mathrm{noise}}
\newcommand{\info}{\operatorname{info}}
\newcommand{\score}{\operatorname{score}}
\newcommand{\push}{\mathrm{push}}
\newcommand{\pull}{\mathrm{pull}}
\newcommand{\imap}{\operatorname{Imap}}
\newcommand{\IMAP}{\operatorname{IMAP}}
\newcommand{\Imat}{\operatorname{imat}}

\newcommand{\bnd}{\operatorname{bnd}}
\newcommand{\Bnd}{\operatorname{Bnd}}
\newcommand{\Amb}{\mc A}

\newcommand{\chan}{F}
\newcommand{\scatter}{S}
\newcommand{\Amp}{\operatorname{Amp}}
\newcommand{\inj}{\iota}
\newcommand{\jordan}{\star}
\newcommand{\covarspace}{\mc V}
\newcommand{\dual}{*}
\newcommand{\adj}{\dagger}
\newcommand{\el}{\omega}
\newcommand{\elout}{\eta}
\newcommand{\sspace}{\Omega}
\newcommand{\error}{\mathrm{error}}
\newcommand{\fock}{\Gamma}

\newcommand{\cconj}[1]{\overline{#1}}
\newcommand{\mean}{\tilde}
\DeclareMathOperator{\complete}{complete}

\DeclareMathOperator*{\argmin}{arg\,min}
\DeclareMathOperator{\spn}{span}

\DeclareMathOperator{\sinc}{sinc}

\DeclareMathOperator{\expect}{\mathbb E_\theta}

\DeclareMathOperator{\range}{range}

\DeclareMathOperator{\kernel}{kernel}
\DeclareMathOperator{\dom}{domain}
\DeclareMathOperator{\codom}{codomain}

\newcommand{\cgraphic}[2]{\centerline{\includegraphics[width=#1\textwidth]{#2}}}
\newcommand{\fig}[3]{
\begin{figure}[htbp!]
\cgraphic{#1}{#2}
\caption{\label{#2}#3}
\end{figure}
}

\begin{document}

\title{Unified theory of classical and quantum semiparametric efficiency}
\author{Mankei Tsang}
\email{mankei@nus.edu.sg}
\homepage{https://blog.nus.edu.sg/mankei/}
\affiliation{Department of Electrical and Computer Engineering, National University of Singapore, 4 Engineering Drive 3, Singapore 117583}

\affiliation{Department of Physics, National University of Singapore, 2 Science Drive 3, Singapore 117551}

\date{\today}


\begin{abstract}
  In classical and quantum statistics, high-dimensional unknown
  parameters are abundant and it is often prudent to make minimal
  assumptions about them using so-called semiparametric models.  To
  attack a wide range of semiparametric problems in one broad stroke,
  we present a unified treatment of statistical efficiency for
  classical and quantum semiparametric models, generalizing the
  Cram\'er--Rao and Helstrom bounds beyond finite-dimensional
  parameters. We introduce the fundamental concepts in abstract and
  geometric terms before applying them to many examples, covering
  general classical and quantum models as well as the paradigmatic
  special cases of Gaussian and Poisson fields. We give an in-depth
  treatment of channels in the semiparametric efficiency theory and
  advocate the use of the singular value decomposition to elucidate
  the statistical effects of channels. To demonstrate the utility of
  the formalism, we apply it to coherent and incoherent optical
  imaging problems, assuming an arbitrary field or intensity on the
  object plane without parametric assumptions. Our formalism enables
  us to compute classical and quantum limits to coherent and
  incoherent imaging resolution in statistical terms. For
  subdiffraction incoherent imaging, we demonstrate that spatial-mode
  demultiplexing can be far superior to direct imaging in estimating
  generalized Fourier coefficients and come closer to the quantum
  limits. We envision our theory becoming an essential tool for both
  classical and quantum statistics with useful applications to sensing
  and imaging, whenever minimal assumptions about a high-dimensional
  parameter should be made.
\end{abstract}

\maketitle


\section{Introduction}
An experiment often involves a large number of unknowns.  In the
language of statistics, all the unknowns are modeled as one parameter
in a high-dimensional parameter space.  To extract information about
the high-dimensional parameter from noisy observations, it is often
prudent to design the data processing based on a semiparametric model,
which generally refers to an infinite-dimensional parameter space
\cite{vaart,bickel}. Despite the vastness of the space, it is possible
to generalize the Cram\'er--Rao bound to a bound for semiparametric
models known as the semiparametric efficiency bound in classical
statistics \cite{vaart,bickel}.  By setting a lower limit on the
errors of a large class of estimators, the bound defines a fundamental
notion of statistical efficiency. In the same way the Cram\'er--Rao
bound has become a popular benchmark for many finite-dimensional
problems, the semiparametric efficiency bound has found widespread
applications in diverse areas, including biostatistics, econometrics,
and imaging
\cite{vaart,bickel,tsiatis,laan,kennedy24,newey90,spade_prr}.

In physics, infinite-dimensional parameter spaces are just as
common. An important example is optical imaging, where the unknown
parameter is, in general, the optical field or intensity on the object
plane \cite{villiers}. For physical objects, it is possible to
introduce an additional layer of physical processing by devices and
measurements before the data processing by a classical computer. For
example, the optical field arriving on the image plane of an imaging
system may be processed by linear or nonlinear optics before it is
measured by photodetectors.  To extract information about a
high-dimensional parameter from the physical objects in sensing and
imaging, it may again be prudent to design the processing, including
the physical layer, based on a semiparametric model of the
objects. When quantum effects, such as photon shot noise, are
important, the model of the objects should be quantum as well. An
efficiency bound for quantum semiparametric models was invented only
recently \cite{semi_prx}. Similar to the popular quantum Cram\'er--Rao
bound invented by Helstrom \cite{helstrom}, the semiparametric version
sets the fundamental limit on the errors of a large class of
processing. The bound has been applied to the problem of
semiparametric moment estimation in incoherent imaging
\cite{qlmoment_pra2,tan23}, proving the approximate optimality of a
measurement known as spatial-mode demultiplexing (SPADE)
\cite{tnl,review_cp,lvovsky26,spade_njp,spade_pra,zhou19,qlmoment_pra,qlmoment_pra2,tan23,tan23a}.

There should be many more areas and applications---both classical and
quantum---that can utilize the semiparametric efficiency theory. This
belief is the motivation for this work, which offers a unified
treatment of both the established classical theory \cite{vaart,bickel}
and the recently developed quantum version
\cite{semi_prx,qlmoment_pra2,tan23}.

We begin by introducing the fundamental concepts in abstract
mathematical terms in Sec.~\ref{sec_concepts}.  To elucidate the
abstract theory, we adopt a geometric perspective
\cite{rao45,amari,amari_app} with many pictures to explain the key
concepts, instead of solely relying on functional analysis or linear
algebra. The concepts are then given concrete statistical and physical
meanings through a multitude of examples introduced in
Sec.~\ref{sec_bnd}. The examples cover general classical and quantum
models as well as the paradigmatic special cases of Gaussian and
Poisson fields. This separation of the abstract mathematics and
geometry from the concrete applications allows us to achieve both
rigor and economy in our presentation. Most technical results in this
paper are known for classical models \cite{vaart,bickel}, but our
approach reveals their wider applicability. The hope is that the
fundamental theory can be applied to many more problems and need not
be repeated in future research, as was done too often in earlier
papers for special cases, such as the treatment of the Poisson and
quantum models in Refs.~\cite{spade_prr,semi_prx}. To demonstrate this
point, Sec.~\ref{sec_exa} presents many basic classical and quantum
semiparametric problems that can be solved by similar techniques.

Another emphasis of this work, as presented in
Secs.~\ref{sec_chan}--\ref{sec_chan_exa2}, is an in-depth treatment of
channels in semiparametric efficiency theory. As physical objects
often travel through many channels before reaching their destinations
in sensing, imaging, communication, and computation systems, the
importance of channels in classical and quantum information science
can hardly be overstated. Its treatment in classical semiparametric
statistics is not as prominent, however, and often confined to
specific examples \cite{bickel,vaart,trabs15}.  Here, we study how the
efficiency bounds and the related quantities transform under the
effects of channels, again starting from abstract terms before working
out concrete classical and quantum examples.

To characterize and visualize the channel effects, we advocate the
singular-value decomposition (SVD) of the linearized maps modeling a
channel. Engineers will recognize the SVD as a generalization of
Fourier analysis in signal processing; it is, of course, a fundamental
tool of superresolution research in particular
\cite{villiers,brady,xudong}. A key novel contribution of this work is
to integrate SVD with the semiparametric efficiency theory, where the
vector spaces are defined by statistics. As we will see, the geometric
picture of the SVD is intuitive, pairs well with the geometric picture
of the efficiency theory, and paves the way for numerical analysis of
semiparametric problems, in the same way Fourier analysis is useful
and vital in signal processing.

For Gaussian or Poisson fields \cite{bogachev,kingman} and their
quantum generalizations \cite{holevo_info,poisson_quantum}, we
demonstrate that the theory can be significantly simplified.  An
important application of these models is optical imaging, as studied
in Secs.~\ref{sec_coh}--\ref{sec_confocal} as well as
Appendix~\ref{app_qcoh}.  For the imaging of spatially coherent
sources, the Gaussian models are appropriate
\cite{villiers,yuen78,shapiro79,kolobov_fabre,beskrovnyy05,pinel12,taylor16,treps}
and used in Sec.~\ref{sec_coh} and Appendix~\ref{app_qcoh}. For the
imaging of incoherent sources at optical frequencies, on the other
hand, the Poisson models are appropriate
\cite{goodman_stat,zmuidzinas03,pawley,spade_prr,tnl,review_cp,lvovsky26,poisson_quantum}
and used in Secs.~\ref{sec_direct}--\ref{sec_confocal}. By presenting
the statistically motivated SVDs for the imaging systems, we can
express the fundamental classical and quantum limits in a terminology
familiar to opticians, or at least those well versed in
superresolution research \cite{villiers,brady,xudong}. As shown in
Appendix~\ref{app_num}, the numerical method we use to compute the
SVDs and the efficiency bounds is nontrivial, as care must be taken to
perform the decomposition with respect to the correct vector spaces.

As the Cram\'er--Rao bound has become a standard benchmark for
fluorescence microscopy in recent years
\cite{deschout,chao16,diezmann17}, we envision the semiparametric
efficiency bounds to be natural generalizations that should grow in
importance for imaging applications.

We will be brief on the coherent case and elaborate on the incoherent
case, because the quantum bounds for the latter case, as shown in
Sec.~\ref{sec_incoh_q}, reveal the more remarkable physics that
conventional imaging is far from optimal for subdiffraction objects,
even though the source is thermal and thus classical. We show further
in Sec.~\ref{sec_spade} that SPADE can come closer to the quantum
bounds and offer substantial enhancements over conventional
imaging. Such a result is perhaps unsurprising after all the recent
work on the subject
\cite{tnl,review_cp,lvovsky26,spade_njp,spade_pra,zhou19,qlmoment_pra,qlmoment_pra2,tan23,tan23a,superosc_ieee},
but the phenomenon represents a significant outcome from the research
of quantum limits that has real experimental and practical
implications \cite{lvovsky26,kim25,wallis26} and is worth showcasing
in the new light of the general formalism.

Sec.~\ref{sec_confocal}
briefly discusses a model of incoherent imaging with structured
illuminations, leaving a detailed analysis to future work.
Sec.~\ref{sec_ext} discusses other potential extensions of the
presented theory, while Sec.~\ref{sec_con} is the conclusion. More
background material is presented in the
appendices. Appendix~\ref{app_note} explains our basic notations and
provides a table of symbols in
Table~\ref{tab_symbols}. Appendix~\ref{app_fock} reviews the concepts
of bosonic Fock space and Gaussian states.  Appendix~\ref{app_error}
reviews the meaning of the efficiency bound in classical
statistics. Appendix~\ref{app_pinv}--\ref{app_num} describe some
important Hilbert-space concepts used in the main
text. Appendix~\ref{app_qcoh} studies a quantum model of coherent
imaging, following Sec.~\ref{sec_coh}. Appendix~\ref{app_proofs}
collects the proofs not presented in the main text or
elsewhere. Appendix~\ref{app_foot} contains footnotes about minor
issues that arise from the main text.

\section{\label{sec_concepts}Fundamental concepts}

\subsection{\label{sec_tangent}Parameter space, tangent space, parameter of interest}
In a statistical problem, the unknown variable, commonly denoted as
$\theta$, is called the parameter, and the set of all possible values
of $\theta$, denoted as $\Theta$, is called the parameter space. Let
us focus on one $\theta \in \Theta$ and call it the true parameter.
Throughout this paper, only a neighborhood around the true $\theta$
needs to be considered. Let $[t_1,t_2] \subset \mb R$ be an interval
containing $0$ and consider a curve $\phi: [t_1,t_2]\to \Theta$ in the
parameter space that passes through the true $\theta$ at
$\phi(0) = \theta$. In the context of statistics, $\phi$ is called a
parametric submodel. Define the tangent vector $\dot\phi$ associated
with the curve at $\theta$ as the directional derivative \cite{lee}
\begin{align}
\dot\phi h &= \left.\pdv{h(\phi(t))}{t}\right|_{t=0}
\end{align}
for any differentiable function $h:\Theta \to \codom h$. Denote an
appropriate set of such submodels that all pass through the same true
$\theta$ as $\Phi(\theta)$ and the set of all their tangent vectors as
the parameter tangent space $\mc T_\theta$. We assume that
$\mc T_\theta$ is a real vector space.  We do not assume any other
structure for $\Theta$ and $\mc T_\theta$ for now; the dimension of
$\mc T_\theta$, in particular, may be finite or infinite.

A direct estimation of $\theta$ is often undesirable. For example, if
$\Theta$ is infinite-dimensional, then an estimation of $\theta$ is
simply infeasible as it would require an infinite set of numbers to be
computed and reported, or a component of $\theta$ may be a nuisance
parameter that is unimportant. This issue motivates us to define a
parameter of interest $\beta:\Theta \to \mb R^q$ with finite dimension
$q$ to summarize the important features of $\theta$, such that one
estimates $\beta(\theta)$ rather than $\theta$. $\beta$ is also called
an estimand in statistics \cite{lehmann_casella}.  We further simplify
our formalism by focusing on the scalar case ($q = 1$); a
multidimensional $\beta$ can be treated by applying our formalism to
the components of $\beta$ one by one, or following the straightforward
extension in Ref.~\cite{semi_prx}. Assume that $\beta$ is
differentiable for all submodels $\phi \in \Phi(\theta)$, such that
\begin{align}
\dot\phi(\beta) &= \left.\pdv{\beta(\phi(t))}{t}\right|_{t=0}.
\end{align}
Then we can linearize $\beta$ by defining a linear functional
$\dd\beta:\mc T_\theta \to \mb R$ of tangent vectors, called the
differential of $\beta$, as
\begin{align}
\dd\beta(\dot\phi) &= \dot\phi(\beta).
\label{differential}
\end{align}
Denote the set of all linear functionals of $\mc T_\theta$, also
called tangent covectors, as $\mc T_\theta^\dual$; then
$\dd\beta \in \mc T_\theta^\dual$. Physically, $\dd\beta(\dot\phi)$
measures the sensitivity of $\beta(\theta)$ to a perturbation of the
parameter $\theta$ along the tangent vector $\dot\phi$.

\subsection{\label{sec_bnd}Submodel efficiency bound, examples}
Given a submodel $\phi \in \Phi(\theta)$, define the submodel
efficiency bound at $\theta$ as
\begin{align}
\bnd_\theta(\dot\phi) &= \frac{[\dd\beta(\dot\phi)]^2}
{\info_\theta(\dot\phi,\dot\phi)},
\label{Csub}
\end{align}
where $\info_\theta:\mc T_\theta\times\mc T_\theta \to \mb R$ is a
symmetric and positive-semidefinite bilinear form that we call the
information semimetric, or simply the information for short. We call
it a semimetric because $\info_\theta(\dot\phi,\dot\phi) = 0$ need not
imply that $\dot\phi = 0$. We call $\info_\theta(\dot\phi,\dot\phi)$
the information of the submodel $\phi$ and assume that $\Phi(\theta)$
contains only submodels with finite information. We will not consider
submodels with both $\dd\beta(\dot\phi) = 0$ and
$\info_\theta(\dot\phi,\dot\phi) = 0$, so that
$\bnd_\theta(\dot\phi) = 0/0$ will never occur, although it is
possible that $\dd\beta(\dot\phi) \neq 0$ and
$\info_\theta(\dot\phi,\dot\phi) = 0$, leading to
$\bnd_\theta(\dot\phi) = \infty$. Notice that
$\info_\theta(\dot\phi,\dot\chi)$ and $\bnd_\theta(\dot\phi)$ depend
on the submodels only in terms of their tangent vectors
$\dot\phi \in \mc T_\theta$. The information semimetric hence provides
an operational definition of the length of a tangent vector as well as
the angle between any pair of vectors. Some examples are in order.

\begin{example}[name=Classical,label=exa_classical]
  Let $P_\theta$ be a probability measure for a random element $X$ in
  the sample space $\sspace$ and some $\sigma$-algebra.  Assume that a
  dominating measure $\mu$ exists for all $\theta$ such that the
  probability density is $f(\theta,\el)= (\dv*{P_\theta}{\mu})(\el)$.
  Given a set of densities $\{f(\theta,\cdot):\theta \in \Theta\}$,
  define the so-called score function $\score(\dot\phi,\el)$ as
\begin{align}
\score(\dot\phi,\el) &= \dot\phi\Bk{\ln f(\theta,\el)} = 
\left.\pdv{}{t} \ln f(\phi(t),\el)\right|_{t=0}.
\label{score}
\end{align}
Then the Fisher information is 
\begin{align}
\info_\theta(\dot\phi,\dot\chi) &= \expect\Bk{\score(\dot\phi)\score(\dot\chi)}
= \int \score(\dot\phi,\el)
\score(\dot\chi,\el) f(\theta,\el) \dd\mu(\el),
\label{fish}
\end{align}
where $\expect$ denotes the expected value given $\theta$, and
$\bnd_\theta(\dot\phi)$ is the Cram\'er--Rao bound for the submodel at
$t = 0$ with a scalar unknown parameter $t$ and a parameter of
interest $\beta(\phi(t))$ \cite{lehmann_casella}. In particular, if
$\sspace$ is countable, we can take $f(\theta,\el)$ to be a
probability mass function, and the information becomes
\begin{align}
\info_\theta(\dot\phi,\dot\chi) &= \sum_\el  
\score(\dot\phi,\el) \score(\dot\chi,\el) f(\theta,\el).
\label{fish_countable}
\end{align}
\end{example}

\begin{example}[name=Poisson random field,label=exa_poisson]
  Let $X$ be a Poisson field, which is a random point measure with the
  points in some space $\sspace$. Its properties are completely
  specified by its mean measure on $\sspace$ \cite{kingman,cinlar}.
  Assume that the mean measure $\mean X_\theta$ depends on $\theta$,
  and the density of $\mean X_\theta$ with respect to a common measure
  $\mu$ exists and is given by
  $f(\theta,\el) = (\dv*{\mean X_\theta}{\mu})(\el)$. Unlike
  probability measures and densities, $\mean X_\theta$ and
  $f(\theta,\cdot)$ need not be normalized.  Write a linear functional
  of $X$ as
\begin{align}
L(X) &= \int L'(\el) \dd X(\omega)
\end{align}
in terms of a function $L':\sspace\to\mb R$. Then its expected value
becomes
\begin{align}
\expect(L) &= \int L'(\el) f(\theta,\el) \dd\mu(\el),
\end{align}
and the covariance between two linear functionals becomes
\begin{align}
\expect(LK)-\expect(L)\expect(K)
&= \int L'(\el) K'(\el) f(\theta,\el) \dd\mu(\el).
\end{align}
Given a set of densities $\{f(\theta,\cdot):\theta \in \Theta\}$,
define the score function as Eq.~(\ref{score}).  Then the Fisher
information has the same formula as Eq.~(\ref{fish})
\cite{takahashi90}. In particular, if $\sspace$ is countable, $X$ is a
set of independent Poisson random variables and we can set
$f(\theta,\el) = \mean X_\theta(\el)$, leading to
Eq.~(\ref{fish_countable}).
\end{example}

\begin{example}[name=Gaussian random field,label=exa_gauss]
  Let $Z \in\mc B$ be a zero-mean Gaussian field in a real separable
  Banach space $\mc B$, $\mu$ be its probability measure,
  $\mc B^\dual$ be the dual of $\mc B$, and $\mc B^{\dual\dual}$ be
  the dual of $\mc B^\dual$ \cite{bogachev_gauss}.  Each
  $L \in \mc B^\dual$ gives a zero-mean Gaussian random variable
  $L(Z)$.  The covariance map
  $\Sigma:\mc B^\dual \to \mc B^{\dual\dual}$ is defined such that
\begin{align}
(\Sigma L)(K) = \int L(\el)K(\el)\dd\mu(\el)
\end{align}
is the covariance between $L,K \in \mc B^\dual$.  Let
$\mc V \subseteq \mc B^\dual$ be the finite-variance subset of
$\mc B^\dual$ ($(\Sigma L)(L) < \infty$). Define a semi-inner product
for $\mc V$ as
\begin{align}
\Avg{L,K}_{\mc V} &= \bk{\Sigma L}(K).
\label{semi_gauss}
\end{align}
Now consider a field $X = f + Z\in \mc B$ that obeys a shifted
Gaussian measure $\mu_f(\cdot)= \mu(\cdot-f)$ with mean vector
$f\in \mc B$.  Then $\int L(\el) \dd\mu_f(\el) = L(f)$ for any
$L \in \mc B^\dual$.  $\mu_f$ and $\mu$ are equivalent if
$f \in \Sigma \mc V = \{\Sigma L: L \in \mc V\}$ (treating $f$ as
an element of $\mc B^{\dual\dual}$), such that the Radon--Nikodym
derivative is defined. Given a set of mean vectors
$\{f(\theta):\theta \in \Theta\}\subseteq \Sigma\mc V$ for $\mu_f$,
the Fisher information becomes \cite{sekine95}
\begin{align}
\info_\theta(\dot\phi,\dot\chi) &=
\Avg{\score(\dot\phi),\score(\dot\chi)}_{\mc V},
\label{info_gauss}
\end{align}
where $\score(\dot\phi)$ is any solution to
\begin{align}
\Sigma \score(\dot\phi) &= \dot\phi\Bk{f(\theta)} = 
\left.\pdv{f(\phi(t))}{t}\right|_{t=0}.
\label{score_gauss}
\end{align}
In the finite-dimensional case with
$\mc B = \mb R^p = \mc B^\dual = \mc B^{\dual\dual}$, $X$ is a vector
of $p$ Gaussian random variables with mean vector
$f(\theta) \in \mb R^p$ and $p\times p$ covariance matrix $\Sigma$.
Assume that all $\mb R^p$ vectors are column vectors. Each
$A \in \mc B^\dual$ can be expressed as $A(X) = A^\top X$.  Assume,
for simplicity, that $\Sigma$ is positive-definite ($\Sigma > 0$).
The Fisher information becomes \cite{kay}
\begin{align}
\info_\theta(\dot\phi,\dot\chi) &= 
\score(\dot\phi)^\top \Sigma \score(\dot\chi)
= (\dot\phi f)^\top \Sigma^{-1}(\dot\chi f ).
\label{info_gauss_finite}
\end{align}
\end{example}

\begin{example}[name=Quantum,label=exa_quantum]
  Let $\mc H$ be a complex separable Hilbert space. Denote the adjoint
  of an operator as the superscript $\dual$ and the trace of an
  operator as $\trace$.  An operator $\rho$ on $\mc H$ is called a
  density operator if it is self-adjoint ($\rho^\dual = \rho$),
  positive-semidefinite ($\rho \ge 0$), and unit-trace
  ($\trace \rho = 1$). A density operator models the state of a
  quantum object, just as a probability measure models the law of a
  random element. We also call self-adjoint operators observables when
  they act as generalizations of random variables.

  Let $\BK{f(\theta):\theta \in \Theta}$ be a set of density operators
  on $\mc H$ that model a quantum object. Define $\score(\dot\phi)$ as any
  self-adjoint solution to the Lyapunov equation
\begin{align}
\dot\phi\Bk{f(\theta)} &= f(\theta) \jordan \score(\dot\phi),
&
A \jordan B &= \frac{AB+BA}{2},
\label{lyap}
\end{align}
where $\jordan$ denotes the Jordan product between two operators.
$\score(\dot\phi)$ is called a quantum score operator
\cite{petz,semi_prx}, a symmetric logarithmic derivative, or the
e-representation of the tangent vector $\dot\phi$ \cite{hayashi}. The
Helstrom information, a quantum version of the Fisher information, is
given by
\begin{align}
\info_\theta(\dot\phi,\dot\chi) &= 
\trace \BK{ f(\theta) \Bk{\score(\dot\phi) \jordan \score(\dot\chi)}},
\label{helstrom}
\end{align}
and $\bnd_\theta(\dot\phi)$ is a quantum Cram\'er--Rao bound for the
submodel \cite{helstrom}, generalizing the classical bound.
\end{example}

\begin{example}[name=Quantum Poisson states,label=exa_qpoisson]
  Let $\mc H$ be the complex Hilbert space for the modes of a quantum
  field. In general, the Hilbert space for the quantum field is a
  many-body Fock space constructed by regarding $\mc H$ as the
  one-body space \cite{parth_qsc}, but a simpler type of states called
  Poisson states have been found useful in the study of weak thermal
  fields \cite{poisson_quantum}. A Poisson state models a collection
  of $L$ independent quantum bodies, such as photons.  $L$ is a
  Poisson random variable with mean $N \ge 0$, while the state of each
  body is modeled by a density operator $\rho$ on $\mc H$. A Poisson
  state is completely specified by the so-called intensity operator
  $N \rho$ on $\mc H$, which has all the properties of a density
  operator except that the trace need not be normalized.

  Given a set of intensity operators $\{f(\theta):\theta \in \Theta\}$,
  the Helstrom information obeys the same formulas as
  Eqs.~(\ref{lyap})--(\ref{helstrom}) \cite{poisson_quantum}.
\end{example}

\begin{example}[name=Quantum Gaussian states,label=exa_qgauss]
  For a bosonic quantum field with complex mode Hilbert space $\mc H$,
  a special type of states called Gaussian states on the bosonic Fock
  space $\fock(\mc H)$ have been studied extensively in quantum
  information science \cite{holevo_info}. Appendix~\ref{app_fock}
  reviews some basic results about $\fock(\mc H)$ and Gaussian states,
  following Refs.~\cite{parth_qsc,holevo_info}.

  To conform with the classical treatment in Example~\ref{exa_gauss},
  where all the spaces are real, we construct a real Hilbert space
  $\mc K$ called the phase space from the complex $\mc H$, as defined
  in Appendix~\ref{app_fock}. The real inner product of $\mc K$ is
  denoted as $\avg{\cdot,\cdot}_{\mc K}$. Each phase-space vector
  $A \in \mc K$ represents a mode of the field. Define $X(A)$ as a
  quadrature observable of the mode. A zero-mean Gaussian state $\mu$
  is specified by its covariance map $\Sigma:\mc K \to \mc K$, such
  that
\begin{align}
\trace\Bk{\mu X(A)} &= 0 \quad \forall A \in \mc K,
&
\trace\BK{\mu\Bk{X(A)\jordan X(B)}} &= \Avg{A,\Sigma B}_{\mc K}.
\end{align}
Define $\mc V$ as a vector space with the semi-inner product
\begin{align}
\Avg{A,B}_{\mc V} &= \Avg{A,\Sigma B}_{\mc K}.
\end{align}
Assume that $\Sigma$ is self-adjoint, positive-definite, bounded, and
invertible. For example, $\Sigma = I/2$ when $\mu$ is the vacuum
state. Since $\Sigma$ is invertible, $\mc V$ is isomorphic to $\mc K$
and also a Hilbert space.

Now consider a shifted Gaussian state given by
\begin{align}
\rho(f) &= W(f) \mu W(f)^\dual,
\end{align}
where $W(f)$ is the phase-space displacement operator, as defined in
Appendix~\ref{app_fock}, and $f \in \mc K$ is the displacement vector.
$f$ determines the mean of each $X(A)$ by
\begin{align}
\trace\Bk{\rho(f) X(A)} &= \Avg{f,A}_{\mc K}.
\end{align}
Given a set of mean vectors
$\{f(\theta):\theta \in \Theta\} \subseteq \mc K$ for $\rho(f)$, the
Helstrom information satisfies Eqs.~(\ref{info_gauss}) and
(\ref{score_gauss}), at least when $\mc H$ is finite-dimensional.  To
be specific, let $\mc H = \mb C^q$ and $\mc K = \mb R^{2q}$. Let
$X = (X_1,\dots,X_{2q})^\top$ be a column vector of quadrature
observables, where $(X_1,\dots,X_q)$ are the position observables for
mode $1$ to mode $q$ and $(X_{q+1},\dots,X_{2q})$ are the momentum
observables for mode $1$ to mode $q$.  The mean column vector
$f \in \mc V = \mb R^{2q}$ and the $2q\times 2q$ covariance matrix
$\Sigma$ become
\begin{align}
f_j &= \trace\bk{\rho X_j},
\label{fj}
\\
\Sigma_{jk} &= \trace\BK{\rho\Bk{\bk{X_j-f_j} \jordan \bk{X_k - f_k}}}.
\label{Sjk}
\end{align}
Then the Helstrom information has the same formula as
Eq.~(\ref{info_gauss_finite}) \cite{holevo_aspect,monras13}.
 \end{example}

\begin{example}[name=Finite-dimensional parameter space,label=exa_finite]
  Let $R \subseteq \Theta$ be a neighborhood of $\theta$ and
  $f:R \to \mb R^p$ be a chart, such that
  $f(\theta) = \mqty(f_1(\theta) & \dots &f_p(\theta))^\top$ are the
  coordinates of $\theta$.  Then the column vector
\begin{align}
\score(\dot\phi) &= \dot\phi(f) = \left.\pdv{f(\phi(t))}{t}\right|_{t=0} \in \mb R^p
\end{align}
is a Euclidean representation of $\dot\phi$. It follows that
\begin{align}
\dd\beta(\dot\phi) &= \score(\dot\phi)^\top \partial\beta,
\\
\info_\theta(\dot\phi,\dot\chi) &= 
\score(\dot\phi)^\top \Imat(\theta) \score(\dot\chi),
\end{align}
where $\partial\beta$ is a column vector, $\partial$ is defined as
\begin{align}
\bk{\partial h}_j &= \partial_j h = \pdv{}{x_j} h(f^{-1}(x)),
\end{align}
and 
\begin{align}
\Imat_{jk}(\theta) &= \info_\theta(\partial_j,\partial_k)
\label{Jmatrix}
\end{align}
is the information matrix. We stress that this result holds for any
$\info_\theta$, including the Fisher and Helstrom versions.
\end{example}

\subsection{\label{sec_score}Score tangent space}
In each of the preceding examples, observe that the information can be
expressed as
\begin{align}
\info_\theta(\dot\phi,\dot\chi) &= \Avg{\score(\dot\phi),\score(\dot\chi)}_{f(\theta)}
\end{align}
in terms of a semi-inner product between two elements
$\score(\dot\phi)$ and $\score(\dot\chi)$ of some real vector space
$\mc V$, and $\score:\mc T_\theta \to \mc V$ is a linear map.  The
information of a submodel is always expressed as
\begin{align}
\info_\theta(\dot\phi,\dot\phi) &= \norm{\score(\dot\phi)}_{f(\theta)}^2
\end{align}
in terms of the seminorm
\begin{align}
\norm{A}_{f(\theta)} &= \sqrt{\Avg{A,A}_{f(\theta)}}.
\end{align}
Observe also that $\score(\dot\phi)$ and the semi-inner product depend
on $\theta$ in terms of a function $f(\theta)$ that we call the
parametrization function, which determines a key statistic of the
model.
\begin{enumerate}
\item In Examples~\ref{exa_classical} and \ref{exa_poisson},
  $f(\theta) = f(\theta,\cdot)$ is a density function,
  $\score(\dot\phi) = \score(\dot\phi,\cdot)$ is a score function, and
  the semi-inner product between two functions is
\begin{align}
\Avg{A,B}_{f(\theta)} &= \int A(\el) B(\el) f(\theta,\el)\dd\mu(\el).
\label{inner_classical}
\end{align}

\item In Examples~\ref{exa_gauss} and \ref{exa_qgauss}, $f(\theta)$ is
  the mean vector of a classical or quantum Gaussian object,
  $\score(\dot\phi)$ is in a vector space $\mc V$ of finite-variance
  Gaussian variables, and the semi-inner product is the covariance
\begin{align}
\Avg{A,B}_{f(\theta)} &= \Avg{A,B}_{\mc V}.
\label{semi_gauss2}
\end{align}

\item In Examples~\ref{exa_quantum} and \ref{exa_qpoisson},
  $f(\theta)$ is a density operator or an intensity operator,
  $\score(\dot\phi)$ is a score operator, and the semi-inner product is
\begin{align}
\Avg{A,B}_{f(\theta)} &= \trace\Bk{f(\theta) \bk{A \jordan B}}.
\label{inner_quantum}
\end{align}

\item In Example~\ref{exa_finite}, $f(\theta)$ are the coordinates of
  $\theta$, $\score(\dot\phi)$ is a Euclidean representation of $\dot\phi$,
  and the semi-inner product is
\begin{align}
\Avg{A,B}_{f(\theta)} &= A^\top \Imat(\theta) B.
\end{align}
\end{enumerate}

We call $\Avg{\cdot,\cdot}_{f(\theta)}$ a semi-inner product (also
called a pre-inner product) and $\norm{\cdot}_{f(\theta)}$ a seminorm
because $\norm{A}_{f(\theta)} = 0$ need not imply that $A = 0$. It
will be mathematically fruitful, however, to construct from $\mc V$ an
abstract Hilbert space, where $\Avg{\cdot,\cdot}_{f(\theta)}$ becomes
a bona-fide inner product. The procedure is as follows \cite{conway}.
\begin{enumerate}
\item Let 
\begin{align}
\mc N(f) = \BK{A \in \mc V: \norm{A}_{f} = 0}
\end{align}
be the set of zero-seminorm elements of $\mc V$.  Group the elements
of $\mc V$ into equivalence classes
\begin{align}
A +\mc N(f) &= \BK{A + B: B \in \mc N(f)}.
\end{align}
Define each equivalence class as an element of a quotient space
denoted as $\mc V/\mc N(f)$.

\item For any $u,v \in \mc V/\mc N(f)$, define
\begin{align}
\Avg{u,v}_{f} &= \Avg{A_u,A_v}_{f},
&
\norm{u}_{f} &= \norm{A_u}_{f},
\end{align}
where $A_u$ is any element of the equivalence class $u$. Then
$\Avg{\cdot,\cdot}_{f}$ becomes an inner product, $\norm{\cdot}_{f}$
becomes a norm, and $\mc V/\mc N(f)$ is an inner-product space.  In
particular, $\mc N(f)$ becomes the unique zero element with
$\norm{\mc N}_f = 0$.

\item The completion of $\mc V/\mc N(f)$ by including all the limit
  points of $\mc V/\mc N(f)$ with respect to $\norm{\cdot}_{f}$ is a
  Hilbert space, denoted as $\Amb(f)$. We write the completion as
\begin{align}
\Amb(f) &= \complete\Bk{\mc V/\mc N(f)}.
\label{complete}
\end{align}
Since $\mc V/\mc N(f)$ is dense in $\Amb(f)$, we can also write
\begin{align}
\overline{\mc V/\mc N(f)} &= \Amb(f),
\end{align}
where the overline denotes the closure of the subset.
\end{enumerate}
We now make a few definitions.
\begin{enumerate}
\item We call $\mc V$ in each example the ambient vector space and the
  $\Amb\bk{f(\theta)}$ in Eq.~(\ref{complete}) the ambient Hilbert
  space.

\item Define a linear map $f_\push:\mc T_\theta \to \Amb\bk{f(\theta)}$
  that pushes each tangent vector $\dot\phi \in \mc T_\theta$ forward
  to the ambient Hilbert space $\Amb\bk{f(\theta)}$ as
\begin{align}
f_\push \dot\phi &= \score(\dot\phi) + \mc N\bk{f(\theta)} 
\in \mc V/\mc N\bk{f(\theta)}.
\end{align}
We call $f_\push$ the pushforward. 

\item The range of the pushforward is
\begin{align}
\range f_\push &= f_\push\mc T_\theta
\subseteq \mc V/\mc N\bk{f(\theta)}.
\end{align}
Define the closure of the range as the score tangent space
$\mc T\bk{f(\theta)}$, that is,
\begin{align}
\mc T\bk{f(\theta)} &= \overline{\range f_\push} \subseteq \Amb\bk{f(\theta)}.
\label{stat_space}
\end{align}
The completeness of $\Amb\bk{f(\theta)}$ ensures that the closure
stays in $\Amb\bk{f(\theta)}$, and the closure of the vector subspace
$\range f_\push$ ensures that $\mc T\bk{f(\theta)}$ is also a Hilbert
space. We call each element of $\mc T\bk{f(\theta)}$ a score.
\end{enumerate}
Because $f$, $f_\push$, and the inner product of $\mc T\bk{f(\theta)}$
all vary by the problem, the separate definitions of $\mc T_\theta$
and $\mc T(f(\theta))$, albeit cumbersome, are necessary to bring all
the problems under one general formalism. Fig.~\ref{pushforward}
illustrates the geometric concepts discussed so far.

\fig{0.7}{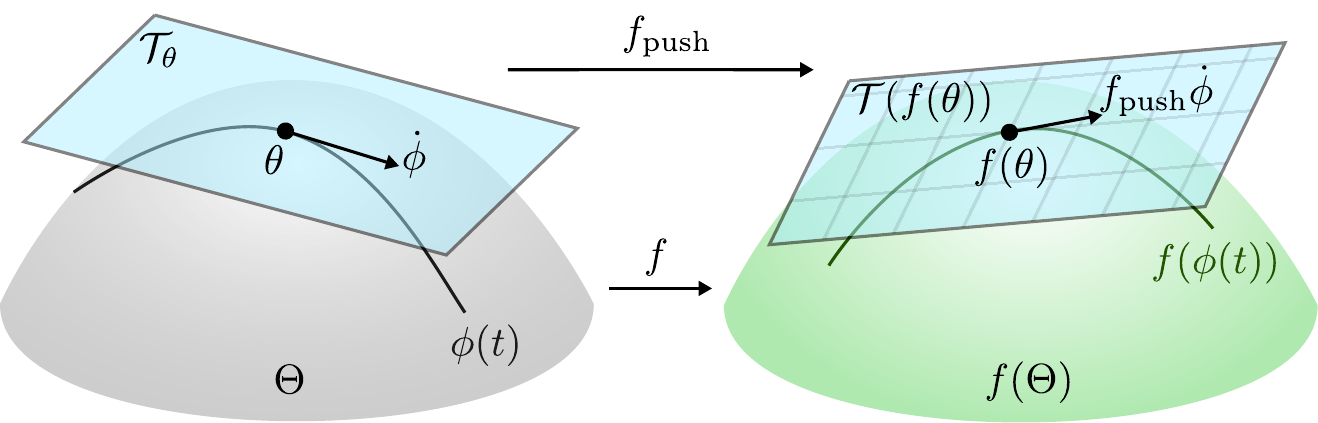}{Picture the parameter space $\Theta$ as a
  manifold.  A curve $\phi(t)$ in $\Theta$ passes through the true
  parameter $\theta$, and its tangent vector there is denoted as
  $\dot\phi$. The tangent space $\mc T_\theta$ is the set of tangent
  vectors of all curves.  The parametrization function $f$ maps the
  parameter to a key property of the statistical model, so that each
  curve $\phi(t)$ is mapped to a curve $f(\phi(t))$ in the statistical
  manifold $f(\Theta)$ on the right.  The pushforward $f_\push$ is the
  map of each tangent vector $\dot\phi$ to a score $f_\push\dot\phi$
  in the score tangent space $\mc T\bk{f(\theta)}$; the score can be
  pictured as a tangent vector of $f(\phi(t))$. The grid on
  $\mc T\bk{f(\theta)}$ represents its inner product, which determines
  the length of a tangent vector as well as the angle between any pair
  of vectors.}

In what follows, we will be less precise about the distinction between
$\mc V$ and $\mc V/\mc N(f)$, often abbreviating an equivalence class
$A + \mc N(f) \in \mc V/\mc N(f)$ as $A$. 

\subsection{\label{sec_Bnd}Semiparametric efficiency bound}
The Hilbert-space nature of the score tangent space
$\mc T\bk{f(\theta)}$ offers many convenient tools for analysis
\cite{reed_simon,debnath}, so we would now like to express the
efficiency bounds in terms of the space. We first impose the condition
\begin{align}
\dd\beta(\dot\phi) = 0  \quad \forall \dot\phi: f_\push\dot\phi = 0
\label{diff_cond}
\end{align}
on the differential $\dd\beta$ of the parameter of interest
$\beta(\theta)$ defined in
Sec.~\ref{sec_tangent}. Eq.~(\ref{diff_cond}) holds if and only if
$\dd\beta(\dot\phi) = \dd\beta(\dot\chi)$ for any
$\dot\phi,\dot\chi \in \mc T_\theta$ that satisfy
$f_\push\dot\phi = f_\push\dot\chi$. We can then define a linear
functional $\sdiff:\range f_\push \to \mb R$ by
\begin{align}
\sdiff(f_\push\dot\phi) &= \dd\beta(\dot\phi) \quad \forall \dot\phi \in \mc T_\theta.
\label{sdiff}
\end{align}
Eq.~(\ref{diff_cond}) ensures that
$\sdiff(f_\push\dot\phi) = \sdiff(f_\push\dot\chi)$ if
$f_\push\dot\phi = f_\push\dot\chi$, so that $\sdiff$ is well defined.
We call $\sdiff$ the score differential of $\beta$. Since we omit
submodels with both $\dd\beta(\dot\phi) = 0$ and
$\info_\theta(\dot\phi,\dot\phi) = 0$, Eq.~(\ref{diff_cond}) implies
that only submodels with
$\info_\theta(\dot\phi,\dot\phi) =
\norm{f_\push\dot\phi}_{f(\theta)}^2 > 0$ and thus
$f_\push\dot\phi \neq 0$ will be considered. We take the submodel
bounds $\bnd_\theta(\dot\phi)$ given by Eq.~(\ref{Csub}) for all the
considered submodels and define the supremum as the semiparametric
efficiency bound for the full model, denoted as $\Bnd(\theta)$.  If
Eq.~(\ref{diff_cond}) does not hold, set $\Bnd(\theta) = \infty$,
since $\bnd_\theta(\dot\phi) = \infty$ for the submodel(s) that
violate Eq.~(\ref{diff_cond}).  Ignoring this extreme case, we find
\begin{align}
\Bnd(\theta) &= \sup_{\phi \in \Phi(\theta):f_\push\dot\phi \neq 0} \bnd_\theta(\dot\phi)
= \sup_{u \in \range f_\push:u \neq 0} 
\frac{[\sdiff(u)]^2}{\norm{u}_{f(\theta)}^2} = \norm{\sdiff}^2,
\label{C}
\end{align}
where $\norm{\cdot}$ is a fundamental property of a linear map called
the operator norm.

If $\norm{\sdiff} < \infty$, the linear functional is called bounded
and it can be extended uniquely to a bounded functional on the closure
of the domain $\overline{\range f_\push} = \mc T(f(\theta))$
\cite{debnath}.  Denote the set of all bounded linear functionals of
$\mc T\bk{f(\theta)}$ as $\mc T^\dual\bk{f(\theta)}$. The pushforward
$f_\push$ induces a pullback
$f_\pull:\mc T^\dual\bk{f(\theta)} \to \mc T_\theta^\dual$ defined by
\begin{align}
\bk{f_\pull L}(\dot\phi) &= L(f_\push\dot\phi) 
\quad \forall \dot\phi \in \mc T_\theta,
L \in \mc T^\dual\bk{f(\theta)}.
\label{pullback}
\end{align}
We can then summarize the requirements of Eq.~(\ref{sdiff}) and
$\norm{\sdiff} < \infty$ by making the pullback of the score
differential $\sdiff$ coincide with the original differential
$\dd\beta$:
\begin{align}
f_\pull\sdiff &= \dd\beta.
\label{pullback_sdiff}
\end{align}
Fig.~\ref{covector} illustrates the concepts of tangent covectors,
differentials, and pullback geometrically.  If
Eq.~(\ref{pullback_sdiff}) has no solution for $\sdiff$ given
$\dd\beta$, $\Bnd(\theta) = \infty$. If
$\sdiff \in \mc T^\dual\bk{f(\theta)}$ exists, the parameter of
interest $\beta(\theta)$ is called pathwise differentiable in
classical statistics \cite{bickel,vaart}.

\fig{0.7}{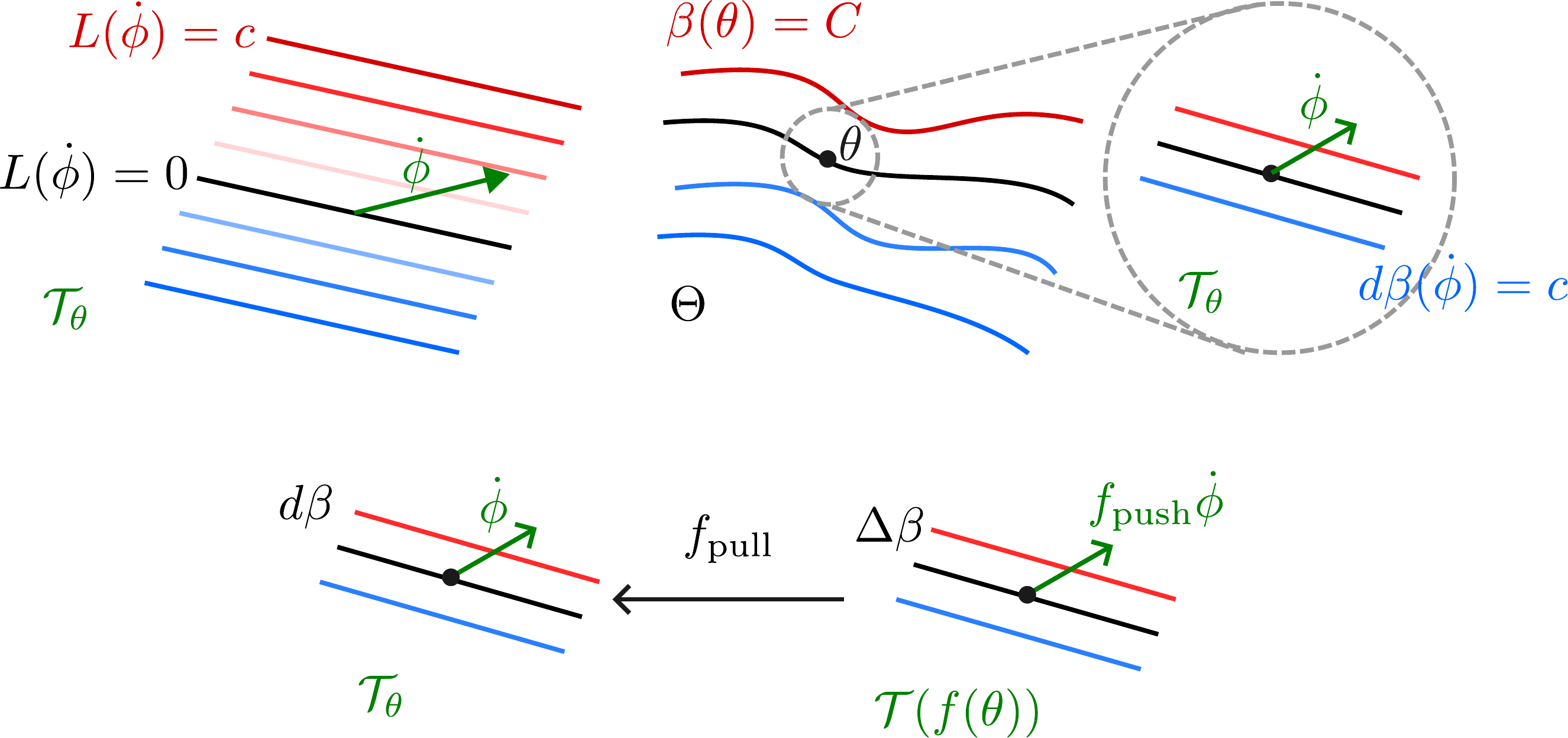}{Top left: a tangent covector
  $L \in \mc T_\theta^\dual$ can be pictured as a meter that measures
  a tangent vector $\dot\phi \in \mc T_\theta$ via the flat and
  parallel level sets $L(\dot\phi) = c$, $c \in \mb R$. Top right: the
  differential $\dd\beta$ can be pictured as level sets on
  $\mc T_\theta$ that are flat approximations of the global level sets
  of $\beta(\theta)$ on $\Theta$.  Bottom: a covector $\sdiff$ on
  $\mc T\bk{f(\theta)}$ can be pulled back to become a covector
  $(f_\pull \sdiff)(\dot\phi) = \sdiff(f_\push\dot\phi)$ on
  $\mc T_\theta$. Eq.~(\ref{pullback_sdiff}) requires the resulting
  covector to match the original differential $\dd\beta$.}

By the Riesz representation theorem,
$\norm{\sdiff} < \infty$ if and only if there exists a
score $\effinf \in \mc T\bk{f(\theta)}$ such that
\begin{align}
  \dd\beta(\dot\phi) &= \sdiff\bk{f_\push\dot\phi} 
= \Avg{\effinf,f_\push\dot\phi}_{f(\theta)}
\quad
\forall \dot\phi \in \mc T_\theta.
\label{riesz}
\end{align}
The symbol $\effinf$ is motivated by the resemblance of
Eq.~(\ref{riesz}) to the definition of the gradient in Euclidean
calculus. If $\effinf$ exists, it is the unique element in
$\mc T\bk{f(\theta)}$ that satisfies Eq.~(\ref{riesz}) and its norm
coincides with the operator norm of $\sdiff$, that is,
\begin{align}
\norm{\effinf}_{f(\theta)} = \norm{\sdiff}.
\end{align}
We have therefore just proved the following theorem.
\begin{theorem}
\label{thm_effinf}
$\Bnd(\theta) = \norm{\sdiff}^2 < \infty$ if and only if
the score differential $\sdiff$, satisfying Eq.~(\ref{sdiff}), has the
Riesz representation $\effinf \in \mc T\bk{f(\theta)}$ that satisfies
Eq.~(\ref{riesz}).  Then the efficiency bound becomes
\begin{align}
\Bnd(\theta) &= \norm{\effinf}_{f(\theta)}^2.
\end{align}
\end{theorem}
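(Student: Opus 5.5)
The argument chains three facts: the supremum formula Eq.~(\ref{C}), the unique extension of bounded functionals to the closure of their domain, and the Riesz representation theorem on the Hilbert space $\mc T\bk{f(\theta)}$. Throughout, Eq.~(\ref{diff_cond}) is assumed, since otherwise $\Bnd(\theta)=\infty$ by convention and no $\sdiff$ exists. Under that assumption $\sdiff$ is a well-defined linear functional on $\range f_\push$. I would first check that Eq.~(\ref{C}) gives $\Bnd(\theta)=\norm{\sdiff}^2$. This needs two observations. Every nonzero $u\in\range f_\push$ equals $f_\push\dot\phi$ for some considered submodel. Moreover, $\bnd_\theta(\dot\phi)=[\sdiff(u)]^2/\norm{u}_{f(\theta)}^2$, because $\info_\theta(\dot\phi,\dot\phi)=\norm{f_\push\dot\phi}^2_{f(\theta)}$ and $\dd\beta(\dot\phi)=\sdiff(f_\push\dot\phi)$. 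The supremum of this ratio is, by definition, the square of the operator norm.

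For the forward direction, suppose $\Bnd(\theta)<\infty$. Then $\sdiff$ is bounded on the dense subspace $\range f_\push$ of $\mc T\bk{f(\theta)}$. The BLT extension theorem gives a unique bounded extension to $\mc T\bk{f(\theta)}$ with the same norm: define it on limits of Cauchy sequences $u_n\to u$ by $\lim\sdiff(u_n)$, and use boundedness both for existence of the limit and for independence of the chosen sequence. Since $\mc T\bk{f(\theta)}$ is a closed subspace of the Hilbert space $\Amb\bk{f(\theta)}$, it is itself a Hilbert space. The Riesz theorem then yields $\effinf\in\mc T\bk{f(\theta)}$ with $\sdiff(u)=\avg{\effinf,u}_{f(\theta)}$ for all $u$. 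Taking $u=f_\push\dot\phi$ gives Eq.~(\ref{riesz}), and Riesz also gives $\norm{\effinf}_{f(\theta)}=\norm{\sdiff}$.

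For the converse, suppose $\effinf\in\mc T\bk{f(\theta)}$ satisfies Eq.~(\ref{riesz}). Cauchy--Schwarz gives $|\sdiff(u)|\le\norm{\effinf}_{f(\theta)}\norm{u}_{f(\theta)}$, so $\norm{\sdiff}\le\norm{\effinf}_{f(\theta)}<\infty$. For equality, take $u_n\in\range f_\push$ converging to $\effinf$, which is possible by density. Then $\sdiff(u_n)/\norm{u_n}\to\norm{\effinf}_{f(\theta)}$, provided $\effinf\neq0$; the case $\effinf=0$ is trivial, since then $\sdiff=0$. This establishes $\Bnd(\theta)=\norm{\effinf}^2_{f(\theta)}$.

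I expect the only delicate step to be the bookkeeping between seminormed representatives in $\mc V$ and their equivalence classes in $\mc V/\mc N$. In particular, $\sdiff$ must be well defined on classes and not just on tangent vectors. This is exactly what Eq.~(\ref{diff_cond}) ensures: $f_\push$ already lands in the quotient, so equal classes come from tangent vectors with equal $\dd\beta$.
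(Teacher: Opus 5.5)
Your proposal is correct and takes essentially the same route as the paper. The paper obtains $\Bnd(\theta)=\norm{\sdiff}^2$ from the supremum in Eq.~(\ref{C}), extends the bounded $\sdiff$ uniquely to the closure $\mc T\bk{f(\theta)}$, and then invokes the Riesz representation theorem; you additionally spell out the converse direction (Cauchy--Schwarz plus density to get $\norm{\sdiff}=\norm{\effinf}_{f(\theta)}$), which the paper leaves implicit in its appeal to Riesz.
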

\begin{proof}
  Use the Riesz representation theorem.
\end{proof}
Fig.~\ref{covector_norm} illustrates the concepts of the operator norm
and the Riesz representation geometrically. We call $\effinf$ the
efficient influence, following classical statistics where $\effinf$ is
called the efficient influence function \cite{bickel,vaart}. A
parametric submodel that gives a score in the same direction as
$\effinf$ is called a least favorable submodel, since it maximizes the
uncertainty about $\beta$, as quantified by $\Bnd(\theta)$, among all
submodels.

\fig{0.5}{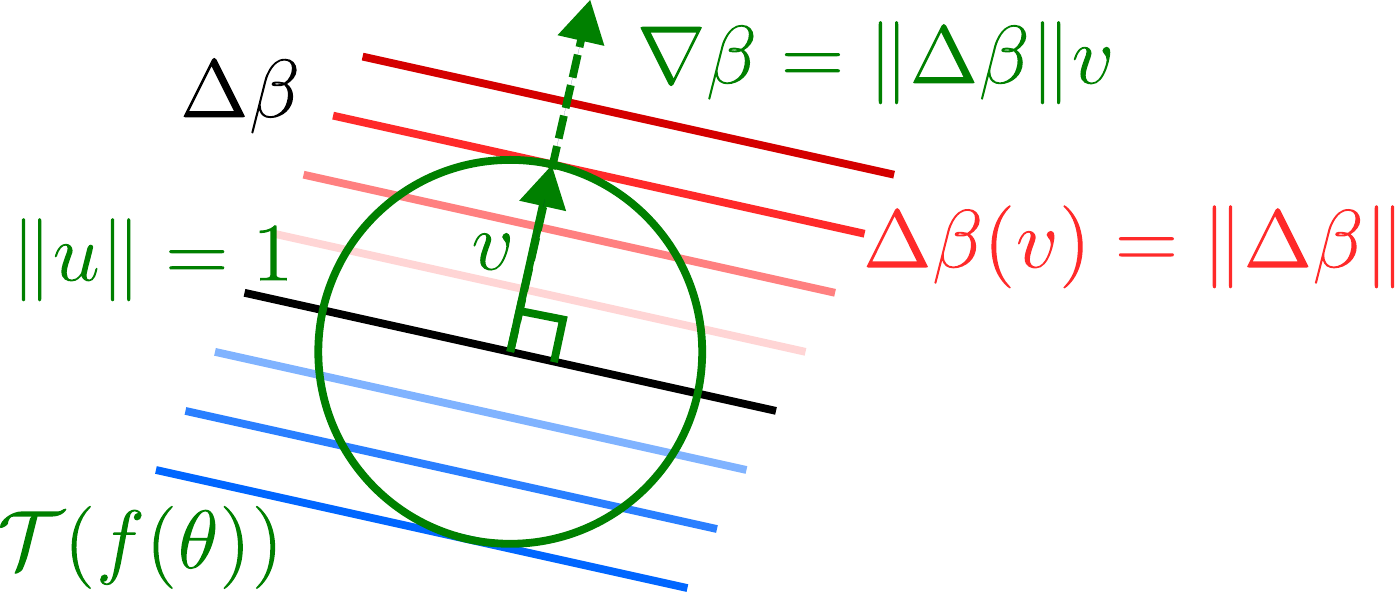}{Picture the covector $\sdiff$ as level sets
  on the score tangent space $\mc T\bk{f(\theta)}$. Consider the unit
  vectors on the unit sphere $\norm{u}_{f(\theta)} = 1$. The highest
  level $\sdiff(v)$ attained by a unit vector $v$ is the operator norm
  $\norm{\sdiff}$. The Riesz representation $\effinf$ of
  $\sdiff$ is this $v$ times $\norm{\sdiff}$. $\effinf$ is
  orthogonal to the $\sdiff(u) = 0$ level set because
  $\avg{\effinf,u}_{f(\theta)} = \sdiff(u) = 0$.}

The supremum form of the efficiency bound given by Eq.~(\ref{C}) has
been extensively studied in classical and quantum statistics and
proved to be a lower error bound for large classes of processing
methods
\cite{bickel,semi_prx,vaart,gross20,qlmoment_pra2}---Appendix~\ref{app_error}
offers a quick review of the classical case. The most well known
consequence of Eq.~(\ref{C}) may be the bound for a finite-dimensional
parameter space. Following Example~\ref{exa_finite}, the supremum in
Eq.~(\ref{C}) leads to
\begin{align}
\Bnd(\theta) &= \sup_{u \in \mb R^p:u^\top\partial\beta \neq 0} 
\frac{(u^\top\partial\beta)^2}{u^\top \Imat u}
= \bk{\partial\beta}^\top \Imat^{-1} \partial\beta
\label{C_matrix}
\end{align}
in terms of the inverse of the information matrix $\Imat$.
The case where $\Imat$ may not be invertible will be treated in
Sec.~\ref{sec_finite}.

When there are $N$ independent classical or quantum objects that
depend on the same $\theta$, the information is well known to be
additive, in the sense that the information for the $N$ objects is
\begin{align}
\info_\theta &= \sum_j \info_\theta^{(j)},
\label{info_add}
\end{align}
where $\info_\theta^{(j)}$ is the information for the $j$th object. In
particular, if the outputs are independent and identically distributed
(IID) according to a probability density or a density operator,
$\info_\theta^{(j)} = \info_\theta^{(1)}$, leading to
\begin{align}
\info_\theta &= N \info_\theta^{(1)}.
\end{align}
The efficiency bound is then simply 
\begin{align}
\Bnd(\theta) &= \frac{\Bnd^{(1)}(\theta)}{N}
\label{Bnd_N}
\end{align}
in terms of the one-copy bound $\Bnd^{(1)}$. The bound for $N$
independent but not identically distributed objects will be treated in
Sec.~\ref{sec_broadcast}.

In the quantum case (Example~\ref{exa_quantum}), Ref.~\cite{semi_prx}
calls $\Bnd(\theta)$ the generalized Helstrom bound.  When
$\beta:\Theta \to \mb R^q$ is multidimensional ($q > 1$), there is a
slightly tighter quantum bound called the Holevo--Nagaoka bound
\cite{holevo_aspect,nagaoka89,hayashi24}, although it is at most twice
the generalized Helstrom bound; see
Refs.~\cite{semi_prx,yang19,demkowicz20,tsang26} for further details.

The efficient influence $\effinf$, beyond determining the efficiency
bound, is an important quantity in the design of optimal estimators or
measurements. To wit:
\begin{enumerate}
\item In the classical case (Example~\ref{exa_classical}), $\effinf$
  is a zero-mean random variable. Given $N$ observations
  $(X_1,\dots,X_N)$ and an estimator
  $\check\beta^{(N)}(X_1,\dots,X_N)$, such as the maximum-likelihood
  estimator, $\sqrt{N}(\check\beta^{(N)}-\beta)$ may converge to a
  random variable that can be compared with $\effinf$ to determine if
  the estimator can approach the efficiency bound; see
  Appendix~\ref{app_error} for further details. There also exist
  alternative estimators, such as the one-step estimator
  \cite{kennedy24} and the targeted maximum-likelihood estimator
  \cite{laan}, that rely on the efficient influence
  $(\effinf)(\check\theta)$ evaluated at a preliminary estimate
  $\check\theta$ of $\theta$ in their algorithms to achieve asymptotic
  efficiency.

\item In the quantum case (Example~\ref{exa_quantum}), $\effinf$ is a
  zero-mean observable and $\norm{\effinf}_{f(\theta)}^2$ is its
  quantum variance. To achieve the Helstrom bound asymptotically for
  $N$ objects, an adaptive method measures the observable
  $(\effinf)(\check\theta_n)$ of the $n$th object sequentially, where
  $\check\theta_n$ is an estimate of $\theta$ based on the previous
  data \cite{fujiwara06}. The method to approach the Holevo--Nagaoka
  bound is similar \cite{yang19,demkowicz20,tsang26}.

\end{enumerate}
Thus, both $\effinf$ and $\Bnd$ are important quantities for an
estimation problem. The next section further demonstrates their
significance through examples.

\section{\label{sec_exa}Nonparametric models}
As shown in both classical and quantum statistics
\cite{bickel,semi_prx}, certain problems possess trivial score tangent
spaces $\mc T\bk{f(\theta)}$ that substantially simplify the
calculation of the efficiency bounds. This section presents some basic
examples where we make minimal assumptions about the
parametrizations; we call all of them nonparametric models. We will
make the abbreviation
\begin{align}
f(\theta) = \theta,
\end{align}
so that $\mc T(\theta) = \mc T\bk{f(\theta)}$ denotes the score
tangent space (not to be confused with $\mc T_\theta$).

\subsection{\label{sec_classical}
Nonparametric classical model, following Example~\ref{exa_classical}}
Let $f(\theta,\cdot) = \theta(\cdot)$ be a probability density for a
random element $X$. Since any probability density is normalized as
$\expect(1) = \int \theta(\el) \dd\mu(\el) = 1$, any score function
$\score(\dot\phi,\el)$ must have zero mean, that is,
\begin{align}
\expect\Bk{\score(\dot\phi)} &= \Avg{\score(\dot\phi),1}_\theta = 0,
\end{align}
where the semi-inner product is given by Eq.~(\ref{inner_classical}).
Define the ambient vector space $\mc V$ as the space of all zero-mean
functions with finite seminorm $\norm{\cdot}_\theta$:
\begin{align}
\mc V &= \BK{A: \norm{A}_\theta < \infty \textrm{ and }\Avg{A,1}_\theta = 0},
\end{align}
which correspond to zero-mean finite-variance random variables. Each
element of $\mc V/\mc N(\theta)$ is an equivalence class of random
variables that are equal almost everywhere $P_\theta$
\cite{parth_qm}. We then set
$\Amb(\theta) = \complete[\mc V/\mc N(\theta)]$.

Let the parameter space be
\begin{align}
\Theta = \BK{\textrm{any probability density with respect to $\mu$}}.
\end{align}
It is well known that the score tangent space $\mc T(\theta)$ is as
large as it can be, namely \cite{bickel},
\begin{align}
\mc T(\theta) &= \Amb(\theta).
\label{maximal}
\end{align}
We call such a tangent space maximal (it is called full-dimensional in
Ref.~\cite{semi_prx}). The key step of the proof takes an arbitrary
zero-mean function $A \in \mc V$ and constructs a parametric submodel
of $\Theta$ that passes through the true $\theta$ at $t = 0$, say,
\begin{align}
\phi_t(\el) &= \frac{\{1 + \tanh[A(\el) t]\} \theta(\el)}
{\int (\textrm{numerator}) \dd\mu}.
\end{align}
Functions other than $1 + \tanh$ may be used, but $1+\tanh$ makes it
obvious that the denominator is finite. The score function is
\begin{align}
\score(\dot\phi,\el)
&= \dot\phi\Bk{\ln \theta(\el)} = \left.\pdv{}{t} \ln \phi_t(\el)\right|_{t=0}
= A(\el),
\end{align}
implying that
\begin{align}
A + \mc N(\theta) = \score(\dot\phi) + \mc N(\theta) = f_\push\dot\phi \in \mc T(\theta)
\end{align}
and thus $\mc V/\mc N(\theta) \subseteq \mc T(\theta)$.  Since
$\mc T(\theta)$ is closed,
$\overline{\mc V/\mc N(\theta)} = \Amb(\theta) \subseteq \mc
T(\theta)$. Combining this relation with Eq.~(\ref{stat_space}), we
obtain $\mc T(\theta) = \Amb(\theta)$.

We note in passing that the maximality of $\mc T(\theta)$ depends not
only on the parameter space $\Theta$ but also on our choice of the
ambient vector space $\mc V$. If one chooses $\mc V$ to be the set of
score functions $\{\score(\dot\phi): \dot\phi \in \mc T_\theta\}$,
then $\mc T(\theta)$ is always maximal. In the other extreme, if
$\mc V$ is not restricted to have zero mean, then $\mc T(\theta)$ can
never be maximal for Example~\ref{exa_classical}.

Let the parameter of interest be the expected value of a random
variable $b(X)$, namely,
\begin{align}
\beta(\theta) &= \expect(b) = \int b(\el) \theta(\el) \dd\mu(\el).
\label{linear_beta}
\end{align}
Then, for any submodel $\phi$,
\begin{align}
\dd\beta(\dot\phi) &= 
\dot\phi(\beta) = \left.\pdv{}{t} \int b(\el) \phi_t(\el) \dd\mu(\el)\right|_{t=0}
= \int b(\el) \score(\dot\phi,\el) \theta(\el) \dd\mu(\el)
= \Avg{b,f_\push\dot\phi}_\theta.
\end{align}
It follows that, given $\mc T(\theta) = \Amb(\theta)$,
\begin{align}
\effinf = b - \Avg{b,1}_\theta \in \Amb(\theta) =  \mc T(\theta)
\end{align}
satisfies the definition of the efficient influence given by
Eq.~(\ref{riesz}). The efficiency bound is hence
\begin{align}
\Bnd(\theta) &= \norm{b-\Avg{b,1}_\theta}_\theta^2
= \expect\BK{\Bk{b-\beta(\theta)}^2}.
\end{align}
Notice that this bound is equal to the variance of $b$, so the estimator
\begin{align}
\check\beta(X) = b(X)
\end{align}
in terms of the random element $X$ is unbiased
($\expect(\check\beta) = \beta(\theta)$) and has a variance achieving
the bound. Such an estimator is called efficient. In general, when an
estimator $\check\beta(X)$ is efficient for one observation, its sample
mean
\begin{align}
\check\beta^{(N)}\bk{X_1,\dots,X_N} &= \frac{1}{N}\sum_{j=1}^N \check\beta(X_j)
\end{align}
for $N$ IID copies $(X_1,\dots,X_N)$ of $X$ is also efficient, since
it is unbiased and has a variance equal to the one-copy variance
divided by $N$, which achieves the $N$-copy bound given by
Eq.~(\ref{Bnd_N}).

\subsection{\label{sec_poisson}Nonparametric classical Poisson model,
  following Example~\ref{exa_poisson}}
Let $f(\theta,\cdot) = \theta(\cdot)$ be the mean density of a Poisson
field $X$.  It need not be normalized, so we take $\mc V$ to be the
space of all functions with finite seminorm
$\norm{\cdot}_{\theta} < \infty$, that is,
\begin{align}
\mc V &= \BK{A: \norm{A}_\theta < \infty},
\end{align}
with the semi-inner product given by Eq.~(\ref{inner_classical}).
Define $\Amb(\theta) = \complete[\mc V/\mc N(\theta)]$. If we take the
parameter space to be
\begin{align}
\Theta = \BK{\textrm{any density with respect to $\mu$}},
\end{align}
then it is straightforward to prove that the score tangent space
is maximal---take any function $A \in \mc V$ and construct the
submodel
\begin{align}
\phi_t(\el) &= \BK{1 + \tanh[A(\el) t]} \theta(\el),
\end{align}
which leads to $\Amb(\theta) \subseteq \mc T(\theta)$ and
$\mc T(\theta) = \Amb(\theta)$ by the same argument as that in
Sec.~\ref{sec_classical}.

Let the parameter of interest be a linear functional $b$ of the mean
density $\theta$; it is mathematically given by
Eq.~(\ref{linear_beta}). By the same argument as that in
Sec.~\ref{sec_classical}, $b$ satisfies the definition of the
efficient influence given by Eq.~(\ref{riesz}).  Hence,
\begin{align}
\effinf &= b,
&
\Bnd(\theta) &= \norm{b}_\theta^2 = \int b(\el)^2 \dd\mean X_\theta(\el).
\end{align}
It follows that the linear estimator
\begin{align}
\check\beta(X) &= \int b(\el) \dd X(\el)
\end{align}
is unbiased and efficient.

\subsection{\label{sec_quantum}Nonparametric quantum model, 
following Example~\ref{exa_quantum}}
Let $\theta$ be a density operator on a complex Hilbert space $\mc
H$. The quantum case is similar to the classical case in
Sec.~\ref{sec_classical}.  Assume the semi-inner product given by
Eq.~(\ref{inner_quantum}). Since all density operators have unit trace,
the score operators must have zero mean, namely,
\begin{align}
\trace\Bk{\theta \score(\dot\phi)} = \Avg{\score(\dot\phi),I}_\theta = 0.
\end{align}
Let the ambient vector space $\mc V$ be the set of all self-adjoint
operators that are bounded when applied to $\mc H$ and have zero
mean. For an infinite-dimensional $\mc H$, $\mc V/\mc N(\theta)$ turns
out to be incomplete; its completion includes operators that are
symmetric and unbounded when applied to $\mc H$
\cite{holevo_aspect}. The resulting ambient Hilbert space
$\Amb(\theta) = \complete[\mc V/\mc N(\theta)]$ comprises zero-mean
observables with finite quantum variance $\norm{A}_\theta^2 < \infty$.

Assume the nonparametric model
\begin{align}
\Theta &= \BK{\textrm{any density operator on $\mc H$}}.
\end{align}
Then $\mc T(\theta)$ can be proved to be maximal by assuming the
submodel
\begin{align}
\phi(t) &= \frac{\Bk{I + \tanh(A t/2)} \theta \Bk{I + \tanh(A t/2)} }
{\trace(\textrm{numerator})}
\end{align}
for any $A \in \mc V$ and following an argument similar to the
classical case in Sec.~\ref{sec_classical} \cite{semi_prx}.

Let the parameter of interest be the expected value of an observable
$b$, namely,
\begin{align}
\beta(\theta) &= \trace\bk{b\theta}.
\label{linear_beta2}
\end{align}
Then, similar to the classical case, we find \cite{semi_prx}
\begin{align}
\effinf &= b - \Avg{b,I}_\theta I,
&
\Bnd(\theta) &= \norm{b-\Avg{b,I}_\theta I}_\theta^2 = \trace\bk{ b^2\theta}
-\Bk{\trace(b\theta)}^2,
\end{align}
which is the quantum variance of $b$. If one measures $b$ in the sense
of von Neumann, the expected value of the outcome coincides with
$\trace(b\theta)$ and the variance coincides with the quantum
variance. We call an observable with an unbiased expected value and a
quantum-optimal variance an (operator-valued) efficient
estimator---$b$ is an efficient estimator for this example.

\subsection{\label{sec_qpoisson}Nonparametric quantum Poisson model,
following Example~\ref{exa_qpoisson}}
Let $\theta$ be the intensity operator of a Poisson state. Since
$\theta$ need not be normalized, take $\mc V$ to be the space of
all bounded self-adjoint operators and define
$\Amb(\theta) = \complete[\mc V/\mc N(\theta)]$, assuming the
semi-inner product given by Eq.~(\ref{inner_quantum}). With the
nonparametric model
\begin{align}
\Theta &= \BK{\textrm{any intensity operator on $\mc H$}},
\end{align}
we can again prove that $\mc T(\theta)$ is maximal by assuming the
submodel
\begin{align}
\phi(t) &= \Bk{I + \tanh(A t/2)} \theta \Bk{I + \tanh(A t/2)}
\end{align}
for any $A \in \mc V$ and using a similar argument as before. For a
parameter of interest that is a linear functional of $\theta$ given by
Eq.~(\ref{linear_beta2}), we find
\begin{align}
\effinf &= b,
&
\Bnd(\theta) &= \norm{b}_\theta^2 = \trace\bk{b^2\theta}.
\end{align}
To achieve this bound, find a decomposition of $b$ given by
\begin{align}
b &= \int b'(\el) \dd E(\el)
\label{pvm}
\end{align}
in terms of a projection-valued measure $E$ on $\mc H$ and measure
each of the $L$ bodies in the Poisson object according to $E$. The
outcomes form a Poisson field $X$ with mean measure
$\mean X_\theta(\cdot) = \trace[E(\cdot)\theta]$.  The estimator
$\check\beta(X) = \int b'(\el)\dd X(\el)$ is then unbiased and
efficient. In optics, such a measurement can be implemented by linear
optics and photon counting \cite{review_cp,spade_prr}.

\subsection{\label{sec_gauss}Nonparametric Gaussian shift models,
  following Examples~\ref{exa_gauss} and \ref{exa_qgauss}}
Let $\theta$ be the mean vector of a Gaussian field $X$ or a quantum
Gaussian state. The score vector spaces are denoted as $\mc V$ in both
examples, and we take them to be the ambient vector spaces.  If the
mean vector is assumed to be arbitrary, we can prove that
$\mc T(\theta)$ is maximal by taking any $A \in \mc V$ and
constructing the submodel
\begin{align}
\phi(t) &= \theta + \Sigma A t.
\end{align}
The score given by Eq.~(\ref{score_gauss}) becomes
\begin{align}
\Sigma \score(\dot\phi) &= 
\dot\phi(\theta) 
= \left.\pdv{\phi(t)}{t}\right|_{t=0} = \Sigma A,
&
\score(\dot\phi) &= A,
\end{align}
leading to $\mc T(\theta) = \Amb(\theta)$ by a similar argument as
before.

For the classical Example~\ref{exa_gauss}, let the parameter of
interest be a linear functional of the mean $\theta$, given by
\begin{align}
\beta(\theta) &= b(\theta),
\quad
b \in \covarspace.
\label{beta_gauss}
\end{align}
Then 
\begin{align}
\dd\beta(\dot\phi) = \dot\phi(\beta) = b\bk{\dot\phi(\theta)}
= b\Bk{\Sigma\score(\dot\phi)}
= \Avg{b,\score(\dot\phi)}_{\mc V}
= \Avg{b,f_\push\dot\phi}_{\theta},
\label{diff_gauss}
\end{align}
where we have used Eqs.~(\ref{score_gauss}) and (\ref{semi_gauss}).
It follows that $b \in \Amb(\theta) = \mc T(\theta)$ satisfies the
definition of the efficient influence given by Eq.~(\ref{riesz}),
leading to
\begin{align}
\effinf &= b,
&
\Bnd(\theta) &= \norm{\effinf}_{\mc V}^2 = \bk{\Sigma b}(b).
\label{effinf_gauss}
\end{align}
The linear estimator $\check\beta(X) = b(X)$ in terms of the Gaussian
field $X$ is then unbiased and efficient.

Similarly, for the quantum Example~\ref{exa_qgauss}, assume
\begin{align}
\beta(\theta) &= \Avg{b,\theta}_{\mc K},
\end{align}
which is the mean quadrature of a mode specified by the phase-space
vector $b \in \mc K$. Then
\begin{align}
\dd\beta(\dot\phi) = \dot\phi(\beta) = \Avg{b,\dot\phi(\theta)}_{\mc K}
= \Avg{b,\score(\dot\phi)}_{\mc V}
= \Avg{b,f_\push\dot\phi}_\theta,
\end{align}
and the efficient influence is also $b$. The bound becomes
\begin{align}
\Bnd(\theta) &= \norm{\effinf}_{\mc V}^2 = \Avg{b,\Sigma b}_{\mc K}.
\end{align}
The quadrature $X(b)$ is hence an unbiased and efficient estimator.

In the finite-dimensional case with $\mc B = \mb R^p$, assuming
\begin{align}
\beta(\theta) = b^\top \theta
\end{align}
for some column vector $b \in \mb R^p$, we obtain
\begin{align}
\effinf &= b,
&
\Bnd(\theta) &= \bk{\effinf}^\top \Sigma \effinf
= b^\top \Sigma b,
\end{align}
which is the variance of the random variable $b^\top X$ in the
classical case or the variance of the quadrature $b^\top X$ in the
quantum case. This result can also be obtained from the
finite-dimensional bound in Eq.~(\ref{C_matrix}), since
$\Imat = \Sigma^{-1}$ for Gaussian shift models \cite{kay,monras13},
but the infinite-dimensional case is far less trivial to prove.

\section{\label{sec_chan}Channels}

\subsection{\label{sec_chan_exa}Examples}
A channel is a system that takes a classical or quantum object as the
input and produces another object as the output. We first motivate
this section by presenting some examples of channels based on
Examples~\ref{exa_classical}--\ref{exa_qgauss}. In each of the
following examples, the output parametrization function $g(\theta)$ is
always related to the input parametrization function $f(\theta)$ by
the relation
\begin{align}
g(\theta) &= \bk{\chan \circ f}(\theta) = \chan\bk{f(\theta)},
\label{markov_op}
\end{align}
where $\circ$ denotes the composition and $\chan$ is determined by the
channel; we call $\chan$ the channel map.

\begin{example}[name=Classical channel,label=exa_cc]
  Let the input be the random element $X$ with probability density
  $f(\theta,\cdot)$ in Example~\ref{exa_classical}. For each outcome
  $\el \in \sspace$ of $X$, generate an output random element
  $Y \in \sspace_\out$ based on the probability density
  $\chan(\elout|\el)$ with respect to a common measure $\nu$.  Let
  $Q_\theta$ be the marginal probability measure of $Y$ and
  $g(\theta,\cdot) = (\dv*{Q_\theta}{\nu})(\cdot)$ be the probability
  density. Then the channel can be modeled as
\begin{align}
g(\theta,\elout) &= \int \chan(\elout|\el) f(\theta,\el) \dd\mu(\el),
\label{markov}
\end{align}
which is an example of Eq.~(\ref{markov_op}) if $\chan$ there is
regarded as a Markov map. The output score function becomes
\begin{align}
\score_\out(\dot\phi,\elout) = \dot\phi\Bk{\ln g(\theta,\elout)}
= \frac{1}{g(\theta,\elout)} \int \chan(\elout|\el) f(\theta,\el) 
\score(\dot\phi,\el) \dd\mu(\el),
\label{score_out}
\end{align}
which coincides with the conditional expectation of the input score
function $\score(\dot\phi,\cdot)$ by Bayes' rule.
\end{example}

\begin{example}[name=Classical Poisson channel,label=exa_pp]
  Let the input be the Poisson field $X$ with mean density
  $f(\theta,\cdot)$ in Example~\ref{exa_poisson}. For each point
  $\el \in \sspace$ of the point measure $X$, generate an output point
  $\elout \in \sspace_\out$ based on the probability density
  $\chan(\elout|\el)$.  Then the resulting point measure $Y$ is also a
  Poisson field \cite{kingman}. Let the mean measure of $Y$ be
  $\mean Y_\theta$; the mean density
  $g(\theta,\cdot) = (\dv*{\mean Y_\theta}{\nu})(\cdot)$ obeys
  Eq.~(\ref{markov}). The output score function has the same formula
  as Eq.~(\ref{score_out}).
\end{example}

\begin{example}[name=Classical Gaussian channel,label=exa_gg]
  Let the input be the Gaussian field $X \in \mc B$ with mean vector
  $f(\theta) \in \Sigma \mc V$ in Example~\ref{exa_gauss}.  Let
\begin{align}
Y = \chan X + Z
\label{io}
\end{align}
be the output of a linear map $\chan:\mc B \to \mc B_\out$ on $X$
followed by the addition of an independent zero-mean Gaussian noise
$Z \in \mc B_\out$.  Then $Y$ is also a Gaussian field with a
covariance map
$\Sigma_\out:\mc B_\out^\dual \to \mc B_\out^{\dual\dual}$. Let
$\mc V_\out$ be the finite-variance subset of $\mc B_\out^\dual$ with
the semi-inner product
\begin{align}
\Avg{\cdot,\cdot}_{\mc V_\out} = (\Sigma_\out \cdot)(\cdot).
\end{align}
Assume that $\chan$ can be restricted to be a bounded map
$\chan:\Sigma \mc V \to \Sigma_\out\mc V_\out$, so that the output
mean vector in $\Sigma_\out\mc V_\out$ becomes
\begin{align}
g(\theta) &= \chan f(\theta).
\label{gmean_out}
\end{align}
Following Eq.~(\ref{score_gauss}), the output score vector in
$\mc V_\out$ becomes
\begin{align}
\Sigma_\out \score_\out(\dot\phi) &= 
\dot\phi\Bk{g(\theta)} = \chan\dot\phi\Bk{f(\theta)}
= \chan \Sigma \score(\dot\phi).
\label{pushforward_gauss}
\end{align}
In the finite-dimensional case with $\mc B= \mb R^p$ and
$\mc B_\out = \mb R^q$, $\chan$ is a $q\times p$ matrix, while the
output covariance matrix becomes
\begin{align}
\Sigma_{\out}&= 
\chan \Sigma \chan^\top + \Sigma_\noise,
\label{gcovar_out}
\end{align}
where $\Sigma_\noise$ is the covariance matrix of the noise field $Z$.
Assume $\Sigma_\out > 0$ for simplicity.  The inner product of
$\mc V_\out$ becomes
\begin{align}
\Avg{A,B}_{\mc V_\out} &= A^\top \Sigma_\out B.
\label{inner_gauss_finite2}
\end{align}
\end{example}

\begin{example}[name=Quantum channel,label=exa_qq]
  Let the input be a quantum object modeled by the density operator
  $f(\theta)$ in Example~\ref{exa_quantum}.  If the output is another
  quantum object with density operator $g(\theta)$ on complex Hilbert
  space $\mc H_\out$, a quantum channel is commonly modeled as
\begin{align}
g(\theta) &= \chan f(\theta)
\label{tpcp}
\end{align}
in terms of a trace-preserving completely positive map $\chan$
\cite{bratteli} that we call a Markov map. Following Eq.~(\ref{lyap}),
the output score operator $\score_\out(\dot\phi)$ satisfies
\begin{align}
g(\theta) \jordan \score_\out(\dot\phi)
&= \chan \Bk{f(\theta) \jordan \score(\dot\phi)}
\label{gce}
\end{align}
in terms of each input score operator $\score(\dot\phi)$ with respect
to $f(\theta)$.  A solution $\score_\out(\dot\phi)$ to this Lyapunov
equation is called a generalized conditional expectation of
$\score(\dot\phi)$ \cite{hayashi,gce_pra,gce2}.
\end{example}

\begin{example}[name=Quantum Poisson channel,label=exa_qpp]
  Let the input be a many-body object in a Poisson state modeled by
  the intensity operator $f(\theta)$ in Example~\ref{exa_qpoisson}.
  If each body in the object undergoes a quantum channel described by
  a Markov map $\chan$, then the output is also a Poisson state with
  the intensity operator $g(\theta)$ given by Eq.~(\ref{tpcp})
  \cite{poisson_quantum}.  The output score operator is determined by
  Eq.~(\ref{gce}).
\end{example}

\begin{example}[name=Quantum Gaussian channel,label=exa_qgg]
  Let the input be a bosonic field in a Gaussian state with mean
  phase-space vector $f \in \mc K$ and covariance map
  $\Sigma:\mc K \to \mc K$ in Example~\ref{exa_qgauss}. The output of
  a Gaussian channel, as described in Appendix~\ref{app_fock}, is
  another bosonic field in a Gaussian state $\rho_\out$ with mean
  vector $g \in \mc K_\out$ and covariance map
  $\Sigma_\out:\mc K_\out \to \mc K_\out$, satisfying
\begin{align}
g &= \chan f,
\label{chan_gauss}
\\
\Sigma_\out &= \chan \Sigma \chan^\dual + \Sigma_\noise
\label{chan_gauss_cov}
\end{align}
for some linear map $\chan:\mc K\to\mc K_\out$, its adjoint
$\chan^\dual:\mc K_\out\to \mc K$, and noise covariance map
$\Sigma_\noise :\mc K_\out \to \mc K_\out$.
Eq.~(\ref{chan_gauss_cov}) becomes Eq.~(\ref{gcovar_out}) in the
finite-dimensional case \cite[Eq.~(12.138)]{holevo_info}. Assuming
that $\Sigma_\out$ is invertible, the inner product of the Hilbert
space $\mc V_\out$ for the output scores becomes
\begin{align}
\Avg{A,B}_{\mc V_\out} &= \Avg{A,\Sigma_\out B}_{\mc K_\out}.
\end{align}
Given a set of input mean vectors
$\{f(\theta):\theta \in \Theta\} \subseteq \mc K$, the output score
vector satisfies Eq.~(\ref{pushforward_gauss}).
\end{example}

\subsection{\label{sec_chan_push}Channel pushforward}
Countless more examples can be constructed, as one may further mix and
match the examples in Sec.~\ref{sec_bnd}. In all the cases,
Eq.~(\ref{markov_op}) models the relations between the input and
output parametrization functions via a channel map $\chan$. Let
$\mc V_\out$ be a judiciously chosen ambient vector space for the
output scores and
\begin{align}
\Amb\bk{g(\theta)} = \complete\Bk{\mc V_\out/\mc N(g)}
\end{align}
be the ambient Hilbert space at the output. For any channel map
$\chan$, even if it is nonlinear, the pushforward
$g_\push:\mc T_\theta \to \Amb\bk{g(\theta)}$ follows the covariant
chain rule
\begin{align}
g_\push &= \chan_\push f_\push
\label{cov}
\end{align}
in terms of a linear pushforward
\begin{align}
\chan_\push: \range f_\push \to \Amb\bk{g(\theta)},
\end{align}
as long as the chain rule holds for $\dot\phi$ in the computation of
the output scores. The output score tangent space becomes
\begin{align}
\mc T\bk{g(\theta)} &= \overline{\range g_\push}
= \overline{\chan_\push \range f_\push}.
\label{Tout}
\end{align}
We call $\chan_\push$ the channel pushforward, which pushes each input
score in $\range f_\push \subseteq \mc T\bk{f(\theta)}$ to become an
output score in $\mc T\bk{g(\theta)}$.  Fig.~\ref{channel_pushforward}
illustrates the concepts of $\chan_\push$ and $\mc T(g)$
geometrically. $\chan_\push$ can be considered as a linearized version
of the global channel map $\chan$, with statistically motivated input
and output vector spaces.

\fig{0.7}{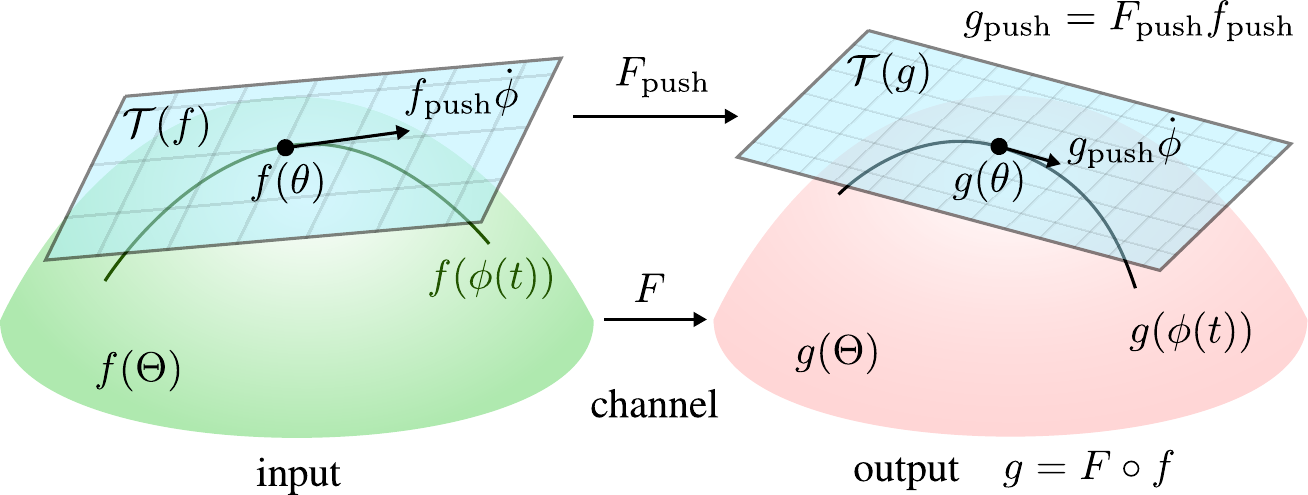}{ The channel map $\chan$ relates the
  input parametrization function $f(\theta)$ to the output
  parametrization function $g(\theta)$ by $g = \chan \circ f$. This
  map induces a linear channel pushforward map $\chan_\push$ that
  relates each input score $f_\push \dot\phi$ in the input score
  tangent space $\mc T(f)$ to the output score $g_\push\dot\phi$ in
  the output score tangent space $\mc T(g)$ by
  $g_\push = \chan_\push f_\push$.}

Because the domain and codomain of $\chan_\push$ are both subsets of
Hilbert spaces, a lot more can be said about $\chan_\push$, such as
its operator norm and its adjoint. We will assume that the operator
norm of $\chan_\push$, defined as
\begin{align}
\norm{\chan_\push}
&= \sup_{u \in \range f_\push: u \neq 0}\frac{\norm{\chan_\push u}_g}{\norm{u}_f},
\end{align}
is finite, so that $\chan_\push$ can be uniquely extended to become a
bounded map on the domain
$\overline{\range f_\push} = \mc T\bk{f(\theta)}$. For example,
channels with no access to $\theta$ can satisfy this assumption
because of information monotonicity, as discussed later in
Sec.~\ref{sec_mono}. The range of $\chan_\push$ becomes
\begin{align}
\range \chan_\push &= \chan_\push \mc T\bk{f(\theta)}.
\end{align}
We then obtain the following lemma regarding the output score tangent
space $\mc T\bk{g(\theta)}$, similar to Eq.~(\ref{stat_space}). For
brevity, we will often omit any mention of the argument $\theta$
hereafter.
\begin{lemma}
\label{lem_h}
\begin{align}
\mc T(g)  &= \overline{\range \chan_\push}.
\end{align}
\end{lemma}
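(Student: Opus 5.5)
We have to show $\overline{\chan_\push\range f_\push} = \overline{\chan_\push \mc T(f)}$, where $\chan_\push$ now denotes the unique bounded extension to $\mc T(f) = \overline{\range f_\push}$. By Eq.~(\ref{Tout}), $\mc T(g) = \overline{\chan_\push \range f_\push}$. So the lemma reduces to showing that closing the image of a dense subspace gives the same result as closing the image of the whole space. We prove the two inclusions separately.

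\emph{First inclusion.} Since $\range f_\push \subseteq \mc T(f)$, we have $\chan_\push\range f_\push \subseteq \chan_\push\mc T(f) = \range\chan_\push$. Taking closures in $\Amb(g)$ gives $\mc T(g) \subseteq \overline{\range\chan_\push}$.

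\emph{Second inclusion.} Take any $v = \chan_\push u$ with $u \in \mc T(f)$. Because $\range f_\push$ is dense in $\mc T(f)$, there is a sequence $u_n \in \range f_\push$ with $\norm{u_n - u}_f \to 0$. Boundedness of the extended $\chan_\push$ then gives
\begin{align}
\norm{\chan_\push u_n - \chan_\push u}_g \le \norm{\chan_\push}\,\norm{u_n-u}_f \to 0 .
\end{align}
Each $\chan_\push u_n$ lies in $\chan_\push\range f_\push$, so $v$ lies in the closure of that set, which is $\mc T(g)$. Hence $\range\chan_\push \subseteq \mc T(g)$. Since $\mc T(g)$ is closed in the Hilbert space $\Amb(g)$, this gives $\overline{\range\chan_\push} \subseteq \mc T(g)$.

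Combining the two inclusions yields the lemma. The only point needing care is that the extension of $\chan_\push$ is defined by continuity, i.e.\ as the limit of $\chan_\push u_n$. This is exactly what the finite-operator-norm assumption guarantees, and it is precisely the step used in the second inclusion. Everything else is routine closure bookkeeping inside the complete space $\Amb(g)$.
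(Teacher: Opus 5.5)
Your proof is correct and follows essentially the same route as the paper's. You obtain $\mc T(g) \subseteq \overline{\range \chan_\push}$ from $\range f_\push \subseteq \mc T(f)$. For the reverse inclusion you use continuity of the extended $\chan_\push$ to get $\chan_\push \overline{\range f_\push} \subseteq \overline{\chan_\push \range f_\push}$ and then take closures; your sequence argument simply writes out the continuity step that the paper states in one line.
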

The proof is delegated to Appendix~\ref{app_proofs}.  In what follows,
if a proof is not given immediately after a numbered proposition and
no reference is provided, it will also be delegated to
Appendix~\ref{app_proofs}.

\subsection{\label{sec_chan_pull}Channel pullback}
Recall the definitions of the score differential, the pullback, and
the efficiency bound in Sec.~\ref{sec_Bnd}.  Let
$\sdiffout:\range g_\push \to \mb R$ be the score differential at the
output. For the efficiency bound $\Bnd_\out$ at the output to be
finite, $\sdiffout$ needs to be in $\mc T^\dual(g)$ and satisfy
\begin{align}
g_\pull \sdiffout &= \dd\beta
\end{align}
in terms of the original differential $\dd\beta \in \mc T^\dual_\theta$ and
the pullback $g_\pull:\mc T^\dual(g) \to \mc T_\theta^\dual$ defined by
Eq.~(\ref{pullback}), now in terms of $g_\push$.  By the definition of the
pullbacks and the covariant chain rule given by Eq.~(\ref{cov}) for
the pushforwards, we have
\begin{align}
\bk{g_\pull L} (\dot\phi) &= L(g_\push\dot\phi)
= L(\chan_\push f_\push \dot\phi) 
= \bk{\chan_\pull L}(f_\push\dot\phi) = \bk{f_\pull \chan_\pull L}(\dot\phi),
\end{align}
meaning that the pullbacks must satisfy the contravariant chain rule
\begin{align}
g_\pull &= f_\pull \chan_\pull,
\label{contra}
\end{align}
where $\chan_\pull:\mc T^\dual(g) \to \mc T^\dual(f)$ is defined by
\begin{align}
\bk{\chan_\pull L} (u)  &= L(\chan_\push u) \quad
\forall L \in \mc T^\dual(g), u \in \mc T(f).
\label{h_pullback}
\end{align}
With the input score differential satisfying
Eq.~(\ref{pullback_sdiff}), the input and output score differentials
should then satisfy
\begin{align}
\chan_\pull \sdiffout &= \sdiff,
\label{sdiff_channel}
\end{align}
assuming that $\sdiff \in \mc T^\dual(f)$ exists at the input to begin
with. If $\sdiff$ is not in the range of $\chan_\pull$, then $\sdiffout$
does not exist, and the efficiency bound $\Bnd_\out$ becomes infinite. On
the other hand, if $\sdiff$ is in the range of $\chan_\pull$, then
$\sdiffout \in \mc T^\dual(g)$ can exist.

It will be much more convenient to work with the Riesz representations
of the score differentials. $\sdiff$ is in the range of $\chan_\pull$ and
a $\sdiffout \in \mc T^\dual(g)$ exists if and only if the latter has
the Riesz representation $\effinfout \in \mc T(g)$ that obeys
\begin{align}
\Avg{\effinfout,\chan_\push u}_g 
&=
\sdiffout(\chan_\push u)  
=
\bk{\chan_\pull \sdiffout}(u)  
=  \sdiff(u) = \Avg{\effinf, u}_f
\quad
\forall u \in \mc T(f).
\label{riesz_chan}
\end{align}
$\effinfout$ is the efficient influence at the output and $\effinf$ is
the efficient influence at the input. To solve for $\effinfout$, the
norm of which determines the efficiency bound $\Bnd_\out$, consider
the adjoint $\chan_\push^\adj:\Amb(g) \to \mc T(f)$ of
$\chan_\push:\mc T(f) \to \Amb(g)$, defined by
\begin{align}
\Avg{\chan_\push u,v}_{g} &= \Avg{u,\chan_\push^\adj v}_{f}
\quad
\forall u \in \mc T(f),v \in \Amb(g).
\label{adj}
\end{align}
(We reserve the superscript $\adj$ for the adjoint in terms of the
inner products of $\Amb(f)$ and $\Amb(g)$.)  For brevity, we write
\begin{align}
\chan_\push^\adj &= \chan^\pull.
\label{chan_pull}
\end{align}
Eq.~(\ref{riesz_chan}) is then equivalent to
\begin{align}
\chan^\pull \effinfout &= \effinf.
\label{pullback_effinf}
\end{align}
Fig.~\ref{channel_pullback} illustrates the concepts of $\chan_\pull$
and $\chan^\pull$ geometrically.  We will hereafter focus on
Eq.~(\ref{pullback_effinf}) in terms of the efficient influences
instead of Eq.~(\ref{sdiff_channel}). We call $\chan^\pull$ the
channel pullback, which is a representation of $\chan_\pull$, and we
will not use the latter again.

\fig{0.6}{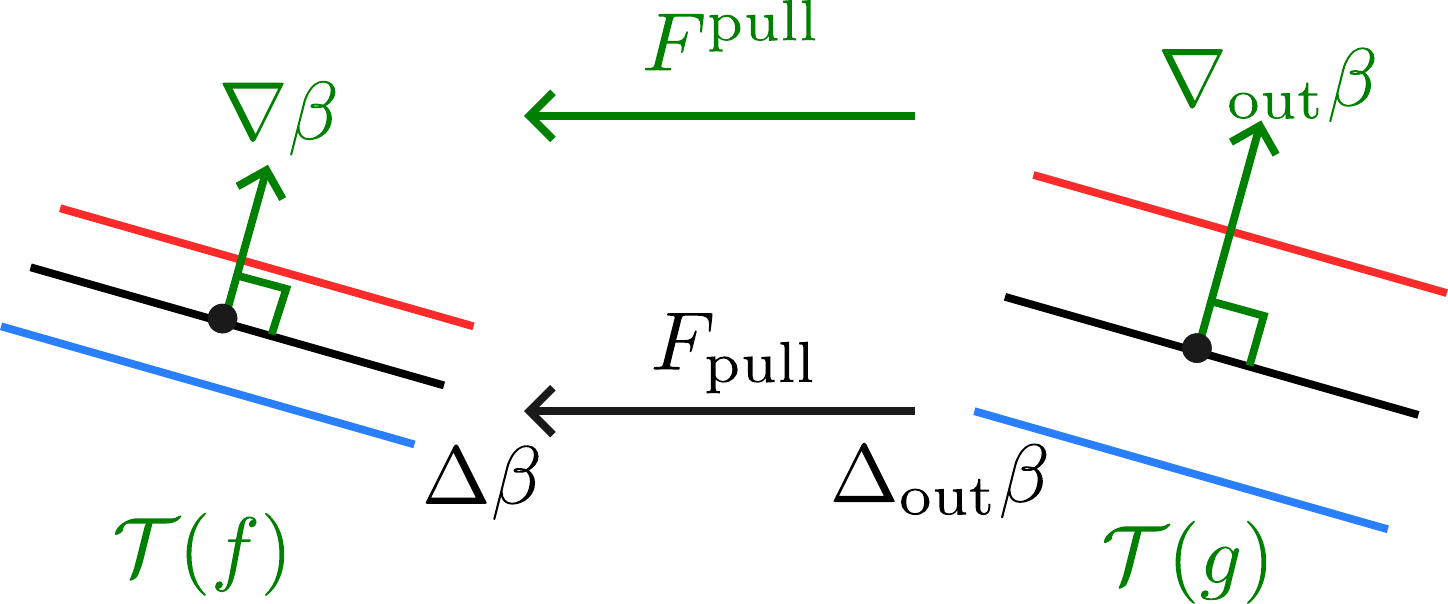}{The score differentials $\sdiff$ and
  $\sdiff_\out$ are depicted as level sets on the score tangent spaces
  $\mc T(f)$ and $\mc T(g)$, respectively. Eq.~(\ref{sdiff_channel})
  requires $\chan_\pull$ applied to the output score differential
  $\sdiff_\out$ to match the input score differential $\sdiff$.  The
  efficient influences $\effinf$ and $\effinfout$ are the Riesz
  representations of the score differentials and depicted as vectors
  orthogonal to the level sets.  Eq.~(\ref{pullback_effinf}) requires
  the channel pullback $\chan^\pull$ of the output efficient
  influence $\effinfout$ to match the input efficient influence
  $\effinf$.}

\subsection{\label{sec_Bndout}Output efficiency bound}
We have seen from Sec.~\ref{sec_exa} that the input efficient
influence $\effinf$ can often be solved.  With a channel,
Eq.~(\ref{pullback_effinf}) needs to be inverted to solve for
$\effinfout$. Beware, however, that the domain of $\chan^\pull$ is
$\Amb(g)$ and the solution to Eq.~(\ref{pullback_effinf}) in the
desired tangent space $\mc T(g)$ for $\effinfout$ is
nontrivial. Finding $\effinfout$ from this equation is the subject of
the following theorem.
\begin{theorem}
\label{thm_effinfout}
Let $\effinf \in \mc T(f)$ be the efficient influence at the input,
$\chan_\push:\mc T(f) \to \Amb(g)$ be the channel pushforward, and
$\chan^\pull = \chan_\push^\adj:\Amb(g) \to \mc T(f)$ be the channel
pullback. If $\effinf \notin \range \chan^\pull$, then
$\Bnd_\out = \infty$.  Otherwise, if
\begin{align}
\effinf \in \range \chan^\pull,
\label{effinf_range}
\end{align}
then the efficient influence $\effinfout \in \mc T(g)$ at the output
becomes
\begin{align}
\effinfout &= \chan^{\pull -} \effinf,
\label{effinf_pinv}
\end{align}
where $\chan^{\pull -}$ is the pseudo-inverse of $\chan^\pull$. The
efficiency bound at the output becomes
\begin{align}
\Bnd_\out &= \norm{\effinfout}_g^2 = \norm{\chan^{\pull -} \effinf}_g^2.
\label{Cout}
\end{align}
\end{theorem}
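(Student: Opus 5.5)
The plan is to reduce everything to Eq.~(\ref{riesz_chan}), i.e.\ Eq.~(\ref{pullback_effinf}), together with Theorem~\ref{thm_effinf} applied at the output and the orthogonal decomposition $\Amb(g) = \mc T(g)\oplus \mc T(g)^\perp$.

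\emph{Case $\effinf\notin\range\chan^\pull$.} Suppose $\Bnd_\out<\infty$. By Theorem~\ref{thm_effinf} applied to $g$, there is an output efficient influence $\effinfout\in\mc T(g)\subseteq\Amb(g)$. By the derivation preceding the theorem, this element satisfies Eq.~(\ref{riesz_chan}) for all $u\in\range f_\push$, and by continuity for all $u\in\mc T(f)$. This is exactly $\chan^\pull\effinfout=\effinf$, which contradicts $\effinf\notin\range\chan^\pull$. Hence $\Bnd_\out=\infty$.

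\emph{Case $\effinf\in\range\chan^\pull$.} First identify the kernel of $\chan^\pull$. By the definition (\ref{adj}), $v\in\kernel\chan^\pull$ if and only if $\avg{\chan_\push u,v}_g=0$ for all $u\in\mc T(f)$, that is, $v\perp\range\chan_\push$. By Lemma~\ref{lem_h}, this holds if and only if $v\in\mc T(g)^\perp$. So $\kernel\chan^\pull=\mc T(g)^\perp$, and its orthogonal complement is $(\kernel\chan^\pull)^\perp=\mc T(g)$.

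Now pick any $v\in\Amb(g)$ with $\chan^\pull v=\effinf$, and write $v=v_1+v_2$ with $v_1\in\mc T(g)$ and $v_2\in\mc T(g)^\perp$. Then $\chan^\pull v_1=\effinf$. Because $\chan^\pull$ is injective on $\mc T(g)$, this $v_1$ is the unique solution in $\mc T(g)$. It is also the minimum-norm solution overall, which by the definition of the pseudo-inverse recalled in Appendix~\ref{app_pinv} is exactly $\chan^{\pull-}\effinf$.

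Next check that $v_1$ is the output efficient influence. The element $v_1\in\mc T(g)$ satisfies Eq.~(\ref{riesz_chan}). Hence, for every $\dot\phi$,
\begin{align}
\avg{v_1,g_\push\dot\phi}_g=\avg{v_1,\chan_\push f_\push\dot\phi}_g=\avg{\effinf,f_\push\dot\phi}_f=\dd\beta(\dot\phi),
\end{align}
using Eq.~(\ref{cov}) and Eq.~(\ref{riesz}). Therefore the functional $u\mapsto\avg{v_1,u}_g$ is a bounded score differential on $\range g_\push$ satisfying Eq.~(\ref{sdiff}) at the output, with Riesz representation $v_1$. Theorem~\ref{thm_effinf} then gives $\effinfout=v_1$ and $\Bnd_\out=\norm{v_1}_g^2$, which is Eq.~(\ref{Cout}).

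The main obstacle is the pseudo-inverse step. Since $\range\chan^\pull$ need not be closed, $\chan^{\pull-}$ may be unbounded, so it must be defined on $\range\chan^\pull$ as the inverse of the restriction $\chan^\pull|_{(\kernel\chan^\pull)^\perp}$. To make this rigorous I would rely on the kernel identity $\kernel\chan^\pull=\mc T(g)^\perp$, which follows from Lemma~\ref{lem_h} and the continuity of the inner product, together with the fact that the kernel of the bounded map $\chan^\pull$ is closed. Closedness is what guarantees that the orthogonal decomposition used above is valid.
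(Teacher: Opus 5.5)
Your proof is correct. Your first case is the paper's one-line argument written out in full.

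In the second case you use the same two ingredients as the paper: Theorem~\ref{thm_effinf} at the output, and the identity $(\kernel\chan^\pull)^\perp=\overline{\range\chan_\push}=\mc T(g)$ from Lemma~\ref{lem_h}. You run the pseudo-inverse step in the opposite direction, though.

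The paper starts from the output efficient influence $\effinfout\in\mc T(g)$. It takes the existence of $\effinfout$ and Eq.~(\ref{pullback_effinf}) from the discussion preceding the theorem. It then applies the identity $T^-T=\Pi_{(\kernel T)^\perp}$ of Prop.~\ref{prop_pinv} to get $\chan^{\pull-}\effinf=\chan^{\pull-}\chan^\pull\effinfout=\Pi_{\mc T(g)}\effinfout=\effinfout$.

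You instead start from an arbitrary preimage $v\in\Amb(g)$ and project it onto $\mc T(g)$. You identify the projection with $\chan^{\pull-}\effinf$ through the minimum-norm characterization, Eq.~(\ref{pinv_min}). Then you check directly that it represents the output score differential.

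The paper's version is shorter because it leans on the algebra of Prop.~\ref{prop_pinv}. Yours is more self-contained, and it closes a step the paper leaves implicit: a preimage in $\Amb(g)$, not necessarily in $\mc T(g)$, already forces Eq.~(\ref{diff_cond}) to hold at the output and the output efficient influence to exist. Your construction is essentially the projection procedure $\effinfout=\Pi_{\mc T(g)}\delta_\out$ that the paper mentions only as an alternative in Appendix~\ref{app_foot}.
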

The pseudo-inverse is reviewed in Appendix~\ref{app_pinv}.

\subsection{\label{sec_imap}Information map}
If $\effinfout$ exists, then we know that
$\effinfout \in \mc T(g) = \overline{\range \chan_\push}$.  If we
further make the benign assumption
$\effinfout \in \range \chan_\push$, we can obtain an alternative form
of the bound that proves to be convenient later in
Sec.~\ref{sec_broadcast}. The bound is in terms of the map
\begin{align}
\imap &= \chan^\pull \chan_\push:\mc T(f) \to \mc T(f)
\label{imap}
\end{align}
called the information map. Before presenting the bound, we summarize
all the important conditions concerning $\effinf$ and $\effinfout$ in
the following lemma.
\begin{lemma}
\label{lem_effinf}
\begin{align}
&\quad \effinf \in \range \imap = \chan^\pull \range \chan_\push
\label{effinf_J}
\\
&\iff  \effinfout \in \range \chan_\push 
\label{effinf_range2}
\\
&\implies 
\effinfout \in \overline{\range \chan_\push} = \mc T(g)
\label{effinf_T}
\\
&\iff \effinf \in \range \chan^\pull.
\label{effinf_range3}
\end{align}
If $\range \chan_\push = \overline{\range \chan_\push}$, then all the
conditions are equivalent.
\end{lemma}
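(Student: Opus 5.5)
Throughout, I rely on two facts established above: the defining relation $\chan^\pull \effinfout = \effinf$ of Eq.~(\ref{pullback_effinf}), and the characterization of $\effinfout$ from Theorem~\ref{thm_effinfout} as the unique solution of this equation lying in $\mc T(g)=\overline{\range\chan_\push}$ (Lemma~\ref{lem_h}). The key structural fact is the orthogonal decomposition $\Amb(g) = \overline{\range\chan_\push}\oplus\kernel\chan^\pull$. This holds because $\kernel \chan_\push^\adj = (\range \chan_\push)^\perp$, which follows directly from Eq.~(\ref{adj}). It has two consequences: $\chan^\pull$ restricted to $\mc T(g)$ is injective, and $\range\chan^\pull = \chan^\pull \mc T(g)$. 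I will prove the chain link by link.

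\textbf{(\ref{effinf_T})$\iff$(\ref{effinf_range3}).} For the forward direction, if $\effinfout\in\mc T(g)$ exists, then $\effinf = \chan^\pull\effinfout\in\range\chan^\pull$ immediately. For the converse, suppose $\effinf = \chan^\pull v$ for some $v\in\Amb(g)$. Project $v$ orthogonally onto $\mc T(g)$. The discarded component lies in $\kernel\chan^\pull$, so the projection $Pv\in\mc T(g)$ still satisfies $\chan^\pull Pv = \effinf$. Hence $Pv$ fulfills Eq.~(\ref{riesz_chan}) and is the output efficient influence; this is exactly the content of Theorem~\ref{thm_effinfout}, which I may simply cite.

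\textbf{(\ref{effinf_J})$\iff$(\ref{effinf_range2}).} For the backward direction, if $\effinfout = \chan_\push u$ with $u\in\mc T(f)$, then $\effinf = \chan^\pull\chan_\push u = \imap u$, so $\effinf\in\range\imap$. For the forward direction, suppose $\effinf = \imap u = \chan^\pull(\chan_\push u)$. Then $w=\chan_\push u$ lies in $\range\chan_\push\subseteq\mc T(g)$ and solves $\chan^\pull w = \effinf$. By injectivity of $\chan^\pull$ on $\mc T(g)$ and uniqueness of the Riesz representation, $w=\effinfout$, so $\effinfout\in\range\chan_\push$. The identity $\range\imap = \chan^\pull\range\chan_\push$ in (\ref{effinf_J}) holds by definition of composition.

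\textbf{(\ref{effinf_range2})$\implies$(\ref{effinf_T}).} This is trivial, since $\range\chan_\push$ is contained in its closure, which equals $\mc T(g)$ by Lemma~\ref{lem_h}.

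\textbf{Closed range.} If $\range\chan_\push$ is closed, then $\range\chan_\push=\mc T(g)$, so (\ref{effinf_range2}) and (\ref{effinf_T}) coincide and the whole chain collapses into equivalences.

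\textbf{Main obstacle.} The only delicate point is the uniqueness and injectivity argument in the forward direction of (\ref{effinf_J})$\iff$(\ref{effinf_range2}). There I must make sure that the element $w$ produced from $\range\imap$ is the same $\effinfout$ as the one defined through Riesz representation. The argument above handles this via the decomposition $\Amb(g) = \overline{\range\chan_\push}\oplus\kernel\chan^\pull$, which in turn rests only on the identity $\kernel \chan_\push^\adj = (\range \chan_\push)^\perp$ for a bounded operator and its adjoint.
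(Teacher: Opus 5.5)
Your proof is correct and follows the paper's chain of implications link by link, resting on the same decomposition $\Amb(g) = \mc T(g)\oplus\kernel\chan^\pull$. The only difference is in the key step. The paper writes $\effinfout = \chan^{\pull -}\effinf \in \chan^{\pull -}\chan^\pull\range\chan_\push = \Pi_{\mc T(g)}\range\chan_\push = \range\chan_\push$ via Prop.~\ref{prop_pinv}, whereas you use the equivalent and more elementary fact that $\chan^\pull$ is injective on $\mc T(g)$; your projection argument for (\ref{effinf_range3})$\Rightarrow$(\ref{effinf_T}) also makes explicit the existence step that the paper leaves to Theorem~\ref{thm_effinfout}.
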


\begin{corollary}\label{cor_info}
  Assume Eq.~(\ref{effinf_J}) or equivalently
  Eq.~(\ref{effinf_range2}). Then
\begin{align}
\Bnd_\out &=\Avg{\effinf,\imap^-\effinf}_f,
\end{align}
where $\imap$ is the information map defined by
Eq.~(\ref{imap}).
\end{corollary}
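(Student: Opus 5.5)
We need to prove $\Bnd_\out = \avg{\effinf,\imap^-\effinf}_f$ under the assumption $\effinf\in\range\imap$, equivalently $\effinfout\in\range\chan_\push$. The plan is to exhibit a preimage of $\effinfout$ under $\chan_\push$ that is built from $\imap^-\effinf$, and then to rewrite $\norm{\effinfout}_g^2$ using the adjoint relation (\ref{adj}).

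First, I use Lemma~\ref{lem_effinf} and Theorem~\ref{thm_effinfout}. These give that $\effinfout\in\mc T(g)$ exists, that $\effinfout = \chan^{\pull-}\effinf$, and that $\effinfout = \chan_\push w$ for some $w\in\mc T(f)$. Among all such $w$, I want the one lying in $(\kernel\chan_\push)^\perp = \overline{\range\chan^\pull}$. This choice is available because the projection of any preimage onto that closed subspace is still a preimage.

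Next, I apply $\chan^\pull$ to $\effinfout=\chan_\push w$. By (\ref{pullback_effinf}) this gives $\imap w = \chan^\pull\chan_\push w = \effinf$. The operator $\imap$ is self-adjoint and positive, since $\avg{u,\imap u}_f=\norm{\chan_\push u}_g^2$. Hence $\kernel\imap = \kernel\chan_\push$, and so $\overline{\range\imap} = (\kernel\imap)^\perp = (\kernel\chan_\push)^\perp$. By the definition of the pseudo-inverse in Appendix~\ref{app_pinv}, $\imap^-\effinf$ is the unique solution of $\imap x=\effinf$ lying in $(\kernel\imap)^\perp$. Since $w$ has exactly this property, $w=\imap^-\effinf$.

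Finally, I compute the bound:
\begin{align}
\Bnd_\out = \norm{\effinfout}_g^2 = \Avg{\chan_\push w,\chan_\push w}_g = \Avg{w,\chan^\pull\chan_\push w}_f = \Avg{\imap^-\effinf,\effinf}_f.
\end{align}
The result follows from this and the symmetry of the real inner product.

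I expect the main obstacle to be the pseudo-inverse step. The point is to confirm that the pseudo-inverse convention of Appendix~\ref{app_pinv} for a possibly unbounded $\imap^-$ defined on $\range\imap$ selects the minimal-norm solution, i.e. the solution in $(\kernel\imap)^\perp$. I also need the identity $\kernel\imap=\kernel\chan_\push$, which follows from positivity. Even without that identification the formula would still hold: for any two solutions $x,x'$ of $\imap x=\effinf$, we have $\avg{x,\effinf}_f=\avg{x,\imap x'}_f=\avg{\imap x,x'}_f=\avg{\effinf,x'}_f$. So any solution gives the same value, and I would present this invariance as a fallback in case the pseudo-inverse convention turns out to differ.
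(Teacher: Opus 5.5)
Your proof is correct, but it takes a different route from the paper's.

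The paper starts from $\Bnd_\out=\Avg{\chan^{\pull-}\effinf,\effinfout}_g$ and manipulates pseudo-inverses algebraically:
\begin{enumerate}
\item It moves $\chan^{\pull-}$ across the inner product via $(\chan^{\pull-})^\adj=\chan_\push^-$, using $\effinfout\in\range\chan_\push\subseteq\dom\chan_\push^-$.
\item It then invokes the identity $\chan_\push^-\chan^{\pull-}=(\chan^\pull\chan_\push)^-$ from Prop.~\ref{prop_pinv}.
\end{enumerate}

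You avoid that pseudo-inverse calculus entirely. You take the minimal-norm preimage $w\in(\kernel\chan_\push)^\perp$ of $\effinfout$ and use positivity of $\imap$ to get $\kernel\imap=\kernel\chan_\push$. You then identify $w=\imap^-\effinf$ through the characterization of $T^-y$ as the unique solution of $Tx=y$ in $(\kernel T)^\perp$, which does follow from Prop.~\ref{prop_pinv} and Eq.~(\ref{pinv_min}) for bounded $T$ with full domain. After that, only the defining adjoint relation (\ref{adj}) is needed.

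The paper's version is shorter once the cited identities are granted. However, those identities involve possibly unbounded pseudo-inverses and hold only ``on the domains where the maps are defined.'' Yours is therefore more self-contained and easier to check.

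Your argument also gives two things the paper's does not:
\begin{enumerate}
\item The explicit by-product $\effinfout=\chan_\push\imap^-\effinf$. This is exactly the general form of Eq.~(\ref{effinfout_finite}) in Corollary~\ref{cor_finite}, where $\chan_\push u=\score^\top u$ and $\imap^-=\Imat^+$.
\item Your fallback shows that $\Avg{x,\effinf}_f$ is the same for every solution of $\imap x=\effinf$. So the formula does not depend on which generalized inverse of $\imap$ is used.
\end{enumerate}
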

As we have seen from Theorem~\ref{thm_effinfout} and
Corollary~\ref{cor_info}, the channel pushforward
$\chan_\push:\mc T(f) \to \Amb(g)$, the channel pullback
$\chan^\pull:\Amb(g) \to \mc T(f)$, and the information map
$\imap = \chan^\pull\chan_\push:\mc T(f) \to \mc T(f)$ are the three
key maps for determining the output efficient influence $\effinfout$
and the output efficiency bound $\Bnd_\out$. As the three maps are all
defined on the score tangent space $\mc T(f)$ at one point
$f(\theta)$, we call them \emph{local} channel maps, to be contrasted
with the \emph{global} channel map $\chan$ from which the local maps
are derived. The rest of this paper is devoted to examining the
properties of the three local maps and deriving their more concrete
forms for many channels, so that $\effinfout$ and $\Bnd_\out$ may be
evaluated more easily.

\subsection{\label{sec_finite}Finite-dimensional parameter space, following
Example~\ref{exa_finite}}
This section demonstrates a simple and well known consequence of
Theorem~\ref{thm_effinfout} and Corollary~\ref{cor_info} by assuming a
finite-dimensional parameter space and generalizing
Eq.~(\ref{C_matrix}).  Let
\begin{align}
f(\theta) = \theta \in \Theta \subseteq \mb R^p,
\end{align}
where $\Theta$ is an open subset of $\mb R^p$. Now each curve
$\phi(t)$ is in $\mb R^p$. Let each input score vector be
\begin{align}
\score(\dot\phi) &= \dot\phi(\theta) = \left.\pdv{\phi(t)}{t}\right|_{t=0}
\in \mc V = \mb R^p,
\end{align}
which is the tangent vector of $\phi(t)$ in Euclidean calculus. Let
the inner product of $\mc V$ be the Euclidean version
\begin{align}
\Avg{A,B}_f &= A^\top B.
\end{align}
We can then assume
\begin{align}
\mb R^p = \mc V = \mc V/\mc N(f) = \Amb(f) = \mc T(f).
\label{V_euclid}
\end{align}
We use the Euclidean inner product for the input in this section
because it is mathematically simple; it is not motivated by
statistics. There is no need to worry about equivalence classes
(because $\mc N(f) = \{0\}$), completion (because finite-dimensional
inner-product spaces are always complete), and closures (because a
finite-dimensional vector subspace is always closed). The last
equality in Eq.~(\ref{V_euclid}) assumes a maximal tangent space
$\mc T(f) = \Amb(f)$, so that the tangent vector can be any vector in
$\mb R^p$. We can now write
\begin{align}
f_\push\dot\phi &= \score(\dot\phi) = \dot\phi(\theta) \in \mb R^p.
\end{align}
The input efficient influence $\effinf \in \mc T(f) = \mb R^p$ satisfies
\begin{align}
\dd\beta(\dot\phi)
&= \dot\phi(\beta) = \dot\phi(\theta)^\top \partial\beta
= \Avg{\effinf,f_\push\dot\phi}_f =\dot\phi(\theta)^\top \effinf
\end{align}
in terms of the Euclidean gradient $\partial\beta$. For this equation to hold for any
$\dot\phi(\theta) \in \mc T(f) = \mb R^p$, we must have
\begin{align}
\effinf &= \partial\beta.
\label{effinf_kid}
\end{align}
Let us focus on the general quantum case, following
Example~\ref{exa_quantum}, and take the output density operator as
\begin{align}
g(\theta) &= \chan \circ f(\theta) = \chan(\theta).
\end{align}
This formula means that the channel map $\chan$ gives the output
density operator $g$ for each input $f(\theta) = \theta$. The output
score operator obeys
\begin{align}
\score_\out(\dot\phi) \jordan g &= \dot\phi(g) = 
\sum_j \dot\phi(\theta_j) \partial_j g(\theta).
\end{align}
We then obtain the formula
\begin{align}
\chan_{\push} u &= \sum_j u_j\score_\out(\partial_j)
\label{push_finite}
\end{align}
for the channel pushforward in terms of operators
$\score_\out(\partial_j)$ that satisfy
\begin{align}
\score_\out(\partial_j) \jordan g &= \partial_jg.
\label{scorej}
\end{align}
Even though $\chan$ may be nonlinear with respect to $\theta$,
$\chan_\push$ is always linear. The channel pullback
$\chan^\pull = \chan_\push^\adj$ defined by Eq.~(\ref{adj}), on the
other hand, becomes
\begin{align}
\bk{\chan^\pull u}_j &= \trace\bk{u \partial_j g} = \Avg{\score_\out(\partial_j),u}_g ,
\label{pull_finite}
\end{align}
and the information map $\imap$ defined by Eq.~(\ref{imap}) becomes
\begin{align}
\bk{\imap A}_j &= \bk{\chan^\pull \chan_\push A}_j
= \sum_k \Avg{\score_\out(\partial_j),\score_\out(\partial_k)}_g A_k
= \sum_k \imap_{jk} A_k,
\\
\imap_{jk} &= \Avg{\score_\out(\partial_j),\score_\out(\partial_k)}_g = \Imat_{jk},
\label{imap_matrix}
\end{align}
where $\Imat:\mb R^p \to \mb R^p$ coincides with the Helstrom version
of the information matrix. This equality motivates the name
information map for $\imap$.

We can now present a well known formula for the efficiency bound when
$\Imat$ may not be invertible, as shown earlier in
Ref.~\cite{stoica01} in the classical case and
Refs.~\cite{semi_prx,goldberg21,kwon25} in the quantum case.
\begin{corollary}
\label{cor_finite}
Assume that the parameter space $\Theta$ is an open subset of
$\mb R^p$.  Let $\partial\beta \in \mb R^p$ be the Euclidean gradient
as a column vector and $\Imat:\mb R^p \to \mb R^p$ be the information
matrix. If $\partial\beta \notin \range\Imat$, then
$\Bnd_\out = \infty$. Otherwise,
\begin{align}
\effinfout &= \score^\top \Imat^+ \partial\beta,
\label{effinfout_finite}
\\
\Bnd_\out &= \bk{\partial\beta}^\top \Imat^{+} \partial\beta,
\label{Bnd_finite}
\end{align}
where
$\score = \mqty(\score_\out(\partial_1) &\dots &
\score_\out(\partial_p) )^\top$ is a column vector of self-adjoint
operators that obey Eq.~(\ref{scorej}) and the superscript $+$ denotes
the Moore--Penrose inverse \cite{benisrael}.
\end{corollary}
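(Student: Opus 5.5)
The plan is to specialize Theorem~\ref{thm_effinfout} and Corollary~\ref{cor_info} to the finite-dimensional setting, where every operator becomes a matrix and the pseudo-inverses become Moore--Penrose inverses. All spaces involved are finite-dimensional: $\mc T(f) = \mb R^p$, and $\range \chan_\push$ is spanned by the $p$ operators $\score_\out(\partial_j)$. Since a finite-dimensional subspace is closed, $\range\chan_\push = \overline{\range\chan_\push} = \mc T(g)$. By Lemma~\ref{lem_effinf}, conditions (\ref{effinf_J})--(\ref{effinf_range3}) are then all equivalent.

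First I will show that $\range\chan^\pull = \range\imap$. Lemma~\ref{lem_effinf} gives $\range\imap = \chan^\pull\range\chan_\push$. Because $\chan^\pull$ vanishes on $(\range\chan_\push)^\perp$, we also have $\range\chan^\pull = \chan^\pull \mc T(g) = \range\imap$. Equation~(\ref{imap_matrix}) identifies $\imap$ with $\Imat$, and Eq.~(\ref{effinf_kid}) gives $\effinf = \partial\beta$. The infinite case of Theorem~\ref{thm_effinfout} therefore reads: if $\partial\beta\notin\range\Imat$, then $\Bnd_\out=\infty$.

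For the finite case, assume $\partial\beta\in\range\Imat$. Corollary~\ref{cor_info} gives $\Bnd_\out = \avg{\effinf,\imap^-\effinf}_f$ with the Euclidean inner product. The pseudo-inverse of the self-adjoint matrix $\Imat$ on $\mb R^p$ is the Moore--Penrose inverse $\Imat^+$ (Appendix~\ref{app_pinv}), which yields Eq.~(\ref{Bnd_finite}).

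For the efficient influence, I will check directly that $v = \chan_\push \Imat^+\partial\beta = \sum_j (\Imat^+\partial\beta)_j \score_\out(\partial_j) = \score^\top\Imat^+\partial\beta$ is the output efficient influence. Two properties are needed:
\begin{itemize}
\item $v\in\range\chan_\push\subseteq\mc T(g)$, which holds by construction.
\item $\chan^\pull v = \Imat\Imat^+\partial\beta = \partial\beta$, using Eq.~(\ref{pull_finite}) and the fact that $\Imat\Imat^+$ is the orthogonal projector onto $\range\Imat$, which contains $\partial\beta$.
\end{itemize}
These give Eq.~(\ref{pullback_effinf}). Uniqueness of the efficient influence in $\mc T(g)$ follows because $\chan^\pull$ is injective on $\overline{\range\chan_\push} = (\kernel\chan^\pull)^\perp$. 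Hence $v = \chan^{\pull-}\effinf = \effinfout$, which gives Eq.~(\ref{effinfout_finite}). As a consistency check, $\norm{v}_g^2 = (\Imat^+\partial\beta)^\top\Imat\,\Imat^+\partial\beta = (\partial\beta)^\top\Imat^+\partial\beta$, matching Eq.~(\ref{Bnd_finite}).

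The main obstacle is the correct identification of the abstract pseudo-inverse with $\Imat^+$, together with $\range\chan^\pull=\range\imap$. Both rest on closedness of ranges in finite dimensions and on the orthogonal decomposition $\mc T(g)\oplus\kernel\chan^\pull$. The direct verification in the previous paragraph sidesteps most of this difficulty.
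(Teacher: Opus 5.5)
Your proposal is correct and follows essentially the same route as the paper. Finite-dimensionality makes $\range\chan_\push$ closed, so the conditions of Lemma~\ref{lem_effinf} coincide. Then $\imap=\Imat$ and $\effinf=\partial\beta$ settle the infinite case via Theorem~\ref{thm_effinfout}. The candidate $\score^\top\Imat^+\partial\beta$ is checked directly through $\Imat\Imat^+=\Pi_{\range\Imat}$, and Corollary~\ref{cor_info} gives the bound.

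Your explicit derivation of $\range\chan^\pull=\range\imap$ and your injectivity argument for uniqueness only spell out steps that the paper leaves implicit in the lemma's equivalences and in the uniqueness of the Riesz representation.
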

The classical case follows the same argument and has the same result
essentially, except that $g$ is a probability density,
$\score_\out(\partial_j)$ is a score function, and $\Imat$ is the
Fisher information matrix.

\subsection{\label{sec_svd}Singular-value decomposition (SVD)}
To analyze and visualize the behavior of the channel pushforward
$\chan_\push$, it is illuminating to consider its SVD, assuming that
the map is compact. We write the SVD as
\begin{align}
\chan_\push &= \sum_j s_j \outsing_j \Avg{\insing_j,\cdot}_f,
\label{chan_push_SVD}
\end{align}
where $\{\insing_j\}$ is an orthonormal subset of $\mc T(f)$ called
the input singular vectors (also called the right singular vectors),
$\{\outsing_j\}$ is an orthonormal subset of $\Amb(g)$ called the
output singular vectors (also called the left singular vectors), and
$\{s_j\}$, all strictly positive, are called the singular values
\cite{reed_simon}.  The set of singular values is sorted in
descending order $s_0 \ge s_1 \ge \dots$ by convention and
required to be either a finite set or a sequence converging to $0$.

With the SVD, a compact pushforward $\chan_\push u$ can be viewed as a
transformation that
\begin{enumerate}
\item decomposes the input score $u$ into the series
  $\sum_j \insing_j \avg{\insing_j,u}_f$ in terms of the input
  singular vectors $\{\insing_j\}$ while nulling the component of $u$
  in $\kernel \chan_\push$,
\item maps each component $\insing_j$ to an output
singular vector $\outsing_j$, and
\item scales each output by a ``transmission coefficient'' $s_j$.
\end{enumerate}
In other words, engineers can regard a compact map as a generalized
low-pass filter and the SVD as a generalization of Fourier analysis.
As illustrated by Fig.~\ref{SVD0}, we can visualize this behavior as a
transformation of the unit sphere in $\mc T(f)$ to an ellipsoid in
$\Amb(g)$ with axes along $\{\outsing_j\}$, each with length
$s_j$. The longest length is the highest transmission coefficient and
coincides with the operator norm, that is,
\begin{align}
\norm{\chan_\push}  &= \max_j s_j = s_0.
\end{align}
For example, if the input score coincides with an input singular
vector $\insing_j$, then the output is simply
\begin{align}
\chan_\push \insing_j &= s_j \outsing_j.
\end{align}
Since $\norm{\insing_j}_f = 1$ and $\norm{\outsing_j}_g = 1$,
$\norm{\chan_\push \insing_j}_g^2/\norm{\insing_j}_f^2 = s_j^2$ is the
information transmission coefficient. Appendix~\ref{app_sv} presents
more facts about singular values.

\fig{0.48}{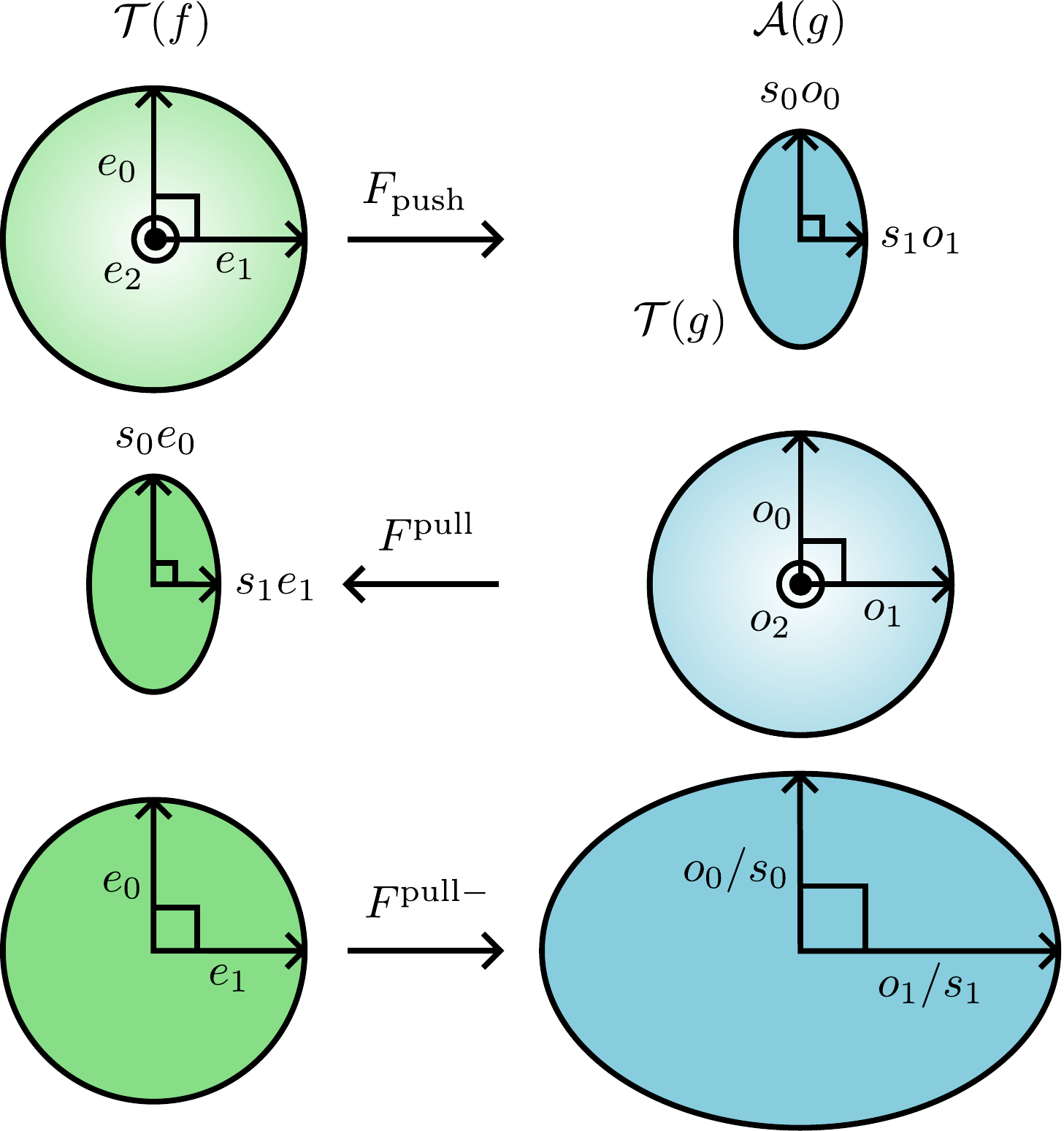}{First row: the channel pushforward $\chan_\push$ maps
  the unit sphere in the input score tangent space $\mc T(f)$ (3D in
  the picture) to an ellipsoid in the output score tangent space
  $\overline{\range\chan_\push} = \mc T(g)$ (2D in the picture), which
  is in an ambient Hilbert space $\Amb(g)$ (3D in the picture). In
  particular, each $\insing_j$ is mapped to $s_j \outsing_j$, while
  $\kernel\chan_\push$ is along $\insing_2$ in the picture.  Second
  row: the channel pullback $\chan^\pull$ maps the unit sphere in
  $\Amb(g)$ to an ellipsoid in
  $\overline{\range \chan^\pull} = (\kernel \chan_\push)^\perp
  \subseteq \mc T(f)$. $\kernel(\chan^\pull)$ coincides with
  $(\range\chan_\push)^\perp$, the subspace along which the ellipsoid
  of $\chan_\push$ appears flat (along $\outsing_2$ in the
  picture). Third row: the pullback inverse $\chan^{\pull-}$ behaves
  similarly to $\chan_\push$ but the singular values are inverted.  In
  this work, $\chan^{\pull-}$ is always applied to
  $\range \chan^\pull$ or else $\Bnd_\out = \infty$, so
  $\insing_2 \in (\range\chan^\pull)^\perp$ is not drawn in the
  bottom-left picture. }

Given a SVD of $\chan_\push$, $\chan^\pull$ and $\imap$ possess the
SVDs
\begin{align}
\chan^\pull &= \sum_j s_j \insing_j \Avg{\outsing_j,\cdot}_g,
\\
\imap &= \sum_j s_j^2 \insing_j \Avg{\insing_j,\cdot}_f,
\label{imap_svd}
\end{align}
such that $\{s_j^2\}$ are precisely the eigenvalues of $\imap$ and
$\{\insing_j\}$ are precisely the eigenvectors of $\imap$. Furthermore
\cite[Sec.~3.10.1]{villiers},
\begin{align}
\spn\BK{\insing_j} &\subseteq \range \chan^\pull,
\label{span_insing}
\\
\overline{\spn\BK{\insing_j}} &= 
\bk{\kernel \chan_\push}^\perp = \overline{\range \chan^\pull},
\\
\overline{\spn\BK{\outsing_j}} &= \bk{\kernel \chan^\pull}^\perp = 
\overline{\range \chan_\push}  = \mc T(g),
\end{align}
where $\spn$ denotes the linear span. If
\begin{align}
\effinf \in \spn\BK{\insing_j},
\label{effinf_sing}
\end{align}
then $\effinfout$ is guaranteed to exist and $\Bnd_\out < \infty$ by
Eq.~(\ref{span_insing}) and Theorem~\ref{thm_effinfout}. For example,
let $\effinf = \insing_j$. Since
$\chan^\pull \outsing_j = s_j \insing_j$ and
$\outsing_j \in \mc T(g)$, we obtain
\begin{align}
\chan^{\pull-} \insing_j &= \frac{\outsing_j}{s_j},
&
\Bnd_\out &= \norm{\chan^{\pull-} \insing_j}_g^2  = \frac{1}{s_j^2}.
\end{align}
The behavior of the pullback inverse $\chan^{\pull-}$ can thus be
pictured as an ellipsoid that has inverse axis lengths to those of
$\chan_\push$, as shown in Fig.~\ref{SVD0}. Formally, we can write
\begin{align}
\chan^{\pull-} &= \sum_j \frac{1}{s_j} \outsing_j \Avg{\insing_j,\cdot}_f,
&
\imap^- &= \sum_j \frac{1}{s_j^2} \insing_j \Avg{\insing_j,\cdot}_f,
&
\Bnd_\out &= \sum_j \frac{1}{s_j^2} \abs{\Avg{\insing_j,\effinf}_f}^2,
\label{pinv_svd}
\end{align}
although the arguments of the pseudo-inverses should be checked
against the conditions in Theorem~\ref{thm_effinfout},
Lemma~\ref{lem_effinf}, and Corollary~\ref{cor_info} before we use
these formulas in rigorous analysis. For example, if
Eq.~(\ref{effinf_sing}) holds, then the sums in Eqs.~(\ref{pinv_svd})
become finite sums and the formulas are safe to use.

\subsection{\label{sec_mono}Monotonic channels}
A fundamental property of physical channels with no access to $\theta$
is that they cannot increase the information about $\theta$, as proved
for many versions of Fisher information and Markov maps in classical
and quantum statistics \cite{ibragimov,petz}. This property is called
monotonicity and can be expressed as
\begin{align}
\norm{\chan_\push u}_{g} &\le \norm{u}_{f} \quad
\forall u \in \range f_\push.
\end{align}
This type of inequality is also called a data-processing inequality.
The inequality holds if and only if the operator norm of $\chan_\push$
obeys
\begin{align}
\norm{\chan_\push} \le 1.
\label{mono}
\end{align}
We take Eq.~(\ref{mono}) to be the definition of a monotonic
channel.
If $\chan_\push$ is compact, Eq.~(\ref{mono})
holds if and only if the maximum singular value obeys
\begin{align}
\norm{\chan_\push}  = \max_j s_j = s_0 &\le 1.
\end{align}
The efficiency bound and the information map are also monotonic in the
following sense.
\begin{theorem}
\label{thm_mono}
  If a channel is monotonic, then
\begin{align}
\Bnd_\out &\ge \Bnd,
&
\norm{\imap} &\le 1.
\end{align}
\end{theorem}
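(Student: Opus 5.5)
The plan is to prove the two inequalities separately, each directly from the operator-norm bound $\norm{\chan_\push}\le 1$ in Eq.~(\ref{mono}).

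\emph{Information map.} Since $\chan^\pull=\chan_\push^\adj$ and the adjoint of a bounded map has the same norm, we have $\norm{\chan^\pull}=\norm{\chan_\push}\le 1$. Submultiplicativity of operator norms then gives
\begin{align}
\norm{\imap}=\norm{\chan^\pull\chan_\push}\le\norm{\chan^\pull}\,\norm{\chan_\push}\le 1 .
\end{align}
Equivalently, $\imap$ is positive and self-adjoint on $\mc T(f)$, and
\begin{align}
\Avg{u,\imap u}_f=\norm{\chan_\push u}_g^2\le\norm{u}_f^2 ,
\end{align}
so its spectrum lies in $[0,1]$.

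\emph{Efficiency bound.} If $\effinf\notin\range\chan^\pull$, then Theorem~\ref{thm_effinfout} gives $\Bnd_\out=\infty\ge\Bnd$ and we are done. Otherwise, Theorem~\ref{thm_effinfout} supplies $\effinfout\in\mc T(g)$ with $\chan^\pull\effinfout=\effinf$. Using $\Bnd=\norm{\effinf}_f^2$ from Theorem~\ref{thm_effinf}, we get
\begin{align}
\Bnd=\norm{\chan^\pull\effinfout}_f^2\le\norm{\chan^\pull}^2\norm{\effinfout}_g^2\le\norm{\effinfout}_g^2=\Bnd_\out .
\end{align}
This avoids pseudo-inverse technicalities entirely, since only the defining identity Eq.~(\ref{pullback_effinf}) is used.

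A more intuitive alternative works at the level of suprema. By Eq.~(\ref{C}), $\Bnd_\out$ is the supremum over submodels of $[\dd\beta(\dot\phi)]^2/\norm{g_\push\dot\phi}_g^2$. Since $\norm{g_\push\dot\phi}_g=\norm{\chan_\push f_\push\dot\phi}_g\le\norm{f_\push\dot\phi}_f$, each output submodel bound dominates the corresponding input bound, and taking the supremum gives $\Bnd_\out\ge\Bnd$. One needs a little care here. Submodels with $g_\push\dot\phi=0$ but $\dd\beta(\dot\phi)\neq0$ give infinite output bounds, which is consistent with the inequality. Submodels excluded at the output because both quantities vanish are harmless, since they contribute $0$ at the input.

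The only mildly delicate step is the norm equality $\norm{\chan^\pull}=\norm{\chan_\push}$. Here $\chan_\push$ is defined on $\mc T(f)$ after the unique bounded extension from $\range f_\push$, and $\chan^\pull$ acts from $\Amb(g)$. The standard Hilbert-space fact applies once the extension is in place, and monotonicity on the dense subspace $\range f_\push$ carries over to $\mc T(f)$ by continuity.
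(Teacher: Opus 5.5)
Your main argument is the paper's proof: $\norm{\chan^\pull}=\norm{\chan_\push}\le 1$ by adjointness, submultiplicativity for $\imap$, and $\Bnd=\norm{\chan^\pull\effinfout}_f^2\le\norm{\effinfout}_g^2=\Bnd_\out$ via Eq.~(\ref{pullback_effinf}). Your supremum-level alternative is also valid, and it covers the case where the input bound is already infinite, which the operator-norm argument (yours and the paper's) only handles under the standing assumption that $\effinf$ exists.
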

Note that a channel need not be monotonic if $\norm{u}_{f}$ at the
input does not have the meaning of an information quantity, such as
the Euclidean norm assumed in Sec.~\ref{sec_finite}, or if the channel
has access to $\theta$, such as the broadcast channel to be discussed
in Sec.~\ref{sec_broadcast}.

\section{Composite channels}
This section studies two ways of combining multiple channels and the
rules for combining the local channel maps.

\subsection{\label{sec_cascade}Cascaded channels}
When the output of one channel is fed as the input to the next
channel, the cascaded channel map is the composition
\begin{align}
\chan(N,1) &= \chan_N \circ \dots \circ \chan_2 \circ \chan_1,
\end{align}
where $\chan_j$ is the map for the $j$th channel and $\chan(N,1)$
denotes the map for the cascaded channel from channel $1$ to channel
$N$. By the chain rule of $\dot\phi$, the total channel pushforward
satisfies the covariant chain rule
\begin{align}
\chan(N,1)_\push &= \chan_{N \push} \dots \chan_{2 \push} \chan_{1 \push},
\label{push_chain}
\end{align}
while the channel pullback follows the contravariant chain rule
\begin{align}
\chan(N,1)^\pull &= \chan_1{}^{\pull} \chan_2{}^{\pull} \dots \chan_N{}^{\pull}.
\label{pull_chain}
\end{align}
The information map, on the other hand, follows the backpropagation
rule
\begin{align}
\imap(N,j) &= \chan_j{}^{\pull} \imap(N,j+1) \chan_{j\push},
&
\imap(N,N+1) &= I,
\label{imap_chain}
\end{align}
where $\imap(N,j)$ is the information map for the channel map
$\chan(N,j) = \chan_N \circ \dots \circ \chan_j$. 

\subsection{\label{sec_broadcast}Broadcast channels}
A broadcast channel takes one input to produce multiple outputs. 
We study here two types of broadcast channels:
\begin{enumerate}
\item A tensor product of channel maps, following
  Examples~\ref{exa_cc} and \ref{exa_qq}: When the broadcast channel
  produces $N$ independent classical or quantum objects, the output
  probability density or density operator can be expressed as the
  tensor product
\begin{align}
g(\theta) &= \chan\circ f(\theta) = \bigotimes_j \chan_j f(\theta),
\label{prod_maps}
\end{align}
where each $\chan_j$ is a Markov map and $f(\theta)$ is the input
probability density or density operator.

\item A direct sum of channel maps: Pick one example from
  Example~\ref{exa_pp}, \ref{exa_gg}, \ref{exa_qpp}, or \ref{exa_qgg},
  and assume a broadcast channel that produces $N$ independent outputs
  of the same type in the example. In all cases, the output
  parametrization function can be expressed as the direct sum
\begin{align}
g(\theta) &= \bigoplus_j \chan_j f(\theta).
\label{sum_maps}
\end{align}
To be specific:
\begin{enumerate}
\item For Example~\ref{exa_pp}, two independent Poisson fields with
  mean densities $\chan_1 f$ and $\chan_2 f$ form a Poisson field with
  mean density $\chan_1 f \oplus \chan_2 f$, as can be proved from the
  Laplace-functional characterization theorem
  \cite[Theorem~2.9]{cinlar}.
\item For Example~\ref{exa_gg}, two independent Gaussian fields with
  mean vectors $\chan_1 f \in \mc V_1$ and $\chan_2 f \in \mc V_2$
  form a Gaussian field with mean vector
  $\chan_1 f \oplus \chan_2 f \in \mc V_1 \oplus \mc V_2$
  \cite[Example~2.3.8]{bogachev_gauss}.

\item For Example~\ref{exa_qpp}, the tensor product of two Poisson
  states with intensity operators $\chan_1 f$ and $\chan_2f$ is a
  Poisson state with intensity operator $\chan_1 f \oplus \chan_2 f$
  \cite[Example~5]{poisson_quantum}.

\item For Example~\ref{exa_qgg}, the tensor product of two Gaussian
  states with mean vectors $\chan_1 f \in \mc V_1$ and
  $\chan_2 f \in \mc V_2$ is a Gaussian state with mean vector
  $\chan_1 f \oplus \chan_2 f \in \mc V_1 \oplus \mc V_2$
  \cite{holevo_info}.
\end{enumerate}
\end{enumerate}
For every example, it turns out that we can set the output ambient
vector space $\mc V_\out$ as a direct sum of the output ambient vector
spaces $\{\mc V_j\}$ of the individual outputs $\{\chan_jf\}$. The
concept of direct sum is reviewed in Appendix~\ref{app_direct}. We
first consider a tensor product of quantum channel maps; the classical
case is simply a special case.
\begin{lemma}
\label{lem_prod_maps}
Consider a tensor product of channel maps given by
Eq.~(\ref{prod_maps}), where each individual channel $\chan_j$ is a
Markov map modeling a quantum channel in Example~\ref{exa_qq}.  Let
$\mc V_j$ be the set of zero-mean observables with respect to each
$\chan_j f$, following Sec.~\ref{sec_quantum}. Then the output ambient
vector space $\mc V_\out$ for Eq.~(\ref{prod_maps}) can be set
as the direct sum
\begin{align}
\mc V_\out &= \bigoplus_j \mc V_j,
\label{V_direct}
\end{align}
where each output score $\score_\out(\dot\phi)$ obeys
\begin{align}
\score_\out(\dot\phi) &= \bigoplus_j \score_j(\dot\phi)
\label{score_direct}
\end{align}
in terms of the output score $\score_j(\dot\phi) \in \mc V_j$ for
each individual output $\chan_j f$. It follows from
Eq.~(\ref{V_direct}) that the output ambient Hilbert space can be set
as
\begin{align}
\Amb(g) &= \bigoplus_j \Amb(\chan_j f).
\label{Amb_direct}
\end{align}
\end{lemma}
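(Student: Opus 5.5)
Let $g(\theta)=\bigotimes_{j=1}^N g_j$ with $g_j=\chan_j f(\theta)$, each a density operator on $\mc H_j$. The plan is to identify each direct-sum element $\bigoplus_j A_j$ with the sum of local observables
\begin{align}
\iota\Bigl(\bigoplus_j A_j\Bigr)=\sum_j I\otimes\dots\otimes A_j\otimes\dots\otimes I
\end{align}
on $\bigotimes_j\mc H_j$. We then show three things: that $\iota$ is isometric from the direct-sum semi-inner product to $\Avg{\cdot,\cdot}_g$, that the output score lies in the image of $\iota$, and that this image is a legitimate ambient space. We first treat finitely many outputs (or finite-dimensional $\mc H_j$) and then extend.

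\textbf{Isometry.} Take $A_j,B_k$ zero-mean with respect to $g_j,g_k$, and abbreviate the local operators as $\tilde A_j,\tilde B_k$. Computing $\trace[g(\tilde A_j\jordan\tilde B_k)]$ gives two cases. For $j\neq k$ the operators act on different factors and commute, so $\tilde A_j\jordan\tilde B_k=\tilde A_j\tilde B_k$. Its expectation in the product state factorizes as $\trace(g_jA_j)\trace(g_kB_k)=0$. For $j=k$ it reduces to $\trace[g_j(A_j\jordan B_j)]=\Avg{A_j,B_j}_{g_j}$. Hence
\begin{align}
\Avg{\iota(\oplus_jA_j),\iota(\oplus_kB_k)}_g=\sum_j\Avg{A_j,B_j}_{g_j},
\end{align}
which is exactly the direct-sum inner product reviewed in Appendix~\ref{app_direct}. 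It follows that $\iota$ descends to an isometry on the quotients by $\mc N$, and so identifies $\bigoplus_j\mc V_j/\mc N(g_j)$ with a subspace of the output observables.

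\textbf{Scores.} Differentiate $g(\phi(t))$ with the product rule:
\begin{align}
\dot\phi(g)=\sum_j g_1\otimes\dots\otimes\dot\phi(g_j)\otimes\dots\otimes g_N .
\end{align}
Insert $\dot\phi(g_j)=g_j\jordan\score_j(\dot\phi)$ from Eq.~(\ref{gce}). Since $g$ commutes with the identity factors, $g\jordan\tilde S_j$ equals $g_1\otimes\dots\otimes(g_j\jordan S_j)\otimes\dots\otimes g_N$. Therefore $\iota\bigl(\bigoplus_j\score_j(\dot\phi)\bigr)$ solves the output Lyapunov equation~(\ref{lyap}), which is Eq.~(\ref{score_direct}). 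Any other solution differs from it by an element of $\mc N(g)$, because $g\jordan D=0$ implies $\trace[g(D\jordan D)]=\trace[(g\jordan D)D]=0$. Consequently every output score, and hence $\range g_\push$, lies in $\iota(\mc V_\out)/\mc N(g)$. This justifies choosing Eq.~(\ref{V_direct}) as the output ambient space: it is an ambient vector space containing all scores, which is all that Sec.~\ref{sec_score} requires.

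\textbf{Completion, and the main obstacle.} Eq.~(\ref{Amb_direct}) follows by completing both sides: the completion of a finite orthogonal direct sum is the direct sum of the completions. The main obstacle is rigor when the $\mc H_j$ are infinite-dimensional or $N$ is infinite. Scores may then be unbounded, so the Jordan-product and trace manipulations above need justification. The route is to do everything for bounded zero-mean $A_j\in\mc V_j$, where the computations are legitimate, and then pass to the closure using the isometry. For infinite $N$, take the Hilbert direct sum; the isometry extends to square-summable sequences, and the completion again gives Eq.~(\ref{Amb_direct}). The classical case follows by restricting to commuting (diagonal) operators, which reproduces additivity of Fisher information, Eq.~(\ref{info_add}).
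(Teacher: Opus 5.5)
Your proposal is correct and follows the paper's proof essentially step for step. The paper likewise uses the Lyapunov equation to write $\score_\out(\dot\phi)=\sum_j\Amp_j\score_j(\dot\phi)$ in terms of ampliations, sets $\mc V_\out=\sum_j\Amp_j\mc V_j$, checks that $\Avg{\Amp_j A,\Amp_k B}_g$ vanishes for $j\neq k$ and equals $\Avg{A,B}_{\chan_j f}$ for $j=k$, and invokes Prop.~\ref{prop_direct} for the completion. Your extra points (other Lyapunov solutions differing only by elements of $\mc N(g)$, and the bounded-then-closure route) are refinements the paper does not spell out. Your infinite-$N$ extension is only sketched, but it is not needed, since Eq.~(\ref{prod_maps}) and Prop.~\ref{prop_direct} concern finitely many outputs.
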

\begin{lemma}
\label{lem_sum_maps}
Consider a direct sum of channel maps given by Eq.~(\ref{sum_maps}),
where all the individual channels are of the same type in
Example~\ref{exa_pp}, \ref{exa_gg}, \ref{exa_qpp}, or \ref{exa_qgg}.
Then Eqs.~(\ref{V_direct})--(\ref{Amb_direct}) also hold.
\end{lemma}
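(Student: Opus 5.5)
We treat the four cases of Lemma~\ref{lem_sum_maps} one at a time. In each case the work is to compute the output score for the combined output $g=\bigoplus_j \chan_j f$ and show that it splits componentwise into the individual output scores $\score_j(\dot\phi)$, and that the semi-inner product at $g$ splits into the sum of the individual ones. Once both facts are in hand, Eq.~(\ref{V_direct}) is the statement that $\mc V_\out=\bigoplus_j\mc V_j$ with $\Avg{\bigoplus_j A_j,\bigoplus_j B_j}_g=\sum_j\Avg{A_j,B_j}_{\chan_j f}$. Eq.~(\ref{score_direct}) is the componentwise score. Eq.~(\ref{Amb_direct}) then follows from standard facts about direct sums. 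The null space factorizes as $\mc N(g)=\bigoplus_j\mc N(\chan_jf)$, because a sum of nonnegative seminorms squared vanishes only if each term vanishes. The completion of an orthogonal direct sum of inner-product spaces is the direct sum of the completions (Appendix~\ref{app_direct}).

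\emph{Poisson case} (Example~\ref{exa_pp}). The combined Poisson field lives on the disjoint union $\sspace_1\sqcup\sspace_2\sqcup\cdots$. Its mean density with respect to the sum measure $\nu_1\oplus\nu_2\oplus\cdots$ is $\chan_1f\oplus\chan_2f\oplus\cdots$. A function on the disjoint union is exactly a tuple $(A_j)$ of functions on the pieces. The score $\dot\phi\ln g$ restricted to piece $j$ is $\dot\phi\ln \chan_jf=\score_j(\dot\phi)$. The integral in Eq.~(\ref{inner_classical}) over the disjoint union is the sum of the integrals over the pieces. The finite-seminorm condition on the whole is equivalent to finite seminorm on each piece.

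\emph{Quantum Poisson case} (Example~\ref{exa_qpp}). The intensity operator $\bigoplus_j\chan_jf$ acts on $\bigoplus_j\mc H_j$. We choose $\mc V_\out$ to be the block-diagonal bounded self-adjoint operators $\bigoplus_j A_j$. Off-diagonal blocks satisfy $\trace[g(A\jordan B)]$ contributions that vanish, since $g$ is block diagonal. The remaining check concerns the Lyapunov equation~(\ref{lyap}). With $g$ and $\dot\phi g$ block diagonal, the block-diagonal $\bigoplus_j\score_j(\dot\phi)$ solves it blockwise. The inner product~(\ref{inner_quantum}) is additive over blocks. The subtle point is whether restricting to block-diagonal operators is legitimate, i.e., whether it loses scores. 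It does not, because every output score has a block-diagonal representative, and the lemma only asserts that $\mc V_\out$ \emph{can be set} this way.

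\emph{Gaussian cases} (Examples~\ref{exa_gg}, \ref{exa_qgg}). For independent fields, the output space is $\mc B_1\oplus\mc B_2\oplus\cdots$, and its dual is the direct sum of the duals. The covariance map is block diagonal, $\Sigma_\out=\bigoplus_j\Sigma_j$, because independence kills the cross-covariances. Hence $\Avg{L,K}_{\mc V_\out}=\sum_j(\Sigma_jL_j)(K_j)$. Eq.~(\ref{score_gauss}) then reads $\bigoplus_j\Sigma_j\score_{\out,j}=\bigoplus_j\dot\phi(\chan_jf)$, which is solved componentwise by $\score_j(\dot\phi)$. The quantum version is identical: the phase space is $\mc K_\out=\bigoplus_j\mc K_j$ and the covariance of a tensor-product Gaussian state is block diagonal.

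The main obstacle I expect is the infinite-$N$ or infinite-dimensional Gaussian setting. There, one must check that the finite-variance subset of the dual of the direct sum coincides with the direct sum of the finite-variance subsets. With a countable sum one also needs $\ell^2$-type summability. I would handle this by working with finite $N$ first, where it is immediate. For countable $N$, I would define the direct sum as square-summable tuples, consistent with Appendix~\ref{app_direct}.
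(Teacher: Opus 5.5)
Your proposal is correct and essentially matches the paper's proof. The paper writes out only the quantum Poisson case: it shows that $\score_1(\dot\phi)\oplus\score_2(\dot\phi)$ solves the Lyapunov equation for $\chan_1 f\oplus\chan_2 f$, so every output score is a direct sum, $\mc V_\out$ can be set as $\bigoplus_j\mc V_j$, and Eq.~(\ref{Amb_direct}) follows from Prop.~\ref{prop_direct}. Your version also writes out the other three cases and the additivity of the semi-inner product, which the paper leaves as ``the same argument''; one wording fix is that off-diagonal blocks of a general operator do contribute to $\trace[g(A\jordan B)]$ (only their cross terms with block-diagonal operators vanish), but since all scores are block diagonal this does not affect your argument.
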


With the direct-sum structure, $\imap$ can be simplified.
\begin{theorem}
\label{thm_add}
Assume the direct-sum structure given by
Eqs.~(\ref{V_direct})--(\ref{Amb_direct}). The local channel maps obey
\begin{align}
\chan_\push &= \sum_j \inj_j \chan_{j\push},
\label{push_sum}
\\
\chan^\pull &= \sum_j \chan_j{}^{\pull}\pi_j,
\label{pull_sum}
\\
\imap &= \sum_j \chan_j{}^{\pull}\chan_{j\push},
\label{imap_sum}
\end{align}
where $\inj_j:\Amb(\chan_j f) \to \Amb(g)$ is the injection map and
$\pi_j = \inj_j^\adj:\Amb(g) \to \Amb(\chan_j f)$ is the projection
map, as reviewed in Appendix~\ref{app_direct}.
\end{theorem}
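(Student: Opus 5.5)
The plan is to derive all three identities from the covariant chain rule $g_\push = \chan_\push f_\push$ together with the direct-sum structure of Lemmas~\ref{lem_prod_maps} and \ref{lem_sum_maps}.

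\textbf{Step 1: the pushforward.} Take any tangent vector $\dot\phi \in \mc T_\theta$. By Eq.~(\ref{score_direct}), the output score is $g_\push\dot\phi = \bigoplus_j \score_j(\dot\phi)$. Each component is the pushforward of $\dot\phi$ through the $j$th individual channel, that is, $\score_j(\dot\phi) = (\chan_j f)_\push\dot\phi = \chan_{j\push} f_\push\dot\phi$, by the chain rule Eq.~(\ref{cov}) applied to the channel $\chan_j$ alone. Since $\inj_j$ embeds $\Amb(\chan_j f)$ as the $j$th summand of $\Amb(g)$, we get
\begin{align}
\chan_\push u = \sum_j \inj_j \chan_{j\push} u
\end{align}
for every $u \in \range f_\push$. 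Each $\chan_{j\push}$ is bounded, with $\norm{\chan_{j\push}} \le \norm{\chan_\push}$ because the direct-sum norm dominates each component norm. Hence both sides extend continuously to $\mc T(f)$, and Eq.~(\ref{push_sum}) holds there.

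\textbf{Step 2: the pullback and the information map.} Take adjoints with respect to the inner products of $\mc T(f)$ and $\Amb(g)$, using the definition Eq.~(\ref{adj}). For $u \in \mc T(f)$ and $v = \bigoplus_j v_j \in \Amb(g)$, the direct-sum inner product decomposes as
\begin{align}
\Avg{\chan_\push u, v}_g = \sum_j \Avg{\chan_{j\push}u, \pi_j v}_{\chan_j f} = \sum_j \Avg{u, \chan_j{}^\pull \pi_j v}_f,
\end{align}
which gives Eq.~(\ref{pull_sum}). Composing the two formulas and using $\pi_j\inj_k = \delta_{jk} I$ yields $\imap = \chan^\pull\chan_\push = \sum_{j,k}\chan_j{}^\pull\pi_j\inj_k\chan_{k\push} = \sum_j \chan_j{}^\pull\chan_{j\push}$, which is Eq.~(\ref{imap_sum}).

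\textbf{Main obstacle: convergence when $N$ is infinite.} The step needing care is justifying the interchange of sums and adjoints when there are infinitely many outputs. For $u \in \mc T(f)$, the components of $\bigoplus_j \chan_{j\push}u$ satisfy $\sum_j \norm{\chan_{j\push}u}^2 = \norm{\chan_\push u}_g^2 < \infty$. So the sum in Eq.~(\ref{push_sum}) converges in norm for each $u$, and the sums in Eqs.~(\ref{pull_sum}) and (\ref{imap_sum}) converge in the strong or weak operator sense. I would state the sums as strongly convergent and verify this via the Pythagorean identity in the direct sum, as reviewed in Appendix~\ref{app_direct}. For finite $N$ the argument above is complete as written.
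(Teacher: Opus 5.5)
Your proposal is correct and follows essentially the same route as the paper. Both identify $g_\push\dot\phi$ with $\bigoplus_j \chan_{j\push} f_\push\dot\phi$ via Eq.~(\ref{score_direct}) and the individual chain rules, read off Eq.~(\ref{push_sum}) from $g_\push = \chan_\push f_\push$, and then get Eqs.~(\ref{pull_sum}) and (\ref{imap_sum}) from $\pi_j = \inj_j^\adj$ and $\pi_j\inj_k = \delta_{jk}I$ (Prop.~\ref{prop_inj}); your continuity extension from $\range f_\push$ to $\mc T(f)$ spells out a step the paper leaves implicit, and your infinite-$N$ discussion is unnecessary because the direct sums in Appendix~\ref{app_direct} are over finitely many outputs.
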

While $\chan_\push:\mc T(f) \to \Amb(g)$ and
$\chan^\pull:\Amb(g) \to \mc T(f)$ involve a high-dimensional Hilbert
space $\Amb(g) = \oplus_j \Amb(\chan_j f)$ that grows with the number
of outputs, notice that $\imap$ is simply a sum of the individual
information maps $\chan_j{}^\pull\chan_{j\push}:\mc T(f)\to\mc T(f)$,
each confined to the same domain and codomain $\mc T(f)$.  This
additivity generalizes the well known additivity of the Fisher or
Helstrom information matrices for independent objects. It allows the
computation of $\imap$ to avoid the high-dimensional $\Amb(g)$, thus
simplifying the computation of $\Bnd_\out$ via
Corollary~\ref{cor_info}. It also simplifies the computation of the
singular values $\{s_j\}$ and the input singular vectors
$\{\insing_j\}\subset \mc T(f)$ of $\chan_\push$, since they can be
obtained from Eq.~(\ref{imap_svd}) via the SVD of $\imap$. Once
$\{s_j\}$ and $\{\insing_j\}$ are obtained, they can be plugged into
Eq.~(\ref{push_sum}) to give the output singular vectors
$\{\outsing_j\}$ of $\chan_\push$ by
\begin{align}
\outsing_j &= \frac{1}{s_j} \chan_\push \insing_j = \frac{1}{s_j} \sum_k \inj_k \chan_{k\push} \insing_j
= \frac{1}{s_j} \bigoplus_k \chan_{k\push} \insing_j.
\end{align}
This convenient property is the main reason we introduced
the information map in Sec.~\ref{sec_imap}.

\section{\label{sec_chan_exa2}Examples of local channel maps}
We now study the channels in Examples~\ref{exa_cc}--\ref{exa_qgg} in
more detail and derive formulas for the local channel maps
$\chan_\push$, $\chan^\pull$, and $\imap$.

\subsection{\label{sec_cc}Classical channels, following Examples~\ref{exa_cc} 
and \ref{exa_pp}}
Consider the classical channel in Example~\ref{exa_cc}. We begin by
introducing a necessary concept called the dual $\chan^\dual$ of the
Markov map $\chan$, defined by
\begin{align}
\int A \chan f \dd\nu &= \int A(\elout) \chan(\elout|\el) f(\theta,\el) \dd\mu(\el) 
\dd\nu(\elout)
= \int \bk{\chan^\dual A} f \dd\mu
\label{chan_dual}
\end{align}
for any random variable $A$ at the output. An explicit formula is
\begin{align}
\bk{\chan^\dual A}(\el) &= \int A(\elout) \chan(\elout|\el) \dd\nu(\elout)
\quad
(\textrm{almost everywhere $P_\theta$}).
\end{align}
Observe that this is the predictive conditional expectation. From the
definition, we obtain
\begin{align}
\expect\Bk{\bk{\chan^\dual A}(X)} &= \expect[A(Y)],
\label{adam}
\end{align}
so the mean is conserved. Moreover, it is well known that the second
moment cannot increase under a conditional expectation
\cite{parth_pm}, in the sense that
\begin{align}
\norm{A}_{g}^2 &\ge \norm{\chan^\dual A}_f^2.
\label{eve}
\end{align}
Let $\mc V$ be the vector space of zero-mean random variables at the
input and $\mc V_\out$ be that of zero-mean random variables at the
output.  Eq.~(\ref{adam}) implies that, if we restrict the domain of
$\chan^\dual$ to $\mc V_\out$, then its codomain can be taken as
$\mc V$.  Note also that, for any zero-seminorm element $A$ of
$\mc N(g)$, $0 = \norm{A}_{g} \ge \norm{\chan^\dual A}_f$ by
Eq.~(\ref{eve}) and thus $\chan^\dual A \in \mc N(f)$, implying
\begin{align}
\chan^\dual\Bk{A + \mc N(g)}  &= \chan^\dual A + \chan^\dual \mc N(g)
\subseteq \chan^\dual A + \mc N(f) \in \mc V/\mc N(f),
\end{align}
and $\chan^\dual$ remains well defined as a map
$\chan^\dual:\mc V_\out/\mc N(g) \to \mc V/\mc N(f)$ (in the sense
that it maps each equivalence class $A + \mc N(g)$ in the domain to
one and only one equivalence class $\chan^\dual A + \mc N(f)$ in the
codomain). Eq.~(\ref{eve}) further implies that the operator norm satisfies
$\norm{\chan^\dual} \le 1$, so $\chan^\dual$ can be uniquely
extended to become a bounded map $\chan^\dual:\Amb(g) \to \Amb(f)$
with the same operator norm.

We now find the channel pushforward $\chan_\push$ in terms of
$\chan^\dual$.  Define the adjoint
$\chan^{\dual\adj}:\Amb(f)\to\Amb(g)$ of $\chan^\dual$ with respect to
the inner products of $\Amb(f)$ and $\Amb(g)$ by
\begin{align}
\Avg{\chan^{\dual\adj} A,B}_g &= \Avg{A,\chan^\dual B}_f \quad
\forall A \in \Amb(f) ,B \in \Amb (g).
\label{chan_dual_adj}
\end{align}
Following the notation of Ref.~\cite{gce2}, we abbreviate
$\chan^{\dual\adj}$ as
\begin{align}
\chan^{\dual\adj} = \chan_\dual.
\label{chan_dual_push}
\end{align}
Unpacking Eq.~(\ref{chan_dual_adj}) in terms of functions in $\mc V$
and $\mc V_\out$, we find
\begin{align}
\int \bk{\chan_\dual A} B g \dd\nu
&= \int \Bk{\chan\bk{f A }} B \dd\nu,
\end{align}
which implies 
\begin{align}
\bk{\chan_\dual A}(\elout) &= 
\frac{1}{g(\theta,\elout)}\int \chan(\elout|\el) f(\theta,\el) A(\el) \dd\mu(\el)
\quad
(\textrm{almost everywhere $Q_\theta$}).
\label{retro}
\end{align}
Eq.~(\ref{retro}) is identical to Eq.~(\ref{score_out}) that relates
the input and output score functions and coincides with the
retrodictive conditional expectation. $\chan_\dual$ is almost
our channel pushforward $\chan_\push$, except that the domain of the
former is $\Amb(f)$ while that of the latter is $\mc T(f)$. This
subtle issue can be fixed by writing
\begin{align}
\chan_\push &= \chan_\dual \inj_{\mc T(f)}
\label{chan_push_classical}
\end{align}
in terms of the injection map $\inj_{\mc T(f)}:\mc T(f) \to
\Amb(f)$. Eq.~(\ref{chan_push_classical}) is the correct expression
for the channel pushforward, giving the right formula in
Eq.~(\ref{score_out}) to relate the input and output scores and
possessing the right domain $\mc T(f)$ and codomain
$\Amb(g)$. Eq.~(\ref{chan_push_classical}) and the monotonicity of the
channel will be stated as lemmas in Sec.~\ref{sec_qq} to be proved
under the more general quantum case.

Given Eq.~(\ref{chan_push_classical}), the channel pullback
$\chan^\pull:\Amb(g) \to \mc T(f)$ becomes
\begin{align}
\chan^\pull &= \inj_{\mc T(f)}^\adj \chan^{\dual}
= \pi_{\mc T(f)}\chan^{\dual},
\label{chan_pull_classical}
\end{align}
where $\pi_{\mc T(f)}:\Amb(f) \to \mc T(f)$ is the projection map (see
Appendix~\ref{app_direct}).  The information map given by
Eq.~(\ref{imap}) becomes
\begin{align}
\imap &= \chan^\pull \chan_\push = \pi_{\mc T(f)} 
\chan^\dual \chan_\dual \inj_{\mc T(f)}
= \pi_{\mc T(f)} \IMAP \inj_{\mc T(f)},
\label{imap_classical}
\\
\IMAP &= \chan^\dual \chan_\dual:\Amb(f) \to \Amb(f).
\label{IMAP_classical}
\end{align}
We call $\IMAP$ the maximal information map.  $\IMAP$ is the
information map when the input score tangent space is maximal
($\mc T(f) = \Amb(f)$) and the $\inj_{\mc T(f)}$ and $\pi_{\mc T(f)}$
maps become redundant.

Let us now take the nonparametric model in Sec.~\ref{sec_classical}
as the input to the classical channel in Example~\ref{exa_cc}. The
parameter $\theta(\cdot) = f(\theta,\cdot)$ is an arbitrary
probability density. The parameter of interest is the expected value
$\beta(\theta) = \expect(b) = \int b \theta \dd\mu = \Avg{b,1}_\theta$ of a random
variable $b$. Sec.~\ref{sec_classical} has shown that the input
$\mc T(f)$ is maximal and the input efficient influence is
$\effinf = b-\avg{b,1}_\theta \in \Amb(f) = \mc T(f)$. By
Theorem~\ref{thm_effinfout}, $\Bnd_\out < \infty$ if and only if
Eq.~(\ref{effinf_range}) holds. Eq.~(\ref{effinf_range}) means that
\begin{align}
\exists u \in \Amb(g): 
\effinf &= b-\Avg{b,1}_\theta = \chan^\pull u = \chan^\dual u.
\label{pullback_B}
\end{align}
If such a $u$ exists, then Eq.~(\ref{pullback_B}) can be solved for
the output efficient influence $u = \effinfout \in \mc T(g)$ by
applying the pseudo-inverse $\chan^{\dual-}:\mc T(f) \to \Amb(g)$ to
$\effinf$. Further progress can be made analytically for special
channel maps, such as the exponential family \cite[Sec.~25.5]{vaart}.

Since $\effinfout$ may depend on the unknown $\theta$, it is
nontrivial to find an efficient estimator.  If the sample space
$\sspace$ is finite, the parameter spaces $\Theta$ for the models in
Secs.~\ref{sec_classical} and \ref{sec_poisson} are
finite-dimensional, and one can resort to the maximum-likelihood
estimator to achieve asymptotic efficiency \cite{vaart}. For a simpler
strategy, suppose that we have computed a preliminary estimate
$\check\theta$ of $\theta$ from some independent pilot observations
and derived $\effinfout$ at $\check\theta$; denote it as
$(\effinfout)(\check\theta)$. Then Eq.~(\ref{pullback_B}) at
$\check\theta$ leads to
\begin{align}
\chan^\dual\Bk{(\effinfout)(\check\theta)} 
&= b -\beta(\check\theta) + \mc N(\check\theta),
&
\mc N(\check\theta) &= \BK{A \in \mc V: \int A^2 \check\theta \dd\mu = 0}.
\end{align}
At the true $\theta$, if we can make the assumption 
\begin{align}
\int A \theta \dd\mu = 0 \quad
\forall A \in \mc N(\check\theta),
\label{N0}
\end{align}
which is satisfied if $P_{\check\theta}$ dominates $P_\theta$, then
\begin{align}
\expect\Bk{\chan^\dual(\effinfout)(\check\theta)}
&= \expect\Bk{\bk{\effinfout}(\check\theta)}
= \expect(b) - \beta(\check\theta) = \beta(\theta)-\beta(\check\theta).
\end{align}
It follows that the estimator
\begin{align}
\check\beta(Y) &= \beta(\check\theta) + (\effinfout)(\check\theta,Y)
\label{onestep}
\end{align}
in terms of the output random element $Y$ is unbiased.  This type of
estimator is called a one-step estimator in classical statistics
\cite{kennedy24}.  Its variance is the variance of
$(\effinfout)(\check\theta)$ at $g(\theta)$, which is close to the
efficiency bound $\norm{(\effinfout)(\theta)}_{g(\theta)}^2$ if the
preliminary estimate $\check\theta$ is sufficiently close to the true
$\theta$.  We leave the question of how to find a satisfactory
$\check\theta$, the rigorous efficiency of the one-step estimator, and
the possibility of better estimators outside the scope of this work.

The preceding results can be translated for Poisson problems.  In
Sec.~\ref{sec_poisson}, the parameter $\theta$ is an arbitrary
Poisson-field mean density and the parameter of interest is a linear
functional $\beta = \int b\theta \dd\mu$. Take the setting of
Sec.~\ref{sec_poisson} as the input to the Poisson channel in
Example~\ref{exa_pp}. Now $\Amb(f)$ and $\Amb(g)$ are no longer
restricted to have zero mean, but the input $\mc T(f)$ is still
maximal, all the maps still have the same formulas as
Eqs.~(\ref{chan_dual})--(\ref{IMAP_classical}), the channel remains
monotonic, although Eq.~(\ref{pullback_B}) should be changed to
\begin{align}
\exists u \in \Amb(g): 
\effinf &= b = \chan^\pull u = \chan^\dual u.
\label{pullback_B_poisson}
\end{align}
If such a $u$ exists, $\effinfout = \chan^{\dual-} b$ can be
solved. Given the output efficient influence
$(\effinfout)(\check\theta)$ evaluated at a preliminary estimate
$\check\theta$ and assuming Eq.~(\ref{N0}), the linear estimator
\begin{align}
\check\beta(Y) &= \int (\effinfout)(\check\theta,\elout) \dd Y(\elout)
\label{onestep_poisson}
\end{align}
in terms of the output Poisson field $Y$ is unbiased with variance
$\norm{(\effinfout)(\check\theta)}_{g(\theta)}^2$. We call
Eq.~(\ref{onestep_poisson}) a one-step estimator for the Poisson
model.

\subsection{\label{sec_qq}Quantum channels, following Examples~\ref{exa_qq} and \ref{exa_qpp}}
Similar to the classical case, we begin by introducing the dual
$\chan^\dual$ of the Markov map $\chan$, defined by
\begin{align}
\trace\Bk{\bk{\chan f} A} &= \trace\bk{f \chan^\dual A}
\label{chan_dual_q}
\end{align}
for any trace-class $f$ and bounded $A$ on $\mc H_\out$, such that
$\chan^\dual A$ is also a bounded operator on $\mc H$
\cite{holevo_structure}. Given the Kraus representation
\begin{align}
\chan f &= \sum_j K_j f K_j^\dual
\label{kraus}
\end{align}
in terms of a set of Kraus operators $K_j:\mc H \to \mc H_\out$,
$\chan^\dual$ is simply
\begin{align}
\chan^\dual A &= \sum_j K_j^\dual A K_j.
\label{kraus_dual}
\end{align}
$\chan^\dual$ is completely positive if and only if $\chan$ is
completely positive. $\chan^\dual$ is unital ($\chan^\dual I = I$) if
and only if $\chan$ is trace-preserving.  Because
$\chan^\dual\bk{A^\dual} = \bk{\chan^\dual A}^\dual$, $\chan^\dual$
maps self-adjoint operators to self-adjoint operators.  The definition
given by Eq.~(\ref{chan_dual_q}) also implies that $\chan^\dual$
conserves the mean, so we can consider $\chan^\dual$ as a map
$\chan^\dual:\mc V_\out \to \mc V$ if $\mc V_\out$ and $\mc V$
comprise zero-mean observables.  We can then extend it to become a map
on the Hilbert space $\Amb(g)$ as follows.
\begin{lemma}
\label{lem_dual}
The Markov-map dual $\chan^\dual$ can be uniquely extended to become a
bounded map $\chan^\dual:\Amb(g) \to \Amb(f)$ with operator norm 
satisfying
\begin{align}
\norm{\chan^\dual}  &\le 1
\label{mono_quantum}
\end{align}
in terms of the norms of $\Amb(f)$ and $\Amb(g)$.
\end{lemma}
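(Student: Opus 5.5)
The plan is to prove the quantum analogue of the conditional-expectation contraction, Eq.~(\ref{eve}):
\begin{align}
\norm{\chan^\dual A}_f^2 \le \norm{A}_g^2
\quad \forall A \in \mc V_\out ,
\end{align}
and then pass to equivalence classes and completions exactly as in the classical discussion preceding Eq.~(\ref{chan_dual_adj}). Once the inequality is established, three consequences follow. First, $\chan^\dual$ maps $\mc N(g)$ into $\mc N(f)$, so it is well defined on $\mc V_\out/\mc N(g)\to\mc V/\mc N(f)$; mean conservation and the preservation of self-adjointness ensure the image lies in $\mc V$. Second, it is a linear contraction on a dense subspace of $\Amb(g)$. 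Third, by the bounded linear extension theorem it extends uniquely to a bounded map $\Amb(g)\to\Amb(f)$ with the same operator norm, which gives Eq.~(\ref{mono_quantum}).

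The key step is the inequality. By Eq.~(\ref{inner_quantum}), $\norm{A}_f^2=\trace[f(A\jordan A)]=\trace(fA^2)$ for self-adjoint $A$. Hence
\begin{align}
\norm{\chan^\dual A}_f^2 = \trace\Bk{f\bk{\chan^\dual A}^2},
\qquad
\norm{A}_g^2 = \trace\bk{g A^2} = \trace\Bk{(\chan f)A^2} = \trace\Bk{f\,\chan^\dual(A^2)},
\end{align}
where the last equality uses Eq.~(\ref{chan_dual_q}). It therefore suffices to establish the Kadison--Schwarz inequality
\begin{align}
\bk{\chan^\dual A}^2 \le \chan^\dual\bk{A^2}
\end{align}
in the operator order, and then take the trace against $f\ge 0$.

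I would prove the Kadison--Schwarz inequality directly from the Kraus form, Eq.~(\ref{kraus_dual}). Define the column operator $K=(K_1,K_2,\dots)^\top:\mc H\to\bigoplus_j\mc H_\out$. Then $\chan^\dual A = K^\dual(I\otimes A)K$, and unitality $K^\dual K=\chan^\dual I = I$ (from trace preservation) makes $KK^\dual$ a projection with $KK^\dual \le I$. Consequently
\begin{align}
\bk{\chan^\dual A}^2 = K^\dual (I\otimes A) K K^\dual (I\otimes A) K \le K^\dual (I\otimes A)^2 K = \chan^\dual\bk{A^2}.
\end{align}
Alternatively, one could cite the Stinespring dilation or the general Kadison--Schwarz inequality for unital completely positive maps \cite{petz,bratteli}. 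The Poisson case, Example~\ref{exa_qpp}, uses the same inequality; there $\mc V$ is not restricted to zero mean, so the mean-conservation remark is unnecessary.

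I expect the main obstacle to be technical rather than conceptual: ensuring convergence of the Kraus sums for bounded $A$ in infinite dimensions, and justifying $\trace[f\,\chan^\dual(A^2)]=\trace[(\chan f)A^2]$ when $A^2$ is bounded and $f$ is trace class. Both are covered by the duality in Eq.~(\ref{chan_dual_q}), applied with $A^2$ in place of $A$.
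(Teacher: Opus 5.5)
Your proposal is correct and follows the paper's proof essentially step for step: tracing the Schwarz inequality $(\chan^\dual A)^2 \le \chan^\dual(A^2)$ against $f$ gives $\norm{\chan^\dual A}_f \le \norm{A}_g$ on $\mc V_\out$, hence $\chan^\dual \mc N(g) \subseteq \mc N(f)$, well-definedness on the quotient, and the unique extension to $\Amb(g)$ with the same operator norm. The only difference is that the paper cites the Schwarz inequality for positive unital maps (Bratteli--Robinson), whereas you derive it from the Kraus form via the isometry $K$ with $KK^\dual \le I$; that derivation is valid here because Markov maps are completely positive.
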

Let the adjoint of $\chan^\dual:\Amb(g) \to \Amb(f)$ be
$\chan_\dual = \chan^{\dual\adj}:\Amb(f) \to \Amb(g)$; both are
generalized conditional expectations
\cite{hayashi,gce_pra,gce2}. $\chan_\dual$ has the formal expression
\begin{align}
\chan_\dual &= \mc E_{\chan f}^{-1} \chan \mc E_f,
&
\mc E_f A &= f \jordan A,
\end{align}
which generalizes Eq.~(\ref{retro}) and coincides with Eq.~(\ref{gce})
that relates the input and output score
operators. Eq.~(\ref{chan_push_classical}) continues to hold; we state
it as a lemma to be proved.
\begin{lemma}
\label{lem_score}
Given a Markov map $\chan$, the channel pushforward
$\chan_\push:\mc T(f) \to \Amb(g)$ is
\begin{align}
\chan_\push &= \chan_\dual \inj_{\mc T(f)},
\label{chan_push_quantum}
\end{align}
where $\chan_\dual = \chan^{\dual\adj}:\Amb(f) \to \Amb(g)$ is the
adjoint of $\chan^\dual$ with respect to the inner products of
$\Amb(f)$ and $\Amb(g)$, $\chan^\dual:\Amb(g) \to \Amb(f)$ is the
Markov-map dual, and $\inj_{\mc T(f)}:\mc T(f) \to \Amb(f)$ is the
injection map.
\end{lemma}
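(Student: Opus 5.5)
The plan is to check that $\chan_\dual \inj_{\mc T(f)}$ satisfies the defining property of the channel pushforward, namely the covariant chain rule of Eq.~(\ref{cov}), $g_\push\dot\phi = \chan_\push f_\push\dot\phi$, on $\range f_\push$. Once it agrees with $\chan_\push$ there, it agrees on all of $\mc T(f)$. The reason is that $\chan_\dual$ is bounded: it is the adjoint of the bounded map in Lemma~\ref{lem_dual}, so $\norm{\chan_\dual} = \norm{\chan^\dual} \le 1$. The extension of $\chan_\push$ from $\range f_\push$ to its closure $\mc T(f)$ is unique, so two bounded maps agreeing on a dense subset agree everywhere.

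For a given tangent vector $\dot\phi$, I will show $\chan_\dual\,\score(\dot\phi) = \score_\out(\dot\phi)$ in $\Amb(g)$ by testing against a dense set. Take any $B \in \mc V_\out$, i.e., a bounded, zero-mean (with respect to $g$) self-adjoint operator. By the definition of the adjoint and of $\chan^\dual$ in Eq.~(\ref{chan_dual_q}),
\begin{align}
\Avg{\chan_\dual \score(\dot\phi),B}_g &= \Avg{\score(\dot\phi),\chan^\dual B}_f
= \trace\Bk{\bk{f\jordan\score(\dot\phi)}\chan^\dual B}
= \trace\Bk{\chan\bk{f\jordan \score(\dot\phi)} B}.
\end{align}
Using the Lyapunov equation (\ref{lyap}) and the linearity and boundedness of $\chan$ (trace norm), this becomes $\trace[\chan(\dot\phi f) B] = \trace[\dot\phi(g) B]$. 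By Eq.~(\ref{gce}) it equals $\trace[(g\jordan\score_\out(\dot\phi))B] = \Avg{\score_\out(\dot\phi),B}_g$. Since $\mc V_\out/\mc N(g)$ is dense in $\Amb(g)$, the two elements coincide. This is the quantum analogue of the identification of Eq.~(\ref{retro}) with Eq.~(\ref{score_out}). The Poisson case (Example~\ref{exa_qpoisson}) is handled identically, with the zero-mean constraint dropped. The classical case is the commutative specialization.

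The main obstacle is justifying the trace manipulations when the score operators are unbounded elements of the completion. This comes up in the identity $\Avg{A,\chan^\dual B}_f = \trace[(f\jordan A)\chan^\dual B]$ and in interchanging $\chan$ with the derivative $\dot\phi$, i.e., $\dot\phi(\chan f) = \chan(\dot\phi f)$, which needs $\dot\phi f$ to be a trace-class derivative. I would handle this by assuming, as the paper implicitly does, that $\dot\phi f$ exists in trace norm; $\chan$ is trace-norm contractive, so it commutes with the limit of difference quotients. The pairing $\trace[(f\jordan A) C] = \Avg{A,C}_f$ for bounded $C$ I would take from the construction of $\Amb(f)$, as in Sec.~\ref{sec_quantum}, extended by continuity in $A$.
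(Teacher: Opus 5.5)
Your proposal is correct and follows essentially the paper's proof. Like the paper, you test against the dense set $\mc V_\out/\mc N(g)$ and use $\trace\Bk{\bk{\chan^\dual A}\dot\phi f} = \trace\Bk{A\,\chan\bk{\dot\phi f}} = \trace\Bk{A\,\dot\phi g}$ to get $\Avg{u,g_\push\dot\phi}_g = \Avg{\chan^\dual u,f_\push\dot\phi}_f = \Avg{u,\chan_\dual\inj_{\mc T(f)}f_\push\dot\phi}_g$, then conclude by continuity. The remaining differences are minor. The paper takes Holevo's weak definition $\trace\Bk{A\bk{\dot\phi f}} = \Avg{A,\score(\dot\phi)}_f$ for all bounded self-adjoint $A$ as the definition of the score, which is exactly the pairing you derive from the Lyapunov equation and which removes your concern about unbounded scores; the paper also uses the interchange of $\dot\phi$ and $\chan$ without comment. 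Your explicit extension from $\range f_\push$ to $\mc T(f)$ via $\norm{\chan_\dual}\le 1$ spells out a step the paper leaves to the definition of $\chan_\push$.
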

We can now prove the monotonicity of Markov maps. The theorem is well
established in the finite-dimensional case \cite{petz}, while
Ref.~\cite[Lemma~3]{qlmoment_pra2} gives a proof for arbitrary
dimensions when $\chan$ is the partial trace. We give a different
proof here for completeness.
\begin{theorem}\label{thm_markov}
Any channel modeled by a Markov map is monotonic.
\end{theorem}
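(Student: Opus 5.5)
The plan is to combine Lemma~\ref{lem_dual} with Lemma~\ref{lem_score} and reduce monotonicity, defined by Eq.~(\ref{mono}), to a statement about operator norms of adjoints and restrictions.

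First, Lemma~\ref{lem_score} gives $\chan_\push = \chan_\dual \inj_{\mc T(f)}$, where $\chan_\dual = \chan^{\dual\adj}$. The injection $\inj_{\mc T(f)}:\mc T(f)\to\Amb(f)$ is an isometry, because $\mc T(f)$ is a closed subspace of $\Amb(f)$ carrying the inherited inner product. Hence $\norm{\inj_{\mc T(f)}} \le 1$, and submultiplicativity of operator norms yields
\begin{align}
\norm{\chan_\push} \le \norm{\chan_\dual}\,\norm{\inj_{\mc T(f)}} \le \norm{\chan_\dual}.
\end{align}

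Second, $\chan_\dual$ is the Hilbert-space adjoint of the bounded map $\chan^\dual:\Amb(g)\to\Amb(f)$. I will use the standard fact $\norm{T^\adj}=\norm{T}$ for bounded maps between Hilbert spaces, which holds because
\begin{align}
\norm{T^\adj A}_g^2 = \Avg{A, T T^\adj A}_f \le \norm{T}\,\norm{T^\adj A}_g\,\norm{A}_f .
\end{align}
With Lemma~\ref{lem_dual}, this gives $\norm{\chan_\dual}=\norm{\chan^\dual}\le 1$. Combining the two steps, $\norm{\chan_\push}\le 1$, which is Eq.~(\ref{mono}). The assumption that $\norm{\chan_\push}<\infty$ made in Sec.~\ref{sec_chan_push} is then justified automatically, since the restriction of $\chan_\push$ to $\range f_\push$ is bounded by the same constant.

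The substantive content therefore sits inside Lemma~\ref{lem_dual}, which I may assume here, so the theorem itself is short. The main point to check is that the operator called $\chan_\push$, defined abstractly through the output scores, agrees on $\range f_\push$ with $\chan_\dual$. This is exactly what Lemma~\ref{lem_score} provides, so the only real obstacle is making sure the domains match.

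For the Poisson variant (Example~\ref{exa_qpp}), $\mc V$ and $\mc V_\out$ are not restricted to zero mean. The same argument still applies, provided Lemmas~\ref{lem_dual} and \ref{lem_score} hold for those spaces. This is plausible: the Kadison--Schwarz-type inequality $\chan^\dual(A)^2\le\chan^\dual(A^2)$ behind Lemma~\ref{lem_dual} gives
\begin{align}
\trace\Bk{f(\chan^\dual A)^2}\le \trace\Bk{(\chan f)A^2},
\end{align}
and this does not use normalization. I would state that remark to cover both examples.
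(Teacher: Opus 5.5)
Your proposal is correct and is the paper's proof: Lemma~\ref{lem_score} gives $\chan_\push = \chan_\dual \inj_{\mc T(f)}$, the injection has unit norm, and $\norm{\chan_\dual} = \norm{\chan^\dual} \le 1$ by Lemma~\ref{lem_dual}. Your additions are sound details the paper leaves implicit: the short proof of $\norm{T^\adj}\le\norm{T}$, the observation that boundedness of $\chan_\push$ on $\range f_\push$ follows rather than needing to be assumed, and the remark that the Schwarz-inequality step carries over to the unnormalized Poisson case.
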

\begin{proof}
  $\norm{\inj} = 1$ for any injection map $\inj$.
  Lemmas~\ref{lem_dual} and \ref{lem_score} then imply
\begin{align}
\norm{\chan_\push} \le \norm{\chan_\dual} \norm{\inj_{\mc T(f)}}
= \norm{\chan_\dual}  = \norm{\chan^{\dual}} \le 1.
\end{align}
\end{proof}
Any measurement of a quantum object can be regarded as a Markov map
from a density operator to a probability density, such that the
quantum efficiency bound is a lower bound on the classical bound for
any measurement by virtue of Theorem~\ref{thm_mono}. The quantum bound
therefore inherits the statistical significance of the classical bound
outlined in Appendix~\ref{app_error}.

For the channel pullback $\chan^\pull$ and the information map
$\imap$, Eqs.~(\ref{chan_pull_classical})--(\ref{IMAP_classical}) all
continue to hold in the quantum case.

Let us now take the nonparametric model in Sec.~\ref{sec_quantum} as
the input to the quantum channel in Example~\ref{exa_qq}.  The
parameter $\theta = f(\theta)$ is an arbitrary density operator.  The
parameter of interest is the expected value
$\beta(\theta) = \trace(b\theta)$ of an observable $b$.
Sec.~\ref{sec_quantum} has shown that the input $\mc T(f)$ is maximal
and the input efficient influence is
$\effinf = b-\avg{b,I}_\theta \in \Amb(f) = \mc T(f)$.  To check that
$\effinfout$ exists and $\Bnd_\out < \infty$ by
Theorem~\ref{thm_effinfout}, the condition is the same as
Eq.~(\ref{pullback_B}) in the classical case.  If the condition holds,
the output efficient influence $\effinfout \in \mc T(g)$ can be
obtained by applying the pseudo-inverse
$\chan^{\dual-}:\mc T(f) \to \Amb(g)$ to $\effinf$.

Since $\effinfout$ may depend on the unknown $\theta$, it is nontrivial
to find a measurement that achieves the quantum efficiency bound.  A
widely studied strategy for $N$ objects is as follows
\cite{fujiwara06,yang19,demkowicz20,tsang26}. First measure some of
the objects to obtain a preliminary estimate $\check\theta$ of
$\theta$. Then evaluate the efficient influence
$(\effinfout)(\check\theta)$ at $\check\theta$
and measure the observable
\begin{align}
\check\beta &= \beta(\check\theta) I + (\effinfout)(\check\theta)
\label{onestep_q}
\end{align}
of the rest of the objects. Similar to the classical case in
Sec.~\ref{sec_cc}, this operator-valued estimator is unbiased if we
can assume
\begin{align}
\trace(A \theta) = 0 \quad \forall A \in \mc N(\check\theta)
=\BK{A \in \mc V: \trace(A^2\check\theta)=0 },
\label{N0q}
\end{align}
and its variance is the variance of $(\effinfout)(\check\theta)$ at
$g(\theta)$. Following classical statistics and Sec.~\ref{sec_cc}, we
call Eq.~(\ref{onestep_q}) a one-step estimator. We again leave the
question of how to find $\check\theta$ and the rigorous efficiency of
the estimator outside the scope of this work.

These results can be translated for a quantum Poisson problem, where
we take the nonparametric model in Sec.~\ref{sec_poisson} as the input
to the Poisson channel in Example~\ref{exa_qpp}. Then all the maps
still have the same formulas as
Eqs.~(\ref{chan_dual_q})--(\ref{chan_push_quantum}). To check that
$\effinfout$ exists and $\Bnd_\out <\infty$ by
Theorem~\ref{thm_effinfout}, we can again use the condition given by
Eq.~(\ref{pullback_B_poisson}).  To generalize the classical one-step
estimator in Sec.~\ref{sec_cc}, let $(\effinfout)(\check\theta)$ be
the efficient influence evaluated at a preliminary
$\check\theta$. Decompose it as
\begin{align}
(\effinfout)(\check\theta) &= \int \lambda(\el) \dd E(\el)
\label{onestep_qpoisson}
\end{align}
in terms of a projection-valued measure $E$ on $\mc H_\out$. Then a
measurement of the $L$ bodies in the Poisson object based on $E$ gives
a Poisson field $Y$ with mean measure $\trace[E(\cdot)\theta]$.
Assuming Eq.~(\ref{N0q}), the linear estimator
\begin{align}
\check\beta(Y) &= \int \lambda(\elout) \dd Y(\elout)
\label{onestep_qpoisson2}
\end{align}
is unbiased and its variance is
$\norm{(\effinfout)(\check\theta)}_{g(\theta)}^2$.

\subsection{\label{sec_gg}Gaussian channels, following Examples~\ref{exa_gg}
and \ref{exa_qgg}}
In the classical case (Example~\ref{exa_gg}), define the dual
$\chan^\dual:\mc B_\out^\dual \to \mc B^\dual$ of the channel map
$\chan: \mc B \to \mc B_\out$ by
\begin{align}
\bk{\chan^\dual L} (A)  &= L(\chan A) \quad
\forall A \in \mc B, L \in \mc B_\out^\dual.
\end{align}
$\chan^\dual$ conserves the mean, in the sense that 
\begin{align}
\expect\Bk{\bk{\chan^\dual L}(X)}
&= \bk{\chan^\dual L}(f) = L(\chan f) = \expect[L(Y)].
\end{align}
Similarly, in the quantum case (Example~\ref{exa_qgg}), define
$\chan^\dual:\mc K_\out \to \mc K$ as the adjoint of
$\chan:\mc K \to \mc K_\out$ with respect to the phase spaces $\mc K$
and $\mc K_\out$. Then $\chan^\dual$ also conserves the mean in the
sense that
\begin{align}
\trace\Bk{\rho(f) X(\chan^\dual A)} = \Avg{f,\chan^\dual A}_{\mc K}
= \Avg{\chan f,A}_{\mc K_\out} = \trace\Bk{\rho_\out(\chan f) Y(A)}.
\end{align}
Following a procedure similar to those in Secs.~\ref{sec_cc} and
\ref{sec_qq}, we can treat $\chan^\dual$ as a map
$\chan^\dual:\Amb(g) \to \Amb(f)$. Defining
$\chan_\dual = \chan^{\dual\adj}:\Amb(f) \to \Amb(g)$ where $\adj$
denotes the adjoint with respect to the inner products of $\Amb(f)$
and $\Amb(g)$, we obtain
\begin{align}
\chan_\push &= \chan_\dual \inj_{\mc T(f)},
&
\chan^\pull &= \pi_{\mc T(f)} \chan^\dual,
&
\imap &= \pi_{\mc T(f)} \IMAP \inj_{\mc T(f)},
&
\IMAP &= \chan^{\dual} \chan_\dual.
\end{align}
$\chan_\dual$ has the formal expression
\begin{align}
\chan_\dual &= \Sigma_\out^{-1}\chan \Sigma,
\end{align}
so that $\chan_\push$ is consistent with the relation between the
input and output scores given by Eq.~(\ref{pushforward_gauss}).

Let us now assume that the input tangent space is maximal
($\mc T(f) = \Amb(f)$) and Eq.~(\ref{effinf_range}) in
Theorem~\ref{thm_effinfout} is satisfied, so that $\effinfout$ exists
and $\Bnd_\out < \infty$. Notice that all the inner products as well
as the local maps $\chan_\push = \Sigma_\out^{-1} \chan \Sigma$ and
$\chan^\pull = \chan^\dual$ do not depend on $\theta$. Consider the
classical case first and assume the parameter of interest
$\beta(\theta) = b(\theta)$ with $b \in \covarspace$ and
$\effinf = b$, following Sec.~\ref{sec_gauss}.  Then
\begin{align}
\effinfout &= \chan^{\pull-} \effinf = 
\chan^{\dual-} b
\label{effinfout_gauss}
\end{align}
also does not depend on $\theta$. The bound becomes
\begin{align}
\Bnd_\out &= \norm{\effinfout}_{\mc V_\out}^2
= \bk{\Sigma_\out \effinfout} \bk{\effinfout},
\end{align}
which is the variance of $(\effinfout)(Y)$.  Since
$\chan^\pull \effinfout = \chan^{\dual}\effinfout = \effinf = b$ and
$\chan^\dual$ conserves the mean, the estimator
\begin{align}
\check\beta(Y) &= \bk{\effinfout}(Y)= \bk{\chan^{\dual-} b}(Y)
\end{align}
in terms of the output Gaussian field $Y$ is unbiased as well as
efficient. Similarly, in the quantum case, let the parameter of
interest be $\beta(\theta) = \Avg{b,\theta}_{\mc K}$ with
$\effinf = b$, following Sec.~\ref{sec_gauss}. Then $\effinfout$ also
satisfies Eq.~(\ref{effinfout_gauss}) and does not depend on
$\theta$. The bound becomes
\begin{align}
\Bnd_\out &= \norm{\effinfout}_{\mc V_\out}^2 
= \Avg{\effinfout,\Sigma_\out\effinfout}_{\mc K_\out},
\end{align}
implying that the quadrature
\begin{align}
Y\bk{\effinfout}
&= Y\bk{\chan^{\dual-} b}
\label{eff_quad}
\end{align}
of the mode specified by the phase-space vector
\begin{align}
\effinfout &= 
\chan^{\dual-} b \in \mc K_\out
\label{eff_mode}
\end{align}
is an unbiased and efficient estimator. Hence, for Gaussian states,
Gaussian channels, and a scalar linear functional of the mean vector
as the parameter of interest, there is little difference between the
quantum theory and the classical one---simply use the covariance maps
of the Gaussian states in place of the classical covariance maps and a
modal quadrature in place of the linear estimator.

\section{\label{sec_coh}Classical coherent imaging}

\subsection{\label{sec_coh_spaces}Functional spaces for additive white Gaussian noise}
We now apply the abstract formalism to a concrete application:
classical coherent imaging, as illustrated by Fig.~\ref{coh_imaging}.
We follow the classical Gaussian shift model in
Examples~\ref{exa_gauss} and \ref{exa_gg} as well as
Secs.~\ref{sec_gauss} and \ref{sec_gg}. Let the input signal $X(x)$ be
a real scalar field on the object plane $x \in D \subseteq \mb R^m$.
A vector field or a complex field can be treated as direct sums of
scalar fields, although we do not consider those in this section.  For
simplicity, let $X$ be deterministic and given by
$X = f(\theta) = \theta \in \Sigma \mc V$. Let the semi-inner product
of $\mc V$ be
\begin{align}
\Avg{A,B}_{\mc V} &= \int_D A(x) B(x) \dd^m x,
\end{align}
meaning that $\Amb(f)$ is the standard Hilbert space for real
finite-energy fields on $D$, commonly denoted as $L^2(D)$.  The
tangent vectors $\dot\phi$, $\score(\dot\phi)$, and $f_\push \dot\phi$
are all representations of a perturbation of the field $\theta$ by a
function. We also express a linear functional $L \in \mc B^\dual$ in
terms of a function $L:D \to \mb R$ as
\begin{align}
L(A) &= \int_{D} L(x) A(x) \dd^m x,
\end{align}
where we have abused notation by using the same symbol to denote both
the functional and its associated function on $D$.  Following
Sec.~\ref{sec_gauss}, we assume the parameter of interest
\begin{align}
\beta(\theta) &= b(\theta) = \int_D b(x) \theta(x) \dd^m x
\end{align}
in terms of a weight function $b(x)$. For example, one may wish to
estimate a Fourier coefficient of $\theta(x)$ by setting $b(x)$ to be
a sinusoidal function, or study the behavior of $\theta(x)$ near a
point $x_0$ by setting $b(x)$ to be a narrow function concentrated at
$x_0$.  Assume further that the input score tangent space $\mc T(f)$
is maximal, so that the input efficient influence is given
by $\effinf = b$.

\fig{0.6}{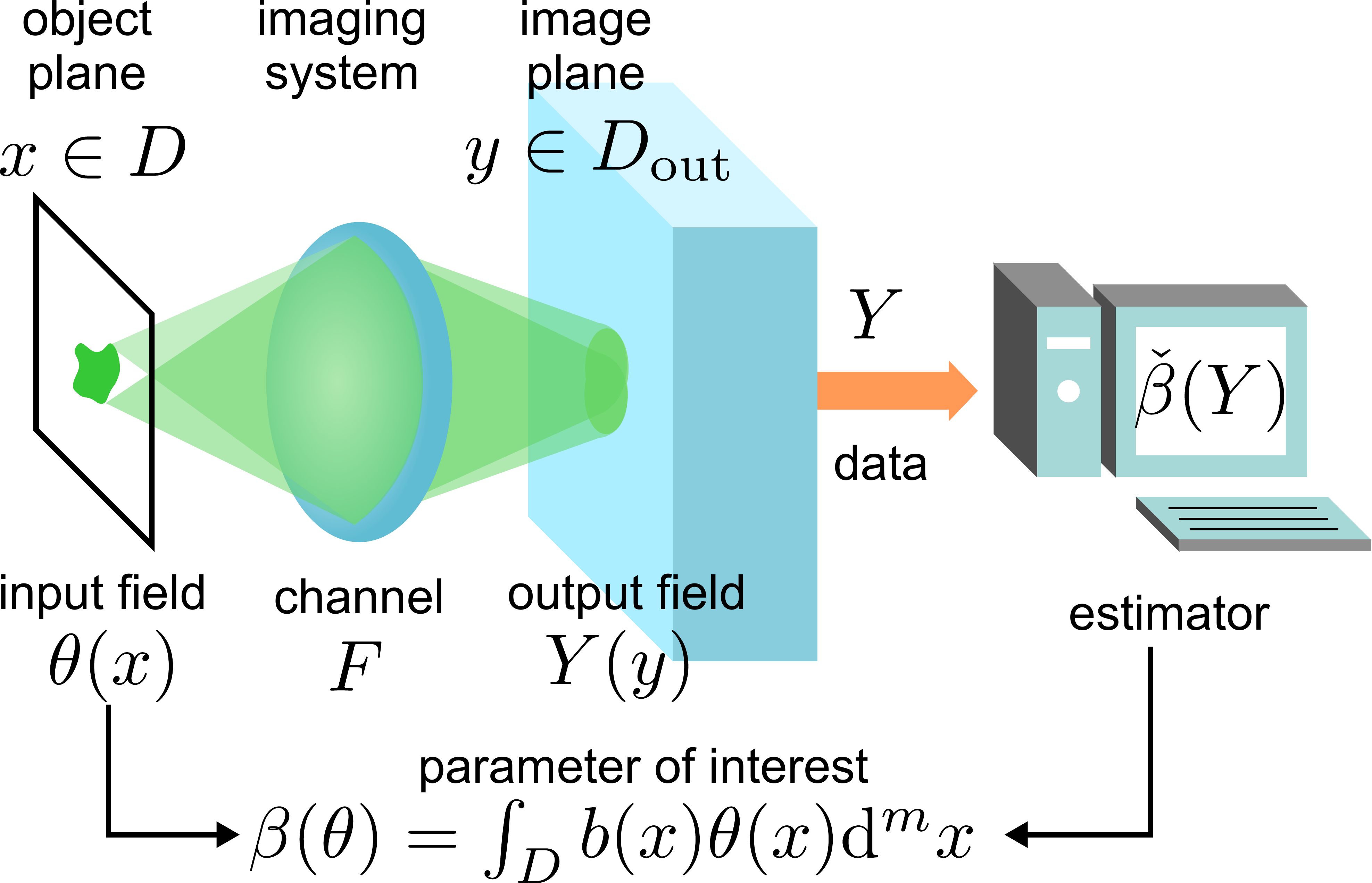}{Illustration of a classical imaging system.}

Let the output signal $Y \in \mc B_\out$ be a real scalar field on the
image plane $D_\out \subseteq \mb R^m$. We express a linear functional
$L \in \mc B_\out^\dual$ in terms of a function $L:D_\out \to \mb R$
as
\begin{align}
L(A) &= \int_{D_\out} L(y) A(y) \dd^m y.
\end{align}
The input-output relation given by Eq.~(\ref{io}) gives
\begin{align}
Y &= \chan \theta + Z, & \Sigma_\out &= \Sigma_\noise.
\end{align}
The channel map $\chan:\Sigma\mc V\to\Sigma_\out\mc V_\out$ models the
imaging system. Let the additive noise $Z$ be zero-mean and white.
Adopting the more intuitive calculus of physics and engineering, we
write its covariance function as
\begin{align}
\expect\Bk{Z(y)Z(y')} &= \epsilon\delta^m(y-y'),
\label{white}
\end{align}
where $\epsilon$ quantifies the noise level and $\delta^m$ is the
Dirac delta function in $m$ dimensions. The semi-inner product of
$\mc V_\out$ becomes
\begin{align}
\Avg{A,B}_{\mc V_\out} &= \epsilon\int_{D_\out} A(y) B(y) \dd^my.
\label{inner_Hout}
\end{align}
Apart from the prefactor $\epsilon$, we see that $\Amb(g)$ is also the
standard Hilbert space $L^2(D_\out)$. 

\subsection{Functional expressions of channel maps}
Let the channel map $\chan$ be
\begin{align}
\bk{\chan A}(y) &= \int_D \chan(y|x) A(x) \dd^m x,
\end{align}
where $\chan(y|x)$ is a point-spread function (PSF) for the
fields. Then its dual
$\chan^\dual:\mc B_\out^\dual \to \mc B^\dual$ obeys
\begin{align}
L(\chan A)
&= \int_{D_\out} L(y) \int_D \chan(y|x) A(x) \dd^m x \dd^m y
= \int_D \Bk{\int_{D_\out} L(y) \chan(y|x) \dd^m y} A(x) \dd^m x
= \bk{\chan^\dual L}(A).
\end{align}
It follows that the function $\chan^\dual L:D \to \mb R$ 
associated with the linear functional $\chan^\dual L \in \mc B^\dual$ obeys
\begin{align}
\bk{\chan^\dual L}(x) &= \int_{D_\out} L(y) \chan(y|x) \dd^m y
\end{align}
in terms of the function $L:D_\out \to \mb R$ associated with
$L \in \mc B_\out^\dual$. 

For the sake of intuition and concreteness, we express elements of
abstract Hilbert spaces as functions. The channel pushforward and
pullback become
\begin{align}
\bk{\chan_\push u}(y) &= \bk{\chan_\dual u}(y) = \frac{1}{\epsilon} \int_D \chan(y|x) u(x)\dd^m x,
\label{chan_push_coh}
\\
\bk{\chan^\pull u}(x) &= \bk{\chan^\dual u}(x) = \int_{D_\out} u(y) \chan(y|x) \dd^m y.
\label{chan_pull_coh}
\end{align}
Then the condition given by Eq.~(\ref{effinf_range}) on $\effinf$ becomes
\begin{align}
\exists u \in \Amb(g): 
\bk{\chan^\pull u}(x) = 
\int_{D_\out} u(y) \chan(y|x) \dd^m y = (\effinf)(x) = b(x).
\end{align}
Provided that $b$ satisfies the condition, we can solve for
the output efficient influence
\begin{align}
(\effinfout)(y) &= \bk{\chan^{\pull - } b}(y)= \bk{\chan^{\dual - } b}(y)
\end{align}
and compute the bound
\begin{align}
\Bnd_\out &= \norm{\effinfout}_{\mc V_\out}^2 
= \epsilon \int_{D_\out} \Bk{(\effinfout)(y)}^2 \dd^m y.
\end{align}
Following Sec.~\ref{sec_gg}, we find that
\begin{align}
\check\beta(Y) &= \int_{D_\out} \bk{\effinfout}(y) Y(y) \dd^m y
\label{eff_est_coh}
\end{align}
is an unbiased and efficient estimator.

\subsection{Functional SVD}
Assuming that $\chan_\push$ is compact, its SVD, as explained in
Sec.~\ref{sec_svd}, characterizes the behavior of the channel in
determining the efficient influence $\effinfout$ and the efficiency
bound $\Bnd_\out$. The input singular vectors $\{\insing_j\}$ and the
output singular vectors $\{\outsing_j\}$ become functions that satisfy
the orthonormal relations
\begin{align}
\int_D \insing_j(x)\insing_k(x) \dd^m x &= \delta_{jk},
&
\epsilon \int_{D_\out} \outsing_j(y)\outsing_k(y) \dd^m y &= \delta_{jk}.
\end{align}
Each pair of singular vectors satisfy $\chan_\push \insing_j = s_j \outsing_j$,
which becomes the integral equation
\begin{align}
\frac{1}{\epsilon} \int_D \chan(y|x) \insing_j(x) \dd^m x &= s_j \outsing_j(y).
\label{SVD_integral}
\end{align}
The local channel maps can then be expressed as the expansions
\begin{align}
\bk{\chan_\push u}(y)
&= \sum_j s_j \outsing_j(y) \int_D \insing_j(x) u(x) \dd^m x,
\\
\bk{\chan^\pull u}(x)
&= \sum_j s_j \insing_j(x) \epsilon \int_{D_\out} \outsing_j(y) u(y) \dd^m y,
\\
\bk{\imap u}(x) &= \sum_j s_j^2 \insing_j(x) \int_D \insing_j(x') u(x') \dd^m x'.
\end{align}
If $\effinf = b \in \spn\{\insing_j\}$, $\effinfout$ is guaranteed to exist
by Eq.~(\ref{span_insing}) and Theorem~\ref{thm_effinfout}.  We can
then express the pseudo-inverses as
\begin{align}
\bk{\effinfout}(y) &= 
\bk{\chan^{\pull-} b}(y) 
= \sum_j \frac{1}{s_j} \outsing_j(y) \int_D \insing_j(x) b(x) \dd^m x,
\label{chan_pull_coh2}
\\
\bk{\imap^- \effinf}(x) &= \sum_j \frac{1}{s_j^2} \insing_j(x) 
\int_D \insing_j(x') b(x') \dd^m x',
\label{imap_pinv_coh}
\end{align}
and the output efficiency bound as
\begin{align}
\Bnd_\out &= \sum_j \frac{1}{s_j^2} \Bk{\int_D \insing_j(x) b(x) \dd^m x}^2.
\label{Bnd_coh}
\end{align}

\subsection{\label{sec_scaling_c}Scaling}
Suppose that $\chan$ is scaled by a given constant $c > 0$ as per
\begin{align}
\chan(\cdot,c) &= c \chan(\cdot,1).
\end{align}
The constant can be used to model the strength of the input field or
the total transmission of the channel. The dependence of
$(\effinfout)(\cdot,c,\epsilon)$ and $\Bnd_\out(\cdot,c,\epsilon)$ on
$c$ and the noise level $\epsilon$ can be expressed as
\begin{align}
\bk{\effinfout}(\cdot,c,\epsilon) &= \frac{1}{c} \bk{\effinfout}(\cdot,1,1),
&
\Bnd_\out(\cdot,c,\epsilon) &= \frac{\epsilon}{c^2}\Bnd_\out(\cdot,1,1).
\label{effinf_scale}
\end{align}
It therefore suffices to compute only $\effinfout(\cdot,1,1)$ and
$\Bnd_\out(\cdot,1,1)$ for some conveniently scaled $\chan$; results
for all other values of $c$ and $\epsilon$ follow immediately.

\subsection{\label{sec_coh_c_exa}Example}
The functional SVD has been studied extensively in superresolution
research \cite{villiers}. The most well known result may be
the discovery by Slepian and coworkers \cite{slepian83} that the
singular functions $\{\insing_j\}$ as well as $\{\outsing_j\}$
satisfying Eq.~(\ref{SVD_integral}) are the so-called prolate
spheroidal wavefunctions when
\begin{align}
m &= 1,
&
D &= D_\out = \Bk{-\Delta,\Delta},
\\
\chan(y|x) &= \sinc(y-x),
&
\sinc u &=
\begin{cases} (\sin \pi u)/(\pi u), & u \neq 0,\\1, & u = 0,\end{cases}
\label{chan_sinc}
\end{align}
where all lengths such as $\Delta$ are normalized with respect to the
width of the PSF. Furthermore, for any $\Delta < \infty$, all the
singular values are positive, and
\begin{align}
\overline{\spn\BK{\insing_j}} &= \bk{\kernel \chan_\push}^\perp = 
\overline{\range \chan^\pull}  = L^2(D),
\label{slepian_input_space}
\\
\kernel \chan_\push &= \BK{0},
\end{align}
meaning that any nonzero input to Eq.~(\ref{chan_push_coh}) will
produce a nonzero output.

In practice, analytic results are rare and one must often resort to
numerical analysis. Using the numerical method detailed in
Appendix~\ref{app_num}, Fig.~\ref{svd_coh_classical0} plots the SVD
for the sinc PSF in Eqs.~(\ref{chan_sinc}) and unit noise level
$\epsilon = 1$, with three different values of $\Delta$ for the object
plane $D = [-\Delta,\Delta]$ and a fixed image plane
$D_\out = [-10,10]$. We notice that the singular functions
$\{\insing_j\}$ and $\{\outsing_j\}$ are all oscillatory, while the
singular values $\{s_j\}$, though positive, decay rapidly with $j$
when $\Delta$ is small. In fact, some $\{\insing_j\}$ can be deemed
superoscillatory, given that they have multiple zeros within a small
region with size $\Delta$. This behavior of the functional SVD for a
low-pass filter is well known \cite{villiers}.

Less well known is the role of the SVD in determining the output
efficient influence $\effinfout$ and the output efficiency bound
$\Bnd_\out$ through Eqs.~(\ref{chan_pull_coh2})--(\ref{Bnd_coh}). To
determine $\effinfout$ and $\Bnd_\out$ from the weight function
$b(x)$, Eqs.~(\ref{chan_pull_coh2}) and (\ref{Bnd_coh}) both involve
the inner product $\avg{\insing_j,b}_{\mc V}$ with oscillatory
functions $\{\insing_j\}$.  This inner product is a Fourier transform
in essence. We will hereafter call any oscillatory function a Fourier
component, which need not be a sinusoidal function exactly, and any
inner product with an oscillatory function a Fourier coefficient.  To
obtain $\effinfout$ by Eq.~(\ref{chan_pull_coh2}), each Fourier
component $\insing_j$ with Fourier coefficient
$\avg{\insing_j,b}_{\mc V}$ is mapped to an output singular function
$\outsing_j$ and scaled by $1/s_j$ before all the terms are
summed. The purpose of the $1/s_j$ factor is to compensate for the
attenuation of that component by the channel. To obtain $\Bnd_\out$ by
Eq.~(\ref{Bnd_coh}) on the other hand, each Fourier coefficient
$\avg{\insing_j,b}_{\mc V}$ is squared and scaled by $1/s_j^2$ before
the terms are summed. Each singular value $s_j$ thus determines how
$\effinfout$ and $\Bnd_\out$ weigh a Fourier component; the components
with the smallest singular values are the most detrimental to the
estimation problem, as one expects for a low-pass filter. For example,
if we consider the first column of Fig.~\ref{svd_coh_classical0} for
$\Delta = 0.1$, then the effect of a component with low singular value
$s_j$, say, $j = 4$, on $\effinfout$ and $\Bnd_\out$ would be highly
amplified.

We stress that the values of $s_j$ in Fig.~\ref{svd_coh_classical0}
are only for the normalized channel map in Eq.~(\ref{chan_sinc});
results for a physical problem also involve the noise level $\epsilon$
and the transmission coefficient $c$ through
Eqs.~(\ref{effinf_scale}). Superresolution---in the sense of
accurately estimating high-order Fourier coefficients and their linear
combinations---is possible, depending on $\epsilon$ and $c$, and the
efficiency bound serves as a benchmark for the error.

\fig{}{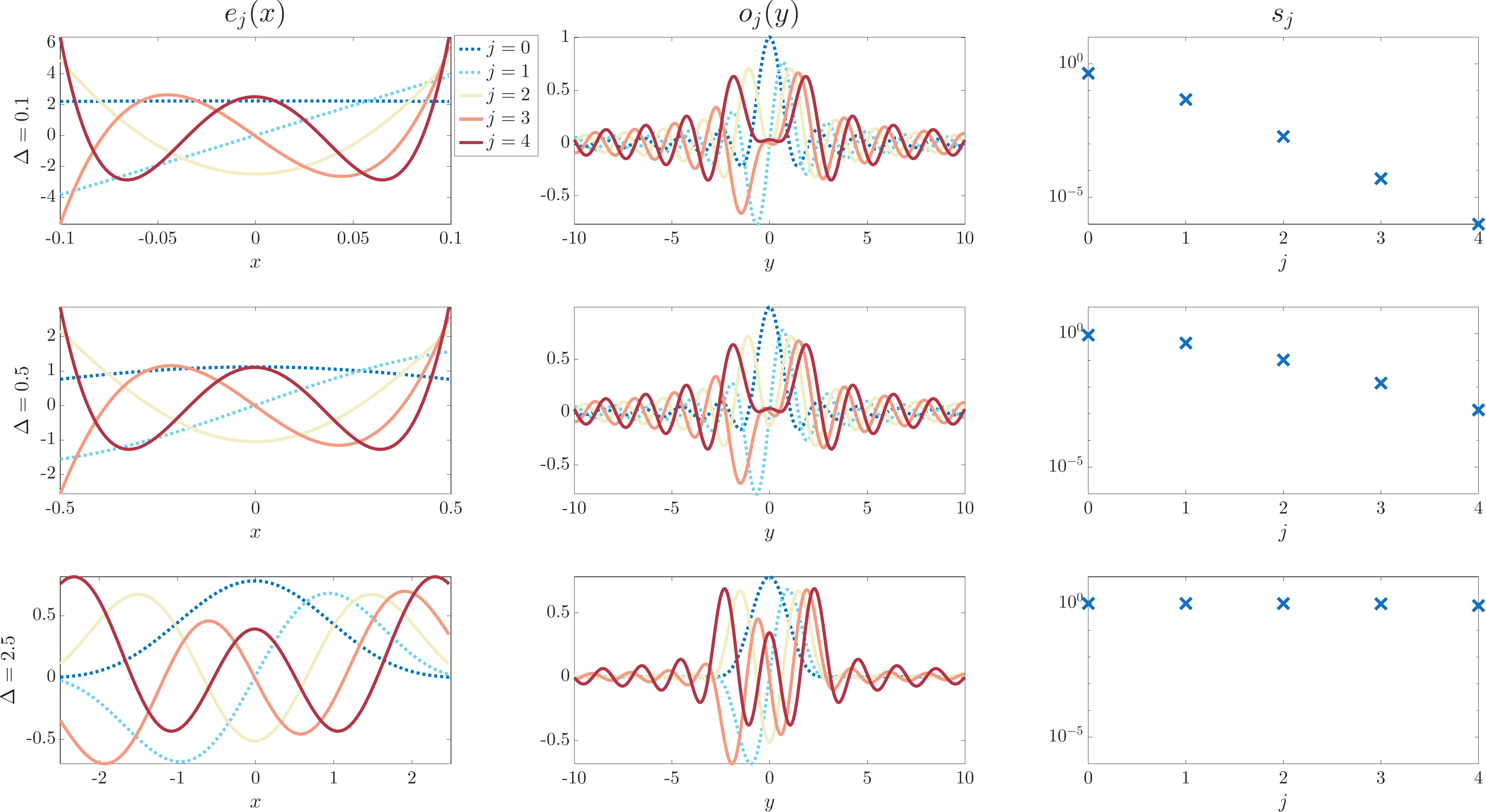}{Numerically computed singular-value
  decomposition (SVD) of the channel pushforward $\chan_\push = \chan$
  for classical coherent imaging, assuming the sinc PSF given by
  Eq.~(\ref{chan_sinc}) and additive white Gaussian noise with unit
  noise level $\epsilon = 1$. The three rows are results for three
  different object-plane sizes $\Delta = 0.1$, $0.5$, and $2.5$, while
  the image plane $D_\out = [-10,10]$ is fixed. First column: the
  first five input singular functions $\{\insing_j(x)\}$ versus the
  object-plane coordinate $x$. Second column: the first five output
  singular functions $\{\outsing_j(y)\}$ versus the image-plane
  coordinate $y$. Third column: the first five singular values
  $\{s_j\}$ in logarithmic scale as a function of $j$.  For all the
  plots in the third column, the ranges of the vertical axes are set
  to be identical to demonstrate the faster decay of the singular
  values for a smaller $\Delta$.}

The quantum theory is similar, except that complex optical fields
should be considered and the model should be treated with care to
respect quantum physics. As the analysis is more tedious, we delegate
it to Appendix~\ref{app_qcoh}. Compared with prior work on quantum
coherent imaging
\cite{yuen78,shapiro79,kolobov_fabre,beskrovnyy05,pinel12,taylor16,treps},
the semiparametric bound here does not offer any new surprise---it
shows that squeezing of an input spatial mode can improve the bound
and homodyne detection of an output mode can saturate the bound, just
as prior work suggested.  For a multidimensional $\beta$, one should
consider the Holevo--Nagaoka bound and general-dyne measurements
\cite{demkowicz20}.

\section{\label{sec_direct}Direct incoherent imaging}
As Poisson noise is fundamental in key applications of incoherent
imaging in astronomy \cite{goodman_stat,zmuidzinas03,review_cp} as
well as microscopy \cite{pawley}, the Poisson channels in
Examples~\ref{exa_pp} and \ref{exa_qpp} are natural models of the
problem. We first study the classical case of direct imaging in this
section.  Fig.~\ref{coh_imaging} illustrates the setup just as well,
except that the quantities need to be redefined.

\subsection{\label{sec_direct_funcs}Weighted functional spaces}
Follow the classical Poisson model in Example~\ref{exa_poisson} and
Sec.~\ref{sec_poisson} for the input. Let the input be a Poisson field
$X$ on the object plane $\sspace = D \subseteq \mb R^m$ with mean
measure $\mean X_\theta$ and mean density
\begin{align}
f(\theta,x) = {\dv{\mean X_\theta}{\mu}}(x) = \theta(x).
\label{input_mean}
\end{align}
Let $\dd\mu(x) = \dd^m x$; then $\theta(x)$ is the mean optical
intensity function at the input. Set the input ambient vector space
$\mc V$ as the space of real functions with the semi-inner product
\begin{align}
\Avg{A,B}_\theta &= \int_D A(x) B(x) \theta(x) \dd^m x.
\label{inner_incoh}
\end{align}
Unlike the semi-inner products for the Gaussian-noise model in
Sec.~\ref{sec_coh}, the semi-inner product here is weighted by the
input intensity $\theta$. Assume the parameter of interest
\begin{align}
\beta(\theta) &= \int_D b(x) \theta(x) \dd^m x
\label{beta_incoh}
\end{align}
in terms of a weight function $b(x)$ on the object plane and a maximal
input score tangent space $\mc T(f) = \Amb(f)$. Sec.~\ref{sec_poisson}
has shown that the input efficient influence is
\begin{align}
\bk{\effinf}(x) &= b(x).
\label{effinf_b}
\end{align}
To model the imaging system, assume the classical Poisson channel in
Example~\ref{exa_pp} and Sec.~\ref{sec_cc}.  Let the output be another
Poisson field $Y$ on the image plane
$\sspace_\out = D_\out \subseteq \mb R^m$ with mean measure
$\mean Y_\theta$ and mean density
\begin{align}
g(\theta,y) &= 
{\dv{\mean Y_\theta}{\nu}}(y) = \int_D \chan(y|x) \theta(x) \dd^m x,
\end{align}
where $\chan(y|x)$ is the PSF of the imaging system for the
intensities. Let $\dd\nu = \dd^m y$; then $g(\theta,y)$ is the mean
optical intensity on the image plane. This model is an idealization of
the measurement by an image sensor; it is called direct imaging in the
quantum literature \cite{tnl,review_cp,lvovsky26}.  Set the output
ambient vector space $\mc V_\out$ as the space of real functions with
the semi-inner product
\begin{align}
\Avg{A,B}_g &= \int_{D_\out} A(y)B(y) g(\theta,y) \dd^m y.
\end{align}
We continue to express elements of abstract Hilbert spaces as
functions for intuition. Following Sec.~\ref{sec_cc}, the channel
pushforward and pullback become
\begin{align}
\bk{\chan_\push u}(y) &= \bk{\chan_\dual u}(y) = \frac{1}{g(\theta,y)}
\int_D \chan(y|x) \theta(x) u(x) \dd^m x,
\label{push_incoh}
\\
\bk{\chan^\pull u}(x) &= \bk{\chan^\dual u}(x)  = \int_{D_\out}  u(y)\chan(y|x) \dd^m y.
\end{align}
The condition on $\effinf$ given by Eq.~(\ref{effinf_range}) for the
existence of $\effinfout$ becomes
\begin{align}
\exists u \in \Amb(g): 
\bk{\chan^\pull u}(x) &= \int_{D_\out} u(y) \chan(y|x) \dd^m y = (\effinf)(x) = b(x).
\label{effinf_range_direct}
\end{align}
Under the condition, the output efficient influence
$(\effinfout)(y) = (\chan^{\pull-}b)(y)$ exists and is a function of
the image-plane coordinate $y$.  With
$\effinfout = (\effinfout)(\check\theta)$ at a preliminary estimate
$\check\theta$, the one-step estimator in Eq.~(\ref{onestep_poisson})
becomes
\begin{align}
\check\beta(Y) &= \int_{D_\out} (\effinfout)(\check\theta,y) \dd Y(y).
\label{onestep_direct}
\end{align}

\subsection{\label{sec_incoh_svd}Functional SVD}
The functional SVD of $\chan_\push$ should be evaluated with respect
to the weighted inner products of $\Amb(f)$ and $\Amb(g)$. The
orthonormal relations for the singular vectors become
\begin{align}
\int_D \insing_j(x)\insing_k(x) \theta(x)  \dd^m x  &= \delta_{jk},
&
\int_{D_\out} \outsing_j(y)\outsing_k(y) g(\theta,y)  \dd^m y  &= \delta_{jk}.
\end{align}
They obey the integral equation
\begin{align}
\frac{1}{g(\theta,y)} \int_D \chan(y|x) \theta(x) \insing_j(x) \dd^m x
&= s_j \outsing_j(y).
\end{align}
The functional SVDs can then be expressed as
\begin{align}
\bk{\chan_\push A}(y) &= \sum_j s_j \outsing_j(y) \int_D \insing_j(x) A(x) \theta(x) \dd^m x,
\\
\bk{\chan^\pull A}(x) &= \sum_j s_j \insing_j(x) \int_{D_\out} \outsing_j(y) A(y) g(\theta,y) \dd^m y,
\\
\bk{\imap A}(x) &= \sum_j s_j^2 \insing_j(x) 
\int_D \insing_j(x') A(x') \theta(x') \dd^m x'.
\end{align}
If $\effinf = b \in \spn\{\insing_j\}$, $\effinfout$ exists by
Eq.~(\ref{span_insing}) and Theorem~\ref{thm_effinfout}. Then
\begin{align}
\bk{\effinfout}(y) &= \bk{\chan^{\pull-} b}(y) 
= \sum_j \frac{1}{s_j} \outsing_j(y) \int_D \insing_j(x) b(x) \theta(x) \dd^m x,
\label{pinv_incoh_c}
\end{align}
leading to the efficiency bound
\begin{align}
\Bnd_\out &= \sum_j \frac{1}{s_j^2} \Bk{\int_D \insing_j(x) b(x) \theta(x) \dd^m x}^2.
\label{Bnd_incoh_c}
\end{align}
The key difference between the incoherent-imaging problem here and the
coherent case studied in Sec.~\ref{sec_coh} is that the inner products
here are weighted and depend on the true input intensity $\theta$, so
$\effinfout$, $\Bnd_\out$, and the SVD all depend on $\theta$ and have
to be evaluated for each value of $\theta$. This $\theta$ dependence
is a natural consequence of the signal-dependent nature of Poisson
noise.

It is noteworthy that Bertero and coworkers have performed pioneering
studies of SVD for confocal incoherent imaging by modeling the input
and output intensities as functions in $L^2(\mb R^m)$ with unweighted
inner products \cite{bertero89a,bertero91,villiers}. Their approach
does not account for noise explicitly, however, and it is unclear how
the physical constraint of nonnegative intensities affects their
results. Here, noise, nonnegative intensities, and statistics are all
naturally integrated in our theory.

\subsection{Scaling}
The issue of scaling is similar to Secs.~\ref{sec_scaling_c} and
\ref{sec_scaling_q}, except that there is no separate noise level
$\epsilon$ in the Poisson model.  Suppose that the object field is
scaled by a constant $c > 0$, such that
\begin{align}
f(\theta,x,c) &= c\theta(x).
\end{align}
Then the inner products of $\Amb(f)$ and $\Amb(g)$ are also
proportional to $c$, leading to
\begin{align}
\bk{\effinf}(\cdot,c) &= \frac{1}{c} \bk{\effinf}(\cdot,1) ,
&
\bk{\effinfout}(\cdot,c) &= \frac{1}{c} \bk{\effinfout}(\cdot,1) ,
&
\Bnd_\out(\cdot,c) &= \frac{1}{c} \Bnd_\out(\cdot,1).
\end{align}
The same scalings for $\effinfout$ and $\Bnd_\out$ hold when the
channel map $\chan$ is scaled by a constant $c$.

\subsection{\label{sec_direct_exa}Examples}
Assume one-dimensional object and image planes ($m = 1$) with
$x \in D = \mb R$ and $y \in D_\out = \Bk{-\Delta_\out,\Delta_\out}$.
We first consider the Gaussian PSF
\begin{align}
\chan(y|x) &= \abs{\psi(y-x)}^2 = 
\frac{1}{\sqrt{2\pi}} \exp\Bk{-\frac{1}{2}\bk{y-x}^2},
&
\psi(y) &= 
\frac{1}{(2\pi)^{1/4}} \exp\bk{-\frac{y^2}{4}}.
\label{gauss_PSF}
\end{align}
Following Appendix~\ref{app_num}, a numerical computation of the SVD
of $\chan_\push$ can be performed for a given input intensity
$\theta(x)$. Fig.~\ref{svd_incoh_classical_gauss0} plots the results
for the flat-top function
\begin{align}
\theta(x) = \begin{cases}
1/(2\Delta), & |x| \le \Delta, \\ 0, & \textrm{otherwise},
\end{cases}
\label{flattop}
\end{align}
with three different object sizes $\Delta$ and image-plane sizes given
by $\Delta_\out = \Delta+5$.  A noteworthy distinction from the
coherent case in Sec.~\ref{sec_coh} is that, instead of introducing a
finite size for the object plane in the coherent case, here we have
the size of the object itself limiting the domain of $x$ in the
weighted inner product. The plots resemble those for coherent imaging
in Fig.~\ref{svd_coh_classical0}, at least qualitatively.  As
$\effinfout$ and $\Bnd_\out$ are dominated by components
$\insing_j\Avg{\insing_j,\effinf}_\theta$ of $\effinf$ with low
singular values, Fig.~\ref{svd_incoh_classical_gauss0} shows that the
high-frequency components with low singular values should contribute
more to the error, similar to the coherent-imaging case in
Sec.~\ref{sec_coh_c_exa}.

\fig{}{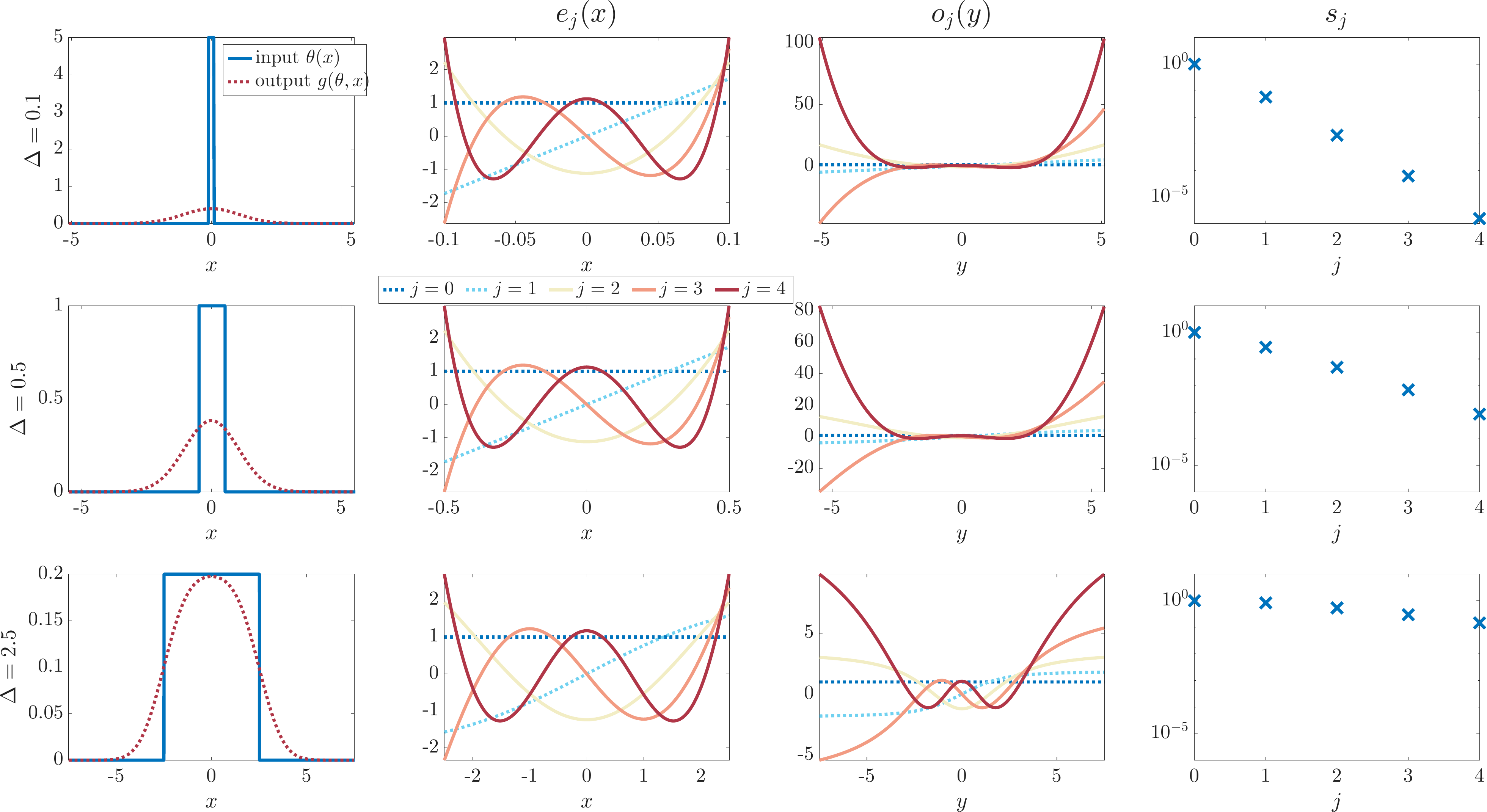}{SVD of the channel pushforward
  $\chan_\push = \chan_\dual$ for direct incoherent imaging, assuming
  the flat-top input intensity $\theta(x)$ given by
  Eq.~(\ref{flattop}), the Gaussian PSF given by
  Eq.~(\ref{gauss_PSF}), and Poisson noise. The three rows are results
  for three different object sizes $\Delta = 0.1$, $0.5$, and $2.5$,
  while the image-plane size is $\Delta_\out = \Delta + 5$. First
  column: input intensity $\theta(x)$ versus the object-plane
  coordinate $x$ and output intensity $g(\theta,x)$ versus the
  image-plane coordinate $x$. Second column: the first five input
  singular functions $\{\insing_j(x)\}$ versus object-plane coordinate
  $x$. Third column: the first five output singular functions
  $\{\outsing_j(y)\}$ versus image-plane coordinate $y$. Fourth
  column: the first five singular values $\{s_j\}$ in logarithmic
  scale as a function of $j$. }

To demonstrate that the SVD is $\theta$-dependent,
Fig.~\ref{svd_incoh_classical_gauss_tri0} plots the SVD for a triangle
input intensity
\begin{align}
\theta(x) &= 
\begin{cases}\bk{x + \Delta}/(2\Delta^2), & |x| \le \Delta,\\
0, & \textrm{otherwise}.
\end{cases}
\label{tri}
\end{align}
The singular functions are noticeably different from those in
Fig.~\ref{svd_incoh_classical_gauss0}, although the rough trends of the
singular values remain similar.

\fig{}{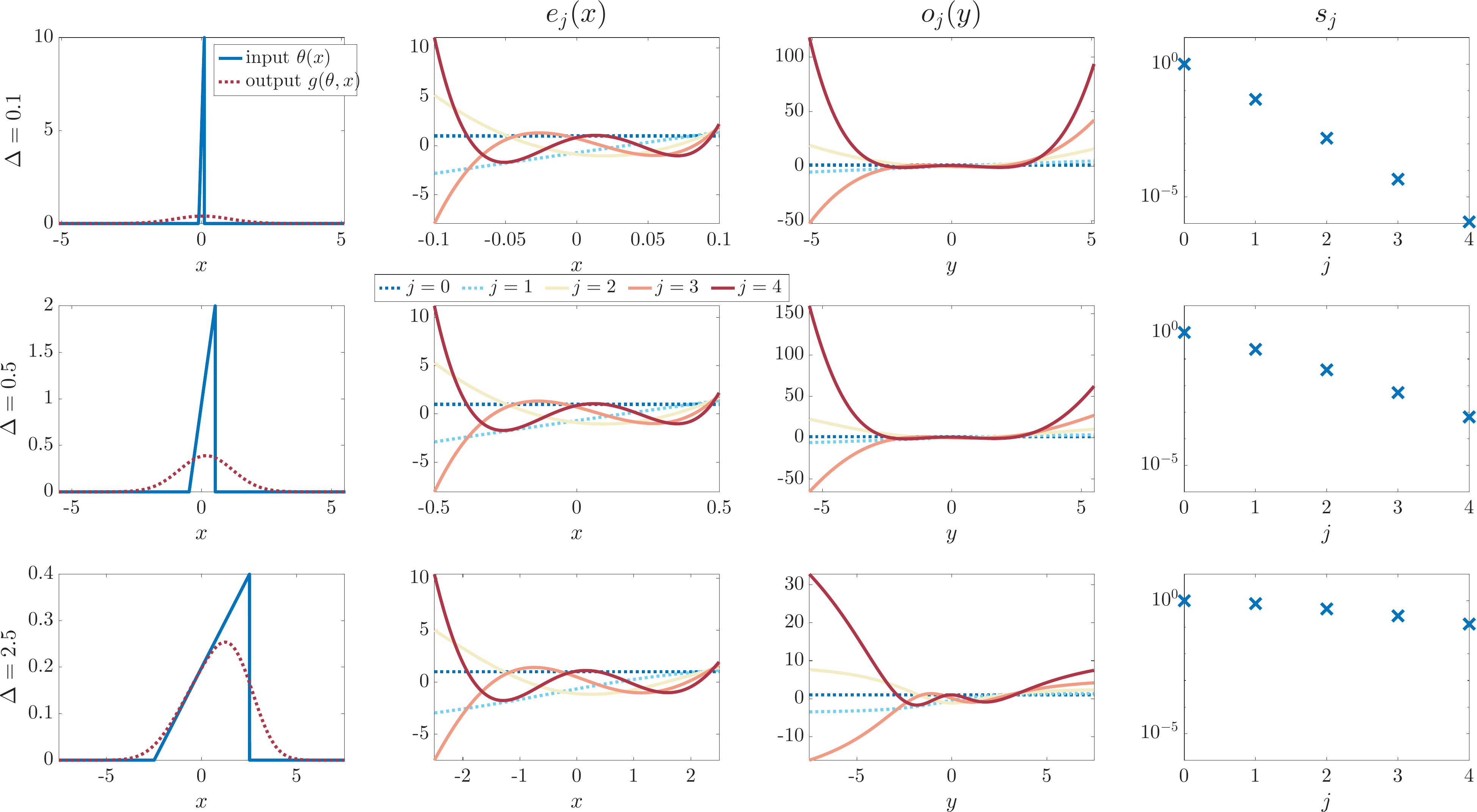}{SVD of the channel pushforward
  $\chan_\push = \chan_\dual$ for direct incoherent imaging, assuming
  the triangle input intensity given by Eq.~(\ref{tri}), the Gaussian
  PSF given by Eq.~(\ref{gauss_PSF}), and Poisson noise. The figure
  formats are the same as those of
  Fig.~\ref{svd_incoh_classical_gauss0}.}

For another example, Fig.~\ref{svd_incoh_classical_sinc0} plots the SVD
for the $\sinc^2$ PSF
\begin{align}
\chan(y|x) &= |\psi(y-x)|^2 = K \sinc^2\Bk{K(y-x)},
&
\psi(y) &= \sqrt{K} \sinc(K y),
&
K &= \frac{\sqrt{3}}{2\pi},
\label{sinc2}
\end{align}
and the flat-top input intensity given by Eq.~(\ref{flattop}).  (We
set this $K$ so that the root-mean-square bandwidth
$\pi K/\sqrt{3} = 1/2$ of $\psi(y)$ is the same as that of the
Gaussian $\psi$.)  The plots resemble those in
Figs.~\ref{svd_incoh_classical_gauss0}, except that the output
singular functions $\outsing_j(y)$ have peaks near the minima of
$g(\theta,y)$. Recall that, if the weight function is
$b(x) = \insing_j(x)$, then the efficient influence is
$(\effinfout)(y) = \outsing_j(y)/s_j$, to be used in the one-step
estimator given by Eq.~(\ref{onestep_direct}). The great weights
assigned by $\outsing_j(y)$ to the minima of $g(\theta,y)$ thus
suggests the importance of the PSF zeros in the estimation. We note
that introducing zeros to a PSF can enhance certain tasks of
incoherent imaging under Poisson noise
\cite{paur18,paur19,booth26,darekar26}, although this phenomenon is
outside the scope of this work.

\fig{}{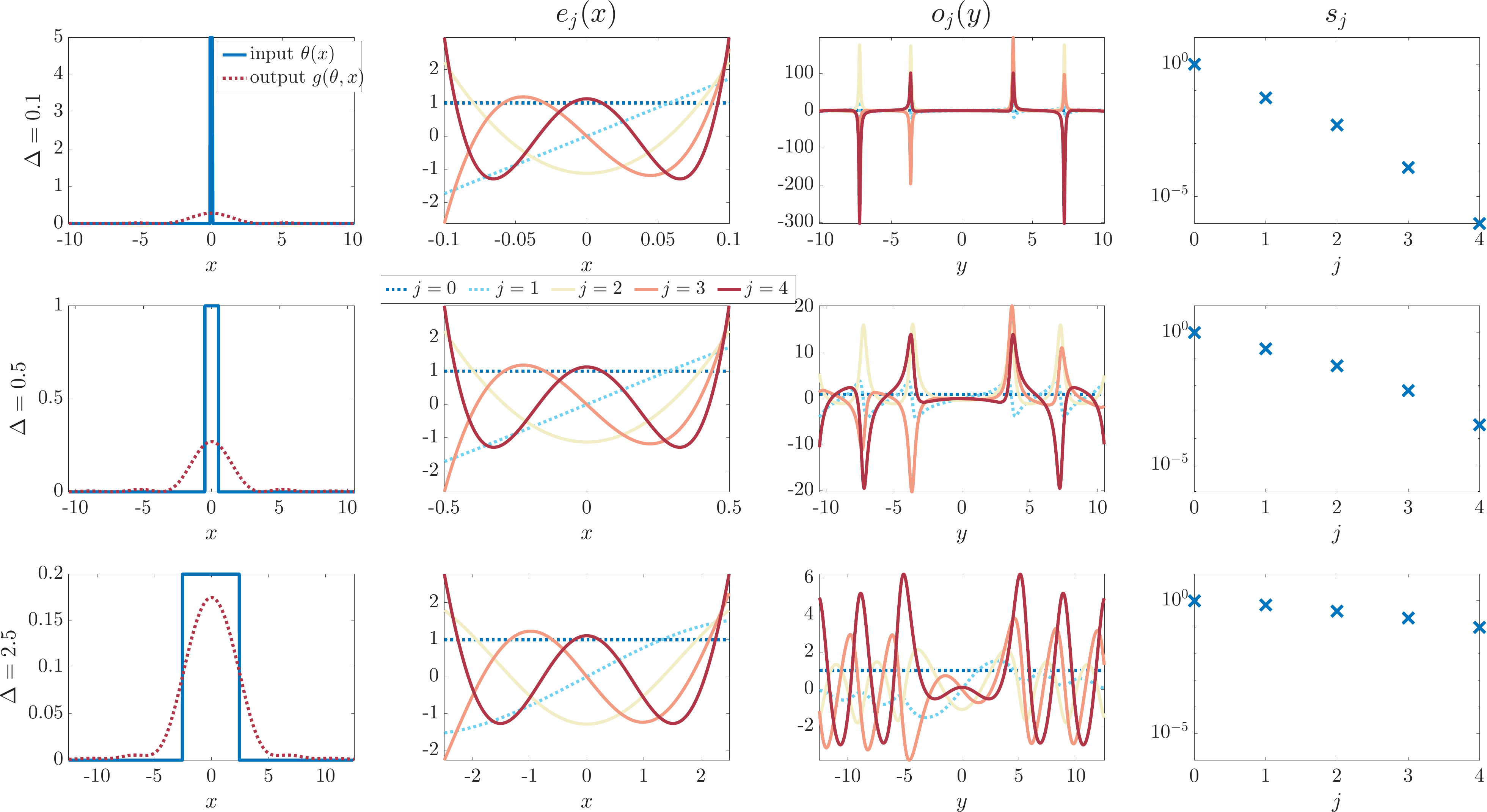}{SVD for direct incoherent imaging,
  assuming the flat-top input intensity given by Eq.~(\ref{flattop}),
  the sinc squared PSF given by Eq.~(\ref{sinc2}), and Poisson
  noise. The figure formats are the same as those of
  Figs.~\ref{svd_incoh_classical_gauss0} and
  \ref{svd_incoh_classical_gauss_tri0}, except that the image-plane
  size is set to $\Delta_\out = \Delta + 10$ to include more minima of
  the image.}

\section{\label{sec_incoh_q}Quantum model of incoherent imaging}
Helstrom pioneered the application of quantum detection and estimation
theory to incoherent imaging \cite{helstrom}. Instead of fixating on
one measurement such as direct imaging, his theory allows one to
compute fundamental quantum error bounds for any measurement from a
set of density operators that model the quantum state of light. We
build on his foundation with the following advances:
\begin{enumerate}
\item Instead of assuming the exact thermal state, we approximate it
  by the Poisson state described in Example~\ref{exa_poisson}.  The
  intensity operator of the Poisson state is determined by the mutual
  coherence matrix in statistical optics \cite{poisson_quantum}.  As
  proposed and developed in
  Refs.~\cite{stellar,tnl,review_cp,poisson_quantum}, the Poisson
  approximation is accurate at optical frequencies and allows us to
  relate the quantum model directly to the well established
  Poisson-noise model in classical incoherent imaging
  \cite{goodman_stat,zmuidzinas03,pawley}.

\item Instead of assuming a finite set of hypotheses or a
  low-dimensional parameter, we take the nonparametric model in
  Sec.~\ref{sec_poisson} as the input.

\item We observe significant gaps between the quantum bounds and the
  direct-imaging bounds for the estimation of high-order Fourier
  coefficients when the object size is subdiffraction.  The gaps
  suggest that alternative measurements can enhance incoherent imaging
  significantly.

\item In Sec.~\ref{sec_spade}, we study the performance of a simple
  version of SPADE proposed in Ref.~\cite{superosc_ieee},
  demonstrating numerically that it can come closer to the quantum
  bounds and offer superior performance to direct imaging.

\end{enumerate}
Fig.~\ref{q_imaging} illustrates the setup.

\fig{0.6}{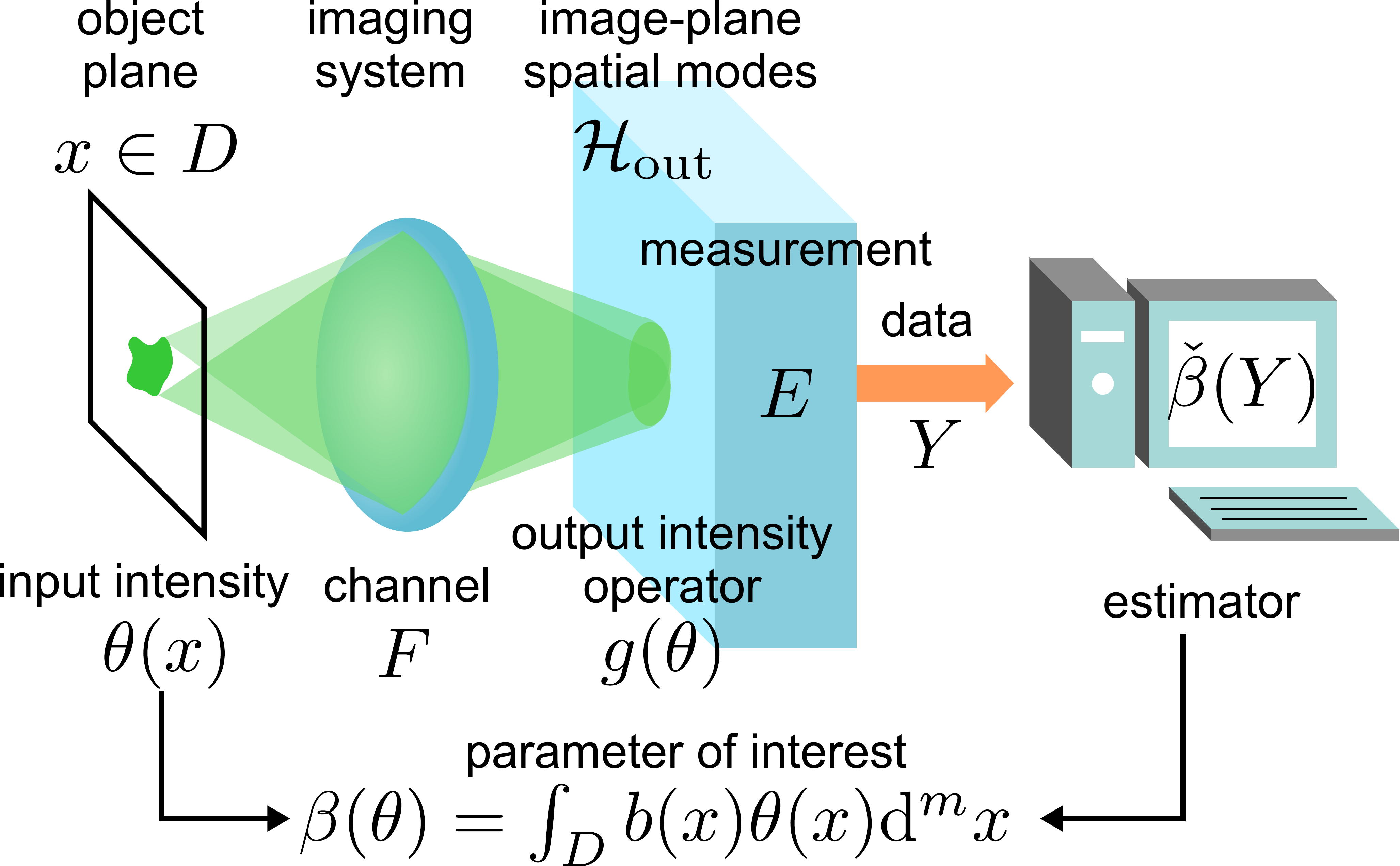}{Quantum model of incoherent imaging.}

\subsection{\label{sec_incoh_q_spaces}Operator spaces}
We follow Secs.~\ref{sec_poisson} and \ref{sec_direct_funcs} and
continue to assume Eqs.~(\ref{input_mean})--(\ref{effinf_b}) for the
object-plane quantities and the parameter of interest. To model a
diffraction-limited imaging system, follow Examples~\ref{exa_qpoisson}
and \ref{exa_qpp} and assume a classical-quantum channel $\chan$ that
produces an intensity operator $g(\theta)$ on $\mc H_\out$ for the
image-plane optical field given by \cite{poisson_quantum}
\begin{align}
g(\theta) &= \chan \theta = 
\int_D e^{-ikx} \ket{\psi}\bra{\psi} e^{ikx} \theta(x) \dd^m x,
\label{cq}
\end{align}
where $\mc H_\out$ is the Hilbert space for the spatial modes on the
image plane, $\ket{\psi} \in \mc H_\out$ models the
diffraction-limited single-photon state for a centered point source,
and $k = \mqty(k_1 & \dots & k_m)$ is the vectoral momentum operator
such that 
\begin{align}
\exp(-ikx) = \exp(-i\sum_{j=1}^m k_j x_j)
\end{align}
is a displacement operator with displacement $x \in \mb R^m$. We call
$\ket{\psi}$ the point-spread state. Let the output ambient vector
space $\mc V_\out$ be the space of self-adjoint operators with the
semi-inner product
\begin{align}
\Avg{A,B}_g &= \trace\Bk{g \bk{A \jordan B}}.
\end{align}
In the following, we again write elements of abstract Hilbert spaces
as functions or operators for intuition. Following Sec.~\ref{sec_qq},
we can express the channel pushforward and pullback as
\begin{align}
g(\theta) \jordan \bk{\chan_\push u} &= \chan\bk{u\theta}
= \int_D e^{-ikx} \ket{\psi}\bra{\psi} e^{ikx} u(x) \theta(x) \dd^mx,
\\
\bk{\chan^\pull u}(x) &= \bk{\chan^\dual u}(x) = \bra{\psi} e^{ikx} u e^{-ikx}\ket{\psi}.
\label{pull_cq}
\end{align}
The SVDs of the local channel maps are straightforward generalizations
of those in Sec.~\ref{sec_incoh_svd}.

The case of direct imaging studied in Sec.~\ref{sec_direct} can be
treated as a measurement of the photon positions on the image plane,
resulting in an output Poisson field with mean intensity
\begin{align}
\bra{y}g(\theta)\ket{y} &= 
\int_D \abs{\bra{y} e^{-ikx}\ket{\psi}}^2 \theta(x)\dd^mx,
\end{align}
where $\{\ket{y}:y \in \mb R^m\}$ are the Dirac position kets that
satisfy $\braket{y}{y'} = \delta^m(y-y')$ and
$e^{-ikx}\ket{y} = \ket{y+x}$.  The method of SPADE, on the other
hand, entails passive linear optics followed by photon counting of
each output mode \cite{tnl,review_cp}. Its most general form can be
modeled by a positive operator-valued measure (POVM) $E$ on $\mc H_\out$,
such that the output is a Poisson field $Y$ with mean measure
\cite{poisson_quantum}
\begin{align}
\mean Y_\theta(\cdot) &= \trace\Bk{ E(\cdot)g(\theta)} =  \int_D 
\bra{\psi} e^{ikx} E(\cdot) e^{-ikx} \ket{\psi} \theta(x) \dd^m x.
\label{incoh_mean}
\end{align}
For example, if the measurement is in terms of a countable
spatial-mode basis $\{\ket{\varphi_n} \in \mc H_\out: n \in \mc M\}$,
then the output is a countable set of independent Poisson random variables
with mean photon numbers given by
\begin{align}
\mean Y_\theta(n) &= 
\bra{\varphi_n}g(\theta)\ket{\varphi_n} = 
\int_D \abs{\bra{\varphi_n}e^{-ikx}\ket{\psi}}^2  \theta(x) \dd^m x.
\label{spade_mean}
\end{align}
As the measurement can be regarded as a Markov map $\chan'$ that takes an
intensity operator $g$ as input and gives a mean measure for a Poisson
field as output, Theorem~\ref{thm_markov} states that the measurement
is a monotonic channel with $\norm{\chan'_\push} \le 1$. 

For the quantum model given by Eq.~(\ref{cq}), the condition on
$\effinf$ given by Eq.~(\ref{effinf_range}) for the existence of
$\effinfout$ becomes
\begin{align}
\exists u \in \Amb(g): \bk{\chan^\pull u}(x) 
= \bra{\psi} e^{ikx} u e^{-ikx}\ket{\psi} = (\effinf)(x) = b(x).
\end{align}
Under the condition, the output efficient influence given by
$\effinfout = \chan^{\pull-}b$ is an observable on $\mc H_\out$.  Once
$(\effinfout)(\check\theta)$ is found at some preliminary estimate
$\check\theta$, the one-step estimator is given by
Eqs.~(\ref{onestep_qpoisson}) and
(\ref{onestep_qpoisson2}). Eqs.~(\ref{onestep_qpoisson}) and
(\ref{onestep_qpoisson2}) mean that we should set the
projection-valued measure of $(\effinfout)(\check\theta)$ as the $E$
in Eq.~(\ref{incoh_mean}).

We note that the quantum semiparametric efficiency bound for
incoherent imaging has been studied before in the specific context of
moment estimation \cite{zhou19,qlmoment_pra,qlmoment_pra2,tan23}, but
the SVD proposed here now allows us to analyze the general nature of
the incoherent-imaging problem.

\subsection{\label{sec_incoh_q_exa}Examples}
Assume one-dimensional imaging ($m = 1$). We first consider the
point-spread state with Gaussian wavefunction
\begin{align}
\ket{\psi} &= \int_{D_\out} \psi(y) \ket{y} \dd y,
&
\psi(y) &= \braket{y}{\psi} = \frac{1}{(2\pi)^{1/4}} \exp\bk{-\frac{y^2}{4}}
\label{gauss_PSS}
\end{align}
in terms of the Dirac position ket $\ket{y}$. The SVD for direct
imaging has been studied in Sec.~\ref{sec_direct_exa} and reported in
Fig.~\ref{svd_incoh_classical_gauss0}. The SVD for
$\chan_\push = \chan_\dual$ of the quantum model given by
Eq.~(\ref{cq}), on the other hand, is plotted in
Fig.~\ref{svd_incoh_q_gauss0}, using the numerical method in
Appendix~\ref{app_num}. The plot of each intensity matrix
$\bra{\psi_n}g(\theta)\ket{\psi_m}$ assumes the
point-spread-function-adapted (PAD) basis \cite{rehacek17,spade_pra}
\begin{align}
\ket{\psi_{n}} &= (-i)^n a_{n}(k) \ket{\psi},
\label{PAD}
\end{align}
where $\{a_n(k):n = 0,1,\dots\}$ are orthonormal polynomials in terms of
the momentum operator $k$ that satisfy
\begin{align}
\intall a_n(\kappa) a_m(\kappa) \abs{\braket{\kappa}{\psi}}^2 \dd\kappa &= \delta_{nm},
&
\ket{\kappa} = \frac{1}{\sqrt{2\pi}}\intall \exp(i\kappa y) \ket{y} \dd y.
\end{align}
In particular, for the Gaussian PSF given by Eq.~(\ref{gauss_PSS}),
$\braket{\kappa}{\psi}$ is also Gaussian, $\{a_n\}$ are
Hermite-Gaussian functions, and all the matrices happen to be real
\cite{tsang26}.

\fig{}{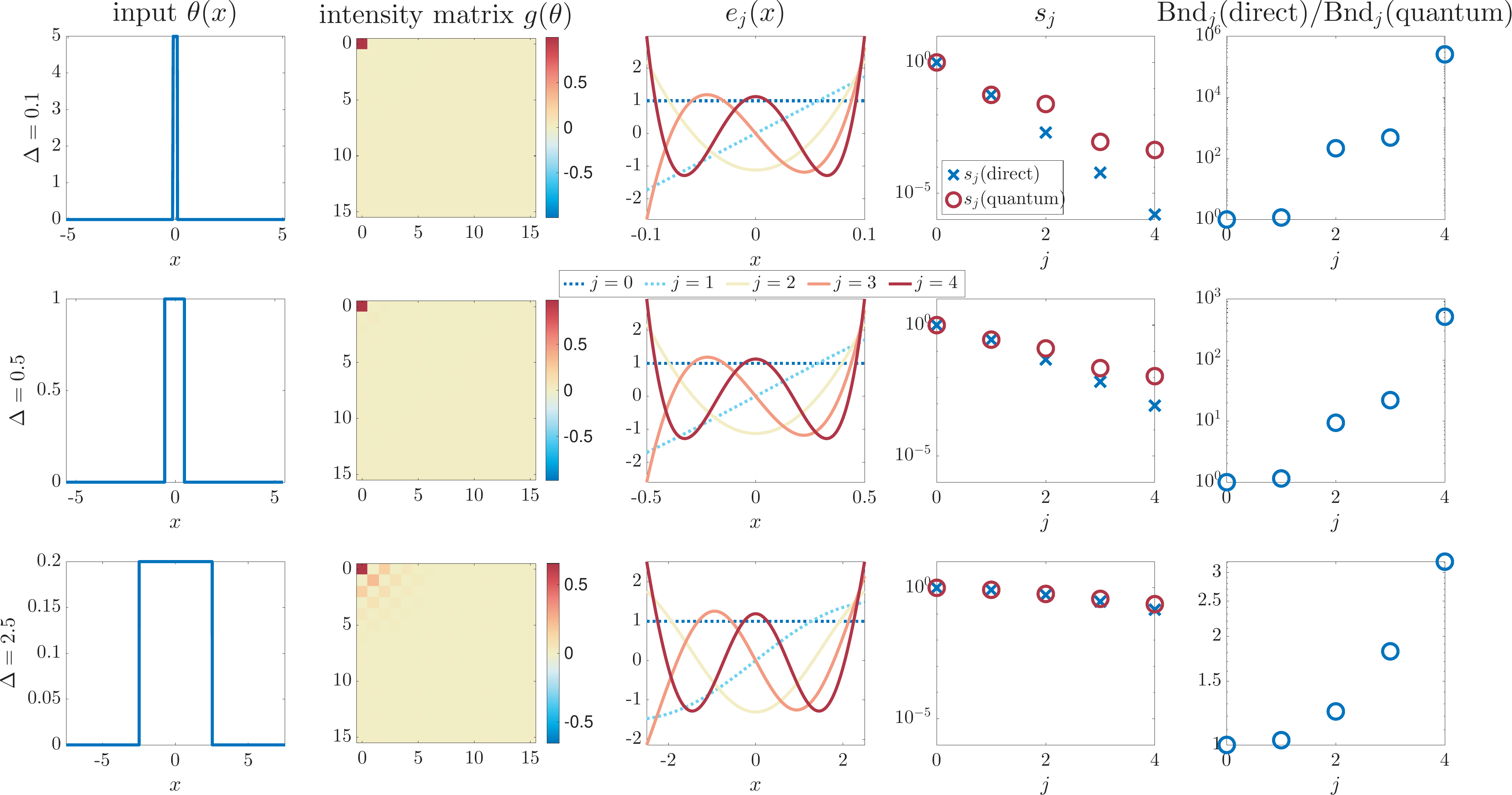}{SVD of the channel pushforward $\chan_\push$
  for the quantum model of incoherent imaging, assuming the flat-top
  input intensity $\theta(x)$ given by Eq.~(\ref{flattop}), the
  classical-quantum channel given by Eq.~(\ref{cq}), and the Gaussian
  point-spread state given by Eq.~(\ref{gauss_PSS}). Each row assumes
  a specific object size $\Delta = 0.1$, $0.5$, or $2.5$.  First
  column: input intensity $\theta(x)$ for a given object size $\Delta$
  versus the object-plane coordinate $x$.  Second column: output
  intensity matrix $\bra{\psi_n}g(\theta)\ket{\psi_m}$, where
  $\{\ket{\psi_n}\}$ is the PAD basis given by Eq.~(\ref{PAD}).  Third
  column: the first five input singular functions $\insing_j(x)$ versus the
  object-plane coordinate $x$.  Fourth column: the first five singular
  values of the quantum model ($s_j(\textrm{quantum})$, circles)
  compared with the singular values of direct imaging
  ($s_j(\textrm{direct})$, crosses) in logarithmic scale. The latter
  are the same as those in Fig.~\ref{svd_incoh_classical_gauss0}. Fifth
  column: Ratio of the efficiency bound for direct imaging
  ($\Bnd_j(\textrm{direct})$) to the quantum bound
  ($\Bnd_j(\textrm{quantum}) = 1/s_j^2(\textrm{quantum})$) in
  logarithmic scale, assuming that the weight function $b(x)$ is an
  $\insing_j(x)$ plotted in the third column. The output singular matrices
  of the quantum model are not plotted for brevity.}

We make the following observations about Fig.~\ref{svd_incoh_q_gauss0}:
\begin{enumerate}
\item When the object size $\Delta$ is small, the second column shows
  that only the intensity matrix elements
  $\bra{\psi_n}g(\theta)\ket{\psi_m}$ with low $n$ and $m$ are
  significant, meaning that only the low-order spatial modes of the
  image plane are excited significantly.

\item The input singular functions $\{\insing_j(x)\}$ in the third
  column look similar to those for direct imaging in
  Fig.~\ref{svd_incoh_classical_gauss0} and are also oscillatory.
  They in fact resemble orthogonal polynomials \cite{dunkl}.

\item The fourth column compares the first five singular values of the
  quantum model $\{s_j(\textrm{quantum}):j = 0,\dots,4\}$ with those
  of direct imaging $\{s_j(\textrm{direct}):j = 0,\dots,4\}$
  plotted earlier in Fig.~\ref{svd_incoh_classical_gauss0}. By
  monotonicity and Prop.~\ref{prop_sv}, we must have
\begin{align}
s_j(\textrm{quantum}) &\ge s_j(\textrm{direct}),
\end{align}
and indeed the two sets of singular values obey the inequality.
Remarkably, for a subdiffraction object size $\Delta = 0.1$, large
gaps between the two sets of singular values begin to develop from
$j = 2$. The quantum singular values still decay quickly, however,
imposing stringent fundamental limits to any measurement.  This
phenomenon is consistent with prior work on a multi-point-source model
\cite{bisketzi19}.

\item To demonstrate the suboptimality of direct imaging in the more
  concrete terms of efficiency bounds, the fifth column assumes the
  weight function
\begin{align}
b(x) &= \insing_j(x),
\end{align}
where $\insing_j(x)$ is an input singular function for the quantum
model, and plots the ratio of the efficiency bound for direct imaging
$\Bnd_j(\textrm{direct})$ to the quantum efficiency bound
$\Bnd_j(\textrm{quantum}) = 1/s_j(\textrm{quantum})^2$. By
monotonicity and Theorem~\ref{thm_mono}, we must have
\begin{align}
\frac{\Bnd_j(\textrm{direct})}{\Bnd_j(\textrm{quantum})} \ge 1.
\end{align}
For a subdiffraction object size $\Delta$, the ratio becomes large
starting from $j = 2$ for the quadratic-looking $\insing_2(x)$, meaning that
a substantial reduction of error becomes possible when one estimates
the Fourier coefficient
\begin{align}
\beta(\theta) &= \int \insing_j(x)\theta(x) \dd x, \quad j \ge 2.
\end{align}

\end{enumerate}

For another example, consider the point-spread state given by
\begin{align}
\ket{\psi} &= \int_{D_\out} \psi(y) \ket{y} \dd y,
&
\psi(y) &= \braket{y}{\psi} = \sqrt{K} \sinc(K y),
&
K &= \frac{\sqrt{3}}{2\pi}.
\label{sinc3}
\end{align}
The PAD basis given by Eq.~(\ref{PAD}) is now defined in terms of this
$\ket{\psi}$. The SVD of the quantum channel pushforward is plotted in
Fig.~\ref{svd_incoh_q_sinc0}.

\fig{}{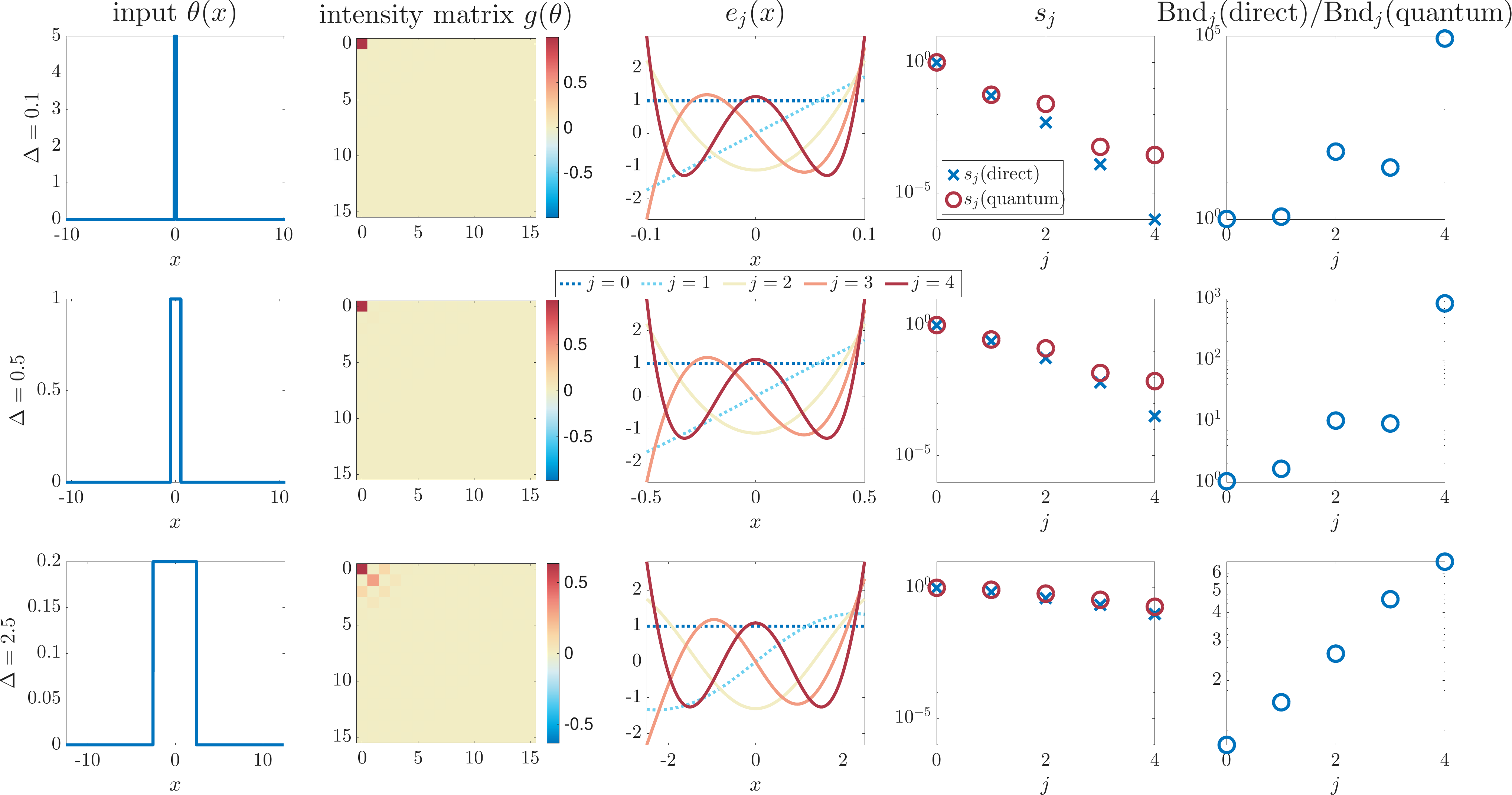}{The figure formats are the same as those of
  Fig.~\ref{svd_incoh_q_gauss0}, except that the sinc point-spread
  state given by Eq.~(\ref{sinc3}) is assumed. The singular values of
  direct imaging $\{s_j(\textrm{direct})\}$ are the same as those
  plotted in Fig.~\ref{svd_incoh_classical_sinc0}. The output singular
  matrices of the quantum model are not plotted for brevity.}

\section{\label{sec_spade}Spatial-mode demultiplexing (SPADE) for
  incoherent imaging}

The large performance gaps between direct imaging and the quantum
limit shown in Sec.~\ref{sec_incoh_q_exa} are consistent with prior
results \cite{tnl,review_cp,lvovsky26}. The study of
Fourier-coefficient estimation in Ref.~\cite{superosc_ieee} is
especially relevant. Ref.~\cite{superosc_ieee} assumes
\begin{align}
\beta(\theta) &= \int_D p_j(x) \theta(x) \dd x
= \sum_{k=0}^{j} p_{jk} \gamma_k,
&
p_j(x) &= \sum_{k=1}^j p_{jk} x^{k},
&
\gamma_k &= \int_D x^{k} \theta(x) \dd x,
\label{beta_poly}
\end{align}
where $\{p_j(x):j = 0,1,\dots\}$ are orthogonal polynomials in terms
of a fixed inner product, such as the Legendre polynomials. Since
$\beta(\theta)$ is a linear combination of the moments $\{\gamma_k\}$,
the error can be related to the errors of moment estimation, as
studied previously in great detail
\cite{spade_njp,spade_pra,zhou19,qlmoment_pra,qlmoment_pra2,tan23,tan23a}.
As the input singular functions $\{\insing_j(x)\}$ for the quantum
model resemble orthogonal polynomials, we expect that a measurement
that accurately estimates the $\beta$ given by Eqs.~(\ref{beta_poly})
should perform similarly well for the weight functions considered in
Sec.~\ref{sec_incoh_q}.

The version of SPADE proposed in Ref.~\cite{superosc_ieee} to estimate
Eqs.~(\ref{beta_poly}) is depicted in Fig.~\ref{superosc_scheme}. It
assumes a mode sorter that couples each spatial mode in the PAD basis
given by Eq.~(\ref{PAD}) to a spatially separate output mode, followed
by programmable interferometers acting on pairs of the output modes
before photon counting. The setup measures one interferometric PAD
(iPAD) basis \cite{spade_pra}
\begin{align}
\ket{\varphi_{1n}} &= 
\begin{cases}
\bk{\ket{\psi_n}+ \ket{\psi_{n+1}}}/\sqrt{2},& n = 0,2,4,\dots 
\textrm{ and } n \le n_{\textrm{max}}-1, \\
\bk{\ket{\psi_{n-1}}- \ket{\psi_n}}/\sqrt{2}, & n = 1,3,5,\dots 
\textrm{ and } n \le n_{\textrm{max}},\\
\ket{\psi_n}, & n_{\textrm{max}} \textrm{ is even} \textrm{ and } n = n_{\textrm{max}} \\
\end{cases}
\label{ipad1}
\end{align}
half of the time and another iPAD basis
\begin{align}
\ket{\varphi_{2n}} &= 
\begin{cases}
\ket{\psi_0}, & n = 0,\\
\bk{\ket{\psi_n}+ \ket{\psi_{n+1}}}/\sqrt{2},& n = 1,3,5,\dots \textrm{ and }
n \le n_{\textrm{max}}-1,\\
\bk{\ket{\psi_{n-1}}- \ket{\psi_n}}/\sqrt{2}, & n = 2,4,6,\dots \textrm{ and }
n \le n_{\textrm{max}},\\
\ket{\psi_n}, & n_{\textrm{max}} \textrm{ is odd}\textrm{ and }n = n_{\textrm{max}} 
\end{cases}
\label{ipad2}
\end{align}
half of the time, where $n_{\textrm{max}}$ denotes the maximum mode
number that one can implement in practice. The POVM becomes
\begin{align}
E(\tau,n) &= \frac{1}{2} \ket{\varphi_{\tau n}}\bra{\varphi_{\tau n}},
\quad
\tau = 1,2,
\quad
n = 0,1,\dots,n_{\textrm{max}},
\label{POVM_spade}
\end{align}
and the classical channel map becomes
\begin{align}
g(\theta,\tau,n) &= \int_D \chan(\tau,n|x) \theta(x) \dd x,
&
\chan(\tau,n|x) &= \bra{\psi} e^{ikx} E(\tau,n) e^{-ikx}\ket{\psi},
\label{chan_spade}
\end{align}
where $g(\theta,\tau,n) = \mean Y_\theta(\tau,n)$ is the mean photon
number of each output labeled by $(\tau,n)$. In terms of estimating
the $\beta$ and $\gamma$ given by Eqs.~(\ref{beta_poly}) and the
scalings of the errors with the object size $\Delta$ as
$\Delta \to 0$, this method is optimal
\cite{zhou19,qlmoment_pra,qlmoment_pra2,tan23,superosc_ieee}, but we
need numerics to determine its precise performance.

\fig{}{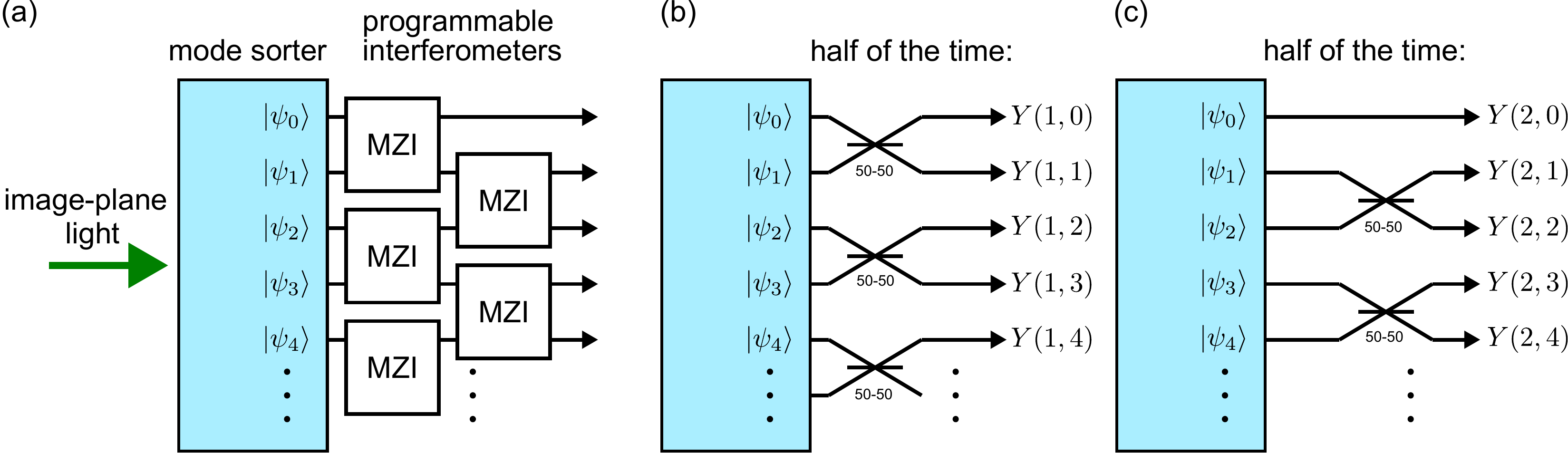}{Illustration of SPADE for the estimation of
  Fourier coefficients, as proposed in Ref.~\cite{superosc_ieee}. (a)
  The setup consists of a mode sorter that channels each spatial mode
  $\ket{\psi_n}$ in the PAD basis given by Eq.~(\ref{PAD}) to a
  spatially separate output mode, followed by programmable
  Mach-Zehnder interferometers (MZI) that act on pairs of the
  outputs. (b) With the first setting of the interferometers, the
  setup measures the basis given by Eq.~(\ref{ipad1}) half of the
  time, producing photon numbers $Y(1,n)$. (c) With the second setting
  of the interferometers, the setup measures the basis given by
  Eq.~(\ref{ipad2}) half of the time, producing photon numbers
  $Y(2,n)$.}

Following Appendix~\ref{app_num}, Fig.~\ref{svd_spade_gauss0} plots
the numerical SVD of SPADE, as modeled by
Eqs.~(\ref{ipad1})--(\ref{chan_spade}) with $n_{\textrm{max}} = 4$,
assuming the flat-top input intensity given by Eq.~(\ref{flattop}) and
the Gaussian point-spread state given by Eq.~(\ref{gauss_PSS}). We
make a few observations:
\begin{enumerate}
\item The third column plots the input singular functions for SPADE.
  Similar to those in Figs.~\ref{svd_incoh_classical_gauss0} and
  \ref{svd_incoh_q_gauss0}, they are also oscillatory and resemble
  orthogonal polynomials.

\item The fourth column plots the singular values of SPADE
  ($s_j(\textrm{SPADE})$), compared with those of direct imaging
  ($s_j(\textrm{direct})$, also plotted in
  Figs.~\ref{svd_incoh_classical_gauss0} and \ref{svd_incoh_q_gauss0})
  and those of the quantum model ($s_j(\textrm{quantum})$, also
  plotted in Fig.~\ref{svd_incoh_q_gauss0}).  By monotonicity and
  Prop.~\ref{prop_sv}, we must have
\begin{align}
s_j(\textrm{quantum}) &\ge s_j(\textrm{SPADE}),
\end{align}
and indeed the numerical results observe the inequality.  Compared
with the singular values of direct imaging, however, the singular
values of SPADE are closer to the quantum limits for small object
sizes.

\item Similar to Fig.~\ref{svd_incoh_q_gauss0}, the fifth column here
  assumes that $b(x)$ is an input singular function $\insing_j(x)$ for the
  quantum model and plots the ratio of the direct-imaging bound
  $\Bnd_j(\textrm{direct})$ to the SPADE bound
  $\Bnd_j(\textrm{SPADE})$ to demonstrate the improvement offered by
  SPADE. Similar to the ratio
  $\Bnd_j(\textrm{direct})/\Bnd_j(\textrm{quantum})$ in terms of the
  quantum limit, $\Bnd_j(\textrm{direct})/\Bnd_j(\textrm{SPADE})$
  becomes high for $\Delta = 0.1$ and $\Delta = 0.5$ starting from
  $j = 2$. This improvement is consistent with
  Ref.~\cite{superosc_ieee}. By monotonicity and
  Theorem~\ref{thm_mono},
\begin{align}
\Bnd_j(\textrm{SPADE}) &\ge \Bnd_j(\textrm{quantum}),
&
\frac{\Bnd_j(\textrm{direct})}{\Bnd_j(\textrm{SPADE})}
&\le 
\frac{\Bnd_j(\textrm{direct})}{\Bnd_j(\textrm{quantum})},
\end{align}
and indeed the numerical results observe the latter inequality.  Gaps
persist between the two ratios, especially for larger object sizes and
higher-order Fourier coefficients, demonstrating that there is still
room for improvement for this version of SPADE.

\end{enumerate}

\fig{}{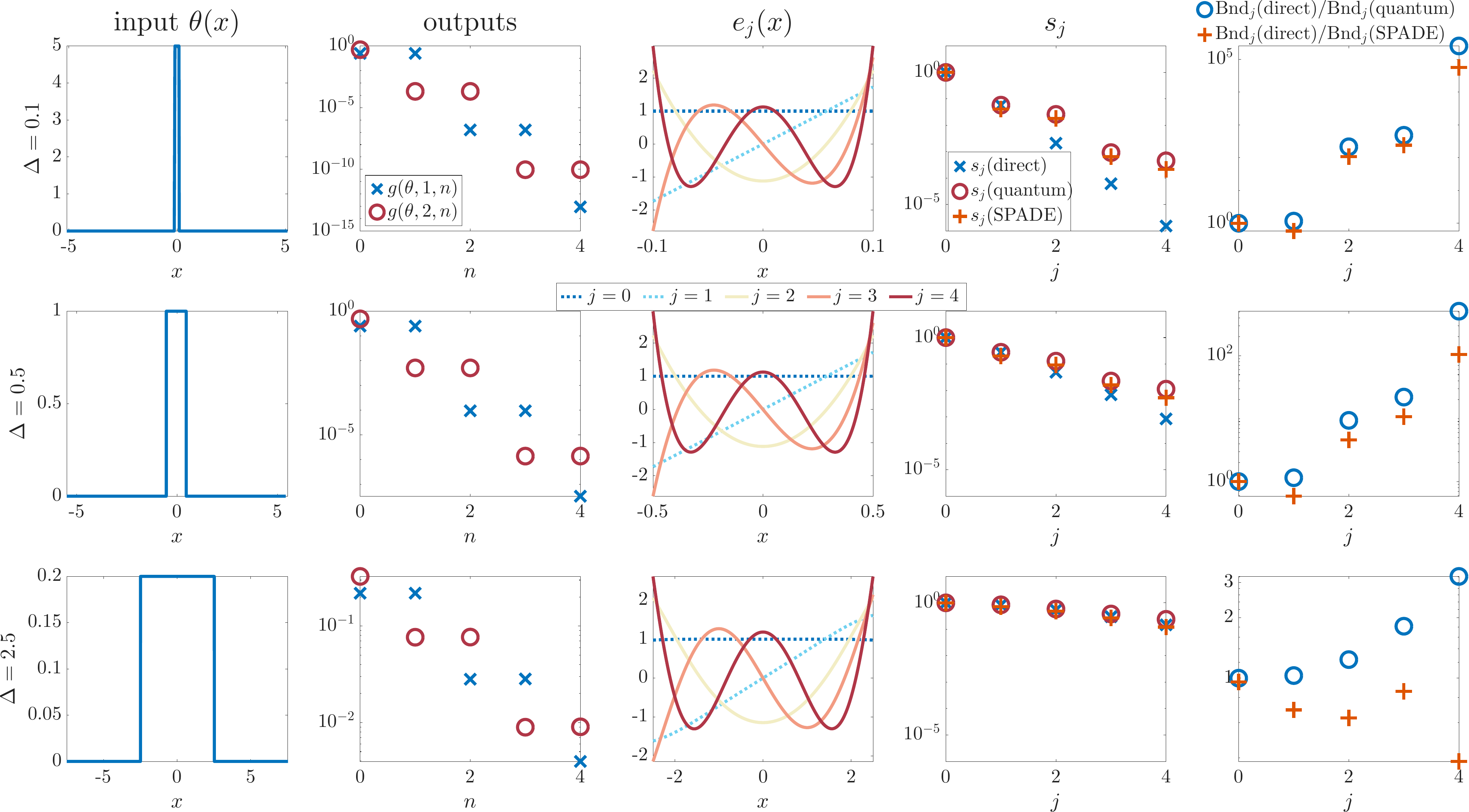}{SVD of the channel pushforward $\chan_\push$
  for SPADE, assuming the flat-top input intensity $\theta(x)$ given
  by Eq.~(\ref{flattop}), the Gaussian point-spread state given by
  Eq.~(\ref{gauss_PSS}), and the classical channel given by
  Eqs.~(\ref{ipad1})--(\ref{chan_spade}). Each row assumes a specific
  object size $\Delta = 0.1$, $0.5$, or $2.5$.  First column: input
  intensity $\theta(x)$ for a given object size $\Delta$ versus the
  object-plane coordinate $x$.  Second column: output mean photon
  numbers $g(\theta,1,n)$ and $g(\theta,2,n)$ in logarithmic scale.
  Third column: the first five input singular functions $\insing_j(x)$
  versus the object-plane coordinate $x$.  Fourth column: The first
  five singular values of direct imaging ($s_j(\textrm{direct})$,
  crosses), those of the quantum model ($s_j(\textrm{quantum})$,
  circles), and those of SPADE ($s_j(\textrm{SPADE})$, triangles) in
  logarithmic scale.  $\{s_j(\textrm{direct})\}$ are the same as those
  in Figs.~\ref{svd_incoh_classical_gauss0} and
  \ref{svd_incoh_q_gauss0}, while $\{s_j(\textrm{quantum})\}$ are the
  same as those in Fig.~\ref{svd_incoh_q_gauss0}.  Fifth column: ratio
  of the efficiency bound for direct imaging
  $\Bnd_j(\textrm{direct})$ to the SPADE bound
  $\Bnd_j(\textrm{SPADE})$ in logarithmic scale, assuming that the
  weight function $b(x)$ is an $\insing_j(x)$ for the quantum model. For
  reference, the quantum upper bounds
  $\Bnd_j(\textrm{direct})/\Bnd_j(\textrm{quantum})$, same as those in
  Fig.~\ref{svd_incoh_q_gauss0}, are also plotted. The output singular
  vectors for SPADE are not plotted for brevity.}

For completeness, we also plot the numerical results
for the sinc point-spread state given by Eq.~(\ref{sinc3})
in Fig.~\ref{svd_spade_sinc0}. The plots resemble
those in Fig.~\ref{svd_spade_gauss0}, at least qualitatively.

\fig{}{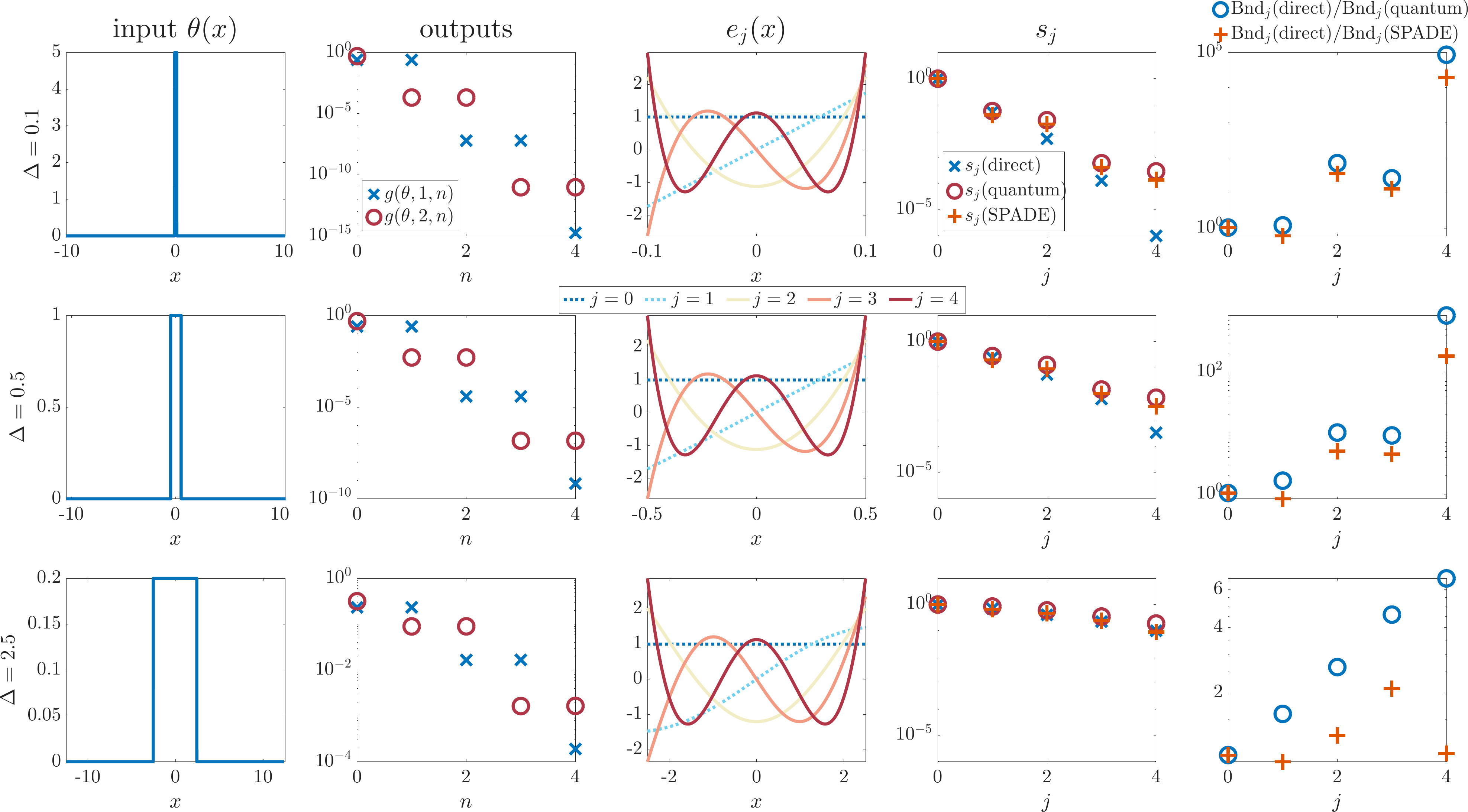}{The figure formats are the same as those of
  Fig.~\ref{svd_spade_sinc0}, except that the sinc point-spread state
  given by Eq.~(\ref{sinc3}) is assumed. In the fourth column,
  $\{s_j(\textrm{direct})\}$ are the same as those in
  Figs.~\ref{svd_incoh_classical_sinc0} and \ref{svd_incoh_q_sinc0},
  while $\{s_j(\textrm{quantum})\}$ are the same as those in
  Fig.~\ref{svd_incoh_q_sinc0}. In the fifth column, the ratios
  $\Bnd_j(\textrm{direct})/\Bnd_j(\textrm{quantum})$ are the same as
  those in Fig.~\ref{svd_incoh_q_sinc0}.}

To improve upon the version of SPADE studied here and come even closer
to the quantum limits, an adaptive protocol can be envisioned:
\begin{enumerate}
\item estimate $\theta$ based on the outcomes from the current version of SPADE, 
\item use the estimate $\check\theta$ to compute the efficient
  influence operator $(\effinfout)(\check\theta)$ of the quantum
  model, and
\item update the interferometer based on the projection-valued measure of
$(\effinfout)(\check\theta)$ to implement the one-step estimator given
by Eqs.~(\ref{onestep_qpoisson}) and (\ref{onestep_qpoisson2}).
\end{enumerate}
Similar adaptive SPADE protocols have been studied previously for
multi-point-source models \cite{lee23,bao21,choi24}. For a
multidimensional $\beta = \mqty(\beta_1 & \dots & \beta_q)$, the set
of efficient influence operators
$\mqty((\effinfout_1)(\check\theta) & \dots &
(\effinfout_q)(\check\theta))$ may be incompatible and it becomes
unclear how one should choose the basis for the adaptive SPADE. One
solution, albeit even more challenging to implement, is to use the
quantum-nondemolition measurements proposed in Ref.~\cite{tsang26},
which can in theory approach the Holevo--Nagaoka bound for a
multidimensional $\beta$.

\section{\label{sec_confocal}Incoherent imaging with structured illuminations}

Our formalism can handle structured illuminations \cite{pawley}, such
as confocal microscopy, just as well. We focus on the incoherent case.
Fig.~\ref{SIM} illustrates the setup. Suppose that the illumination
intensity pattern changes over $J$ time intervals. Let $h_j(x)$ be the
illumination intensity in the $j$th time interval. The optical field
on the image plane in the $j$th time interval can then be modeled as a
Poisson state with intensity operator
\begin{align}
\chan_j\theta &= \int_D e^{-ikx} \ket{\psi}\bra{\psi} e^{ikx} h_j(x) \theta(x) \dd^m x,
\end{align}
which modifies Eq.~(\ref{cq}) for the widefield case by inserting
$h_j(x)$. It is reasonable to assume that the optical fields in
different time intervals are independent. The total output intensity
operator can then be modeled as \cite[Example~5]{poisson_quantum}
\begin{align}
g(\theta) &= \bigoplus_j \chan_j \theta.
\end{align}
This direct sum is an example of the broadcast channel discussed in
Sec.~\ref{sec_broadcast}.  The additivity of the information map due
to Theorem~\ref{thm_add} should become useful for the study of the
quantum and classical broadcast channels that arise from the varying
illuminations.

For example, with an ideal image sensor, the mean output intensity
becomes
\begin{align}
g'(\theta,y_1,\dots,y_J) &= \bigoplus_j g_j'(\theta,y_j),
\label{g_ism}
\end{align}
where 
\begin{align}
g_j'(\theta,y_j) &= \bra{y_j} \chan_j\theta \ket{y_j}
= \int_D \abs{\bra{y_j}e^{-ikx} \ket{\psi}}^2 h_j(x) \theta(x) \dd^m x
\label{g_ism2}
\end{align}
is the mean image-plane intensity as a function of the image-plane
coordinate $y_j \in \mb R^m$ in the $j$th time interval.  If $h_j(x)$
models a focused illumination beam that moves with $j$, then
Eqs.~(\ref{g_ism}) and (\ref{g_ism2}) become a model of image-scanning
microscopy \cite{sheppard88,mueller10,gregor19}.

\fig{0.6}{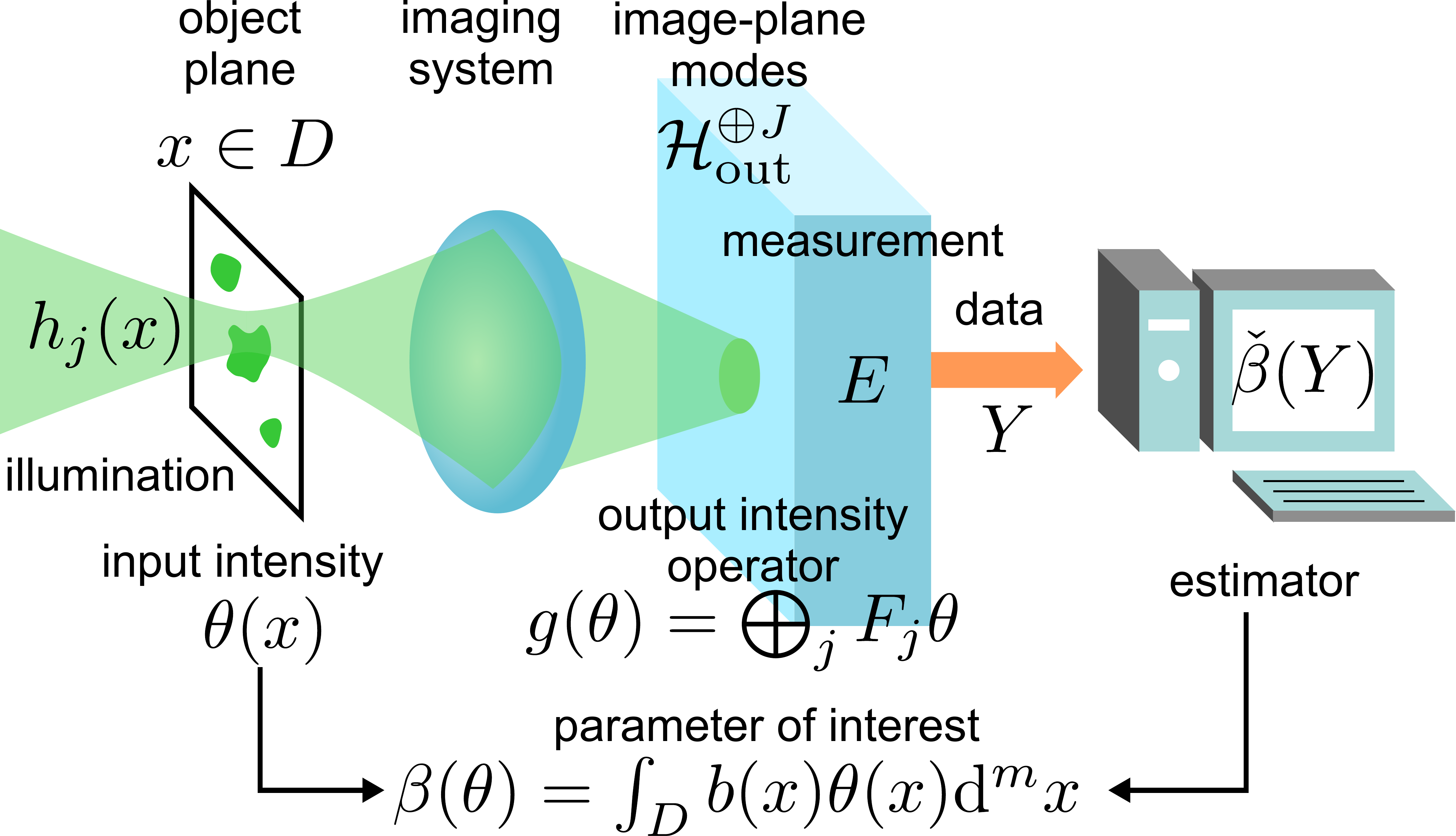}{Illustration of incoherent imaging with structured
  illuminations; confocal microscopy is an example.}

The quantum bound, the relative efficiency of direct imaging, and the
possible enhancement by SPADE or even more exotic methods are all
fascinating open questions. We leave the detailed analysis to future
work.

\section{\label{sec_ext}Potential extensions}
Despite the length of this work, it introduces only a framework of the
semiparametric efficiency theory and begets many more open questions.
We highlight some of them in the following:

\begin{enumerate}
\item Many more semiparametric problems can be found in the classical
  statistics literature
  \cite{bickel,tsiatis,vaart,laan,newey90,kennedy24} and may be
  generalizable to quantum problems.  One notable omission in this
  work is a treatment of explicit nuisance parameters, which is a vast
  subject in classical statistics and has seen some studies in the
  quantum arena as well \cite{semi_prx,suzuki20}.

\item This work focuses on parameters of interest that are linear
  functionals of the underlying parameter, but the general theory can
  also be applied to nonlinear functionals, such as the entropy of a
  probability density or density operator \cite{bickel,semi_prx}.

\item The classical and quantum Gaussian models in
  Examples~\ref{exa_gauss} and \ref{exa_qgauss} assume that the
  covariance maps are given. The models can be generalized for unknown
  covariance maps \cite{sekine95,kay,monras13} or unknown power
  spectral densities for stationary processes
  \cite{shumway_stoffer,ng16,noise_spec_pra,sui26} by generalizing the
  score vector spaces and the semi-inner products.

\item When the efficient influence is parameter-dependent, we suggest
  only one-step estimators and gloss over the details of their
  implementations. The implementations require more studies,
  especially for quantum problems where the physical processing layer
  is experimentally nontrivial.

\item The theory of broadcast channels in Sec.~\ref{sec_broadcast}
  should find many more applications in sensing and imaging beyond the
  example of structured illuminations in Sec.~\ref{sec_confocal}, any
  time scanning, multiple probes, multiple apertures, or multiple
  modalities are employed to observe the same sample
  \cite{brady,padilla26}.

\item The efficiency theory is limited by its regularity assumptions,
  its reliance on asymptotics, and its inability to incorporate prior
  information. These limitations are especially detrimental to the
  problem of reconstructing a functional parameter $\theta$, for which
  a Bayesian or minimax framework may be more appropriate
  \cite{vantrees,tsybakov,johnstone,meister,twc,chen21}. A
  transition to those frameworks does not mean that the efficiency
  theory will become obsolete---it can still offer useful solutions
  and insights via various bridging concepts such as local asymptotic
  normality \cite{vaart,demkowicz20,johnstone} and Bayesian
  generalizations of the Cram\'er--Rao bound
  \cite{schutzenberger57,vantrees,gill95,bcrb_pra}.  From the
  geometric perspective, the transition means that we should advance
  from the local theory presented here to a global theory that
  considers all points in the parameter space
  \cite{bcrb_pra,zzb_gen,amari,amari_app}.
  
\end{enumerate}

\section{\label{sec_con}Conclusion}
The essence of the semiparametric efficiency theory is geometric.
Perturbations of a parameter are represented by tangent vectors.  The
vectors are endowed with statistically motivated metrics. The infinite
dimensionality of the parameter space is handled by treating the
tangent vectors in Hilbert spaces. The parameter of interest as well
as the effects of channels are then linearized as Hilbert-space maps,
allowing their statistical properties to be analyzed by standard
mathematics, such as the SVD. Two central quantities emerge: the
efficiency bound and the efficient influence.  They establish a
fundamental notion of efficiency for semiparametric estimation with
classical data as well as quantum objects, generalizing the popular
Cram\'er--Rao and Helstrom bounds.

The examples demonstrate that similar techniques apply to many
problems, and analytic or numerical computation of the efficient
influence and the efficiency bound is often feasible. The numerical
approach is especially vital to future research, as analytic
derivations are often difficult if not impossible. For the imaging
examples, the models are admittedly simplistic, but the details are
perhaps less important than the general insight they foster: a
judicious design of the physical processing layer can enhance sensing
and imaging and is vital for reaching the fundamental quantum limits,
even for classical light and semiparametric models.

Just as classical computers have revolutionized classical statistics,
we can dream that quantum computers can similarly revolutionize
quantum sensing and imaging by performing arbitrary physical
processing to achieve the quantum limits. A universal quantum computer
with negligible noise is perhaps too far away, but specialized
hardware, such as SPADE and interferometers, can still become helpful
in the near future, in light of the recent experimental progress and
on-sky demonstrations \cite{lvovsky26,kim25,wallis26}. It will be
desirable to make the hardware programmable, so that it can better
adapt to reality. Machine learning can, of course, help implement the
physical control and data processing strategies, but it must still
respect the efficiency limits, which play the same irreplaceable role
as the laws of thermodynamics for engines. A union of the
semiparametric efficiency theory with machine learning---not only for
data processing \cite{laan,kennedy24} but also for physical
control---will be a promising, if not inevitable, trend in both
classical and quantum regimes.

\section*{Acknowledgments}
This work was inspired by insightful discussions with Alex Lvovsky's
group. The author gratefully acknowledges the hospitality of his
group, the Keble College, and the Department of Physics at the
University of Oxford, where this work was initiated in 2024.

The author acknowledges consultations with the free model of ChatGPT
in 2026 regarding minor mathematical questions, MATLAB coding,
references, and proofreading.  All suggestions were independently
checked by the author.

The MATLAB routine to generate color maps for colorblind viewers was
downloaded from Ref.~\cite{bemis21}.

\appendix

\section{\label{app_note}Notations}
\begin{enumerate}
\item We reserve the term \emph{operator} for an operator on a complex
  Hilbert space in quantum mechanics.  In general, we call an operator
  on a vector space a \emph{map} to avoid confusion with the quantum
  operators.  The operator norm of a map is still called the operator
  norm, however.

\item $I$ denotes the identity map; it may be the identity operator,
  the identity map, or the identity matrix, depending on the context.
  Similarly, $0$ may denote the zero vector, the zero operator, the
  zero map, or a zero matrix of appropriate size.

\item For a vector space $\mc V$, $\mc V^\dual$ denotes its dual, a
  space of linear functionals of vectors, also called covectors.
  $A^\dual$ of a Banach-space map $A$ denotes its dual.  $A^\dual$ of
  a Hilbert-space map $A$ denotes its adjoint. We also use $\adj$ to
  denote the adjoint when the Hilbert spaces are different from those
  for $\dual$. For a matrix $A$,
\begin{enumerate}
\item $A^\dual$ denotes the conjugate
transpose, 
\item $A^\top$ denotes the transpose, and
\item $\cconj{A}$ denotes the
entry-wise complex conjugate.
\end{enumerate}

\item If a proof is not given immediately after a numbered proposition
  and no reference is provided, the proof is delegated to
  Appendix~\ref{app_proofs}.

\item Table~\ref{tab_symbols} provides a list of symbols.
\end{enumerate}

\begin{longtable}[c]{lll}
\hline
Symbol & Definition & First mention \\
\hline
$\theta$ & parameter & Sec.~\ref{sec_tangent}\\
$\Theta$ & parameter space & Sec.~\ref{sec_tangent}\\
$\phi$ & parametric submodel & Sec.~\ref{sec_tangent}\\
$\dot\phi$ & directional derivative & Sec.~\ref{sec_tangent}\\
$\mc T_\theta$ & tangent space at $\theta$ & Sec.~\ref{sec_tangent}\\
$\beta(\theta)$ & parameter of interest & Sec.~\ref{sec_tangent}\\
$\dd\beta(\dot\phi)$ & differential of $\beta$ & Sec.~\ref{sec_tangent}\\
$\bnd_\theta(\dot\phi)$ & submodel efficiency bound & Sec.~\ref{sec_bnd}\\
$\info_\theta(\dot\phi,\dot\chi)$ & information semimetric & Sec.~\ref{sec_bnd}\\
$\score(\dot\phi)$ & score vector & Sec.~\ref{sec_bnd}\\
$f(\theta)$ & parametrization function & Sec.~\ref{sec_bnd}\\
$\cdot$ & generic argument of a function & Sec.~\ref{sec_bnd}\\
$P_\theta$ & probability measure given $\theta$ & Example~\ref{exa_classical}\\
$\expect$ & expected value given $\theta$ & Example~\ref{exa_classical}\\
$X,Y,Z$ & random elements, random variables, or observables & Example~\ref{exa_classical} \\
$\tilde X_\theta$ & mean measure of a Poisson field $X$ given $\theta$ & Example~\ref{exa_poisson}\\
$\Sigma$ & covariance map or covariance matrix & Example~\ref{exa_gauss}\\
$\mc H$ & complex Hilbert space & Example~\ref{exa_quantum}\\
$\jordan$ & Jordan product & Example~\ref{exa_quantum}\\
$\trace$ & trace of an operator & Example~\ref{exa_quantum}\\
$\mc K$ & phase space for optical modes & Example~\ref{exa_qgauss}\\
$\partial\beta$ & Euclidean gradient of $\beta$ & Example~\ref{exa_finite}\\
$\Imat$ & information matrix & Example~\ref{exa_finite}\\
$\Avg{\cdot,\cdot}_{f(\theta)}$ & semi-inner product or inner product
for scores & Sec.~\ref{sec_score}\\
$\norm{\cdot}_{f(\theta)}$ & seminorm or norm for scores & Sec.~\ref{sec_score}\\
$\mc V$ & ambient vector space & Sec.~\ref{sec_score}\\
$\mc N(f(\theta))$ & set of elements with zero seminorm 
$\norm{\cdot}_{f(\theta)} = 0$ & Sec.~\ref{sec_score}\\
$\complete$  & completion of an inner-product space & Sec.~\ref{sec_score}\\
$\overline{S}$ & closure of a set $S$ & Sec.~\ref{sec_score}\\
$\Amb(f)$ & ambient Hilbert space & Sec.~\ref{sec_score}\\
$f_\push$ & pushforward & Sec.~\ref{sec_score}\\
$\mc T(f)$ & score tangent space & Sec.~\ref{sec_score}\\
$\sdiff$ & score differential & Sec.~\ref{sec_Bnd}\\
$\Bnd$ & efficiency bound & Sec.~\ref{sec_Bnd}\\
$f_\pull$ & pullback & Sec.~\ref{sec_Bnd}\\
$\effinf$ & efficient influence & Sec.~\ref{sec_Bnd}\\
$\check\beta$ & estimator of $\beta$ & Sec.~\ref{sec_Bnd}\\
$(\effinf)(\check\theta)$ & efficient influence 
at estimate $\check\theta$ & Sec.~\ref{sec_Bnd}\\
$b(\cdot)$ & linear functional of $\theta$ or weight function & Sec.~\ref{sec_classical}\\
$E$ & projection-valued measure or positive operator-valued measure (POVM) & 
Sec.~\ref{sec_qpoisson}\\
$\chan$ & channel map & Sec.~\ref{sec_chan_exa}\\
$\circ$ & composition of two functions & Sec.~\ref{sec_chan_exa}\\
$g(\theta)$ & output parametrization function & Sec.~\ref{sec_chan_exa}\\
$\chan_\push$ & channel pushforward & Sec.~\ref{sec_chan_push}\\
$\Bnd_\out$ & efficiency bound at the channel output  & Sec.~\ref{sec_chan_pull}\\
$\Delta_\out\beta$ & score differential at the channel output  & Sec.~\ref{sec_chan_pull}\\
$\chan_\pull$ & channel pullback for score differentials & Sec.~\ref{sec_chan_pull} \\
$\effinfout$ & efficient influence at the channel output  & Sec.~\ref{sec_chan_pull}\\
$\adj$ & adjoint with respect to $\Amb(f)$ and $\Amb(g)$ & Sec.~\ref{sec_chan_pull}\\
$\chan^\pull$ & channel pullback for efficient influences,
$\chan_\push^\adj$  & Sec.~\ref{sec_chan_pull}\\
$T^-$ & pseudo-inverse of a map $T$ & Theorem~\ref{thm_effinfout}\\
$\imap$ & information map & Sec.~\ref{sec_imap}\\
$T^+$ & Moore--Penrose inverse of a matrix $T$ & Corollary~\ref{cor_finite}\\
$s_j$ & singular value & Sec.~\ref{sec_svd}\\
$\insing_j$ & input singular vector & Sec.~\ref{sec_svd}\\
$\outsing_j$ & output singular vector & Sec.~\ref{sec_svd}\\
$\otimes$ & tensor product & Sec.~\ref{sec_broadcast}\\
$\oplus$ & direct sum & Sec.~\ref{sec_broadcast}\\
$\inj$ & injection map & Sec.~\ref{sec_broadcast}\\
$\pi$ & projection map & Sec.~\ref{sec_broadcast}\\
$\chan^\dual$ & dual of channel map $\chan$ & Sec.~\ref{sec_cc}\\
$\chan_\dual$ & $\chan^{\dual\adj}$ & Sec.~\ref{sec_cc}\\
$\IMAP$ & maximal information map, $\chan^\dual\chan_\dual$ & Sec.~\ref{sec_cc}\\
$D$, $D_\out$ & object plane, image plane & Sec.~\ref{sec_coh_spaces} \\
$\epsilon$ & Gaussian noise level & Sec.~\ref{sec_coh_spaces}\\
$\Delta$ & object-plane size or object size & Sec.~\ref{sec_coh_c_exa} \\
$\Delta_\out$ & image-plane size & Sec.~\ref{sec_direct_exa} \\
$\psi(y)$ & point-spread function for the fields & Sec.~\ref{sec_direct_exa} \\
$\ket{\psi}$ & point-spread state & Sec.~\ref{sec_incoh_q_spaces}\\
$k$ & momentum operator & Sec.~\ref{sec_incoh_q_spaces}\\
$\ket{y}$ & Dirac position ket & Sec.~\ref{sec_incoh_q_spaces}\\
$\ket{\psi_n}$ & point-spread-function-adapted (PAD) state & 
Sec.~\ref{sec_incoh_q_exa}\\
\hline
\caption{\label{tab_symbols}List of symbols.}
\end{longtable}

\section{\label{app_fock}Bosonic Fock space and Gaussian states}
\subsection{Bosonic Fock space}
Let $\fock(\mc H)$ be the bosonic Fock space constructed from a
complex mode Hilbert space $\mc H$ \cite{parth_qsc}. $\fock(\mc H)$ is
a complex Hilbert space itself; we use the bra-ket notation with it.
An important concept is a pure quantum state
$\ket{\alpha} \in \fock(\mc H)$ called the coherent state, with
$\alpha \in \mc H$ modeling the mean field. $\ket{0}$ is the vacuum
state. The linear span of $\{\ket{\alpha}: \alpha \in \mc H\}$ is a
dense subset of $\fock(\mc H)$.  The inner product between two
coherent states is
\begin{align}
\braket{\alpha}{\gamma} &= 
\exp\Bk{-\frac{1}{2} \norm{\alpha-\gamma}_{\mc H}^2 + i\Im\Avg{\alpha,\gamma}_{\mc H}}.
\end{align}
There exists a unitary operator $D(\alpha)$ on $\fock(\mc H)$ called
the displacement operator that gives
\begin{align}
D(\alpha) \ket{0} &= \ket{\alpha} \quad\forall \alpha \in \mc H.
\end{align}
It satisfies the fundamental relations
\begin{align}
D(\alpha)D(\gamma) &= e^{-i \Im\Avg{\alpha,\gamma}_{\mc H}} D(\alpha+\gamma)
= e^{-2i \Im\Avg{\alpha,\gamma}_{\mc H}} D(\gamma)D(\alpha) .
\end{align}
Define a quadrature observable $p(\alpha)$ of the $\alpha$ mode in
terms of the Stone generator of $\{D(t\alpha):t \in \mb R\}$ by
\begin{align}
D(t\alpha) &= \exp\Bk{-i t \sqrt{2} p(\alpha)}.
\end{align}
It follows that
\begin{align}
p(t\alpha) &= t p(\alpha) \quad \forall t \in \mb R,
&
p(\alpha) + p(\gamma) &= p(\alpha+\gamma).
\label{p_lin}
\end{align}
Define also the operators
\begin{align}
q(\alpha) &= p(-i\alpha),
&
a(\alpha) &= \frac{q(\alpha) + ip(\alpha)}{\sqrt{2}},
&
a(\alpha)^\dual &= \frac{q(\alpha) - ip(\alpha)}{\sqrt{2}}
\end{align}
in terms of the $\alpha$ mode.  The canonical commutation relation
becomes
\begin{align}
\Bk{q(\alpha),q(\gamma)} \ket{\eta}
&= \Bk{p(\alpha),p(\gamma)} \ket{\eta}
= i \Im\Avg{\alpha,\gamma}_{\mc H} \ket{\eta},
&
\Bk{q(\alpha),p(\gamma)} \ket{\eta} &= i \Re\Avg{\alpha,\gamma}_{\mc H} \ket{\eta},
\\
\Bk{a(\alpha),a(\gamma)^\dual}\ket{\eta} &= \Avg{\alpha,\gamma}_{\mc H} \ket{\eta}
\end{align}
for any $\alpha,\gamma,\eta \in \mc H$.  The mode space $\mc H$
generalizes the role of linear functionals $\mc B^\dual$ in
Example~\ref{exa_gauss}, while the modal quadrature operators
$q(\alpha)$ and $p(\alpha)$ with $\alpha \in \mc H$ generalize the
role of random variables $L(X)$ with $L \in \mc B^\dual$.

An orthonormal basis $\{e_j\}$ of $\mc H$ defines a discrete set of
modes, such that $\{q(e_j)\}$, $\{p(e_j)\}$, and $\{a(e_j)\}$ are
precisely the standard quadrature and annihilation operators for the
discrete modes, satisfying the standard multimode commutation
relations.

In optics, we often write \cite{mandel}
\begin{align}
\Avg{\alpha,\gamma}_{\mc H} &= \sum_s \int \cconj{\alpha(k,s)} \gamma(k,s)  \dd^3 k,
&
a(\alpha) &= \sum_s \int \cconj{\alpha(k,s)} C(k,s) \dd^3k,
\label{inner_EM}
\end{align}
where $\alpha(k,s)$ is the mode function of the wavevector
$k \in \mb R^3$ and the polarization $s \in \{1,2\}$, while the
annihilation operator $C(k,s)$ satisfies
$[C(k,s),C(k',s')^\dual] = \delta^3(k-k')\delta_{ss'}$ with respect to
the Dirac delta function $\delta^3$ and the Kronecker delta
$\delta_{ss'}$.

\subsection{Phase space}
To conform with the classical treatment in Example~\ref{exa_gauss},
where all spaces are real, we now transform the complex $\mc H$ to a
real Hilbert space $\mc K$ using a procedure called realification
\cite{fomenko}. $\mc K$ is called the phase space in physics. Let
$\conj:\mc H \to \mc H$ be an antiunitary conjugation map satisfying
\begin{align}
\Avg{\alpha,\gamma}_{\mc H} &= \Avg{\conj \gamma,\conj\alpha}_{\mc H}
\quad
\forall \alpha,\gamma \in \mc H,
&
\conj^2 &= I.
\label{conjugation}
\end{align}
Such a map always exists---we can take an orthonormal basis $\{e_j\}$
of $\mc H$ and define $\conj$ by
\begin{align}
\conj \alpha &= \sum_j e_j \cconj{\Avg{e_j,\alpha}_{\mc H}}.
\end{align}
Now define the real and imaginary parts of $\alpha$ as, respectively,
\begin{align}
\Real\alpha &= \frac{\alpha + \conj \alpha}{2},
&
\imag\alpha &= \frac{\alpha - \conj \alpha}{2i}.
\label{Re_Im}
\end{align}
Then $\conj \Real = \Real$ and $\conj\imag = \imag$, leading to 
\begin{align}
\Avg{\Real\alpha,\Real\gamma}_{\mc H} &\in \mb R,
&
\Avg{\imag\alpha,\imag\gamma}_{\mc H} &  \in \mb R.
\label{real_inner}
\end{align}
These are real inner products. We also have
\begin{align}
\Re \Avg{\alpha,\gamma}_{\mc H} &= \Avg{\Real\alpha,\Real\gamma}_{\mc H} + 
\Avg{\imag\alpha,\imag\gamma}_{\mc H},
&
\Im \Avg{\alpha,\gamma}_{\mc H} &= \Avg{\Real\alpha,\imag\gamma}_{\mc H}
-\Avg{\imag\alpha,\Real\gamma}_{\mc H}.
\end{align}
Beware that $\conj$, $\Real$, and $\imag$ are linear only with respect
to real numbers. With imaginary numbers, the rules are
\begin{align}
\conj i &= -i \conj, & \Real i &= -\imag, 
&
\imag i &= \Real.
\end{align}
Eqs.~(\ref{real_inner}) together with
\begin{align}
\norm{\alpha}_{\mc H}^2 &= \norm{\Real\alpha}_{\mc H}^2 + \norm{\imag\alpha}_{\mc H}^2
\end{align}
imply that $\Real\mc H$ and $\imag \mc H$ are complete real
inner-product spaces, or real Hilbert spaces by definition.  We can
now define the following.
\begin{enumerate}
\item Let the phase space be the real Hilbert space
\begin{align}
\mc K &= \Real\mc H \oplus \imag \mc H.
\end{align}
The inner product is defined as
\begin{align}
\Avg{A'\oplus A'', B'\oplus B''}_{\mc K}
&= \Avg{A',B'}_{\mc H} + \Avg{A'',B''}_{\mc H}.
\end{align}
It is often convenient to write each element of $\mc K$ in the block
form
\begin{align}
A &= A' \oplus A'' = \mqty(A'\\ A'').
\end{align}

\item Define the map $U:\mc H \to \mc K$ as
\begin{align}
U\alpha &= \Real \alpha \oplus \imag \alpha = \mqty(\Real\\ \imag) \alpha,
\label{realification}
\end{align}
such that $\mc K = U \mc H$ and $U$ is surjective.  $U$ is called the
realification map. $U$ is isometric in the sense of
\begin{align}
\Re\Avg{\alpha,\gamma}_{\mc H} &= \Avg{U\alpha,U\gamma}_{\mc K},
\\
\norm{\alpha}_{\mc H} &= \norm{U\alpha}_{\mc K},
\label{real_iso}
\end{align}
but $U$ is not a unitary map in the conventional sense, because
$\mc H$ is complex while $\mc K$ is real.

The inverse of $U$ can be expressed as
\begin{align}
U^{-1} A&= 
U^{-1} \mqty(A'\\ A'') = \mqty(1 & i)\mqty(A'\\ A'') 
= A' + i A''.
\label{realification2}
\end{align}

\item Let the complex-structure map $J:\mc K \to \mc K$ be
\begin{align}
J(A'\oplus A'') &= (-A'') \oplus A',
\end{align}
such that
\begin{align}
\Im \Avg{\alpha,\gamma}_{\mc H} &= -\Avg{U\alpha,JU\gamma}_{\mc K},
&
J^\dual &= -J, & J^2 &= -I,
\end{align}
where $\dual$ of a matrix denotes the conjugate transpose. In block
form,
\begin{align}
J &= \mqty(0 & -I\\ I & 0).
\end{align}
\item Let the modal quadrature observables on $\fock(\mc H)$ in terms
  of a phase-space vector $A \in \mc K$ be
\begin{align}
P(A) &= p(U^{-1} A),
&
X(A) &= q(U^{-1} A) = P(-JA).
\end{align}
$P(A) = X(JA)$ and $X(A)$ are both linear with respect to $A$ by
virtue of Eqs.~(\ref{p_lin}). The canonical commutation relations
become
\begin{align}
\Bk{X(A),P(B)} \ket{\eta} &= i \Avg{A,B}_{\mc K} \ket{\eta},
&
\Bk{X(A),X(B)} \ket{\eta} &= -i \Avg{A,JB}_{\mc K} \ket{\eta}.
\end{align}
\item Let the phase-space displacement operator $W(A)$ on
  $\fock(\mc H)$ in terms of $A \in \mc K$ be
\begin{align}
W(A) &= \exp[-i P(A)] = D(U^{-1} A/\sqrt{2}).
\end{align}
A convenient formula is
\begin{align}
W(B)^\dual e^{iX(A)} W(B) &= e^{i\Avg{A,B}_{\mc K}}  e^{iX(A)} .
\end{align}
\end{enumerate}
Now the quadrature observable $X(A)$ of the mode specified by
$A \in \mc K$ is a more precise generalization of the random variable
$L(X)$ with $L \in \mc B^\dual$ in Example~\ref{exa_gauss}.

\subsection{\label{app_gauss_state}Gaussian states}
When the mode space $\mc H$ is infinite-dimensional, the literature
does not seem to offer any general definition of the covariance map
$\Sigma$ on the phase space $\mc K$, Gaussian states on the bosonic
Fock space $\fock(\mc H)$, or Gaussian channels, so we write down some
formal expressions that are correct at least in the finite-dimensional
case \cite{holevo_info}, while adopting notations that conform with
the rest of this work.

The means and covariances of quadratures with respect to a quantum
state $\rho$ on $\fock(\mc H)$ should define a mean vector
$f \in \mc K$ and covariance map $\Sigma:\mc K \to \mc K$ by
\begin{align}
\trace\Bk{\rho X(A)} &= \Avg{f,A}_{\mc K},
&
\trace\BK{\rho\Bk{X(A) \jordan X(B)}} &= \Avg{A,\Sigma B}_{\mc K}
+ \Avg{A,f}_{\mc K}\Avg{f,B}_{\mc K}.
\end{align}
$\Sigma$ should be self-adjoint and positive-semidefinite.  It is also
reasonable to assume that $\Sigma$ is bounded, so that the variance of
any quadrature $X(A)$ is finite.  Let $\mu$ be a zero-mean state
($f = 0$).  With the complex covariance given by
\begin{align}
\trace\Bk{\mu q(\alpha) q(\gamma)} &= \Avg{\alpha,\Upsilon \gamma}_{\mc H},
\end{align}
where $\Upsilon$ should also be self-adjoint and positive-semidefinite,
we find
\begin{align}
\Upsilon &= U^{-1} \Sigma U - \frac{i}{2} U^{-1} J U \ge 0.
\end{align}
We may abbreviate the inequality as
\begin{align}
\Sigma - \frac{i}{2}J &\ge 0,
\label{heisenberg}
\end{align}
which is an abstract version of the Heisenberg uncertainty relation
and agrees with Ref.~\cite[Eq.~(12.82)]{holevo_info}.

Define a Gaussian state $\rho$ on $\fock(\mc H)$ as a state that
possesses the Gaussian characteristic function
\begin{align}
\trace\Bk{\rho e^{i X(A)}} &= \exp(i \Avg{f,A}_{\mc K} - \frac{1}{2}
\Avg{A,\Sigma A}_{\mc K}),
\end{align}
which coincides with that of a classical Gaussian measure on $\mc K$
\cite{bogachev_gauss}, except that $\Sigma$ here need not be
trace-class. For example, when $\rho$ is a coherent state with
\begin{align}
\rho &= \ket{\alpha}\bra{\alpha},
&
\ket{\alpha} &= D(\alpha) \ket{0} = W(f) \ket{0},
&
f &= \sqrt{2} U \alpha,
\end{align}
we find
\begin{align}
\trace\Bk{\rho e^{i X(A)}} &= \bra{\alpha} e^{iX(A)}\ket{\alpha}
= \exp\bk{i \Avg{f,A}_{\mc K}-\frac{1}{4} \norm{A}_{\mc K}^2},
\end{align}
implying that $\Sigma = I/2$, which is bounded but not even compact
when $\mc K$ is infinite-dimensional. In general, we expect that, for
any physical state, the minimum variance of any quadrature is strictly
above $0$. In other words, there should exist a $c > 0$ such that
\begin{align}
\Avg{A,\Sigma A}_{\mc K} \ge c \norm{A}^2_{\mc K} \quad \forall A \in \mc K,
\end{align}
implying that $\Sigma$ is invertible.  Then $\Sigma^{-1}$ is also
self-adjoint, positive-definite, bounded, and invertible.

When $\mc H$ is finite-dimensional, we can assume $\mc H = \mb C^q$
and $\mc K = \mb R^{2q}$. Then we can simply write Eqs.~(\ref{fj}) and
(\ref{Sjk}) and also
\begin{align}
X(A) &= X^\top A,
&
\Avg{f,A}_{\mc K} &= f^\top A,
&
\Avg{A,\Sigma B}_{\mc K} &= A^\top \Sigma B.
\end{align}
Notice that the Fock space $\fock(\mc H)$ is infinite-dimensional
even if $\mc H$ is finite-dimensional. Gaussian states are convenient
mainly because one can work with the smaller mode space $\mc H$
instead of $\fock(\mc H)$ directly.

Let $\mc F$ be a Markov map that transforms a density operator on
$\fock(\mc H)$ to another density operator on $\fock(\mc H_\out)$, as
described in Example~\ref{exa_qq}.  Define a Gaussian channel as a
channel that gives, for any input Gaussian state $\rho$,
\begin{align}
\trace\Bk{\bk{\mc F \rho} e^{i Y(A)}}
&= 
\exp\bk{i \Avg{g,A}_{\mc K_\out} -\frac{1}{2} \Avg{A,\Sigma_\out A}_{\mc K_\out}},
\end{align}
where $\mc K_\out$ is the output phase space constructed from
$\mc H_\out$, $A \in \mc K_\out$, $Y(A)$ is an output modal quadrature
observable, $g \in \mc K_\out$ is the output mean vector, and
$\Sigma_\out$ is the output covariance map on $\mc K_\out$.  The
input-output relations should be given by Eqs.~(\ref{chan_gauss}) and
(\ref{chan_gauss_cov}) in terms of a channel map
$\chan:\mc K \to \mc K_\out$.  The uncertainty relation imposes a
condition on $\chan$ and $\Sigma_\noise$ as
\begin{align}
\Sigma_\noise - \frac{i}{2} \bk{J_\out - \chan J \chan^\dual} \ge 0,
\end{align}
where $J_\out$ is the complex-structure map on $\mc K_\out$.  This
inequality agrees with Ref.~\cite[Eq.~(12.132)]{holevo_info}.

\section{\label{app_error}Efficiency bounds for unbiased and regular estimators}
\subsection{Unbiased estimators}
Let $\{f(\theta,\cdot):\theta \in \Theta\}$ be a set of probability
densities, following Example~\ref{exa_classical}. An estimator
$\check\beta:\sspace \to \mb R$ is called unbiased if
\begin{align}
\expect\Bk{\check\beta - \beta(\theta)}
= \int \Bk{\check\beta(\el) - \beta(\theta)} f(\theta,\el)\dd\mu(\el) &= 0
\quad
\forall \theta \in \Theta.
\label{unbiased}
\end{align}
A mathematically weaker condition called local unbiasedness follows if
Eq.~(\ref{unbiased}) holds at the true parameter value $\theta$ and
also
\begin{align}
\dot\phi \expect\Bk{\check\beta - \beta(\theta)}&= 0
\quad
\forall \dot\phi \in \mc T_\theta
\end{align}
at the true $\theta$. The local condition can be expressed as
\begin{align}
\Avg{\delta,1}_{f(\theta)} &= 0,
&
\Avg{\delta,\score(\dot\phi)}_{f(\theta)} &= \dd\beta(\dot\phi)
\quad
\forall \dot\phi \in \mc T_\theta
\label{lub}
\end{align}
for $\delta = \check\beta-\beta$. The mean-square error, on the other hand, is
\begin{align}
\error(\theta) &= 
\expect\BK{\Bk{\check\beta - \beta(\theta)}^2 }
= \norm{\check\beta-\beta}^2_{f(\theta)}.
\end{align}
As long as $\error(\theta) < \infty$, $\check\beta-\beta$ is a
zero-mean finite-variance random variable.  Any finite-variance random
variable $\delta$ that satisfies Eqs.~(\ref{lub}) is called an
influence function; $\check\beta-\beta$ is one. Following
Sec.~\ref{sec_classical}, we can regard influence functions as
elements of the ambient Hilbert space $\Amb\bk{f(\theta)}$. An
influence function $\delta$ is similar to the efficient influence
$\effinf$ in that both are elements of $\Amb\bk{f(\theta)}$ satisfying
Eqs.~(\ref{lub}), with the crucial difference being that $\effinf$ is
in the score tangent space $\mc T\bk{f(\theta)}$ while $\delta$ need
not be. It is straightforward to prove the following theorem to obtain
a lower error bound, which in fact holds in the general setting of
Sec.~\ref{sec_Bnd}.
\begin{theorem}
\label{thm_inf}
Let $\delta$ be any element of $\Amb\bk{f(\theta)}$ that satisfies
\begin{align}
\Avg{\delta,f_\push\dot\phi}_{f(\theta)} &= \dd\beta(\dot\phi) \quad
\forall \dot\phi \in \mc T_\theta.
\label{inf}
\end{align}
We call $\delta$ an influence and denote the set of all influences as
$\mc D\bk{f(\theta)} \subseteq \Amb\bk{f(\theta)}$.  Assume that
$\mc D\bk{f(\theta)}$ is not empty.  Then the efficient influence
$\effinf$ is in $\mc D\bk{f(\theta)}$ and satisfies
\begin{align}
\effinf &= \Pi_{\mc T(f(\theta))} \delta \quad
\forall \delta \in \mc D(f(\theta)),
&
\inf_{\delta \in \mc D(f(\theta))}\norm{\delta}^2_{f(\theta)} &=
 \norm{\effinf}_{f(\theta)}^2 = \Bnd(\theta).
\label{inf_bnd}
\end{align}
\end{theorem}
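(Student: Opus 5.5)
The plan is to use the orthogonal decomposition $\Amb(f(\theta)) = \mc T(f(\theta)) \oplus \mc T(f(\theta))^\perp$, which holds because $\mc T(f(\theta))$ is a closed subspace of the Hilbert space $\Amb(f(\theta))$ by Eq.~(\ref{stat_space}). Write $\Pi_{\mc T}$ for the orthogonal projection onto $\mc T(f(\theta))$.

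First I will show that the efficient influence exists whenever $\mc D(f(\theta))$ is nonempty. Pick any $\delta \in \mc D(f(\theta))$. By Eq.~(\ref{inf}), for every $u = f_\push\dot\phi \in \range f_\push$ we have $\sdiff(u) = \dd\beta(\dot\phi) = \avg{\delta,u}_{f(\theta)}$. This relation has two consequences.
\begin{enumerate}
\item If $f_\push\dot\phi = 0$, then $\dd\beta(\dot\phi) = 0$, so Eq.~(\ref{diff_cond}) holds and $\sdiff$ is well defined.
\item By Cauchy--Schwarz, $\abs{\sdiff(u)} \le \norm{\delta}_{f(\theta)}\norm{u}_{f(\theta)}$, so $\norm{\sdiff} \le \norm{\delta}_{f(\theta)} < \infty$.
\end{enumerate}
Theorem~\ref{thm_effinf} then gives $\effinf \in \mc T(f(\theta))$ satisfying Eq.~(\ref{riesz}), with $\Bnd(\theta) = \norm{\effinf}_{f(\theta)}^2$. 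Comparing Eq.~(\ref{riesz}) with Eq.~(\ref{inf}) shows directly that $\effinf$ is itself an influence, so $\effinf \in \mc D(f(\theta))$.

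Next comes the projection identity. For any $\delta \in \mc D(f(\theta))$, subtracting Eq.~(\ref{riesz}) from Eq.~(\ref{inf}) gives $\avg{\delta - \effinf, u}_{f(\theta)} = 0$ for all $u \in \range f_\push$. By continuity of the inner product this extends to all $u$ in the closure $\mc T(f(\theta))$, so $\delta - \effinf \in \mc T(f(\theta))^\perp$. Since $\effinf \in \mc T(f(\theta))$, uniqueness of the orthogonal decomposition yields $\Pi_{\mc T}\delta = \effinf$.

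Finally, the Pythagorean theorem gives
\begin{align}
\norm{\delta}_{f(\theta)}^2 = \norm{\effinf}_{f(\theta)}^2 + \norm{\delta-\effinf}_{f(\theta)}^2 \ge \norm{\effinf}_{f(\theta)}^2.
\end{align}
Equality holds at $\delta = \effinf \in \mc D(f(\theta))$, so the infimum is attained and equals $\norm{\effinf}_{f(\theta)}^2 = \Bnd(\theta)$.

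The main care point is the extension step: orthogonality is first established only on the dense subset $\range f_\push$ and must be passed to its closure. A related subtlety is that Eq.~(\ref{inf}) is required to hold on $\range f_\push$, including elements with zero norm. This is exactly what rules out the pathological case $\Bnd(\theta)=\infty$ excluded in Sec.~\ref{sec_Bnd}. Everything else is routine Hilbert-space geometry.
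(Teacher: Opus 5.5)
Your proposal is correct and follows essentially the paper's argument: you project onto the closed subspace $\mc T(f(\theta))$, use uniqueness of the Riesz representative there, and conclude with $\norm{\delta}_{f(\theta)} \ge \norm{\Pi_{\mc T(f(\theta))}\delta}_{f(\theta)} = \norm{\effinf}_{f(\theta)}$. The only difference is the order of steps: the paper gets existence of $\effinf$ immediately by noting that $\Pi_{\mc T(f(\theta))}\delta$ already satisfies Eq.~(\ref{inf}), since $\delta-\Pi_{\mc T(f(\theta))}\delta\perp\range f_\push$, whereas you first bound $\norm{\sdiff}\le\norm{\delta}_{f(\theta)}$ and then invoke Theorem~\ref{thm_effinf}.
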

\begin{proof}
  The projection $\Pi_{\mc T(f(\theta))}\delta$ satisfies the
  definition of the efficient influence $\effinf$, which is the unique
  element in $\mc T(f(\theta)) = \overline{\range f_\push}$ satisfying
  Eq.~(\ref{inf}) by the Riesz representation theorem. Thus
  $\norm{\delta}_{f(\theta)} \ge \norm{\Pi_{\mc
      T(f)}\delta}_{f(\theta)} = \norm{\effinf}_{f(\theta)}$.
\end{proof}
By Theorem~\ref{thm_inf}, we obtain the classical efficiency bound
\begin{align}
\error(\theta) &\ge \Bnd(\theta)
\end{align}
for any locally unbiased estimator. The quantum case is similar
\cite{semi_prx} or can be treated as a consequence of monotonicity
after a measurement, as explained in Sec.~\ref{sec_qq}.

\subsection{Regular estimators}
Since the estimator is not supposed to know the true parameter, local
unbiasedness is hardly a relaxation of the global condition in
practice and many estimators do not satisfy the conditions; even the
maximum-likelihood estimator may be biased---locally and
globally---and violate the efficiency bound
\cite{vaart,minimax_jmo}. To truly relax the conditions, one way is to
consider a sequence of observations and appeal to asymptotics
\cite{vaart}.  For each $N \in \mb N$, let $X^{(N)} \in \sspace^{(N)}$
be the observation with probability measure $P_\theta^{(N)}$, such as
a set of $N$ IID observations with
$X^{(N)} = (X_1,\dots,X_N) \in \sspace^{(N)} = \sspace^N$ and
$P_\theta^{(N)} = P_\theta^{\otimes N}$. Let
$\check\beta^{(N)}:\sspace^{(N)} \to \mb R$ be an estimator and
$\error_N(\theta)$ be its error. An estimator sequence is called
regular at a reference parameter value $\bar\theta \in \Theta$ if, for
any submodel $\phi \in \Phi(\bar\theta)$ and any fixed
$t \in [t_1,t_2]$,
$\sqrt{N}\Bk{\check\beta^{(N)}(X^{(N)}) - \beta(\phi(t/\sqrt{N}))}$
converges in distribution under $P_{\phi(t/\sqrt{N})}^{(N)}$ to a
random variable $Z$ with the same probability distribution
$Q(\bar\theta,\cdot)$ that does not depend on $\phi$ or $t$. This
definition roughly means that, for any $\theta$ close to the reference
$\bar\theta$, a regular estimator for a large enough $N$ can be
linearized as
\begin{align}
\check\beta^{(N)} &\approx \beta(\theta) + \frac{Z}{\sqrt{N}}
\end{align}
in terms of a noise random variable $Z$ that is insensitive to
$\theta$. For IID observations with some technical conditions on the
statistical model and any regular estimator, it can then be proved
that
\begin{align}
\liminf_{N\to \infty} N \error_N(\theta) &\ge \Bnd(\theta),
\label{nerror}
\end{align}
where $\Bnd$ is the efficiency bound for $N = 1$.  The
regular-estimator assumption can be further relaxed via the so-called
local asymptotic minimax theorem \cite{vaart} or the Bayesian
Cram\'er--Rao bound \cite{gill95}, although those bounds are too
complicated to explain here.

At least for a finite-dimensional parameter space, the
maximum-likelihood estimator is known to be regular and asymptotically
efficient under general conditions, in the sense that its error
saturates the bound in Eq.~(\ref{nerror}). The rough idea is that,
under $P_\theta^{(N)} = P_\theta^{\otimes N}$, the maximum-likelihood
estimator can be shown to obey
\begin{align}
\sqrt{N}\Bk{\check\beta^{(N)} - \beta(\theta)}
&\approx \frac{1}{\sqrt{N}} \sum_{n=1}^N \bk{\effinf}(X_n),
\end{align}
and the right-hand side converges by the central limit theorem to a
zero-mean normal random variable with variance
$\norm{\effinf}_{f(\theta)}^2 = \Bnd(\theta)$.

There remain two caveats with the asymptotic view. First, Eq.~(\ref{nerror})
means
\begin{align}
\error_N(\theta) \ge \frac{\Bnd(\theta)}{N} + o\bk{\frac{1}{N}}
\end{align}
with some term $o(1/N)$ that is asymptotically negligible relative to
$1/N$, and the efficiency bound offers no absolute limit on the term
for any finite $N$. Second, when $\Bnd(\theta) = \infty$, it merely
implies $\liminf_{N\to\infty}N\error_N(\theta) = \infty$; there may
exist estimators with errors that decrease at a slower rate, such as
$\error_N(\theta) = O(1/N^r)$ with $r < 1$. If one deems these caveats
important, they should consult Bayesian and minimax approaches
\cite{gill95,bcrb_pra,tsybakov}, albeit outside the scope of this
work.

While it is true that the efficiency bound is not perfect, it is often
easier to compute than Bayesian or minimax alternatives and remains
tractable for infinite-dimensional problems, as the literature as well
as this work have shown.

\section{\label{app_pinv}Pseudo-inverse for Hilbert-space maps}
Generalizations of the Moore--Penrose inverse for Hilbert-space maps
are called Tseng inverses \cite{benisrael,tseng49}. We focus on one
version called the maximal Tseng inverse and call it the
pseudo-inverse; all other Tseng inverses are restrictions of this
version to a smaller domain. We summarize its properties in the
following proposition.

\begin{proposition}[{Ref.~\cite[Sec.~9.3]{benisrael}}]
\label{prop_pinv}
Let $T$ be a densely defined and closed linear map with
$\dom T \subseteq H_1$ and codomain $H_2$, where $H_1$ and $H_2$ are
Hilbert spaces. Then there exists a unique pseudo-inverse linear map
\begin{align}
T^-: \range T \oplus \bk{\range T}^\perp \to H_1
\label{pinv_dom}
\end{align}
that satisfies the following properties:
\begin{align}
\kernel T^- &= \bk{\range T}^\perp,
&
\range T^- &= \bk{\dom T} \cap \bk{\kernel T}^\perp,
\label{pinv_props}
\\
T^- T u &= \Pi_{\overline{\range T^-}} u \quad \forall u \in \dom T,
&
T T^- u &= \Pi_{\overline{\range T}} u \quad \forall u \in \dom T^-,
\label{pinv_props2}
\end{align}
where $\perp$ denotes the orthocomplement relative to $H_1$ or $H_2$
and $\Pi_{K}:H \to H$ is the orthogonal projection onto the closed
subspace $K$ of $H = H_1$ or $H_2$.  Furthermore,
\begin{align}
\kernel T^-  &= \kernel T^\dual,
&
\kernel T^{\dual-} &= \kernel T,
&
T^{\dual-} &= T^{-\dual},
\\
T^{--} &= T,
&
\bk{T^\dual T}^- &= T^- T^{\dual-}, 
&
\bk{T T^\dual}^- &= T^{\dual -} T^-,
\end{align}
on the domains where the maps are defined.
\end{proposition}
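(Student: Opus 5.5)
The plan is to build $T^-$ explicitly from the injective part of $T$, prove uniqueness from the listed properties, and obtain the adjoint and product identities by a graph argument, in the spirit of Ref.~\cite[Sec.~9.3]{benisrael}.

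\emph{Construction of $T^-$.} Since $T$ is closed, $\kernel T$ is a closed subspace of $H_1$, and $\kernel T\subseteq\dom T$. For $u\in\dom T$, write $u=\Pi_{\kernel T}u+\Pi_{(\kernel T)^\perp}u$. The first term lies in $\dom T$, so the second does too, and $Tu=T\Pi_{(\kernel T)^\perp}u$. Hence the restriction
\begin{align}
T_0=T|_{\dom T\cap(\kernel T)^\perp}
\end{align}
is injective and has the same range as $T$. I then define $T^-$ on the algebraic direct sum $\range T\oplus(\range T)^\perp$ by
\begin{align}
T^-(v+w)=T_0^{-1}v, \quad v\in\range T,\ w\in(\range T)^\perp .
\end{align}
This domain is dense in $H_2$, because $\overline{\range T}\oplus(\range T)^\perp=H_2$.

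\emph{Properties (\ref{pinv_props}) and (\ref{pinv_props2}).}
\begin{enumerate}
\item The identities $\kernel T^-=(\range T)^\perp$ and $\range T^-=\dom T\cap(\kernel T)^\perp$ hold by construction.
\item For $u\in\dom T$, $T^-Tu=\Pi_{(\kernel T)^\perp}u$. Since $T_0^{-1}$ maps onto $\dom T\cap(\kernel T)^\perp$, which is dense in $(\kernel T)^\perp$, we have $\overline{\range T^-}=(\kernel T)^\perp$, giving the first identity of (\ref{pinv_props2}).
\item For $v+w$ in the domain, $TT^-(v+w)=v=\Pi_{\overline{\range T}}(v+w)$, giving the second identity.
\end{enumerate}

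\emph{Uniqueness.} Let $S$ be another map on the same domain with these properties. Then $S$ vanishes on $(\range T)^\perp$. For $v\in\range T$, $TSv=v$ and $Sv\in(\kernel T)^\perp$. Injectivity of $T_0$ then forces $Sv=T_0^{-1}v$, so $S=T^-$.

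\emph{Adjoint relations.} Because $T$ is densely defined and closed, $T^\dual$ is densely defined and closed, $T^{\dual\dual}=T$, and $\kernel T^\dual=(\range T)^\perp$. Together with $\kernel T^-=(\range T)^\perp$, this gives $\kernel T^-=\kernel T^\dual$. The second kernel identity is the same statement applied to $T^\dual$. For $T^{--}=T$, I compute the domain of $(T^-)^-$:
\begin{align}
\range T^-\oplus(\range T^-)^\perp=\big(\dom T\cap(\kernel T)^\perp\big)\oplus\kernel T=\dom T .
\end{align}
On this domain $(T^-)^-$ inverts $T_0^{-1}$ and kills $\kernel T$, which is exactly $T$.

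\emph{Main obstacle.} I expect the main difficulty to be $T^{\dual-}=T^{-\dual}$, where domains must be tracked carefully. I would argue with graphs. The graph of $T^-$ is the flip of the graph of $T_0$, plus $\{(w,0):w\in(\range T)^\perp\}$. The adjoint graph is the orthocomplement of the rotated graph, $(J\,G)^\perp$. First I would check that $T^-$ is closed, so that its adjoint is well behaved. Then I would compute $G((T^-)^\dual)$ and show it coincides with the graph of $(T^\dual)^-$ built by the same construction from $T^\dual$. The case split is over the components in $\kernel T$ and in $(\range T)^\perp$.

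\emph{Product formulas.} I would verify $(T^\dual T)^-=T^-T^{\dual-}$ through the uniqueness characterization, not by direct computation. Using $\kernel T^\dual T=\kernel T$ and $\overline{\range T^\dual T}=(\kernel T)^\perp$, I would show that $T^-T^{\dual-}$ has the required kernel, range, and projection identities on the domain where it is defined. The formula for $(TT^\dual)^-$ follows by applying this to $T^\dual$ together with $T^{\dual\dual}=T$. The statement's qualifier ``on the domains where the maps are defined'' absorbs the remaining domain subtleties.
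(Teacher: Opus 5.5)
Your construction (restrict $T$ to $\dom T\cap(\kernel T)^\perp$, invert there, and set the pseudo-inverse to $0$ on $(\range T)^\perp$) is correct. So are your uniqueness argument, the kernel identities, $T^{--}=T$, and the graph plan for $T^{\dual-}=T^{-\dual}$. This is essentially the carrier construction of the cited Ben-Israel--Greville section; the paper gives no proof of its own.

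Carrying out your graph computation gives the following: $(a,b)$ lies in the graph of $(T^-)^\dual$ iff $b\in\dom T^\dual\cap(\kernel T^\dual)^\perp$ and $T^\dual b=\Pi_{(\kernel T)^\perp}a$. This is the graph of $T^{\dual-}$, because $(\kernel T)^\perp=\overline{\range T^\dual}$. One ordering point: closedness of $T^-$ is already needed for $T^{--}$, since the proposition applies only to closed maps. It follows from closedness of $T$: if $y_n\to y$ and $x_n=T^-y_n\to x$, then $Tx_n=\Pi_{\overline{\range T}}y_n$ converges.

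\textbf{The genuine gap is in the product formulas.}

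First, your uniqueness result characterizes only a map whose domain is exactly $\range A\oplus(\range A)^\perp$, with $A=T^\dual T$. So checking the properties of $T^-T^{\dual-}$ ``on the domain where it is defined'' proves nothing until you show that this domain is that set. The qualifier in the statement cannot absorb this step. The missing computation runs as follows.
\begin{enumerate}
\item For $x\in\dom(T^\dual T)\cap(\kernel T)^\perp$, you have $Tx\in\dom T^\dual\cap\overline{\range T}=\dom T^\dual\cap(\kernel T^\dual)^\perp$.
\item Hence $T^{\dual-}(T^\dual Tx)=Tx\in\range T$ and $T^-T^{\dual-}(T^\dual Tx)=x$. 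Also $T^{\dual-}$ kills $\kernel T=(\range T^\dual)^\perp$.
\item Conversely, suppose $a=T^\dual b+k$ with $b\in\dom T^\dual\cap(\kernel T^\dual)^\perp$, $k\in\kernel T$, and $T^{\dual-}a=b\in\range T$. Then $b=Tx$ with $x\in\dom T\cap(\kernel T)^\perp\cap\dom(T^\dual T)$, so $a\in\range(T^\dual T)\oplus\kernel T$.
\end{enumerate}
This proves $(T^\dual T)^-=T^-T^{\dual-}$ directly, with equal domains, so uniqueness has nothing left to do.

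Second, the argument identifies the domain with $\range(T^\dual T)\oplus\kernel T$. That needs $(\range T^\dual T)^\perp=\kernel T$, which is your unproved claim $\overline{\range T^\dual T}=(\kernel T)^\perp$. For unbounded $T$ only the inclusion $\subseteq$ is elementary. The reverse needs von Neumann's theorem that $T^\dual T$ is self-adjoint, or at least its core fact that $I+T^\dual T$ is onto. You get surjectivity by decomposing $(h,0)$ along the graph of $T$ and its orthocomplement $\{(-T^\dual b,b):b\in\dom T^\dual\}$. Then, for $z\perp\range T^\dual T$, write $z=x_0+T^\dual Tx_0$. This gives $0=\Avg{z,T^\dual Tx_0}=\norm{Tx_0}^2+\norm{T^\dual Tx_0}^2$, so $z=x_0\in\kernel T$.

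The same theorem is what makes $T^\dual T$ and $TT^\dual$ densely defined and closed. Without it, $(T^\dual T)^-$ is not even defined under the proposition's hypotheses, and your plan never invokes it.
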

In this work, we consider only the pseudo-inverse of bounded maps
with $\dom T = H_1$, and a bounded map is always closed and
densely defined. When $\dom T = H_1$, we also have
\begin{align}
(\dom T) \cap (\kernel T)^\perp = (\kernel T)^\perp.
\end{align}
We note, however, that $T^-$ need not be bounded when the dimensions
of $H_1$ and $H_2$ are both infinite.

In essence, $T^-y$ for a bounded $T$ on $H_1$ and
$y \in \range T \oplus (\range T)^\perp$ comprises three steps:
\begin{enumerate}
\item Null the component of $y$ in $(\range T)^\perp$. The result $y'$
  is in $\range T$.

\item Solve $Tx = y'$. Since $y' \in \range T$, a solution
  $x \in \dom T = H_1$ always exists. Given a solution $x$, any other
  solution is in the form $x + z$ for some $z \in \kernel T$.

\item Output the solution $x'$ with minimum norm
 \begin{align}
\norm{x'} = \inf_{x \in H_1:Tx=y'}\norm{x}.
 \end{align}
 This solution $x'$ is unique and in $(\kernel T)^\perp$.
\end{enumerate}
In this work, all the arguments of $T^-$ are in $\range T$ (or else we
set the output to be $\infty$ essentially), so the first step is never
involved. The second and final steps are the essential operations in
our use of the pseudo-inverse. It follows that
\begin{align}
T^- y &= \argmin_{x\in H_1:Tx = y} \norm{x}
\quad
\forall y \in \range T.
\label{pinv_min}
\end{align}
For finite-dimensional $H_1$ and $H_2$ with the Euclidean inner
product, the pseudo-inverse becomes the Moore--Penrose inverse
\cite{benisrael}.

\section{\label{app_sv}Some facts about singular values}
\begin{proposition}[{\cite[Chap.~1]{simon10}}]
\label{prop_sv}
Let $T:H_1 \to H_2$ be a compact map where $H_{j}$ are Hilbert spaces.
Let the SVD be $T = \sum_j s_j(T) \outsing_j \Avg{\insing_j,\cdot}$. Assume that the
singular values are sorted in descending order
$s_0(T) \ge s_1(T) \ge \dots$.
\begin{enumerate}
\item Min-max characterization:
\begin{align}
s_j(T) &= \min_{K:\dim K = j}\max_{x \in K^\perp, \norm{x} = 1} \norm{T x},
\label{minmax}
\end{align}
In other words, for any $j$th-dimensional subspace $K$ of $H_1$,
$s_j \le \max_{x \in K^\perp, \norm{x} = 1} \norm{T x}$ for the
orthocomplement $K^\perp$. The minimization is achieved when
$K = \{0\}$ and $K^\perp = H_1$ for $j = 0$ and
$K = \spn\{\insing_0,\dots,\insing_{j-1}\}$ for $j \ge 1$.

\item Let $T'$ be another bounded Hilbert-space map
  ($\norm{T'} < \infty$) with appropriate domain and codomain.  For
  any compact $T$ and any bounded $T'$,
\begin{align}
s_j\bk{T'T} &\le \Norm{T'} s_j\bk{T},
&
s_j\bk{TT'} &\le s_j(T) \Norm{T'}.
\end{align}
\item If both maps are compact,
\begin{align}
s_{j+k}\bk{T'T} &\le 
s_{j}\bk{T'} s_{k}\bk{T},
&
s_{j+k}(T+T') &\le s_j(T) + s_k(T').
\end{align}
\end{enumerate}
\end{proposition}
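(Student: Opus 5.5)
The plan is to derive everything from the SVD $T=\sum_i s_i(T)\outsing_i\avg{\insing_i,\cdot}$, with $s_i(T)=0$ beyond the rank. The min-max formula (\ref{minmax}) comes first; items 2 and 3 then follow from it by choosing a suitable subspace $K$. For item 1 there are two steps.

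\emph{Upper bound.} Take $K=\spn\{\insing_0,\dots,\insing_{j-1}\}$. For a unit $x\in K^\perp$, expanding in the SVD gives $\norm{Tx}^2=\sum_{i\ge j}s_i^2\abs{\avg{\insing_i,x}}^2\le s_j^2$. Equality holds at $x=\insing_j$, or the bound is trivially $0$ if the rank is at most $j$. So the maximum is attained and equals $s_j$.

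\emph{Lower bound.} Let $K$ be any subspace with $\dim K=j$. The $(j+1)$-dimensional space $\spn\{\insing_0,\dots,\insing_j\}$ must meet $K^\perp$ nontrivially, because projecting it onto the $j$-dimensional $K$ has a nonzero kernel. A unit $x$ in that intersection satisfies $\norm{Tx}^2=\sum_{i\le j}s_i^2\abs{\avg{\insing_i,x}}^2\ge s_j^2$. Hence $\sup_{x\in K^\perp,\norm{x}=1}\norm{Tx}\ge s_j$.

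It is also worth noting that the characterization remains valid if $\dim K\le j$ is allowed. Enlarging $K$ only shrinks $K^\perp$, so the inner supremum can only decrease. This monotonicity lets me use subspaces of dimension at most $j+k$ in item 3.

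For item 2, fix $K=\spn\{\insing_0(T),\dots,\insing_{j-1}(T)\}$. Then $\norm{T'Tx}\le\norm{T'}\norm{Tx}$, so by the min-max formula $s_j(T'T)\le\sup_{x\in K^\perp,\norm{x}=1}\norm{T'Tx}\le\norm{T'}s_j(T)$. For the other order I pass to adjoints. The SVD of $T^\dual$ swaps input and output vectors, so $s_j(T^\dual)=s_j(T)$. Therefore $s_j(TT')=s_j(T'^\dual T^\dual)\le\norm{T'^\dual}s_j(T^\dual)=\norm{T'}s_j(T)$.

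For item 3, write $\{\insing_i'\}$ for the input singular vectors of $T'$.

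\emph{Product.} Let $K=\spn\{\insing_0(T),\dots,\insing_{k-1}(T)\}+T^\dual\spn\{\insing_0',\dots,\insing_{j-1}'\}$, which has dimension at most $j+k$. For $x\in K^\perp$ we have $\avg{\insing_i',Tx}=\avg{T^\dual\insing_i',x}=0$ for $i<j$, so $Tx$ lies in the orthocomplement of the first $j$ input singular vectors of $T'$. The upper-bound step above then gives $\norm{T'Tx}\le s_j(T')\norm{Tx}\le s_j(T')s_k(T)\norm{x}$.

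\emph{Sum.} Let $K=\spn\{\insing_0(T),\dots,\insing_{j-1}(T),\insing_0',\dots,\insing_{k-1}'\}$. For unit $x\in K^\perp$ the triangle inequality gives $\norm{(T+T')x}\le s_j(T)+s_k(T')$.

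In both cases the monotonicity remark converts the bound into $s_{j+k}$. The main point needing care is exactly that $\dim K$ may be smaller than $j+k$, together with the finite-rank case where singular values vanish. I handle these by padding $K$ with arbitrary extra vectors, or equivalently by reading (\ref{minmax}) with $\dim K\le j$. Compactness of the products and sums is automatic.
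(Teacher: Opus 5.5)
Your proof is correct. The paper gives no argument of its own here; it only cites Simon's \emph{Trace Ideals}. What you have is therefore a self-contained replacement, and it differs from the usual textbook route mainly in how items 2 and 3 are reached.

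You work directly with the min-max formula and exhibit explicit test subspaces $K$ built from singular vectors. For the product you also use $T^\dual$ applied to singular vectors of $T'$.

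The standard alternative first converts (\ref{minmax}) into the approximation-number form $s_j(T)=\min_{\operatorname{rank}F\le j}\norm{T-F}$. Write $F_m=\sum_{i<m}s_i(T)\outsing_i\avg{\insing_i,\cdot}$ for the truncated SVD of $T$ and $F'_m$ for that of $T'$. The minimum is attained at $F_j$. The inequality $s_j(T)\le\norm{T-F}$ for every $F$ of rank at most $j$ is your min-max applied to $K=(\kernel F)^\perp$. This $K$ has dimension at most $j$, and its orthocomplement is $\kernel F$, on which $Tx=(T-F)x$.

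Items 2 and 3 then become one-liners. For item 2, $\norm{T'T-T'F}\le\norm{T'}\norm{T-F}$ and $\norm{TT'-FT'}\le\norm{T-F}\norm{T'}$ give both orders without passing to adjoints. For item 3, subadditivity of rank applies to $T+T'-(F_j+F'_k)$ and to $(T'-F'_j)(T-F_k)=T'T-[F'_jT+(T'-F'_j)F_k]$.

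The two proofs are dual. Your $K^\perp$ is exactly the common kernel of the finite-rank corrections: $\kernel F_j\cap\kernel F'_k$ for the sum, and $\kernel F_k\cap\kernel(F'_jT)$ for the product.

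Each route has its advantage. Yours is more elementary, needing only Bessel's inequality plus a dimension count, and it displays the minimizing $K$ that the statement asserts. The approximation-number route is basis-free and makes rank deficiency a non-issue once established. It also extends, with $\inf$ in place of $\min$, to bounded non-compact maps.

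One small point to tighten: for a general $K$ you show only that the inner \emph{supremum} is at least $s_j$. That it is a maximum, as (\ref{minmax}) is written, follows because $T$ restricted to the closed subspace $K^\perp$ is compact. Its norm is therefore attained at its leading singular vector, or it is zero.
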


\section{\label{app_direct}Direct sum of vector spaces}
Let $\{\mc V_j:j = 1,2,\dots,N\}$ be a set of vector spaces. The
direct sum $\mc W = \oplus_j \mc V_j$ is the vector space where each
element is written as $\oplus_j A_j$ with $A_j \in \mc
V_j$. $\oplus_j$ is assumed to be linear, that is,
\begin{align}
c \bigoplus_j A_j &= \bigoplus_j c A_j, \quad \forall c \in \mb R \textrm{ or } \mb C,
&
\bigoplus_j A_j + \bigoplus_j B_j &= \bigoplus_j \bk{A_j + B_j}.
\end{align}
If each $\mc V_j$ is equipped with a semi-inner product,
then the semi-inner product of $\mc W$ is defined as
\begin{align}
\Avg{\bigoplus_j A_j,\bigoplus_j B_j} &= \sum_j \Avg{A_j,B_j},
\label{ds_inner}
\end{align}
so that the seminorm is 
\begin{align}
\Norm{\bigoplus_j A_j}^2 &= \sum_j \norm{A_j}^2.
\end{align}
If a product between two elements of $\mc V_j$ to produce another
element is defined, then the product in $\mc W$ is defined
component-wise:
\begin{align}
\bk{\bigoplus_j A_j} \bk{\bigoplus_j B_j} 
&= \bigoplus_j \bk{A_j B_j}.
\end{align}
For example, the direct sum $\mc V_1\oplus\mc V_2$ of two matrix
spaces with the Hilbert-Schmidt inner product can be expressed as
\begin{align}
A_1 \oplus A_2 &= \mqty(A_1 & 0\\ 0 & A_2),
\end{align}
where $0$ denotes a zero matrix of appropriate size. Then all the
above rules are satisfied.

The quotient and Hilbert spaces constructed from these vector spaces
also follow the direct-sum relation.
\begin{proposition}
\label{prop_direct}
  Let $\{\mc V_j\}$ be a finite set of vector spaces with semi-inner
  products, $\mc N$ be the set of zero-seminorm elements in
  $\mc W = \oplus_j \mc V_j$, and $\mc N_j$ be those in $\mc V_j$.
  Define the Hilbert spaces
\begin{align}
H &= \complete\bk{\mc W/\mc N},
&
H_j &= \complete\bk{\mc V_j/\mc N_j}.
\end{align}
Then
\begin{align}
\mc W/\mc N &= \bigoplus_j \mc V_j/\mc N_j,
\label{quo_direct}
\\
H &= \bigoplus_j H_j.
\label{H_direct}
\end{align}
\end{proposition}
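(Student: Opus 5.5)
The plan is to prove Eq.~(\ref{quo_direct}) first and then obtain Eq.~(\ref{H_direct}) by completion. The key observation is that the seminorm on $\mc W$ is a finite sum of nonnegative terms, $\norm{\oplus_j A_j}^2 = \sum_j \norm{A_j}^2$, so $\oplus_j A_j \in \mc N$ if and only if $A_j \in \mc N_j$ for every $j$. This gives $\mc N = \oplus_j \mc N_j$.

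With this identity, I will define the map $\Psi:\mc W/\mc N \to \oplus_j \mc V_j/\mc N_j$ by
\begin{align}
\Psi\Bk{\bigoplus_j A_j + \mc N} &= \bigoplus_j \bk{A_j + \mc N_j}.
\end{align}
This map is well defined because changing the representative by an element of $\mc N = \oplus_j\mc N_j$ changes each $A_j$ only by an element of $\mc N_j$. It is linear and clearly surjective. It preserves inner products by Eq.~(\ref{ds_inner}), since the quotient inner products are computed on representatives. Being isometric, it is injective, so it is a unitary identification of inner-product spaces, which is Eq.~(\ref{quo_direct}).

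For Eq.~(\ref{H_direct}), I will use the fact that a finite direct sum of Hilbert spaces $\oplus_j H_j$, with the inner product of Eq.~(\ref{ds_inner}), is complete. A Cauchy sequence in $\oplus_j H_j$ is Cauchy in each component, because each component norm is bounded by the total norm. Each component therefore converges, and since there are finitely many terms, the whole sequence converges. Next, $\oplus_j \mc V_j/\mc N_j$ is dense in $\oplus_j H_j$: approximate each component within $\epsilon/\sqrt{N}$. A complete space containing a dense isometric copy of $\mc W/\mc N$ is its completion, and the completion is unique up to unitary equivalence. Hence $H = \complete(\mc W/\mc N) \cong \oplus_j H_j$.

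I expect no serious obstacle. The only point requiring care is finiteness of the index set, which is used in both the completeness and the density arguments. It is also needed so that the identification is an equality "up to the canonical unitary," which is the sense in which the proposition's equalities are meant.
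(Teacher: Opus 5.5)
Your proof is correct and follows the paper's argument; the first step is identical: $\mc N = \oplus_j \mc N_j$ follows from the sum-of-squares seminorm, which gives Eq.~(\ref{quo_direct}). For Eq.~(\ref{H_direct}), the paper builds the unitary $H \to \oplus_j H_j$ by hand from componentwise limits of Cauchy sequences and checks isometry and surjectivity directly, whereas you show that $\oplus_j H_j$ is complete with $\oplus_j \mc V_j/\mc N_j$ dense and invoke uniqueness of completions --- the same argument packaged through the standard theorem.
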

To deal with a direct sum of Hilbert spaces, it is convenient to
define the injection maps and the projection maps as follows.
\begin{proposition}
\label{prop_inj}
Given Eq.~(\ref{H_direct}), define the injection map $\inj_j : H_j \to H$ as
\begin{align}
\inj_j u &= 0^{\oplus(j-1)} \oplus u \oplus 0^{\oplus(N-j)},
\end{align}
which injects the argument into the $j$th component of a vector in $H$
and sets all the other components zero.  Define also the projection
map $\pi_j:H \to H_j$ as
\begin{align}
\pi_j \bigoplus_k u_k &= u_j.
\end{align}
Then 
\begin{align}
\bigoplus_j u_j &= \sum_j \inj_j u_j,
&
\inj_j^\dual &= \pi_j,
&
\inj_j^\dual \inj_k = \pi_j \inj_k &= 
\begin{cases} I, & j = k,\\ 0, & j\neq k,\end{cases}
&
\inj_j \pi_j &= \Pi_j,
\label{inj_orth}
\end{align}
where $\Pi_j:H \to H$ is the projection operator onto $H_j$. 
\end{proposition}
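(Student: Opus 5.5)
The statement is Proposition~\ref{prop_inj}. Every claim in it follows from the direct-sum inner product~(\ref{ds_inner}) together with the identification $H=\bigoplus_j H_j$ given by Proposition~\ref{prop_direct}. I will verify the identities one at a time, in the order in which they are listed.

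\emph{Decomposition.} For $\bigoplus_j u_j=\sum_j\inj_j u_j$, I expand the right-hand side as a sum of vectors, each supported on a single component. By the linearity rule for $\oplus$, this sum adds the entries componentwise, and the $k$th component of the result is $u_k$. Hence the sum equals $\bigoplus_k u_k$.

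\emph{Adjoint.} To show $\inj_j^\dual=\pi_j$, I take any $u\in H_j$ and $v=\bigoplus_k v_k\in H$. By~(\ref{ds_inner}),
\begin{align}
\Avg{\inj_j u,v}=\sum_k\Avg{\delta_{jk}u,v_k}=\Avg{u,v_j}=\Avg{u,\pi_j v}.
\end{align}
Both maps are bounded: $\norm{\inj_j u}=\norm{u}$, and $\norm{\pi_j v}\le\norm{v}$. The adjoint of a bounded map is unique, so $\inj_j^\dual=\pi_j$. The only point needing care is whether~(\ref{ds_inner}) holds on the completed space $H$ and not just on $\mc W/\mc N$. This is guaranteed by Proposition~\ref{prop_direct}, since the componentwise inner product extends by continuity. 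I regard this as the main, though mild, obstacle.

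\emph{Orthogonality relations.} $\pi_j\inj_k u$ picks out the $j$th component of the vector whose only nonzero entry is $u$, placed in slot $k$. The result is $u$ if $j=k$ and $0$ otherwise.

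\emph{Projection.} $\inj_j\pi_j$ keeps component $j$ and zeroes all the others. It is idempotent, because $\pi_j\inj_j=I$ gives $(\inj_j\pi_j)^2=\inj_j(\pi_j\inj_j)\pi_j=\inj_j\pi_j$. It is self-adjoint, because $(\inj_j\pi_j)^\dual=\pi_j^\dual\inj_j^\dual=\inj_j\pi_j$. Its range is the embedded copy $\inj_j H_j$. It is therefore the orthogonal projection $\Pi_j$ onto $H_j$.
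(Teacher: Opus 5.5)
Your proof is correct; the paper omits the proof of this proposition as trivial, and your componentwise verification is exactly the routine check intended. That check uses the direct-sum inner product on $H=\bigoplus_j H_j$, uniqueness of bounded adjoints, and the fact that $\inj_j\pi_j$ is a self-adjoint idempotent with range $\inj_j H_j$; the one implicit step, $\pi_j^\dual=\inj_j$, follows from $\inj_j^{\dual\dual}=\inj_j$ for bounded maps.
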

We omit the proof of Prop.~\ref{prop_inj} since it is trivial.

Now let $K$ be a closed vector subspace of $H$. The orthocomplement
$K^\perp$ is defined as
\begin{align}
K^\perp &= \BK{u \in H: \Avg{u,v} = 0 \ \forall v \in K}.
\end{align}
It turns out that $K^\perp$ is also a closed vector subspace and $H$
becomes isomorphic to $K\oplus K^\perp$, in the sense that 
each $u \in H$ can be uniquely decomposed as
\begin{align}
u &= u_1 + u_2, \quad
u_1 \in K, \quad u_2 \in K^\perp,
\end{align}
and the inner product becomes
\begin{align}
\Avg{u,v} &= \Avg{u_1,v_1} + \Avg{u_2,v_2}
\end{align}
in terms of the decomposition. Uniqueness means that no other pair
$w_1 \in K$ and $w_2 \in K^\perp$ give the same $u$.  The uniqueness
of $u_1 \in K$ for a $u \in H$ defines a projection map from each $u$
to $u_1$; it is commonly defined as $\Pi_K: H \to H$ with $H$ as the
codomain, but the unique decomposition can just as well define a
projection map $\pi_K:H \to K$ with the subspace $K$ as the
codomain. The adjoint of $\pi_K$ is the injection map $\inj_K:K \to H$
by Prop.~\ref{prop_inj}.

\section{\label{app_num}Numerical analysis}
Numerical software packages assume the Euclidean inner product for
finite-dimensional vectors, that is, $\Avg{u,v} = \sum_j u_j v_j$.  To
compute the pseudo-inverse and the SVD proposed in this work
numerically, one must convert between the abstract vectors and the
Euclidean vectors properly. We adopt the following procedure:
\begin{enumerate}
\item Approximate each abstract vector in $\Amb(f)$ by a concrete
  representation in a finite-dimensional vector space
  $H_1$. Similarly, approximate each vector in $\Amb(g)$ by a
  concrete representation in a finite-dimensional $H_2$.

\item Define inner products for $H_{1,2}$ to approximate the inner
  products for $\Amb(f)$ and $\Amb(g)$.

\item Convert the representations in $H_{1,2}$ to Euclidean vectors in
  $H_{I,II}$ via operators $U:H_1\to H_I$ and $V:H_2 \to H_{II}$, such
  that the inner products for $H_{1,2}$ coincide with the Euclidean
  inner products for $H_{I,II}$.

\item Given $U:H_1 \to H_I$ and $V:H_2 \to H_{II}$, abstract maps on
  $\Amb(f)$ and $\Amb(g)$ can be converted to matrices on $H_{I,II}$.

\item Given the Euclidean vectors in $H_{I,II}$ and the matrices on
  $H_{I,II}$, the linear algebra can be computed numerically using
  standard routines. In particular, the Euclidean pseudo-inverse is
  the Moore--Penrose inverse.

\item Convert the Euclidean vectors $V \effinfout$, $\{U \insing_j\}$,
  and $\{V \outsing_j\}$ in $H_{I,II}$ back to the original
  representations in $H_{1,2}$ via $U^{-1}$ and $V^{-1}$ if necessary,
  for plotting for example.
\end{enumerate}

The following proposition summarizes the basic facts about the
conversions.

\begin{proposition}
\label{prop_iso}
Let $H_j$ be a Hilbert space with inner product
$\avg{\cdot,\cdot}_{j}$. $H_1$ and $H_I$ are isomorphic if and only if
there exists a unitary map $U:H_1\to H_I$ such that
\begin{align}
\Avg{u,v}_1 &= \Avg{U u,Uv}_I
\quad
\forall u,v \in H_1.
\end{align}
It follows that $U$ commutes with the orthocomplement, that is,
\begin{align}
\bk{U K}^\perp &=  U\bk{K^\perp} \quad \forall K \subseteq H_1.
\end{align}
Let $T:H_1 \to H_1$ be a linear map.  The map
\begin{align}
\mc U T &=  U TU^{-1}:H_I \to H_I
\end{align}
is an equivalent version of $T$ in the sense that
\begin{align}
\Avg{Uu,\bk{\mc U T} Uv}_{I} &= \Avg{u,T v}_1
\quad
\forall u,v \in H_1.
\end{align}
Let $H_2$ and $H_{II}$ be another pair of isomorphic Hilbert spaces
with unitary map $V:H_2 \to H_{II}$.  Define $\mc U$ for the following
maps as
\begin{align}
T:H_2 &\to H_2, & \mc U T &= V T V^{-1}:H_{II} \to H_{II},
\\
T:H_1 &\to H_2, & \mc U T &= V T U^{-1}:H_{I} \to H_{II},
\\
T:H_2 &\to H_1, & \mc U T &= U T V^{-1}:H_{II} \to H_I.
\end{align}
Then
\begin{align}
\mc U\bk{T R} &= \bk{\mc U T}\bk{\mc U R},
\\
\range\bk{\mc U T} &= U \range T
\quad
(\codom T = H_1), 
\\
\kernel\bk{\mc U T} &= U \kernel T
\quad
(\dom T = H_1),
\\
\mc U\bk{T^\dual} &= \bk{\mc UT}^\dual,
\\
\mc U\bk{T^-} &= \bk{\mc U T}^-.
\label{pinv_iso}
\end{align}
Moreover, the SVDs of $T:H_1 \to H_2$ 
and $\mc U T: H_I \to H_{II}$  are related by
\begin{align}
T &= \sum_j s_j \outsing_j \Avg{\insing_j,\cdot}_1,
&
\mc U T &= \sum_j s_j \bk{V \outsing_j} \Avg{U \insing_j,\cdot}_I.
\end{align}
\end{proposition}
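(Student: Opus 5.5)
Every claim in Prop.~\ref{prop_iso} follows from the defining identity $\avg{u,v}_1 = \avg{Uu,Uv}_I$ for a bijective linear $U$. The plan is to first build the transport rules for inner products, and then read off each statement in turn.

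For the ``if and only if'', the plan is as follows. By definition, an isomorphism of Hilbert spaces is a linear bijection that preserves inner products, and such a map is exactly a unitary $U$. Conversely, an inner-product-preserving bijection is unitary: bijectivity gives surjectivity, and isometry gives $U^\dual U = I$. For the orthocomplement, take $w\in H_I$ and write $w = Uu$. Then $w\perp UK$ iff $\avg{Uu,Uk}_I=\avg{u,k}_1=0$ for all $k\in K$, which holds iff $u\in K^\perp$. This gives $(UK)^\perp = U(K^\perp)$. The equivalence $\avg{Uu,(\mc U T)Uv}_I=\avg{Uu,UTv}_I=\avg{u,Tv}_1$ is immediate, and the analogous identities hold for the mixed maps $VTU^{-1}$ and $UTV^{-1}$.

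The algebraic rules I would verify by direct substitution:
\begin{itemize}
\item $\mc U(TR) = \mc U T\,\mc U R$, because the inner factors $U^{-1}U$ or $V^{-1}V$ cancel.
\item $\range(\mc U T) = U\range T$ and $\kernel(\mc U T) = U\kernel T$ are set identities that use bijectivity of $U$ (or $V$ as appropriate).
\item For the adjoint, I would check $\avg{x,\mc U(T^\dual) y} = \avg{(\mc U T)x,y}$ by writing $x=Uu$ and $y=Vv$ and using the equivalence identity twice.
\end{itemize}

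For the pseudo-inverse, Eq.~(\ref{pinv_iso}), I would use the uniqueness part of Prop.~\ref{prop_pinv}. Set $S=\mc U(T^-)$. Its domain is the image of $\range T\oplus(\range T)^\perp$, which equals $\range\mc UT\oplus(\range \mc UT)^\perp$ by the range and orthocomplement rules. The kernel and range conditions in Eq.~(\ref{pinv_props}) transfer by the same rules. The projection identities in Eq.~(\ref{pinv_props2}) transfer once we note $\mc U\Pi_K = \Pi_{UK}$. This last fact follows because $U$ maps the decomposition $u=u_1+u_2$ with $u_1\in K$, $u_2\in K^\perp$ to the decomposition $Uu_1+Uu_2$ with $Uu_1\in UK$, $Uu_2\in(UK)^\perp$. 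Uniqueness of the pseudo-inverse then gives $S=(\mc UT)^-$. Keeping track of the domains here is the only delicate point: $T^-$ is defined only on a possibly non-closed subspace, so the mixed-space version $VT^-\cdots$ has to be handled by applying the transport rules separately to each space.

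For the SVD relation, apply $\mc U T = VTU^{-1}$ to $x$. This gives
\[
\mc U T\,x = \sum_j s_j V\outsing_j\,\avg{\insing_j,U^{-1}x}_1 = \sum_j s_j V\outsing_j\,\avg{U\insing_j,x}_I ,
\]
where the second step uses unitarity. The families $\{U\insing_j\}$ and $\{V\outsing_j\}$ are orthonormal because $U$ and $V$ preserve inner products, and the $s_j$ are unchanged. Hence this expansion is an SVD of $\mc UT$; compactness is preserved under composition with unitaries.
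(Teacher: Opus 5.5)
Your proposal is correct and follows essentially the paper's route. The paper likewise obtains Eq.~(\ref{pinv_iso}) by noting that $\mc U(T^-)$ satisfies the characterizing properties Eqs.~(\ref{pinv_props})--(\ref{pinv_props2}) of the unique pseudo-inverse of $\mc U T$ in Prop.~\ref{prop_pinv}, and dismisses the remaining claims as basic, which you spell out. One tacit step is worth recording: transferring the projections onto the closures $\overline{\range T^-}$ and $\overline{\range T}$ needs $U\overline{A}=\overline{UA}$ (and likewise for $V$) alongside $\mc U\Pi_K=\Pi_{UK}$, which holds because $U$ and $V$ are isometric bijections.
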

\begin{proof}
  Eq.~(\ref{pinv_iso}) follows from the fact that $\mc U(T^-)$
  satisfies all the properties of the unique pseudo-inverse of
  $\mc U T$ given by Eqs.~(\ref{pinv_props})--(\ref{pinv_props2}) in
  Prop.~\ref{prop_pinv}. The rest of the proposition is basic.
\end{proof}

Proposition~\ref{prop_iso} means that, if $H_{I,II}$
are Euclidean spaces, we can compute
\begin{align}
\mc U(\chan^\pull) &= (\mc U \chan_\push)^\dual,
&
\mc U(\chan^{\pull-}) &= \Bk{\mc U\bk{\chan^\pull}}^{-} = 
\Bk{\mc U\bk{\chan_\push}}^{\dual-},
\label{num1}
\\
V \effinfout &= \Bk{\mc U \bk{\chan^{\pull-}}} U \effinf,
&
\Bnd_\out &= \norm{\effinfout}_g^2 = \norm{V\effinfout}^2,
\label{num2}
\\
\mc U \chan_\push &= \sum_j s_j \bk{V \outsing_j}\Avg{U\insing_j,\cdot}_I
\label{num3}
\end{align}
using standard numerical routines, once $\effinf$ and $\chan_\push$ or
$\chan^\pull$ are given.  $V \effinfout$, $\{U \insing_j\}$, and
$\{V \outsing_j\}$ can be converted back to the original representations in
$H_{1,2}$ for plotting if needed.

Next, we assume a maximal input score tangent space and derive
$\mc U\chan_\push = \mc U\chan_\dual$ or
$\mc U\chan^\pull = \mc U \chan^\dual$ for the examples in the main
text, so that Eqs.~(\ref{num1})--(\ref{num3}) can be computed
numerically.

\subsection{Classical coherent imaging}
In Sec.~\ref{sec_coh_c_exa}, each vector in the input $\mc V$ is a
function $A(x)$ of $x \in D = [-\Delta,\Delta]$. An obvious and
intuitive finite-dimensional representation of $A(x)$ is then
\begin{align}
A &= \mqty(A(x_1) \\ A(x_2) \\ \vdots ) \in H_1,
&
x_j &= x_0 + j\Delta x.
\label{A_imaging}
\end{align}
The inner product can be approximated as
\begin{align}
\Avg{A,B}_f &= \int_D A(x)B(x) \dd x \approx \sum_j A(x_j) B(x_j)\Delta x.
\end{align}
To convert $A \in H_1$ to a Euclidean $UA \in H_I$ through
$U:H_1\to H_I$, define
\begin{align}
\bk{UA}_j &= \sqrt{\Delta x} A(x_j).
\end{align}
Similarly, for each function $A(y)$ in $\mc V_\out$ with
$y \in D_\out$, let
\begin{align}
A &= \mqty(A(y_1) \\ A(y_2) \\ \vdots ) \in H_2,
&
y_j &= y_0 + j\Delta y, 
&
\bk{V A}_j &= \sqrt{\Delta y} A(y_j),
\label{A_imaging2}
\end{align}
where $V:H_2 \to H_{II}$. $\chan_\push$ should be discretized as the
matrix $\chan_\push:H_1 \to H_2$ with
\begin{align}
\bk{\chan_\push}_{jk} &= \bk{\chan_\dual}_{jk} = \chan(y_j|x_k) \Delta x.
\end{align}
The Euclidean version $\mc U \chan_\push: H_I \to H_{II}$ becomes
\begin{align}
\bk{\mc U \chan_\push}_{jk} &= \bk{V \chan_\push U^{-1}}_{jk} = 
\sqrt{\Delta y} \chan(y_j|x_k) \sqrt{\Delta x}.
\end{align}
The numerical results in Fig.~\ref{svd_coh_classical0} assume
$\Delta x = \Delta/100$ and $\Delta y = \Delta_\out/1000$ to ensure
that the discretization has minimal effects.

\subsection{Direct incoherent imaging}
In Sec.~\ref{sec_direct_exa}, each vector in $\mc V$ is a function
$A(x)$ of $x \in D = [-\Delta,\Delta]$. Assume the representation given by
Eqs.~(\ref{A_imaging}). The inner product in Eqs.~(\ref{inner_coh})
can be approximated as
\begin{align}
\Avg{A,B}_\theta &\approx \sum_j A(x_j) B(x_j) \theta(x_j) \Delta x.
\label{inner_direct}
\end{align}
The Euclidean version $U A \in H_I$ should then be defined as
\begin{align}
\bk{UA}_j &= \sqrt{\theta(x_j)\Delta x}  A(x_j).
\label{U_direct}
\end{align}
Similarly, for each function $A(y)$ in $\mc V_\out$, assume
\begin{align}
A &= \mqty(A(y_1) \\ A(y_2) \\ \vdots ) \in H_2,
&
y_j &= y_0 + j\Delta y, 
&
\bk{V A}_j &= \sqrt{ g(\theta,y_j)\Delta y} A(y_j).
\end{align}
The $\chan_\push$ given by Eq.~(\ref{push_incoh}) should then be
discretized as the matrix $\chan_\push:H_1 \to H_2$ with
\begin{align}
\bk{\chan_\push}_{jk} &= \bk{\chan_\dual}_{jk} = \frac{1}{g(\theta,y_j)} 
\chan(y_j|x_k) \theta(x_k) \Delta x.
\end{align}
The Euclidean version $\mc U\chan_\push:H_I\to H_{II}$ becomes
\begin{align}
\bk{\mc U \chan_\push}_{jk} &= \bk{V \chan_\push U^{-1} }_{jk} 
= \sqrt{\frac{\Delta y}{g(\theta,y_j)}} \chan(y_j|x_k) \sqrt{\theta(x_k) \Delta x}.
\end{align}
The numerical results in Figs.~\ref{svd_incoh_classical_gauss0},
\ref{svd_incoh_classical_gauss_tri0}, and
\ref{svd_incoh_classical_sinc0} assume $\Delta x = \Delta/500$ and
$\Delta y = \Delta_\out/500$ to ensure that the discretization has
minimal effects.

\subsection{\label{app_num_q}Quantum channel}
Consider the quantum channel in Example~\ref{exa_qq}. 
Diagonalizing the input density operator as
\begin{align}
f &= \sum_{j} \lambda_j \ket{j}\bra{j},
\end{align}
and taking the complex matrix representation of an operator $A$ as
\begin{align}
A_{jk} &= \bra{j} A \ket{k}
\end{align}
in terms of the eigenvectors of $f$, the complex operator inner
product can be approximated as the finite sum \cite{holevo_aspect}
\begin{align}
\trace\Bk{f \bk{A^\dual\jordan B}}
&\approx \sum_{jk} \frac{\lambda_j + \lambda_k}{2} \cconj{A}_{jk} B_{jk}.
\end{align}
To map each $J\times J$ self-adjoint matrix $A$ in $H_1$ to a
$J^2$-dimensional Euclidean vector in $H_I$, let
\begin{align}
(UA)_\mu &= \sum_{jk}U_{\mu, jk} A_{jk},
&
U_{\mu, jk} &= W_{\mu, jk}\sqrt{\frac{\lambda_j+\lambda_k}{2}},
\end{align}
where we require $W$ to be unitary in the sense that
\begin{align}
\sum_\mu \cconj{W}_{\mu,jk} W_{\mu,j'k'} &= \delta_{jk,j'k'},
\end{align}
so that the weighted inner product is converted to the standard
version given by
\begin{align}
\sum_{jk} \frac{\lambda_j + \lambda_k}{2} \cconj{A}_{jk} B_{jk}
&= \sum_{\mu} (\cconj{UA})_\mu (UB)_\mu.
\end{align}
To map self-adjoint matrices to real vectors, we also require
\begin{align}
W_{\mu,jk} &= \cconj{W}_{\mu,kj} = \sum_{j'k'} \cconj{W}_{\mu,j'k'} T_{j'k',jk},
\end{align}
where
\begin{align}
T_{j'k',jk} &= \delta_{j'k}\delta_{k'j}
\end{align}
is called the commutation matrix \cite{horn2}. These requirements mean
that, if we treat $jk$ as one index of a matrix, $W$ should satisfy
$W^\dual W = W W^\dual = I$ and $W = \cconj{W} T$.  The latter
condition is then equivalent to $W^\top W = T$. To find $W$, observe
that a set of $J(J+1)/2$ eigenvectors $\{u^a\} \cup \{v^{ab}\}$ of $T$
with eigenvalue $1$ is given by
\begin{align}
u_{jk}^a &= \delta_{ja} \delta_{ka}, & a &= 0,\dots,J-1,
&
v_{jk}^{ab} &= \frac{1}{\sqrt{2}} 
\bk{\delta_{ja}\delta_{kb} + \delta_{jb}\delta_{ka}}, & 0&\le a < b \le J-1,
\end{align}
while a set of $J(J-1)/2$ eigenvectors $\{w^{ab}\}$ of $T$ with
eigenvalue $-1$ is given by
\begin{align}
w_{jk}^{ab} &= \frac{1}{\sqrt{2}} 
\bk{\delta_{ja}\delta_{kb} - \delta_{jb}\delta_{ka}}, & 0 &\le a < b \le J-1.
\end{align}
These eigenvectors form an orthonormal basis of the Euclidean
$J^2$-dimensional $H_I$. We can then set
\begin{align}
T &= Q D Q^\top, \quad D = \mqty(I_{J(J+1)/2} & 0\\ 0 & -I_{J(J-1)/2})
= \mqty(I_{J(J+1)/2} & 0\\ 0 & -iI_{J(J-1)/2})^2,
\\
Q &= \mqty(u^{0} & \dots & v^{01} &\dots & w^{01} & \dots ),
\\
W &= \mqty(I_{J(J+1)/2} & 0\\ 0 & -iI_{J(J-1)/2})Q^\top = 
\mqty(u^{0} & \dots & v^{01} &\dots & -i w^{01} & \dots )^\top,
\end{align}
where $I_n$ is the $n\times n$ identity matrix. In summary, for a
matrix $A$, $\sum_{jk} W_{\mu,jk} A_{jk}$ maps the diagonal entries
$\{A_{aa} = \sum_{jk} u_{jk}^aA_{jk}\}$, the symmetric off-diagonal
entries
$\{(A_{ab} + A_{ba})/\sqrt{2} = \sum_{jk} v_{jk}^{ab}A_{jk}: a < b\}$,
and the antisymmetric off-diagonal entries
$\{(A_{ab} - A_{ba})/(\sqrt{2}i) = -i\sum_{jk} w_{jk}^{ab} A_{jk}: a <
b\}$ to a vector.  For a self-adjoint matrix,
$A_{ba} = \cconj{A}_{ab}$, and the resulting vector is real, as
desired.

Let the diagonal form of the output density operator be
\begin{align}
g &= \sum_{j} \lambda_j'\ket{j'}\bra{j'}.
\end{align}
The conversion $V:H_2 \to H_{II}$ of the output self-adjoint matrices
in $H_2$ to Euclidean vectors in $H_{II}$ can be defined similarly:
\begin{align}
A_{jk} &= \bra{j'}A\ket{k'},
&\bk{V A}_\mu &= \sum_{jk} V_{\mu,jk} A_{jk},
&
V_{\mu,jk} &= W_{\mu,jk} \sqrt{\frac{\lambda_{j}'+\lambda_{k}'}{2}}.
\label{V_q}
\end{align}
For a Markov-map dual in the Kraus form given by
Eq.~(\ref{kraus_dual}), we can write $\chan^\dual:H_2 \to H_1$ as
\begin{align}
\bk{\chan^\dual A}_{jk} &= \bk{\sum_a K_a^\dual A K_a}_{jk}
= \sum_{j'k'} \bk{\chan^\dual}_{jk,j'k'} A_{j'k'},
\label{dual_q}
\\
\bk{\chan^\dual}_{jk,j'k'} &= 
\sum_a \bra{j} K_a^\dual\ket{j'} \bra{k'} K_a\ket{k},
\end{align}
and $\mc U(\chan^\dual):H_{II} \to H_I$ as
\begin{align}
\Bk{\mc U\bk{\chan^\dual}}_{\mu\nu} &= \bk{U \chan^\dual V^{-1}}_{\mu\nu}
= \sum_{jk,j'k'}U_{\mu,jk} \bk{\chan^\dual}_{jk,j'k'} \bk{V^{-1}}_{j'k',\nu}.
\label{dual_q2}
\end{align}
As $\chan^\dual$ maps self-adjoint matrices to self-adjoint matrices,
our choice of $U$ and $V$ implies that $\mc U(\chan^\dual)$ maps real
vectors to real vectors, or in other words, $\mc U(\chan^\dual)$ must
be a real matrix.

Assuming the maximal input score tangent space $\mc T(f) = \Amb(f)$,
$\chan^\pull = \chan^\dual$ by Lemma~\ref{lem_score}.  Then
Eqs.~(\ref{num1})--(\ref{num3}) can be computed numerically.  It is
remarkable that the formulas do not require any explicit
parametrization or solving the Lyapunov equation for the score
operators.

\subsection{Quantum model of incoherent imaging}
In Sec.~\ref{sec_incoh_q_exa}, the classical-quantum channel given by
Eq.~(\ref{cq}) is studied. Assuming Eqs.~(\ref{inner_direct}) and
(\ref{U_direct}) for the input spaces and Eqs.~(\ref{V_q}) for the
output spaces, the Markov-map dual given by Eq.~(\ref{pull_cq}) can be
expressed as
\begin{align}
\bk{\chan^\dual A}_\mu &= 
\sum_{j'k'} \bk{\chan^\dual}_{\mu,j'k'}A_{j'k'} ,
&
\bk{\chan^\dual}_{\mu,j'k'} &= \bra{\psi} e^{ikx_\mu}\ket{j'}\bra{k'} e^{-ikx_\mu}\ket{\psi},
\end{align}
which is a special case of Eqs.~(\ref{dual_q}). $\mc U(\chan^\dual):H_{II} \to H_I$
becomes
\begin{align}
\Bk{\mc U\bk{\chan^\dual}}_{\mu\nu} &= \bk{U \chan^\dual V^{-1}}_{\mu\nu}
= \sum_{j'k'}\sqrt{\theta(x_\mu)\Delta x} \bk{\chan^\dual}_{\mu,j'k'} 
\bk{V^{-1}}_{j'k',\nu},
\end{align}
which is a special case of Eq.~(\ref{dual_q2}).  The numerical results
in Figs.~\ref{svd_incoh_q_gauss0} and \ref{svd_incoh_q_sinc0} assume
$\Delta x = \Delta/500$ and $16$ dimensions for the mode Hilbert space
$\mc H_\out$.

To verify the pseudo-inverse method proposed here, we apply it to the
moment estimation problem studied in
Ref.~\cite{tan23}. Ref.~\cite{tan23} used a parametric-submodel method
to compute the quantum efficiency bounds when the parameter of
interest is given by Eq.~(\ref{beta_incoh}) and the weight function
$b_\mu(x)$ is $b_\mu(x) = x^\mu + o(\Delta^\mu)$ in the leading
order. Here, we simplify the setting slightly by assuming
$b_\mu(x) = x^\mu$ and the PSF given by Eq.~(\ref{sinc2}) with
$K = 1$. Fig.~\ref{moment_QCRB} compares the quantum efficiency bounds
computed by the submodel method in Ref.~\cite{tan23} with the same
bounds computed by the pseudo-inverse method. The results are
indistinguishable, offering support that both methods are valid.

\fig{}{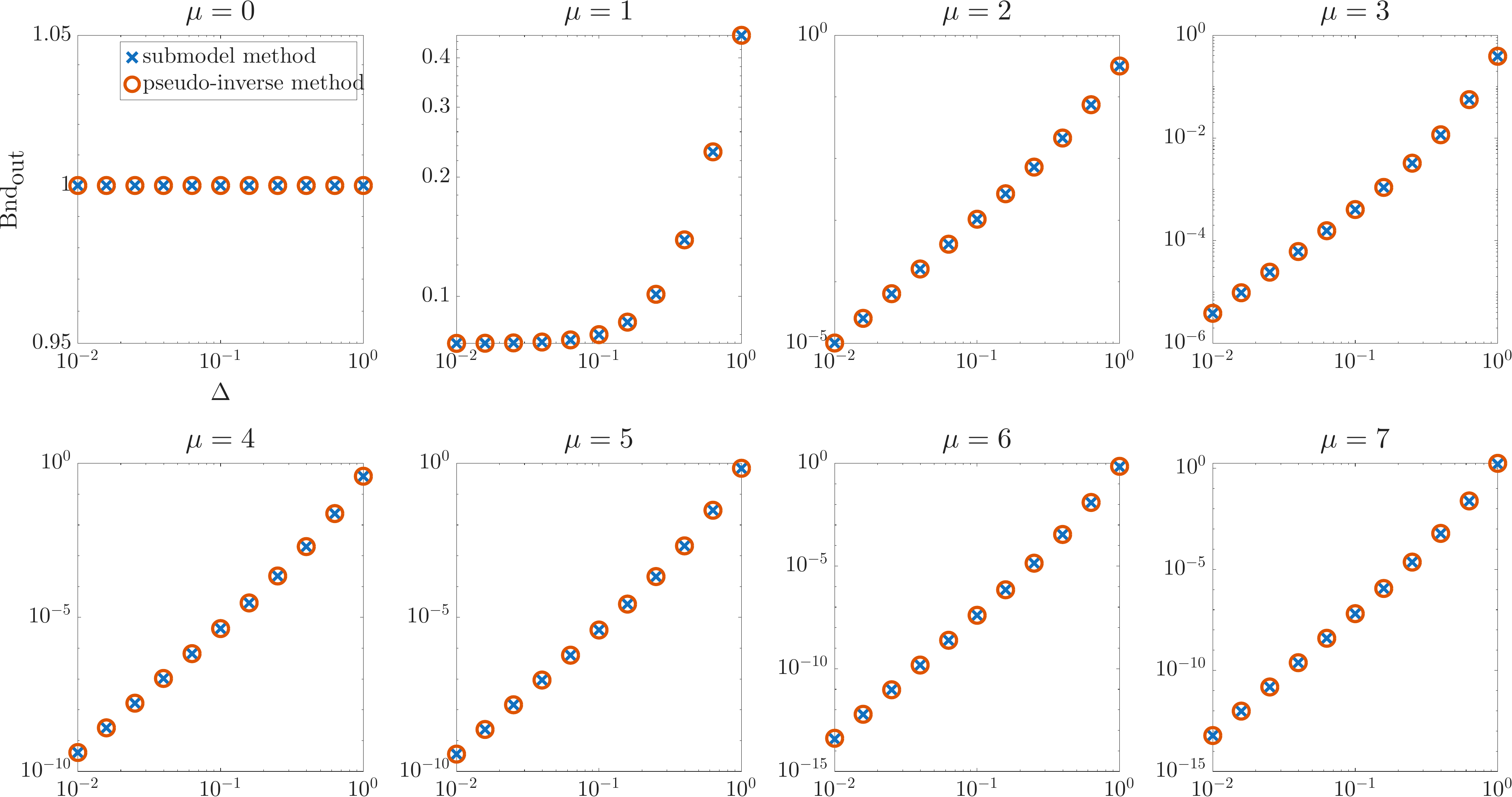}{Quantum efficiency bounds $\Bnd_\out$ for moment
  estimation versus the object size $\Delta$ in log-log scale.  The
  crosses are computed by the submodel method in Ref.~\cite{tan23}
  while the circles are computed by the pseudo-inverse method proposed
  here. The true input density $\theta(x)$ is assumed to be the
  flat-top function given by Eq.~(\ref{flattop}). The title of each
  plot denotes the order of the moment $\mu$, such that the weight
  function is $b_\mu(x) = x^\mu$.  Both methods assume the
  Hilbert-space dimension $\dim\mc H = 16$ for the quantum operators.
  The submodel method further assumes $10$ parameters for the
  submodel, as per Ref.~\cite{tan23}.}

\subsection{Spatial-mode demultiplexing (SPADE) for incoherent imaging}
In Sec.~\ref{sec_spade}, SPADE for incoherent imaging
as per Eqs.~(\ref{ipad1})--(\ref{chan_spade}) is studied.
Assume Eqs.~(\ref{inner_direct}) and (\ref{U_direct}) for the input spaces.
For the output spaces, let
\begin{align}
A &= \mqty(A(1,0)\\ A(1,1) \\ \vdots \\ A(2,0) \\ A(2,1)\\ \vdots) \in H_2,
&
\bk{VA}_{\tau n} &= \sqrt{g(\theta,\tau,n)} A(\tau,n),
\end{align}
treating $(\tau,n)$ as one index for a vector or a matrix. It follows
that
\begin{align}
\bk{\mc U \chan_\push}_{\tau n,k} &= \bk{\mc U \chan_\dual}_{\tau n,k} 
= \frac{1}{\sqrt{g(\theta,\tau,n)}}
\chan(\tau,n|x_k) \sqrt{\theta(x_k)\Delta x}.
\end{align}

\section{\label{app_qcoh}Quantum model of coherent imaging}

\subsection{Hilbert spaces}
We follow the quantum Gaussian models in Examples~\ref{exa_qgauss} and
\ref{exa_qgg} as well as Secs.~\ref{sec_gauss} and
Sec.~\ref{sec_gg}. Let the mode space $\mc H$ for a paraxial
quasimonochromatic field of one polarization be the space of complex
functions on the object plane $D \subseteq \mb R^m$ with inner product
\begin{align}
\Avg{\alpha,\gamma}_{\mc H} &= \int_D \cconj{\alpha(x)} \gamma(x) \dd^m x.
\end{align}
Each $\alpha \in \mc H$ specifies a spatiotemporal mode on the object
plane. Following Yuen and Shapiro \cite{yuen78}, define a point
annihilation operator $C(x)$ on the Fock space $\fock(\mc H)$ that
obeys
\begin{align}
\Bk{C(x),C(x')^\dual} &= \delta^m(x-x'),
&
C(x) \ket{\alpha} &= \alpha(x) \ket{\alpha},
\end{align}
where $\ket{\alpha}$ is a coherent state with $\alpha \in \mc H$ and
we have used the bra-ket notation for $\fock(\mc H)$, following
Appendix~\ref{app_fock}. The modal annihilation operator $a(\alpha)$
defined in Appendix~\ref{app_fock} becomes
\begin{align}
a(\alpha) &= \int_D \cconj{\alpha(x)} C(x) \dd^m x.
\end{align}
With the photon-number operator
\begin{align}
M &= \int_D C(x)^\dual C(x) \dd^m x,
\end{align}
the average photon number for a coherent state is
\begin{align}
\bra{\alpha}M\ket{\alpha}
&= \int_D \abs{\alpha(x)}^2 \dd^m x
= \norm{\alpha}^2_{\mc H}.
\end{align}
Similarly, let $\mc H_\out$ be the space of complex functions on the
image plane $D_\out \subseteq \mb R^m$ with inner product
\begin{align}
\Avg{\alpha,\gamma}_{\mc H_\out} &=  \int_{D_\out} \cconj{\alpha(y)} \gamma(y) \dd^m y.
\end{align}
The point annihilation operator $C_\out(y)$ and the modal annihilation
operator $a_\out(\alpha)$ on $\fock(\mc H_\out)$ are defined in the
same manner as $C(x)$ and $a(\alpha)$, respectively.  A full
nonparaxial vectoral theory is possible via Eqs.~(\ref{inner_EM}), but
we focus on the paraxial theory for simplicity.

Let us follow Appendix~\ref{app_fock} to transform the complex mode
spaces $\mc H$ and $\mc H_\out$ to real phase spaces $\mc K$ and
$\mc K_\out$. Define the $\conj$, $\Real$, and $\imag$ maps on $\mc H$
as
\begin{align}
\bk{\conj \alpha}(x) &= \cconj{\alpha(x)},
&
\bk{\Real \alpha}(x) &= \Re\Bk{\alpha(x)},
&
\bk{\imag \alpha}(x) &= \Im\Bk{\alpha(x)}.
\end{align}
The $\conj$, $\Real$, and $\imag$ maps on $\mc H_\out$ are defined
likewise.  Let $U:\mc H \to \mc K$ and $V:\mc H_\out \to \mc K_\out$
be the realification maps defined by Eq.~(\ref{realification}). Then the
complex mode functions in $\mc H$ and $\mc H_\out$ are related to the
phase-space mode functions in $\mc K$ and $\mc K_\out$ by
\begin{align}
U\alpha &= \mqty(\Re[\alpha(x)] \\ \Im[\alpha(x)]),
&
U^{-1} A &= A'(x) + i A''(x),
&
V\alpha &= \mqty(\Re[\alpha(y)] \\ \Im[\alpha(y)]),
&
V^{-1} A &= A'(y) + i A''(y).
\end{align}
The inner products of $\mc K$ and $\mc K_\out$ become
\begin{align}
\Avg{A,B}_{\mc K}&= \Re \Avg{U^{-1}A,U^{-1}B}_{\mc H}
= \int_D \mqty(A'(x) & A''(x))\mqty(B'(x)\\ B''(x)) \dd^m x,
\label{inner_qcoh}
\\
\Avg{A,B}_{\mc K_\out}
&= \int_{D_\out} \mqty(A'(y) & A''(y))\mqty(B'(y)\\ B''(y)) \dd^m y.
\label{inner_qcoh2}
\end{align}
The modal quadrature observable $X(A)$ on $\fock(\mc H)$ at the input
is given by
\begin{align}
X(A) &= \frac{a(U^{-1}A) + a(U^{-1}A)^\dual}{\sqrt{2}}
= \frac{1}{\sqrt{2}}\int_D \Bk{A'(x) - i A''(x)} C(x) \dd^m x + \textrm{H.c.}
\\
&= \int_D \mqty(A'(x) & A''(x)) \mqty(Q'(x)\\ Q''(x)) \dd^m x,
\end{align}
where H.c.\ denotes the Hermitian conjugate (i.e., adjoint) of the
first term and
\begin{align}
Q'(x) &= \frac{C(x)+C(x)^\dual}{\sqrt{2}},
&
Q''(x) &= \frac{C(x)-C(x)^\dual}{\sqrt{2}i}
\end{align}
are the point quadrature operators for the object plane. Similarly,
the modal quadrature observable $Y(A)$ on $\fock(\mc H_\out)$ at the
output is given by
\begin{align}
Y(A) &= \frac{a_\out(V^{-1}A) + a_\out(V^{-1}A)^\dual}{\sqrt{2}}
= \frac{1}{\sqrt{2}}\int_{D_\out} \bk{A' - i A''} C_\out \dd^m y + \textrm{H.c.}
\\
&= \int_{D_\out} \mqty(A' & A'')  \mqty(Q_\out'\\ Q_\out'') \dd^m y,
\label{Y_coh}
\end{align}
where
\begin{align}
Q_\out'(y) &= \frac{C_\out(y)+C_\out(y)^\dual}{\sqrt{2}},
&
Q_\out''(y) &= \frac{C_\out(y)-C_\out(y)^\dual}{\sqrt{2}i}
\end{align}
are the point quadrature operators for the image plane.

\subsection{Coherent states}
Assume that the imaging system comprises passive linear optics and a
zero-temperature environment. It is a Gaussian channel
\cite{holevo_info}. Let us first consider a coherent state
$\ket{\alpha} \in \fock(\mc H)$ for the input. The output state turns
out to be a coherent state $\ket{\gamma} \in \fock(\mc H_\out)$ as
well \cite{tsang_intro}.  A linear map $S:\mc H\to \mc H_\out$ relates
the input mean field $\alpha \in \mc H$ to the output mean field
$\gamma \in \mc H_\out$ by
\begin{align}
\gamma &= \scatter\alpha,
&
\gamma(y) &= \int_D \scatter(y|x)\alpha(x) \dd^m x,
\label{imaging}
\end{align}
where $\scatter(y|x)$ is the PSF of the imaging system.  As the system
is lossy, we must have
\begin{align}
\norm{\scatter\alpha}_{\mc H_\out}^2 &\le \norm{\alpha}_{\mc H}^2
\quad
\forall \alpha \in \mc H,
\end{align}
meaning that the operator norm of $\scatter$ obeys
\begin{align}
\norm{\scatter}_{\mc H} \le 1
\label{S_norm}
\end{align}
in terms of the norms of $\mc H$ and $\mc H_\out$. We can convert
Eq.~(\ref{imaging}) to an input-output relation for the phase-space
mean fields $f = U\alpha \in \mc K$ and $g = V\gamma \in \mc K_\out$
given by
\begin{align}
g &= \chan f,
&
\chan &= V \scatter U^{-1}.
\end{align}
To be more explicit, we can write $\chan$ in block form as
\begin{align}
\chan A &= \mqty(\Real \\ \imag) \scatter \mqty(1 & i) \mqty(A'\\ A'')
=\mqty((\scatter+\conj \scatter)/2 & -(\scatter - \conj \scatter)/2i\\ 
(\scatter-\conj\scatter)/2i & (\scatter+\conj \scatter)/2)  \mqty(A'\\ A'')
\\
&= \mqty((\scatter+\conj\scatter\conj)/2 & -(\scatter-\conj\scatter\conj)/2i\\
(\scatter-\conj\scatter\conj)/2i & (\scatter+\conj\scatter\conj)/2) \mqty(A' \\ A''),
\label{chan_rep}
\end{align}
where Eq.~(\ref{chan_rep}) has introduced a $\conj$ after each
$\conj\scatter$ to ensure that $\conj\scatter\conj$ remains a linear
map on $\mc H$.  This matrix of maps is a representation of $\scatter$
in the same way $\mqty(\Re z & -\Im z\\ \Im z & \Re z)$ is a
representation of the complex number $z$ in the multiplicative group
\cite{fomenko}. In terms of functions, we can write
\begin{align}
\bk{\chan A}(y) &= \int_D \mqty( \Re \scatter & -\Im \scatter\\
\Im \scatter & \Re \scatter)\mqty(A'(x) \\ A''(x))
 \dd^m x,
\label{chan_qgauss}
\\
\Re \scatter &= \Re\Bk{\scatter(y|x)}, \quad \Im \scatter = \Im\Bk{\scatter(y|x)}.
\end{align}
The phase-space adjoint $\chan^\dual:\mc K_\out \to \mc K$ becomes
\begin{align}
\bk{\chan^\dual A}(x) &= 
\int_{D_\out} \mqty( \Re \scatter & \Im \scatter\\-\Im \scatter & \Re \scatter) \mqty(A'(y) \\ A''(y))\dd^m y.
\label{chan_dual_qgauss}
\end{align}
These integrals can be discretized as matrix relations in numerical
analysis.

With 
\begin{align}
\Sigma &= \frac{I}{2}, & \Sigma_\out &= \frac{I}{2}
\label{covar_coh}
\end{align}
for coherent states, we find
\begin{align}
\Avg{\cdot,\cdot}_{\mc V} &= \frac{1}{2} \Avg{\cdot,\cdot}_{\mc K},
&
\Avg{\cdot,\cdot}_{\mc V_\out} &= \frac{1}{2} \Avg{\cdot,\cdot}_{\mc K_\out}.
\label{inner_coh}
\end{align}
The inner products of the score tangent spaces $\mc V$ and
$\mc V_\out$ thus become proportional to Eqs.~(\ref{inner_qcoh}) and
(\ref{inner_qcoh2}). They resemble the inner product given by
Eq.~(\ref{inner_Hout}) for a classical real field and white noise in
Sec.~\ref{sec_coh}, except that here we have a direct sum of real
functions $A'$ and $A''$ for each phase-space mode function
$A$. Following Sec.~\ref{sec_gg}, we also find
\begin{align}
\chan_\push &= \chan_\dual = V \scatter U^{-1},
\label{push_coh}
\\
\chan^\pull &= \chan^\dual = U \scatter^\dual V^{-1},
\label{pull_coh}
\\
\IMAP &= U \scatter^\dual \scatter U^{-1},
\label{IMAP_coh}
\end{align}
where $\scatter^\dual:\mc H_\out \to \mc H$ is the adjoint of
$\scatter$ with respect to $\mc H$ and $\mc H_\out$.
Eqs.~(\ref{push_coh}) and (\ref{pull_coh}) imply that the integrals in
Eqs.~(\ref{chan_qgauss}) and (\ref{chan_dual_qgauss}) serve as
explicit formulas for $\chan_\push$ and $\chan^\pull$, similar to
Eqs.~(\ref{chan_push_coh}) and (\ref{chan_pull_coh}) in the classical
case.

Using Eqs.~(\ref{S_norm}), (\ref{inner_coh}), and (\ref{push_coh}) and
the fact that $U$ and $V$ are isometric as per Eq.~(\ref{real_iso}),
we obtain
\begin{align}
\norm{\chan_\push}_{\mc V} = \norm{\chan}_{\mc V} = \norm{\chan}_{\mc
  K} = \norm{\scatter}_{\mc H} \le 1,
\label{gauss_mono}
\end{align}
where each subscript emphasizes the input Hilbert space on which the
operator norm is defined. The channel is hence monotonic.

\subsection{Gaussian states}
Now consider an arbitrary Gaussian input state specified by a certain
covariance map $\Sigma$, and suppose that the channel remains the
same. Then the channel maps $\scatter$ and $\chan$ for the mean fields as
well as the noise covariance map $\Sigma_\noise$ should all remain the
same. To derive $\Sigma_\noise$, we combine
Eqs.~(\ref{chan_gauss_cov}) and (\ref{covar_coh}) for coherent states
to obtain
\begin{align}
\Sigma_\noise &= \frac{1}{2}\bk{I-\chan \chan^\dual}.
\end{align}
$\Sigma_\out$ obeys Eq.~(\ref{chan_gauss_cov}) in terms of this
$\Sigma_\noise$ in general, leading to
\begin{align}
\Sigma_\out &= \chan \Sigma \chan^\dual + \Sigma_\noise 
= \chan \Sigma \chan^\dual + \frac{1}{2}\bk{I-\chan \chan^\dual},
\label{covar_gauss}
\end{align}
which is a dissipation-fluctuation relation. Eq.~(\ref{pull_coh})
remains the same, but Eqs.~(\ref{push_coh}) and (\ref{IMAP_coh})
should be changed to
\begin{align}
\chan_\push &= \Sigma_\out^{-1}\chan \Sigma 
= \Sigma_\out^{-1} V S U^{-1} \Sigma,
&
\IMAP &= \chan^\pull \chan_\push = 
U \scatter^\dual V^{-1} \Sigma_\out^{-1} V \scatter U^{-1} \Sigma.
\end{align}
The SVD of $\chan_\push$ with respect to the inner products of $\mc V$
and $\mc V_\out$ can be evaluated, at least numerically, to discover
the singular functions and singular values that characterize the
imaging system. The principles remain the same as the classical case
in Sec.~\ref{sec_coh}, but because the phase spaces have a more
complicated direct-sum structure and the noise may no longer be white,
the workings become much more tedious and we do not compute the SVD
explicitly here.

Following Sec.~\ref{sec_gauss}, suppose that the unknown parameter
$\theta$ is the input mean field in phase space. Let the parameter of
interest be
\begin{align}
\beta(\theta) &= \Avg{b,\theta}_{\mc K} = \int_D \mqty(b'(x) & b''(x))
\mqty(\theta'(x)\\ \theta''(x)) \dd^m x
\end{align}
in terms of a weight function $b \in \mc K$.  The condition given by
Eq.~(\ref{effinf_range}) for the existence of $\effinfout$ in
Theorem~\ref{thm_effinfout} becomes
\begin{align}
\exists u \in \Amb(g): \bk{\chan^\dual u}(x) = 
\int_{D_\out} \mqty( \Re \scatter & \Im \scatter\\-\Im \scatter 
& \Re \scatter) \mqty(u'(y) \\ u''(y))\dd^m y
= \mqty(b'(x)\\ b''(x)).
\end{align}
Provided that the condition is satisfied, we can find
\begin{align}
(\effinfout)(y) &= \mqty(
(\effinfout)'(y)\\ (\effinfout)''(y)) = \bk{\chan^{\dual-} b}(y).
\end{align}
$\effinfout$ specifies a spatiotemporal mode on the image plane; its
quadrature $Y(\effinfout)$ given by Eq.~(\ref{Y_coh}) is an unbiased
and efficient estimator, as per Sec.~\ref{sec_gg}. The quadrature can
be measured by homodyne detection with a local oscillator that is
shaped according to $\effinfout$.  If optical squeezing is used
\cite{yuen78,shapiro79,kolobov_fabre,beskrovnyy05,pinel12,taylor16,treps},
clearly one should aim to squeeze the quadrature being measured. While
it may seem difficult to introduce squeezing that can survive a lossy
imaging system, some groups have met the experimental challenge
\cite{taylor16,treps}.

\subsection{\label{sec_scaling_q}Scaling}
Unlike Sec.~\ref{sec_scaling_c} in the classical case, scaling
coefficients in the quantum case must respect physics and thus require
more care to model. Three types of scaling at different stages of the
system are possible:
\begin{enumerate}
\item If the input mean field $f(\theta)$ is scaled by a given
  constant $c_1 > 0$, it can be modeled by assuming
\begin{align}
f(\theta,c_1) &= c_1 \theta,
\end{align}
while the unknown parameters $\theta$ and $\beta(\theta)$ remain fixed
for any $c_1$. Then
\begin{align}
\score(\dot\phi,c_1) &= c_1 \dot\phi(\theta) = \score(\dot\phi,1),
&
\bk{\effinf}(\cdot,c_1) &= \frac{1}{c_1} \bk{\effinf}(\cdot,1),
\\
\bk{\effinfout}(\cdot,c_1) &= \frac{1}{c_1} \bk{\effinfout}(\cdot,1),
&
\Bnd_\out(\cdot,c_1) &= \frac{1}{c_1^2} \Bnd_\out(\cdot,1).
\end{align}

\item If the channel map is scaled by a given transmission coefficient
  $c_2 > 0$ as per
\begin{align}
\chan(c_2) = c_2\chan(1),
\end{align}
$c_2$ must be within a range $\mc C_2$ such that the channel remains
monotonic for all $c_2 \in \mc C_2$.

Suppose that $\chan(c_2)$ has been confirmed to be monotonic for all
$c_2 \in \mc C_2$. For coherent states, the covariance maps are given
by Eqs.~(\ref{covar_coh}) and not affected by $\chan$, so we can still
write
\begin{align}
\bk{\effinfout}(\cdot,c_2) &= \frac{1}{c_2} \bk{\effinfout}(\cdot,1),
&
\Bnd_\out(\cdot,c_2) &= \frac{1}{c_2^2} \Bnd_\out(\cdot,1),
\end{align}
similar to the classical case given by Eqs.~(\ref{effinf_scale}). For
other Gaussian states, however, the output covariance $\Sigma_\out$ is
affected by $c_2$ owing to the dissipation-fluctuation relation given
by Eq.~(\ref{covar_gauss}), so there is no simple scaling law for
varying $c_2$ any more.

\item If the input covariance is scaled by a given constant $c_3$ as
  per
\begin{align}
\Sigma(c_3) = c_3 \Sigma(1),
\end{align}
$c_3$ must also be restricted to a certain range so that $\Sigma(c_3)$
respects Eq.~(\ref{heisenberg}) and the Gaussian quantum state remains
physical. $\Sigma_\out$ depends nontrivially on $\Sigma(c_3)$ and thus
$c_3$, such that no simple scaling law is available.
\end{enumerate}

\section{\label{app_proofs}Proofs}
\begin{proof}[Proof of Lemma~\ref{lem_h}]
$\chan_\push \range f_\push \subseteq \chan_\push \overline{\range f_\push}$ implies
\begin{align}
\overline{\chan_\push \range f_\push} \subseteq \overline{\chan_\push \overline{\range f_\push} }
= \overline{\chan_\push \mc T(f)}.
\end{align}
The left-hand side is $\mc T(g)$ by Eq.~(\ref{Tout}).  To prove the
opposite inclusion, we note that, because $\chan_\push$ is continuous,
\begin{align}
\chan_\push \overline{\range f_\push} &\subseteq \overline{\chan_\push \range f_\push},
&
\overline{\chan_\push \overline{\range f_\push}} &\subseteq \overline{\chan_\push \range f_\push},
\end{align}
which implies $\overline{\chan_\push \mc T(f)} \subseteq \mc T(g)$.
\end{proof}

\begin{proof}[Proof of Theorem~\ref{thm_effinfout}]
  If $\effinf$ is not in the range of $\chan^\pull$, then
  $\effinfout$ cannot exist, and $\Bnd_\out = \infty$ by
  Theorem~\ref{thm_effinf}.  
To prove Eq.~(\ref{effinf_pinv}), 
first note that, by basic Hilbert-space theory,
\begin{align}
 \bk{\kernel \chan^\pull}^\perp &= \overline{\range \chan_\push}
 = \mc T(g),
 \label{null_adj}
 \end{align}
 where $\kernel$ denotes the null space of the map, $\perp$ denotes the
 orthocomplement relative to the ambient Hilbert space, and the last
 step follows from Lemma~\ref{lem_h}. Then use
 Eq.~(\ref{pullback_effinf}) and Prop.~\ref{prop_pinv} to find
\begin{align}
\chan^{\pull -} \effinf
= \chan^{\pull -} \chan^\pull \effinfout
= \Pi_{(\kernel \chan^\pull)^\perp} \effinfout
= \Pi_{\mc T(g)}\effinfout = \effinfout,
\end{align}
where the last step follows from the fact $\effinfout \in \mc T(g)$.
Eq.~(\ref{Cout}) then follows from Theorem~\ref{thm_effinf}.
\end{proof}

\begin{proof}[Proof of Lemma~\ref{lem_effinf}]
Eq.~(\ref{effinf_J}) implies Eq.~(\ref{effinf_range2}) because
\begin{align}
\effinfout &= \chan^{\pull -} \effinf \in 
\chan^{\pull -} \chan^\pull \range \chan_\push 
= \Pi_{\overline{\range \chan^{\pull -}}} \range \chan_\push \\
&= \Pi_{(\kernel \chan^\pull)^\perp}\range \chan_\push 
= \Pi_{\mc T(g)}\range \chan_\push = \range \chan_\push,
\end{align}
where we have used Eq.~(\ref{pullback_effinf}), Prop.~\ref{prop_pinv},
Eq.~(\ref{null_adj}), and Lemma~\ref{lem_h}. Conversely,
Eqs.~(\ref{pullback_effinf}) and (\ref{effinf_range2}) imply
\begin{align}
\effinf &= \chan^{\pull}  \effinfout \in \chan^\pull \range \chan_\push,
\end{align}
which is Eq.~(\ref{effinf_J}).  Eq.~(\ref{effinf_range2}) implies
Eq.~(\ref{effinf_T}) because
$\range \chan_\push \subseteq \overline{\range \chan_\push}$.
Theorem~\ref{thm_effinfout} has proved that Eqs.~(\ref{effinf_T}) and
(\ref{effinf_range3}) are equivalent.
\end{proof}

\begin{proof}[Proof of Corollary~\ref{cor_info}]
Given Eq.~(\ref{effinf_range2}), Eq.~(\ref{Cout}) becomes
\begin{align}
\Bnd_\out &= \Avg{\chan^{\pull -} \effinf,\effinfout}_g
= \Avg{\effinf,\chan_\push^{-} \effinfout}_f
\\
&= \Avg{\effinf,\chan_\push^{-} \chan^{\pull -}\effinf}_f
= \Avg{\effinf, \bk{\chan^\pull \chan_\push}^- \effinf}_f,
\end{align}
where the second equality holds because
$\effinfout \in \range \chan_\push \subseteq \dom \chan_\push^-$ and
$(\chan^{\pull -})^\adj = \chan_\push^-$, while the final equality comes
from $\chan_\push^{-} \chan^{\pull -} = \bk{\chan^\pull \chan_\push}^-$, following
Prop.~\ref{prop_pinv}. Eq.~(\ref{effinf_J}) implies
\begin{align}
\effinf \in \range \imap \subseteq \dom \imap^-,
\end{align}
so $\imap^-\effinf$ is well defined.
\end{proof}
\begin{proof}[Proof of Corollary~\ref{cor_finite}]
  Eq.~(\ref{imap_matrix}) shows that $\imap = \Imat$.  Since
\begin{align}
\range \chan_\push = \BK{\chan_\push u:u \in \mc T(f) = \mb R^p},
\end{align}
its dimension is finite,
$\range \chan_\push = \overline{\range \chan_\push} = \mc T(g)$, and
all the conditions in Lemma~\ref{lem_effinf} are equivalent. Then
$\effinf =\partial\beta\notin \range \Imat$ implies the failure of
Eq.~(\ref{effinf_range}), leading to $\Bnd_\out = \infty$ by
Theorem~\ref{thm_effinfout}. Otherwise, Eq.~(\ref{effinfout_finite})
satisfies the definition of the output efficient influence, since
\begin{align}
\effinfout \in \mc T(g) = \range \chan_\push = \spn\BK{\score}
\end{align}
by Eq.~(\ref{push_finite}) and
\begin{align}
\chan^\pull \effinfout  &= \Imat \Imat^+ \partial\beta = \Pi_{\range \Imat} \partial\beta
= \partial\beta = \effinf
\end{align}
by Eq.~(\ref{pull_finite}), Prop.~\ref{prop_pinv}, and the fact that
$\Imat^-$ for the Euclidean inner product $\avg{\cdot,\cdot}_f$ is the
Moore--Penrose inverse $\Imat^+$ \cite{benisrael}.
Corollary~\ref{cor_info} leads to Eq.~(\ref{Bnd_finite}) via
Eqs.~(\ref{effinf_kid}) and (\ref{imap_matrix}).
\end{proof}
\begin{proof}[Proof of Theorem~\ref{thm_mono}]
  By basic Hilbert-space theory, $\chan^\pull$ defined by
  Eq.~(\ref{adj}) has the same operator norm as that of $\chan_\push$, so
\begin{align}
\norm{\chan^\pull} = \norm{\chan_\push} \le 1.
\end{align}
Theorem~\ref{thm_effinfout} and Eq.~(\ref{pullback_effinf}) then imply
\begin{align}
\Bnd &= \norm{\effinf}_f^2 = \norm{\chan^\pull\effinfout}_f^2 
\le \norm{\chan^\pull}^2 \norm{\effinfout}_g^2 
\le \norm{\effinfout}_g^2 = \Bnd_\out.
\end{align}
For the information map, on the other hand, we have
\begin{align}
\norm{\imap} &= \norm{\chan^\pull \chan_\push} \le \norm{\chan^\pull}
\norm{\chan_\push} \le 1.
\end{align}
\end{proof}
\begin{proof}[Proof of Lemma~\ref{lem_prod_maps}]
Eq.~(\ref{gce}) implies
\begin{align}
\score_\out(\dot\phi) &= \sum_j \Amp_{j} \score_j(\dot\phi),
\label{score_amp}
\end{align}
where 
\begin{align}
\Amp_j A &= I^{\otimes(j-1)} \otimes A\otimes I^{\otimes(N-j)}
\end{align}
is called an ampliation \cite{parth_qsc}. Since all output scores are
in the form of Eq.~(\ref{score_amp}), we can set the output ambient
vector space as
\begin{align}
\mc V_\out &= \sum_j \Amp_j \mc V_j.
\end{align}
In other words, $\mc V_\out$ comprises the sums of zero-mean one-body
observables.
  Observe that, for any pair of zero-mean observables $A \in \mc V_j$
  and $B \in \mc V_k$,
\begin{align}
\Avg{\Amp_j A,\Amp_k B}_{g}
&= \begin{cases}
\Avg{A,B}_{\chan_j f}, & j = k,\\
0, & j \neq k.
\end{cases}
\label{AjAk}
\end{align}
Thus, for any $A, B\in \mc V_\out$,
\begin{align}
\Avg{A,B}_g &= \Avg{\sum_j \Amp_j A_j,\sum_k \Amp_k B_k}_g 
= \sum_j\Avg{A_j,B_j}_{\chan_j f},
\end{align}
meaning that $\mc V_\out$ satisfies the definition of the direct sum
given by Eq.~(\ref{V_direct}).  Eq.~(\ref{score_amp}), in particular,
becomes Eq.~(\ref{score_direct}). To prove Eq.~(\ref{Amb_direct}) from
Eq.~(\ref{V_direct}), use Prop.~\ref{prop_direct}.
\end{proof}
\begin{proof}[Proof of Lemma~\ref{lem_sum_maps}]
  We prove it only for the quantum Poisson channel in
  Example~\ref{exa_qpp}; the rest follow the same
  argument. Eq.~(\ref{score_direct}) can be shown to hold for the
  tensor product of two Poisson states by writing
\begin{align}
\dot\phi\bk{g}
&= \bk{\chan_1 f \oplus \chan_2 f}  \jordan \score_\out(\dot\phi)
\label{poisson_score1}
\\
&= \dot\phi\bk{\chan_1 f \oplus \chan_2 f}
= \dot\phi(\chan_1 f) \oplus \dot\phi(\chan_2 f)
= \Bk{\chan_1 f \jordan \score_1(\dot\phi)} \oplus 
\Bk{\chan_2 f \jordan \score_2(\dot\phi)}
\\
&= 
\bk{\chan_1 f \oplus \chan_2 f}  \jordan 
\Bk{\score_1(\dot\phi) \oplus \score_2(\dot\phi)}
\label{poisson_score2}
\end{align}
and comparing Eqs.~(\ref{poisson_score1}) and (\ref{poisson_score2}).
It follows that, for multiple Poisson channels, any output score is a
direct sum as per Eq.~(\ref{score_direct}). We can then set $\mc V_\out$
as the direct sum in Eq.~(\ref{V_direct}).
\end{proof}

\begin{proof}[Proof of Theorem~\ref{thm_add}]
Eq.~(\ref{score_direct}) implies
\begin{align}
\Avg{A,\score_\out(\dot\phi)}_g &= \Avg{A,\bigoplus_j \score_j(\dot\phi)}_g
\quad
\forall A \in \mc V_\out.
\end{align}
This equality in turn implies that, given
$g_\push\dot\phi = \score_\out(\dot\phi) + \mc N(g)$ and
$\chan_{j\push} f_\push \dot\phi = \score_j(\dot\phi) + \mc
N(\chan_j f)$,
\begin{align}
\Avg{u,g_\push\dot\phi}_g &= \Avg{u,\bigoplus_j\chan_{j\push} f_\push\dot\phi}_g
\quad \forall u \in \mc V_\out/\mc N(g).
\end{align}
Since $\mc V_\out/\mc N(g)$ is dense in $\Amb(g)$ and the inner
product is continuous, we obtain
\begin{align}
\Avg{u,g_\push\dot\phi}_g &= \Avg{u,\bigoplus_j\chan_{j\push} f_\push\dot\phi}_g
\quad
\forall u \in \Amb(g),
\end{align}
leading to
\begin{align}
g_\push \dot\phi &= \bigoplus_j \chan_{j\push} f_\push\dot\phi
= \sum_j \inj_j \chan_{j\push} f_\push\dot\phi.
\end{align}
By the definition of $\chan_\push$ in Eq.~(\ref{cov}), we obtain
Eq.~(\ref{push_sum}). Prop.~\ref{prop_inj} then leads to
Eqs.~(\ref{pull_sum}) and (\ref{imap_sum}).
\end{proof}

\begin{proof}[Proof of Lemma~\ref{lem_dual}]
  Since $\chan^\dual$ is positive and unital, the Schwarz inequality
  \cite[Proposition~3.2.4]{bratteli} implies that, for any
  $A \in \mc V_\out$,
\begin{align}
\bk{\chan^\dual A}^2 &\le \chan^\dual\bk{A^2},
\\
\norm{\chan^\dual A}^2_f = \trace\Bk{f\bk{\chan^\dual A}^2 } 
&\le \trace\Bk{f \chan^\dual\bk{A^2}} = \norm{A}^2_g.
\label{schwarz}
\end{align}
For any $A \in \mc N(g)$, Eq.~(\ref{schwarz}) implies that
$0 = \norm{A}_g \ge \norm{\chan^\dual A}_f$, so
$\chan^\dual A \in \mc N(f)$,
$\chan^\dual \mc N(g) \subseteq \mc N(f)$, and $\chan^\dual$ remains
well defined as a map from $\mc V_\out/\mc N(g)$ to $\mc V/\mc
N(f)$. Eq.~(\ref{schwarz}) further implies that $\chan^\dual$ is
bounded with operator norm satisfying Eq.~(\ref{mono_quantum}).
$\chan^\dual$ can therefore be uniquely extended to become a map with
domain $\overline{\mc V_\out/\mc N(g)} = \Amb(g)$ and the same
operator norm.
\end{proof}

\begin{proof}[Proof of Lemma~\ref{lem_score}]
  Denote the set of bounded self-adjoint operators on $\mc H$ as
  $\mc O(\mc H)$.  The rigorous definition of a score operator is
  \cite[Eq.~(2.23)]{holevo_structure}
\begin{align}
\trace\Bk{A \bk{\dot\phi f}} &= \Avg{A,\score(\dot\phi)}_f
\quad
\forall A \in \mc O(\mc H).
\end{align}
Similarly, for the output of a quantum channel,
\begin{align}
\trace\Bk{A \bk{\dot\phi g}} &= \Avg{A,\score_\out(\dot\phi)}_g
\quad
\forall A \in \mc O(\mc H_\out).
\end{align}
Then we can write
\begin{align}
\Avg{A,\score_\out(\dot\phi)}_g
&= \trace\bk{A \dot\phi g}  = \trace\bk{A \dot\phi \chan f} 
= \trace\bk{A \chan \dot\phi f}
= \trace\Bk{\bk{\chan^\dual A} \dot\phi f} 
= \Avg{\chan^\dual A,\score(\dot\phi)}_f.
\end{align}
This equality in terms of semi-inner products continues to hold for
$f_\push\dot\phi = \score(\dot\phi) + \mc N(f)$ and
$g_\push\dot\phi = \score_\out(\dot\phi) + \mc N(g)$, implying that,
for any $u \in \mc V_\out/\mc N(g)$,
\begin{align}
\Avg{u,g_\push\dot\phi}_g &= \Avg{\chan^\dual u,f_\push\dot\phi}_f
= \Avg{u,\chan_\dual f_\push\dot\phi}_g
= \Avg{u,\chan_\dual \inj_{\mc T(f)} f_\push\dot\phi}_g.
\end{align}
This equality holds also for all $u \in \Amb(g)$ because
$\mc V_\out/\mc N(g)$ is dense in $\Amb(g)$ and the inner product is
continuous. Hence,
\begin{align}
g_\push\dot\phi &= \chan_\dual \inj_{\mc T(f)} f_\push\dot\phi,
\end{align}
leading to Eq.~(\ref{chan_push_quantum}) via the definition of
$\chan_\push$ in Eq.~(\ref{cov}).
\end{proof}
\begin{proof}[Proof of Proposition~\ref{prop_direct}]
$\oplus_j A_j \in \mc N$ if and only if
$A_j \in \mc N_j$ for all $j$ because
\begin{align}
\Norm{\bigoplus_j A_j}^2 &= \sum_j \Norm{A_j}^2 = 0.
\end{align}
Hence,
\begin{align}
\mc N &= \bigoplus_j \mc N_j.
\end{align}
It follows that, for any $\oplus_j A_j \in \mc W$,
\begin{align}
\bigoplus_j A_j + \mc N = \bigoplus_j A_j + \bigoplus_j \mc N_j
= \bigoplus_j\bk{A_j + \mc N_j},
\end{align}
and Eq.~(\ref{quo_direct}) holds. $\mc W/\mc N$ and $\mc V_j/\mc N_j$
are now inner-product spaces.

Consider an arbitrary $u \in H$. Since $\mc W/\mc N$ is dense in $H$,
there exists a Cauchy sequence $\{\oplus_j u_{nj}: n = 1,2,\dots\}$ in
$\mc W/\mc N$ that converges to $u$, where each $u_{nj}$ is in
$\mc V_j/\mc N_j$. By the definition of Cauchy sequences, for any
$\epsilon > 0$, there exists an $M$ such that, for all $n,m > M$,
\begin{align}
\Norm{\bigoplus_j u_{nj} - \bigoplus_j u_{mj}}^2
&= \sum_j \Norm{u_{nj} - u_{mj}}^2 < \epsilon.
\end{align}
It follows that $\{u_{nj}:n = 1,2,\dots\}$ is also a Cauchy sequence
in $\mc V_j/\mc N_j$ for each $j$ and converges to an element
$u_j \in H_j$. This procedure of generating a $u_j \in H_j$ from
$u \in H$ does not depend on the Cauchy sequence $\{\oplus_j u_{nj}\}$
we choose.  To prove so, consider another Cauchy sequence
$\{\oplus_j v_{nj}\}$ that converges to the same $u$. Then
\begin{align}
\sum_j \Norm{v_{nj}-u_{nj}}^2 &= \Norm{\bigoplus_j v_{nj} - 
\bigoplus_j u_{nj}}^2 
\to 0,
\end{align}
meaning that $\norm{v_{nj}-u_{nj}} \to 0$ and $\{v_{nj}\}$ must
converge to the same $u_j \in H_j$. We can therefore define a map
$U:H \to \oplus_j H_j$ as
\begin{align}
U u &= \bigoplus_j u_j.
\label{iso_direct}
\end{align}
By the continuity of the norm,
\begin{align}
\norm{u}^2 &= \lim_{n\to\infty} \Norm{\bigoplus_j u_{nj}}^2
= \lim_{n\to\infty} \sum_j \norm{u_{nj}}^2
= \sum_j \lim_{n\to\infty} \norm{u_{nj}}^2 = \sum_j \norm{u_j}^2
= \norm{\bigoplus_j u_j}^2 = \norm{U u}^2.
\end{align}
This equality proves that $U$ is isometric. To prove that $U$ is also
surjective, consider an arbitrary $v_j \in H_j$ for each $j$. Because
$H_j$ is complete and $\mc V_j/\mc N_j$ is dense in it, a Cauchy
sequence $\{v_{nj}\}$ in $\mc V_j/\mc N_j$ converging to $v_j$ exists.
For any $\epsilon > 0$, there exists an $M$ such that, for all
$n,m > M$ and for all $j$, $\norm{v_{nj}-v_{mj}}^2 < \epsilon/N$,
leading to
\begin{align}
\Norm{\bigoplus_j v_{nj} - \bigoplus_j v_{mj}}^2 &= \sum_j \norm{v_{nj}-v_{mj}}^2 < \epsilon.
\end{align}
It follows that the sequence $\{\bigoplus_j v_{nj}\}$ in $\mc W/\mc N$
is also Cauchy and the limit point $v \in H$ exists. If we apply the
$U$ in Eq.~(\ref{iso_direct}) to $v$, we find that it gives back
$\oplus_j v_j$, which is an arbitrary element of $\oplus_j H_j$.
Hence, $U$ is surjective as well as isometric, meaning that $H$ and
$\oplus_j H_j$ are isomorphic.
\end{proof}

\section{\label{app_foot}Footnotes}
\begin{enumerate}
\item In Example~\ref{exa_classical}, a rigorous condition for a
  classical parametric model to have a well defined
  $\bnd_\theta(\dot\phi)$ is called differentiability in quadratic
  mean \cite{vaart}. It is an open question how this condition should
  be extended for the other examples.

\item  Existing results concerning the quantum Poisson and
 Gaussian states in Examples~\ref{exa_qpoisson} and \ref{exa_qgauss}
 may be less rigorous mathematically than their classical counterparts
 or may assume that $\mc H$ is finite-dimensional, although there is
 no strong reason to doubt that they hold in general. 

\item In classical statistics, the channel pushforward $\chan_\push$
  defined by Eq.~(\ref{cov}) is called the score operator, the channel
  pullback $\chan^{\pull}$ defined by Eq.~(\ref{chan_pull}) is called
  the efficient influence operator, and the information map $\imap$
  defined by Eq.~(\ref{imap}) is called the information operator
  \cite{bickel,vaart}, but we refrain from those names to avoid
  confusion with quantum operators.

\item Following Theorems~\ref{thm_effinfout} and \ref{thm_inf}, an
  alternative method to compute $\effinfout$ is to solve for any
  $\delta_\out \in \Amb(g)$ that satisfies
\begin{align}
\chan^\pull \delta_\out &= \effinf.
\end{align}
By Theorem~\ref{thm_inf}, this $\delta_\out$ is an influence
for the output satisfying 
\begin{align}
\Avg{\delta_\out,g_\push\dot\phi}_{g} &= 
\Avg{\delta_\out,\chan_\push f_\push\dot\phi}_{g} = 
\Avg{\chan^\pull \delta_\out,f_\push\dot\phi}_f
= \Avg{\effinf,f_\push\dot\phi}_f = \dd\beta(\dot\phi).
\end{align}
The output efficient influence is then the projection
$\effinfout = \Pi_{\mc T(g)}\delta_\out$ into the output score tangent
space $\mc T(g)$. The pseudo-inverse in Theorem~\ref{thm_effinfout}
implements the same procedure.

\item The chain rules given by
  Eqs.~(\ref{push_chain})--(\ref{imap_chain}) suggest that, in a
  continuous-time limit, where each channel is applied for only an
  infinitesimal duration, linear differential equations may be derived
  for $\chan(k,j)_\push$, $\chan(k,j)^\pull$, and $\imap(N,j)$.

\item A map is compact if and only if it has a SVD that converges in
  operator norm.  We have the following conditions \cite{reed_simon}:
\begin{enumerate}
\item  All compact maps are bounded ($\norm{T} <\infty$).
\item If $\range T$ is finite-dimensional, then $T$ is finite-rank and
  thus compact.

\item If $\range T$ is infinite-dimensional, then some bounded maps,
  such as the identity map and any unitary map, are not compact.

\item All finite-rank maps (for which $\{s_j\}$ is a finite set),
  trace-class maps (for which $\sum_j s_j < \infty$), Hilbert-Schmidt
  maps (for which $\sum_j s_j^2 < \infty$), and in fact all
  $p$-Schatten-class maps (for which $\sum_j s_j^p < \infty$) with any
  $1 \le p < \infty$ are compact.
\end{enumerate}
We have not proved that $\chan_\push$ for any of the channels studied
in Secs.~\ref{sec_coh}--\ref{sec_spade} is compact, so the rigorous
existence of its SVD is questionable. However, the numerical analysis
must approximate each $\chan_\push$ by a finite-dimensional matrix, which
always has a SVD. The open question then becomes how well the
finite-dimensional matrix and its SVD approximate the physical map.

\item In our study of incoherent imaging in
  Secs.~\ref{sec_direct}--\ref{sec_confocal}, we consider only
  measurements that produce Poisson fields. It is possible to model
  more general measurements that produce non-Poisson fields, such as
  homodyne or heterodyne detection, by returning to the quantum state
  of optical field from which the Poisson state is derived. However,
  such measurements must still obey the quantum bounds and usually
  perform much worse than photon-counting methods for thermal sources
  \cite{stellar,yang17}.

\end{enumerate}

\bibliography{research5}

\begin{thebibliography}{121}%
\makeatletter
\providecommand \@ifxundefined [1]{%
 \@ifx{#1\undefined}
}%
\providecommand \@ifnum [1]{%
 \ifnum #1\expandafter \@firstoftwo
 \else \expandafter \@secondoftwo
 \fi
}%
\providecommand \@ifx [1]{%
 \ifx #1\expandafter \@firstoftwo
 \else \expandafter \@secondoftwo
 \fi
}%
\providecommand \natexlab [1]{#1}%
\providecommand \enquote  [1]{``#1''}%
\providecommand \bibnamefont  [1]{#1}%
\providecommand \bibfnamefont [1]{#1}%
\providecommand \citenamefont [1]{#1}%
\providecommand \href@noop [0]{\@secondoftwo}%
\providecommand \href [0]{\begingroup \@sanitize@url \@href}%
\providecommand \@href[1]{\@@startlink{#1}\@@href}%
\providecommand \@@href[1]{\endgroup#1\@@endlink}%
\providecommand \@sanitize@url [0]{\catcode `\\12\catcode `\$12\catcode
  `\&12\catcode `\#12\catcode `\^12\catcode `\_12\catcode `\%12\relax}%
\providecommand \@@startlink[1]{}%
\providecommand \@@endlink[0]{}%
\providecommand \url  [0]{\begingroup\@sanitize@url \@url }%
\providecommand \@url [1]{\endgroup\@href {#1}{\urlprefix }}%
\providecommand \urlprefix  [0]{URL }%
\providecommand \Eprint [0]{\href }%
\providecommand \doibase [0]{http://dx.doi.org/}%
\providecommand \selectlanguage [0]{\@gobble}%
\providecommand \bibinfo  [0]{\@secondoftwo}%
\providecommand \bibfield  [0]{\@secondoftwo}%
\providecommand \translation [1]{[#1]}%
\providecommand \BibitemOpen [0]{}%
\providecommand \bibitemStop [0]{}%
\providecommand \bibitemNoStop [0]{.\EOS\space}%
\providecommand \EOS [0]{\spacefactor3000\relax}%
\providecommand \BibitemShut  [1]{\csname bibitem#1\endcsname}%
\let\auto@bib@innerbib\@empty
\bibitem [{\citenamefont {{van der Vaart}}(1998)}]{vaart}%
  \BibitemOpen
  \bibfield  {author} {\bibinfo {author} {\bibfnamefont {A.~W.}\ \bibnamefont
  {{van der Vaart}}},\ }\href {\doibase 10.1017/CBO9780511802256} {\emph
  {\bibinfo {title} {Asymptotic {{Statistics}}}}}\ (\bibinfo  {publisher}
  {Cambridge University Press},\ \bibinfo {address} {Cambridge, UK},\ \bibinfo
  {year} {1998})\BibitemShut {NoStop}%
\bibitem [{\citenamefont {Bickel}\ \emph {et~al.}(1993)\citenamefont {Bickel},
  \citenamefont {Klaassen}, \citenamefont {Ritov},\ and\ \citenamefont
  {Wellner}}]{bickel}%
  \BibitemOpen
  \bibfield  {author} {\bibinfo {author} {\bibfnamefont {Peter~J.}\
  \bibnamefont {Bickel}}, \bibinfo {author} {\bibfnamefont {Chris A.~J.}\
  \bibnamefont {Klaassen}}, \bibinfo {author} {\bibfnamefont {Ya'acov}\
  \bibnamefont {Ritov}}, \ and\ \bibinfo {author} {\bibfnamefont {John~A.}\
  \bibnamefont {Wellner}},\ }\href@noop {} {\emph {\bibinfo {title} {Efficient
  and Adaptive Estimation for Semiparametric Models}}}\ (\bibinfo  {publisher}
  {Springer},\ \bibinfo {address} {New York},\ \bibinfo {year}
  {1993})\BibitemShut {NoStop}%
\bibitem [{\citenamefont {Tsiatis}(2006)}]{tsiatis}%
  \BibitemOpen
  \bibfield  {author} {\bibinfo {author} {\bibfnamefont {Anastasios~A.}\
  \bibnamefont {Tsiatis}},\ }\href {\doibase 10.1007/0-387-37345-4} {\emph
  {\bibinfo {title} {Semiparametric {{Theory}} and {{Missing Data}}}}}\
  (\bibinfo  {publisher} {Springer},\ \bibinfo {address} {New York},\ \bibinfo
  {year} {2006})\BibitemShut {NoStop}%
\bibitem [{\citenamefont {Van Der~Laan}\ and\ \citenamefont
  {Rose}(2011)}]{laan}%
  \BibitemOpen
  \bibfield  {author} {\bibinfo {author} {\bibfnamefont {Mark~J.}\ \bibnamefont
  {Van Der~Laan}}\ and\ \bibinfo {author} {\bibfnamefont {Sherri}\ \bibnamefont
  {Rose}},\ }\href {\doibase 10.1007/978-1-4419-9782-1} {\emph {\bibinfo
  {title} {Targeted {{Learning}}: {{Causal Inference}} for {{Observational}}
  and {{Experimental Data}}}}},\ Springer {{Series}} in {{Statistics}}\
  (\bibinfo  {publisher} {Springer},\ \bibinfo {address} {New York, NY},\
  \bibinfo {year} {2011})\BibitemShut {NoStop}%
\bibitem [{\citenamefont {Kennedy}(2024)}]{kennedy24}%
  \BibitemOpen
  \bibfield  {author} {\bibinfo {author} {\bibfnamefont {Edward~H.}\
  \bibnamefont {Kennedy}},\ }\bibfield  {title} {\enquote {\bibinfo {title}
  {Semiparametric {{Doubly Robust Targeted Double Machine Learning}}: {{A
  Review}}},}\ }in\ \href {\doibase 10.1201/9781003216223-10} {\emph {\bibinfo
  {booktitle} {Handbook of {{Statistical Methods}} for {{Precision
  Medicine}}}}}\ (\bibinfo  {publisher} {{Chapman and Hall/CRC}},\ \bibinfo
  {address} {Boca Raton},\ \bibinfo {year} {2024})\ Chap.~\bibinfo {chapter}
  {10}, pp.\ \bibinfo {pages} {207--236}\BibitemShut {NoStop}%
\bibitem [{\citenamefont {Newey}(1990)}]{newey90}%
  \BibitemOpen
  \bibfield  {author} {\bibinfo {author} {\bibfnamefont {Whitney~K.}\
  \bibnamefont {Newey}},\ }\bibfield  {title} {\enquote {\bibinfo {title}
  {Semiparametric efficiency bounds},}\ }\href {\doibase
  10.1002/jae.3950050202} {\bibfield  {journal} {\bibinfo  {journal} {Journal
  of Applied Econometrics}\ }\textbf {\bibinfo {volume} {5}},\ \bibinfo {pages}
  {99--135} (\bibinfo {year} {1990})}\BibitemShut {NoStop}%
\bibitem [{\citenamefont {Tsang}(2019{\natexlab{a}})}]{spade_prr}%
  \BibitemOpen
  \bibfield  {author} {\bibinfo {author} {\bibfnamefont {Mankei}\ \bibnamefont
  {Tsang}},\ }\bibfield  {title} {\enquote {\bibinfo {title} {Semiparametric
  estimation for incoherent optical imaging},}\ }\href {\doibase
  10.1103/PhysRevResearch.1.033006} {\bibfield  {journal} {\bibinfo  {journal}
  {Physical Review Research}\ }\textbf {\bibinfo {volume} {1}},\ \bibinfo
  {pages} {033006} (\bibinfo {year} {2019}{\natexlab{a}})}\BibitemShut
  {NoStop}%
\bibitem [{\citenamefont {{de Villiers}}\ and\ \citenamefont
  {Pike}(2016)}]{villiers}%
  \BibitemOpen
  \bibfield  {author} {\bibinfo {author} {\bibfnamefont {Geoffrey}\
  \bibnamefont {{de Villiers}}}\ and\ \bibinfo {author} {\bibfnamefont
  {E.~Roy}\ \bibnamefont {Pike}},\ }\href {\doibase 10.1201/9781315366708}
  {\emph {\bibinfo {title} {The Limits of Resolution}}}\ (\bibinfo  {publisher}
  {CRC Press},\ \bibinfo {address} {Boca Raton},\ \bibinfo {year}
  {2016})\BibitemShut {NoStop}%
\bibitem [{\citenamefont {Tsang}\ \emph {et~al.}(2020)\citenamefont {Tsang},
  \citenamefont {Albarelli},\ and\ \citenamefont {Datta}}]{semi_prx}%
  \BibitemOpen
  \bibfield  {author} {\bibinfo {author} {\bibfnamefont {Mankei}\ \bibnamefont
  {Tsang}}, \bibinfo {author} {\bibfnamefont {Francesco}\ \bibnamefont
  {Albarelli}}, \ and\ \bibinfo {author} {\bibfnamefont {Animesh}\ \bibnamefont
  {Datta}},\ }\bibfield  {title} {\enquote {\bibinfo {title} {Quantum
  {{Semiparametric Estimation}}},}\ }\href {\doibase
  10.1103/PhysRevX.10.031023} {\bibfield  {journal} {\bibinfo  {journal}
  {Physical Review X}\ }\textbf {\bibinfo {volume} {10}},\ \bibinfo {pages}
  {031023} (\bibinfo {year} {2020})}\BibitemShut {NoStop}%
\bibitem [{\citenamefont {Helstrom}(1976)}]{helstrom}%
  \BibitemOpen
  \bibfield  {author} {\bibinfo {author} {\bibfnamefont {Carl~W.}\ \bibnamefont
  {Helstrom}},\ }\href
  {http://www.sciencedirect.com/science/bookseries/00765392/123} {\emph
  {\bibinfo {title} {Quantum Detection and Estimation Theory}}}\ (\bibinfo
  {publisher} {Academic Press},\ \bibinfo {address} {New York},\ \bibinfo
  {year} {1976})\BibitemShut {NoStop}%
\bibitem [{\citenamefont {Tsang}(2021{\natexlab{a}})}]{qlmoment_pra2}%
  \BibitemOpen
  \bibfield  {author} {\bibinfo {author} {\bibfnamefont {Mankei}\ \bibnamefont
  {Tsang}},\ }\bibfield  {title} {\enquote {\bibinfo {title} {Quantum limit to
  subdiffraction incoherent optical imaging. {{II}}. {{A}} parametric-submodel
  approach},}\ }\href {\doibase 10.1103/PhysRevA.104.052411} {\bibfield
  {journal} {\bibinfo  {journal} {Physical Review A}\ }\textbf {\bibinfo
  {volume} {104}},\ \bibinfo {pages} {052411} (\bibinfo {year}
  {2021}{\natexlab{a}})}\BibitemShut {NoStop}%
\bibitem [{\citenamefont {Tan}\ and\ \citenamefont {Tsang}(2023)}]{tan23}%
  \BibitemOpen
  \bibfield  {author} {\bibinfo {author} {\bibfnamefont {Xiao-Jie}\
  \bibnamefont {Tan}}\ and\ \bibinfo {author} {\bibfnamefont {Mankei}\
  \bibnamefont {Tsang}},\ }\bibfield  {title} {\enquote {\bibinfo {title}
  {Quantum limit to subdiffraction incoherent optical imaging. {{III}}.
  {{Numerical}} analysis},}\ }\href {\doibase 10.1103/PhysRevA.108.052416}
  {\bibfield  {journal} {\bibinfo  {journal} {Physical Review A}\ }\textbf
  {\bibinfo {volume} {108}},\ \bibinfo {pages} {052416} (\bibinfo {year}
  {2023})}\BibitemShut {NoStop}%
\bibitem [{\citenamefont {Tsang}\ \emph {et~al.}(2016)\citenamefont {Tsang},
  \citenamefont {Nair},\ and\ \citenamefont {Lu}}]{tnl}%
  \BibitemOpen
  \bibfield  {author} {\bibinfo {author} {\bibfnamefont {Mankei}\ \bibnamefont
  {Tsang}}, \bibinfo {author} {\bibfnamefont {Ranjith}\ \bibnamefont {Nair}}, \
  and\ \bibinfo {author} {\bibfnamefont {Xiao-Ming}\ \bibnamefont {Lu}},\
  }\bibfield  {title} {\enquote {\bibinfo {title} {Quantum theory of
  superresolution for two incoherent optical point sources},}\ }\href {\doibase
  10.1103/PhysRevX.6.031033} {\bibfield  {journal} {\bibinfo  {journal}
  {Physical Review X}\ }\textbf {\bibinfo {volume} {6}},\ \bibinfo {pages}
  {031033} (\bibinfo {year} {2016})}\BibitemShut {NoStop}%
\bibitem [{\citenamefont {Tsang}(2019{\natexlab{b}})}]{review_cp}%
  \BibitemOpen
  \bibfield  {author} {\bibinfo {author} {\bibfnamefont {Mankei}\ \bibnamefont
  {Tsang}},\ }\bibfield  {title} {\enquote {\bibinfo {title} {Resolving
  starlight: A quantum perspective},}\ }\href {\doibase
  10.1080/00107514.2020.1736375} {\bibfield  {journal} {\bibinfo  {journal}
  {Contemporary Physics}\ }\textbf {\bibinfo {volume} {60}},\ \bibinfo {pages}
  {279--298} (\bibinfo {year} {2019}{\natexlab{b}})}\BibitemShut {NoStop}%
\bibitem [{\citenamefont {Lvovsky}\ \emph {et~al.}(2026)\citenamefont
  {Lvovsky}, \citenamefont {Grace}, \citenamefont {Guha}, \citenamefont
  {Tsang}, \citenamefont {Adesso},\ and\ \citenamefont {Treps}}]{lvovsky26}%
  \BibitemOpen
  \bibfield  {author} {\bibinfo {author} {\bibfnamefont {A.~I.}\ \bibnamefont
  {Lvovsky}}, \bibinfo {author} {\bibfnamefont {Michael~R.}\ \bibnamefont
  {Grace}}, \bibinfo {author} {\bibfnamefont {Saikat}\ \bibnamefont {Guha}},
  \bibinfo {author} {\bibfnamefont {Mankei}\ \bibnamefont {Tsang}}, \bibinfo
  {author} {\bibfnamefont {Gerardo}\ \bibnamefont {Adesso}}, \ and\ \bibinfo
  {author} {\bibfnamefont {Nicolas}\ \bibnamefont {Treps}},\ }\href {\doibase
  10.48550/arXiv.2605.10767} {\enquote {\bibinfo {title} {Passive optical
  superresolution at the quantum limit},}\ } (\bibinfo {year} {2026}),\ \Eprint
  {http://arxiv.org/abs/2605.10767} {arXiv:2605.10767 [quant-ph]} \BibitemShut
  {NoStop}%
\bibitem [{\citenamefont {Tsang}(2017)}]{spade_njp}%
  \BibitemOpen
  \bibfield  {author} {\bibinfo {author} {\bibfnamefont {Mankei}\ \bibnamefont
  {Tsang}},\ }\bibfield  {title} {\enquote {\bibinfo {title} {Subdiffraction
  incoherent optical imaging via spatial-mode demultiplexing},}\ }\href
  {\doibase 10.1088/1367-2630/aa60ee} {\bibfield  {journal} {\bibinfo
  {journal} {New Journal of Physics}\ }\textbf {\bibinfo {volume} {19}},\
  \bibinfo {pages} {023054} (\bibinfo {year} {2017})}\BibitemShut {NoStop}%
\bibitem [{\citenamefont {Tsang}(2018{\natexlab{a}})}]{spade_pra}%
  \BibitemOpen
  \bibfield  {author} {\bibinfo {author} {\bibfnamefont {Mankei}\ \bibnamefont
  {Tsang}},\ }\bibfield  {title} {\enquote {\bibinfo {title} {Subdiffraction
  incoherent optical imaging via spatial-mode demultiplexing: {{Semiclassical}}
  treatment},}\ }\href {\doibase 10.1103/PhysRevA.97.023830} {\bibfield
  {journal} {\bibinfo  {journal} {Physical Review A}\ }\textbf {\bibinfo
  {volume} {97}},\ \bibinfo {pages} {023830} (\bibinfo {year}
  {2018}{\natexlab{a}})}\BibitemShut {NoStop}%
\bibitem [{\citenamefont {Zhou}\ and\ \citenamefont {Jiang}(2019)}]{zhou19}%
  \BibitemOpen
  \bibfield  {author} {\bibinfo {author} {\bibfnamefont {Sisi}\ \bibnamefont
  {Zhou}}\ and\ \bibinfo {author} {\bibfnamefont {Liang}\ \bibnamefont
  {Jiang}},\ }\bibfield  {title} {\enquote {\bibinfo {title} {Modern
  description of {{Rayleigh}}'s criterion},}\ }\href {\doibase
  10.1103/PhysRevA.99.013808} {\bibfield  {journal} {\bibinfo  {journal}
  {Physical Review A}\ }\textbf {\bibinfo {volume} {99}},\ \bibinfo {pages}
  {013808} (\bibinfo {year} {2019})}\BibitemShut {NoStop}%
\bibitem [{\citenamefont {Tsang}(2019{\natexlab{c}})}]{qlmoment_pra}%
  \BibitemOpen
  \bibfield  {author} {\bibinfo {author} {\bibfnamefont {Mankei}\ \bibnamefont
  {Tsang}},\ }\bibfield  {title} {\enquote {\bibinfo {title} {Quantum limit to
  subdiffraction incoherent optical imaging},}\ }\href {\doibase
  10.1103/PhysRevA.99.012305} {\bibfield  {journal} {\bibinfo  {journal}
  {Physical Review A}\ }\textbf {\bibinfo {volume} {99}},\ \bibinfo {pages}
  {012305} (\bibinfo {year} {2019}{\natexlab{c}})}\BibitemShut {NoStop}%
\bibitem [{\citenamefont {Tan}\ \emph {et~al.}(2023)\citenamefont {Tan},
  \citenamefont {Qi}, \citenamefont {Chen}, \citenamefont {Danner},
  \citenamefont {Kanchanawong},\ and\ \citenamefont {Tsang}}]{tan23a}%
  \BibitemOpen
  \bibfield  {author} {\bibinfo {author} {\bibfnamefont {Xiao-Jie}\
  \bibnamefont {Tan}}, \bibinfo {author} {\bibfnamefont {Luo}\ \bibnamefont
  {Qi}}, \bibinfo {author} {\bibfnamefont {Lianwei}\ \bibnamefont {Chen}},
  \bibinfo {author} {\bibfnamefont {Aaron~J.}\ \bibnamefont {Danner}}, \bibinfo
  {author} {\bibfnamefont {Pakorn}\ \bibnamefont {Kanchanawong}}, \ and\
  \bibinfo {author} {\bibfnamefont {Mankei}\ \bibnamefont {Tsang}},\ }\bibfield
   {title} {\enquote {\bibinfo {title} {Quantum-inspired superresolution for
  incoherent imaging},}\ }\href {\doibase 10.1364/OPTICA.493227} {\bibfield
  {journal} {\bibinfo  {journal} {Optica}\ }\textbf {\bibinfo {volume} {10}},\
  \bibinfo {pages} {1189--1194} (\bibinfo {year} {2023})}\BibitemShut {NoStop}%
\bibitem [{\citenamefont {Rao}(1945)}]{rao45}%
  \BibitemOpen
  \bibfield  {author} {\bibinfo {author} {\bibfnamefont {C.~Radhakrishna}\
  \bibnamefont {Rao}},\ }\bibfield  {title} {\enquote {\bibinfo {title}
  {Information and accuracy attainable in the estimation of statistical
  parameters},}\ }\href@noop {} {\bibfield  {journal} {\bibinfo  {journal}
  {Bulletin of the Calcutta Mathematical Society}\ }\textbf {\bibinfo {volume}
  {37}},\ \bibinfo {pages} {81--91} (\bibinfo {year} {1945})}\BibitemShut
  {NoStop}%
\bibitem [{\citenamefont {Amari}\ and\ \citenamefont {Nagaoka}(2000)}]{amari}%
  \BibitemOpen
  \bibfield  {author} {\bibinfo {author} {\bibfnamefont {Shun-ichi}\
  \bibnamefont {Amari}}\ and\ \bibinfo {author} {\bibfnamefont {Hiroshi}\
  \bibnamefont {Nagaoka}},\ }\href@noop {} {\emph {\bibinfo {title} {Methods of
  Information Geometry}}}\ (\bibinfo  {publisher} {American Mathematical
  Society/Oxford University Press},\ \bibinfo {address} {Providence},\ \bibinfo
  {year} {2000})\BibitemShut {NoStop}%
\bibitem [{\citenamefont {Amari}(2016)}]{amari_app}%
  \BibitemOpen
  \bibfield  {author} {\bibinfo {author} {\bibfnamefont {Shun-ichi}\
  \bibnamefont {Amari}},\ }\href {\doibase 10.1007/978-4-431-55978-8} {\emph
  {\bibinfo {title} {Information {{Geometry}} and {{Its Applications}}}}}\
  (\bibinfo  {publisher} {Springer Japan},\ \bibinfo {address} {Tokyo},\
  \bibinfo {year} {2016})\BibitemShut {NoStop}%
\bibitem [{\citenamefont {Trabs}(2015)}]{trabs15}%
  \BibitemOpen
  \bibfield  {author} {\bibinfo {author} {\bibfnamefont {Mathias}\ \bibnamefont
  {Trabs}},\ }\bibfield  {title} {\enquote {\bibinfo {title} {Information
  bounds for inverse problems with application to deconvolution and {{L\'evy}}
  models},}\ }\href {\doibase 10.1214/14-AIHP627} {\bibfield  {journal}
  {\bibinfo  {journal} {Annales de l'Institut Henri Poincar\'e, Probabilit\'es
  et Statistiques}\ }\textbf {\bibinfo {volume} {51}},\ \bibinfo {pages}
  {1620--1650} (\bibinfo {year} {2015})}\BibitemShut {NoStop}%
\bibitem [{\citenamefont {Brady}(2009)}]{brady}%
  \BibitemOpen
  \bibfield  {author} {\bibinfo {author} {\bibfnamefont {David~J.}\
  \bibnamefont {Brady}},\ }\href {\doibase 10.1002/9780470443736} {\emph
  {\bibinfo {title} {Optical Imaging and Spectroscopy}}}\ (\bibinfo
  {publisher} {Wiley},\ \bibinfo {address} {Hoboken},\ \bibinfo {year}
  {2009})\BibitemShut {NoStop}%
\bibitem [{\citenamefont {Chen}(2018)}]{xudong}%
  \BibitemOpen
  \bibfield  {author} {\bibinfo {author} {\bibfnamefont {Xudong}\ \bibnamefont
  {Chen}},\ }\href {\doibase 10.1002/9781119311997} {\emph {\bibinfo {title}
  {Computational {{Methods}} for {{Electromagnetic Inverse Scattering}}}}}\
  (\bibinfo  {publisher} {John Wiley \& Sons, Ltd},\ \bibinfo {address}
  {Singapore},\ \bibinfo {year} {2018})\BibitemShut {NoStop}%
\bibitem [{\citenamefont {Bogachev}(2007)}]{bogachev}%
  \BibitemOpen
  \bibfield  {author} {\bibinfo {author} {\bibfnamefont {V.}~\bibnamefont
  {Bogachev}},\ }\href {\doibase 10.1007/978-3-540-34514-5} {\emph {\bibinfo
  {title} {Measure {{Theory}}}}}\ (\bibinfo  {publisher} {Springer-Verlag},\
  \bibinfo {address} {Berlin Heidelberg},\ \bibinfo {year} {2007})\BibitemShut
  {NoStop}%
\bibitem [{\citenamefont {Kingman}(1992)}]{kingman}%
  \BibitemOpen
  \bibfield  {author} {\bibinfo {author} {\bibfnamefont {J.~F.~C.}\
  \bibnamefont {Kingman}},\ }\href
  {https://global.oup.com/academic/product/poisson-processes-9780198536932}
  {\emph {\bibinfo {title} {Poisson Processes}}}\ (\bibinfo  {publisher}
  {Oxford University Press},\ \bibinfo {address} {Oxford, England, UK},\
  \bibinfo {year} {1992})\BibitemShut {NoStop}%
\bibitem [{\citenamefont {Holevo}(2019)}]{holevo_info}%
  \BibitemOpen
  \bibfield  {author} {\bibinfo {author} {\bibfnamefont {Alexander~S.}\
  \bibnamefont {Holevo}},\ }\href {\doibase 10.1515/9783110642490} {\emph
  {\bibinfo {title} {Quantum Systems, Channels, Information}}},\ \bibinfo
  {edition} {2nd}\ ed.\ (\bibinfo  {publisher} {De Gruyter},\ \bibinfo
  {address} {Berlin},\ \bibinfo {year} {2019})\BibitemShut {NoStop}%
\bibitem [{\citenamefont {Tsang}(2021{\natexlab{b}})}]{poisson_quantum}%
  \BibitemOpen
  \bibfield  {author} {\bibinfo {author} {\bibfnamefont {Mankei}\ \bibnamefont
  {Tsang}},\ }\bibfield  {title} {\enquote {\bibinfo {title} {Poisson {{Quantum
  Information}}},}\ }\href {\doibase 10.22331/q-2021-08-19-527} {\bibfield
  {journal} {\bibinfo  {journal} {Quantum}\ }\textbf {\bibinfo {volume} {5}},\
  \bibinfo {pages} {527} (\bibinfo {year} {2021}{\natexlab{b}})}\BibitemShut
  {NoStop}%
\bibitem [{\citenamefont {Yuen}\ and\ \citenamefont {Shapiro}(1978)}]{yuen78}%
  \BibitemOpen
  \bibfield  {author} {\bibinfo {author} {\bibfnamefont {Horace~P.}\
  \bibnamefont {Yuen}}\ and\ \bibinfo {author} {\bibfnamefont {Jeffrey~H.}\
  \bibnamefont {Shapiro}},\ }\bibfield  {title} {\enquote {\bibinfo {title}
  {Optical communication with two-photon coherent states--{{Part I}}:
  {{Quantum-state}} propagation and quantum-noise},}\ }\href {\doibase
  10.1109/TIT.1978.1055958} {\bibfield  {journal} {\bibinfo  {journal} {IEEE
  Transactions on Information Theory}\ }\textbf {\bibinfo {volume} {24}},\
  \bibinfo {pages} {657--668} (\bibinfo {year} {1978})}\BibitemShut {NoStop}%
\bibitem [{\citenamefont {Shapiro}\ \emph {et~al.}(1979)\citenamefont
  {Shapiro}, \citenamefont {Yuen},\ and\ \citenamefont {Mata}}]{shapiro79}%
  \BibitemOpen
  \bibfield  {author} {\bibinfo {author} {\bibfnamefont {J.}~\bibnamefont
  {Shapiro}}, \bibinfo {author} {\bibfnamefont {H.}~\bibnamefont {Yuen}}, \
  and\ \bibinfo {author} {\bibfnamefont {A.}~\bibnamefont {Mata}},\ }\bibfield
  {title} {\enquote {\bibinfo {title} {Optical communication with two-photon
  coherent states--{{Part II}}: {{Photoemissive}} detection and structured
  receiver performance},}\ }\href {\doibase 10.1109/TIT.1979.1056033}
  {\bibfield  {journal} {\bibinfo  {journal} {IEEE Transactions on Information
  Theory}\ }\textbf {\bibinfo {volume} {25}},\ \bibinfo {pages} {179--192}
  (\bibinfo {year} {1979})}\BibitemShut {NoStop}%
\bibitem [{\citenamefont {Kolobov}\ and\ \citenamefont
  {Fabre}(2000)}]{kolobov_fabre}%
  \BibitemOpen
  \bibfield  {author} {\bibinfo {author} {\bibfnamefont {Mikhail~I.}\
  \bibnamefont {Kolobov}}\ and\ \bibinfo {author} {\bibfnamefont {Claude}\
  \bibnamefont {Fabre}},\ }\bibfield  {title} {\enquote {\bibinfo {title}
  {Quantum limits on optical resolution},}\ }\href {\doibase
  10.1103/PhysRevLett.85.3789} {\bibfield  {journal} {\bibinfo  {journal}
  {Physical Review Letters}\ }\textbf {\bibinfo {volume} {85}},\ \bibinfo
  {pages} {3789--3792} (\bibinfo {year} {2000})}\BibitemShut {NoStop}%
\bibitem [{\citenamefont {Beskrovnyy}\ and\ \citenamefont
  {Kolobov}(2005)}]{beskrovnyy05}%
  \BibitemOpen
  \bibfield  {author} {\bibinfo {author} {\bibfnamefont {Vladislav~N.}\
  \bibnamefont {Beskrovnyy}}\ and\ \bibinfo {author} {\bibfnamefont
  {Mikhail~I.}\ \bibnamefont {Kolobov}},\ }\bibfield  {title} {\enquote
  {\bibinfo {title} {Quantum limits of super-resolution in reconstruction of
  optical objects},}\ }\href {\doibase 10.1103/PhysRevA.71.043802} {\bibfield
  {journal} {\bibinfo  {journal} {Physical Review A}\ }\textbf {\bibinfo
  {volume} {71}},\ \bibinfo {pages} {043802} (\bibinfo {year}
  {2005})}\BibitemShut {NoStop}%
\bibitem [{\citenamefont {Pinel}\ \emph {et~al.}(2012)\citenamefont {Pinel},
  \citenamefont {Fade}, \citenamefont {Braun}, \citenamefont {Jian},
  \citenamefont {Treps},\ and\ \citenamefont {Fabre}}]{pinel12}%
  \BibitemOpen
  \bibfield  {author} {\bibinfo {author} {\bibfnamefont {Olivier}\ \bibnamefont
  {Pinel}}, \bibinfo {author} {\bibfnamefont {Julien}\ \bibnamefont {Fade}},
  \bibinfo {author} {\bibfnamefont {Daniel}\ \bibnamefont {Braun}}, \bibinfo
  {author} {\bibfnamefont {Pu}~\bibnamefont {Jian}}, \bibinfo {author}
  {\bibfnamefont {Nicolas}\ \bibnamefont {Treps}}, \ and\ \bibinfo {author}
  {\bibfnamefont {Claude}\ \bibnamefont {Fabre}},\ }\bibfield  {title}
  {\enquote {\bibinfo {title} {Ultimate sensitivity of precision measurements
  with intense {{Gaussian}} quantum light: {{A}} multimodal approach},}\ }\href
  {\doibase 10.1103/PhysRevA.85.010101} {\bibfield  {journal} {\bibinfo
  {journal} {Physical Review A}\ }\textbf {\bibinfo {volume} {85}},\ \bibinfo
  {pages} {010101} (\bibinfo {year} {2012})}\BibitemShut {NoStop}%
\bibitem [{\citenamefont {Taylor}\ and\ \citenamefont
  {Bowen}(2016)}]{taylor16}%
  \BibitemOpen
  \bibfield  {author} {\bibinfo {author} {\bibfnamefont {Michael~A.}\
  \bibnamefont {Taylor}}\ and\ \bibinfo {author} {\bibfnamefont {Warwick~P.}\
  \bibnamefont {Bowen}},\ }\bibfield  {title} {\enquote {\bibinfo {title}
  {Quantum metrology and its application in biology},}\ }\href {\doibase
  10.1016/j.physrep.2015.12.002} {\bibfield  {journal} {\bibinfo  {journal}
  {Physics Reports}\ }\textbf {\bibinfo {volume} {615}},\ \bibinfo {pages}
  {1--59} (\bibinfo {year} {2016})}\BibitemShut {NoStop}%
\bibitem [{\citenamefont {Treps}\ \emph {et~al.}(2003)\citenamefont {Treps},
  \citenamefont {Grosse}, \citenamefont {Bowen}, \citenamefont {Fabre},
  \citenamefont {Bachor},\ and\ \citenamefont {Lam}}]{treps}%
  \BibitemOpen
  \bibfield  {author} {\bibinfo {author} {\bibfnamefont {Nicolas}\ \bibnamefont
  {Treps}}, \bibinfo {author} {\bibfnamefont {Nicolai}\ \bibnamefont {Grosse}},
  \bibinfo {author} {\bibfnamefont {Warwick~P.}\ \bibnamefont {Bowen}},
  \bibinfo {author} {\bibfnamefont {Claude}\ \bibnamefont {Fabre}}, \bibinfo
  {author} {\bibfnamefont {Hans-A.}\ \bibnamefont {Bachor}}, \ and\ \bibinfo
  {author} {\bibfnamefont {Ping~Koy}\ \bibnamefont {Lam}},\ }\bibfield  {title}
  {\enquote {\bibinfo {title} {A quantum laser pointer},}\ }\href {\doibase
  10.1126/science.1086489} {\bibfield  {journal} {\bibinfo  {journal}
  {Science}\ }\textbf {\bibinfo {volume} {301}},\ \bibinfo {pages} {940--943}
  (\bibinfo {year} {2003})}\BibitemShut {NoStop}%
\bibitem [{\citenamefont {Goodman}(1985)}]{goodman_stat}%
  \BibitemOpen
  \bibfield  {author} {\bibinfo {author} {\bibfnamefont {Joseph~W.}\
  \bibnamefont {Goodman}},\ }\href@noop {} {\emph {\bibinfo {title}
  {Statistical Optics}}}\ (\bibinfo  {publisher} {Wiley},\ \bibinfo {address}
  {New York},\ \bibinfo {year} {1985})\BibitemShut {NoStop}%
\bibitem [{\citenamefont {Zmuidzinas}(2003)}]{zmuidzinas03}%
  \BibitemOpen
  \bibfield  {author} {\bibinfo {author} {\bibfnamefont {Jonas}\ \bibnamefont
  {Zmuidzinas}},\ }\bibfield  {title} {\enquote {\bibinfo {title}
  {Cram\'er--{{Rao}} sensitivity limits for astronomical instruments:
  Implications for interferometer design},}\ }\href {\doibase
  10.1364/JOSAA.20.000218} {\bibfield  {journal} {\bibinfo  {journal} {Journal
  of the Optical Society of America A}\ }\textbf {\bibinfo {volume} {20}},\
  \bibinfo {pages} {218--233} (\bibinfo {year} {2003})}\BibitemShut {NoStop}%
\bibitem [{\citenamefont {Pawley}(2006)}]{pawley}%
  \BibitemOpen
  \bibinfo {editor} {\bibfnamefont {James~B.}\ \bibnamefont {Pawley}},\ ed.,\
  \href {\doibase 10.1007/978-0-387-45524-2} {\emph {\bibinfo {title} {Handbook
  of Biological Confocal Microscopy}}}\ (\bibinfo  {publisher} {Springer},\
  \bibinfo {address} {New York},\ \bibinfo {year} {2006})\BibitemShut {NoStop}%
\bibitem [{\citenamefont {Deschout}\ \emph {et~al.}(2014)\citenamefont
  {Deschout}, \citenamefont {Zanacchi}, \citenamefont {Mlodzianoski},
  \citenamefont {Diaspro}, \citenamefont {Bewersdorf}, \citenamefont {Hess},\
  and\ \citenamefont {Braeckmans}}]{deschout}%
  \BibitemOpen
  \bibfield  {author} {\bibinfo {author} {\bibfnamefont {Hendrik}\ \bibnamefont
  {Deschout}}, \bibinfo {author} {\bibfnamefont {Francesca~Cella}\ \bibnamefont
  {Zanacchi}}, \bibinfo {author} {\bibfnamefont {Michael}\ \bibnamefont
  {Mlodzianoski}}, \bibinfo {author} {\bibfnamefont {Alberto}\ \bibnamefont
  {Diaspro}}, \bibinfo {author} {\bibfnamefont {Joerg}\ \bibnamefont
  {Bewersdorf}}, \bibinfo {author} {\bibfnamefont {Samuel~T.}\ \bibnamefont
  {Hess}}, \ and\ \bibinfo {author} {\bibfnamefont {Kevin}\ \bibnamefont
  {Braeckmans}},\ }\bibfield  {title} {\enquote {\bibinfo {title} {Precisely
  and accurately localizing single emitters in fluorescence microscopy},}\
  }\href {\doibase 10.1038/nmeth.2843} {\bibfield  {journal} {\bibinfo
  {journal} {Nature Methods}\ }\textbf {\bibinfo {volume} {11}},\ \bibinfo
  {pages} {253--266} (\bibinfo {year} {2014})}\BibitemShut {NoStop}%
\bibitem [{\citenamefont {Chao}\ \emph {et~al.}(2016)\citenamefont {Chao},
  \citenamefont {Sally~Ward},\ and\ \citenamefont {Ober}}]{chao16}%
  \BibitemOpen
  \bibfield  {author} {\bibinfo {author} {\bibfnamefont {Jerry}\ \bibnamefont
  {Chao}}, \bibinfo {author} {\bibfnamefont {E.}~\bibnamefont {Sally~Ward}}, \
  and\ \bibinfo {author} {\bibfnamefont {Raimund~J.}\ \bibnamefont {Ober}},\
  }\bibfield  {title} {\enquote {\bibinfo {title} {Fisher information theory
  for parameter estimation in single molecule microscopy: Tutorial},}\ }\href
  {\doibase 10.1364/JOSAA.33.000B36} {\bibfield  {journal} {\bibinfo  {journal}
  {Journal of the Optical Society of America A}\ }\textbf {\bibinfo {volume}
  {33}},\ \bibinfo {pages} {B36} (\bibinfo {year} {2016})}\BibitemShut
  {NoStop}%
\bibitem [{\citenamefont {{von Diezmann}}\ \emph {et~al.}(2017)\citenamefont
  {{von Diezmann}}, \citenamefont {Shechtman},\ and\ \citenamefont
  {Moerner}}]{diezmann17}%
  \BibitemOpen
  \bibfield  {author} {\bibinfo {author} {\bibfnamefont {Alex}\ \bibnamefont
  {{von Diezmann}}}, \bibinfo {author} {\bibfnamefont {Yoav}\ \bibnamefont
  {Shechtman}}, \ and\ \bibinfo {author} {\bibfnamefont {W.~E.}\ \bibnamefont
  {Moerner}},\ }\bibfield  {title} {\enquote {\bibinfo {title}
  {Three-{{Dimensional Localization}} of {{Single Molecules}} for
  {{Super-Resolution Imaging}} and {{Single-Particle Tracking}}},}\ }\href
  {\doibase 10.1021/acs.chemrev.6b00629} {\bibfield  {journal} {\bibinfo
  {journal} {Chemical Reviews}\ }\textbf {\bibinfo {volume} {117}},\ \bibinfo
  {pages} {7244--7275} (\bibinfo {year} {2017})}\BibitemShut {NoStop}%
\bibitem [{\citenamefont {Tsang}(2023{\natexlab{a}})}]{superosc_ieee}%
  \BibitemOpen
  \bibfield  {author} {\bibinfo {author} {\bibfnamefont {Mankei}\ \bibnamefont
  {Tsang}},\ }\bibfield  {title} {\enquote {\bibinfo {title} {Efficient
  superoscillation measurement for incoherent optical imaging},}\ }\href
  {\doibase 10.1109/JSTSP.2022.3212173} {\bibfield  {journal} {\bibinfo
  {journal} {IEEE Journal of Selected Topics in Signal Processing}\ }\textbf
  {\bibinfo {volume} {17}},\ \bibinfo {pages} {513--524} (\bibinfo {year}
  {2023}{\natexlab{a}})}\BibitemShut {NoStop}%
\bibitem [{\citenamefont {Kim}\ \emph {et~al.}(2025)\citenamefont {Kim},
  \citenamefont {Fitzgerald}, \citenamefont {Vievard}, \citenamefont {Lin},
  \citenamefont {Xin}, \citenamefont {Lucas}, \citenamefont {Guyon},
  \citenamefont {Lozi}, \citenamefont {Deo}, \citenamefont {Huby},
  \citenamefont {Lacour}, \citenamefont {Lallement}, \citenamefont
  {{Amezcua-Correa}}, \citenamefont {{Leon-Saval}}, \citenamefont {Norris},
  \citenamefont {Nowak}, \citenamefont {Sallum}, \citenamefont {Sarrazin},
  \citenamefont {Taras}, \citenamefont {Yerolatsitis},\ and\ \citenamefont
  {Jovanovic}}]{kim25}%
  \BibitemOpen
  \bibfield  {author} {\bibinfo {author} {\bibfnamefont {Yoo~Jung}\
  \bibnamefont {Kim}}, \bibinfo {author} {\bibfnamefont {Michael~P.}\
  \bibnamefont {Fitzgerald}}, \bibinfo {author} {\bibfnamefont {S{\'e}bastien}\
  \bibnamefont {Vievard}}, \bibinfo {author} {\bibfnamefont {Jonathan}\
  \bibnamefont {Lin}}, \bibinfo {author} {\bibfnamefont {Yinzi}\ \bibnamefont
  {Xin}}, \bibinfo {author} {\bibfnamefont {Miles}\ \bibnamefont {Lucas}},
  \bibinfo {author} {\bibfnamefont {Olivier}\ \bibnamefont {Guyon}}, \bibinfo
  {author} {\bibfnamefont {Julien}\ \bibnamefont {Lozi}}, \bibinfo {author}
  {\bibfnamefont {Vincent}\ \bibnamefont {Deo}}, \bibinfo {author}
  {\bibfnamefont {Elsa}\ \bibnamefont {Huby}}, \bibinfo {author} {\bibfnamefont
  {Sylvestre}\ \bibnamefont {Lacour}}, \bibinfo {author} {\bibfnamefont
  {Manon}\ \bibnamefont {Lallement}}, \bibinfo {author} {\bibfnamefont
  {Rodrigo}\ \bibnamefont {{Amezcua-Correa}}}, \bibinfo {author} {\bibfnamefont
  {Sergio}\ \bibnamefont {{Leon-Saval}}}, \bibinfo {author} {\bibfnamefont
  {Barnaby}\ \bibnamefont {Norris}}, \bibinfo {author} {\bibfnamefont
  {Mathias}\ \bibnamefont {Nowak}}, \bibinfo {author} {\bibfnamefont {Steph}\
  \bibnamefont {Sallum}}, \bibinfo {author} {\bibfnamefont {Jehanne}\
  \bibnamefont {Sarrazin}}, \bibinfo {author} {\bibfnamefont {Adam}\
  \bibnamefont {Taras}}, \bibinfo {author} {\bibfnamefont {Stephanos}\
  \bibnamefont {Yerolatsitis}}, \ and\ \bibinfo {author} {\bibfnamefont
  {Nemanja}\ \bibnamefont {Jovanovic}},\ }\bibfield  {title} {\enquote
  {\bibinfo {title} {On-sky {{Demonstration}} of {{Subdiffraction-limited
  Astronomical Measurement Using}} a {{Photonic Lantern}}},}\ }\href {\doibase
  10.3847/2041-8213/ae0739} {\bibfield  {journal} {\bibinfo  {journal} {The
  Astrophysical Journal Letters}\ }\textbf {\bibinfo {volume} {993}},\ \bibinfo
  {pages} {L3} (\bibinfo {year} {2025})}\BibitemShut {NoStop}%
\bibitem [{\citenamefont {Wallis}\ \emph {et~al.}(2026)\citenamefont {Wallis},
  \citenamefont {McCann}, \citenamefont {Collier}, \citenamefont
  {{Toms-Hardman}}, \citenamefont {Frost}, \citenamefont {{Dix-Matthews}},\
  and\ \citenamefont {Gozzard}}]{wallis26}%
  \BibitemOpen
  \bibfield  {author} {\bibinfo {author} {\bibfnamefont {John~S.}\ \bibnamefont
  {Wallis}}, \bibinfo {author} {\bibfnamefont {Ayden~S.}\ \bibnamefont
  {McCann}}, \bibinfo {author} {\bibfnamefont {Joshua~J.}\ \bibnamefont
  {Collier}}, \bibinfo {author} {\bibfnamefont {Lilani~D.}\ \bibnamefont
  {{Toms-Hardman}}}, \bibinfo {author} {\bibfnamefont {Alex~M.}\ \bibnamefont
  {Frost}}, \bibinfo {author} {\bibfnamefont {Benjamin~P.}\ \bibnamefont
  {{Dix-Matthews}}}, \ and\ \bibinfo {author} {\bibfnamefont {David~R.}\
  \bibnamefont {Gozzard}},\ }\href {\doibase 10.48550/arXiv.2606.18025}
  {\enquote {\bibinfo {title} {On-sky binary source hypothesis testing beyond
  the diffraction limit using spatial mode demultiplexing based detection},}\ }
  (\bibinfo {year} {2026}),\ \Eprint {http://arxiv.org/abs/2606.18025}
  {arXiv:2606.18025 [astro-ph.IM]} \BibitemShut {NoStop}%
\bibitem [{\citenamefont {Lee}(2003)}]{lee}%
  \BibitemOpen
  \bibfield  {author} {\bibinfo {author} {\bibfnamefont {John~M.}\ \bibnamefont
  {Lee}},\ }\href {\doibase 10.1007/978-0-387-21752-9} {\emph {\bibinfo {title}
  {Introduction to {{Smooth Manifolds}}}}}\ (\bibinfo  {publisher}
  {Springer-Verlag},\ \bibinfo {address} {New York},\ \bibinfo {year}
  {2003})\BibitemShut {NoStop}%
\bibitem [{\citenamefont {Lehmann}\ and\ \citenamefont
  {Casella}(1998)}]{lehmann_casella}%
  \BibitemOpen
  \bibfield  {author} {\bibinfo {author} {\bibfnamefont {E.~L.}\ \bibnamefont
  {Lehmann}}\ and\ \bibinfo {author} {\bibfnamefont {George}\ \bibnamefont
  {Casella}},\ }\href {\doibase 10.1007/b98854} {\emph {\bibinfo {title}
  {Theory of {{Point Estimation}}}}},\ \bibinfo {edition} {2nd}\ ed.\ (\bibinfo
   {publisher} {Springer},\ \bibinfo {address} {New York},\ \bibinfo {year}
  {1998})\BibitemShut {NoStop}%
\bibitem [{\citenamefont {{\c C}{\i}nlar}(2011)}]{cinlar}%
  \BibitemOpen
  \bibfield  {author} {\bibinfo {author} {\bibfnamefont {Erhan}\ \bibnamefont
  {{\c C}{\i}nlar}},\ }\href {\doibase 10.1007/978-0-387-87859-1} {\emph
  {\bibinfo {title} {Probability and Stochastics}}}\ (\bibinfo  {publisher}
  {Springer},\ \bibinfo {address} {New York},\ \bibinfo {year}
  {2011})\BibitemShut {NoStop}%
\bibitem [{\citenamefont {Takahashi}(1990)}]{takahashi90}%
  \BibitemOpen
  \bibfield  {author} {\bibinfo {author} {\bibfnamefont {Yoichiro}\
  \bibnamefont {Takahashi}},\ }\bibfield  {title} {\enquote {\bibinfo {title}
  {Absolute continuity of poisson random fields},}\ }\href {\doibase
  10.2977/prims/1195170849} {\bibfield  {journal} {\bibinfo  {journal}
  {Publications of the Research Institute for Mathematical Sciences}\ }\textbf
  {\bibinfo {volume} {26}},\ \bibinfo {pages} {629--647} (\bibinfo {year}
  {1990})}\BibitemShut {NoStop}%
\bibitem [{\citenamefont {Bogachev}(1998)}]{bogachev_gauss}%
  \BibitemOpen
  \bibfield  {author} {\bibinfo {author} {\bibfnamefont {Vladimir~I.}\
  \bibnamefont {Bogachev}},\ }\href@noop {} {\emph {\bibinfo {title} {Gaussian
  Measures}}}\ (\bibinfo  {publisher} {American Mathematical Society},\
  \bibinfo {address} {Providence, Rhode Island},\ \bibinfo {year}
  {1998})\BibitemShut {NoStop}%
\bibitem [{\citenamefont {Sekine}(1995)}]{sekine95}%
  \BibitemOpen
  \bibfield  {author} {\bibinfo {author} {\bibfnamefont {Jun}\ \bibnamefont
  {Sekine}},\ }\bibfield  {title} {\enquote {\bibinfo {title} {The {{Hilbert
  Riemannian}} structure of equivalent {{Gaussian}} measures associated with
  the {{Fisher}} information},}\ }\href
  {https://projecteuclid.org/journals/osaka-journal-of-mathematics/volume-32/issue-1/The-Hilbert-Riemannian-structure-of-equivalent-Gaussian-measures-associated-with/ojm/1200785865.full}
  {\bibfield  {journal} {\bibinfo  {journal} {Osaka Journal of Mathematics}\
  }\textbf {\bibinfo {volume} {32}},\ \bibinfo {pages} {71--95} (\bibinfo
  {year} {1995})}\BibitemShut {NoStop}%
\bibitem [{\citenamefont {Kay}(1993)}]{kay}%
  \BibitemOpen
  \bibfield  {author} {\bibinfo {author} {\bibfnamefont {Steven~M.}\
  \bibnamefont {Kay}},\ }\href@noop {} {\emph {\bibinfo {title} {Fundamentals
  of Statistical Signal Processing: {{Estimation}} Theory}}}\ (\bibinfo
  {publisher} {Prentice Hall},\ \bibinfo {address} {Upper Saddle River},\
  \bibinfo {year} {1993})\BibitemShut {NoStop}%
\bibitem [{\citenamefont {Petz}(2010)}]{petz}%
  \BibitemOpen
  \bibfield  {author} {\bibinfo {author} {\bibfnamefont {D{\'e}nes}\
  \bibnamefont {Petz}},\ }\href {\doibase 10.1007/978-3-540-74636-2} {\emph
  {\bibinfo {title} {Quantum Information Theory and Quantum Statistics}}}\
  (\bibinfo  {publisher} {Springer},\ \bibinfo {address} {Berlin, Germany},\
  \bibinfo {year} {2010})\BibitemShut {NoStop}%
\bibitem [{\citenamefont {Hayashi}(2017)}]{hayashi}%
  \BibitemOpen
  \bibfield  {author} {\bibinfo {author} {\bibfnamefont {Masahito}\
  \bibnamefont {Hayashi}},\ }\href {\doibase 10.1007/978-3-662-49725-8} {\emph
  {\bibinfo {title} {Quantum {{Information Theory}}: {{Mathematical
  Foundation}}}}},\ \bibinfo {edition} {2nd}\ ed.\ (\bibinfo  {publisher}
  {Springer},\ \bibinfo {address} {Berlin},\ \bibinfo {year}
  {2017})\BibitemShut {NoStop}%
\bibitem [{\citenamefont {Parthasarathy}(1992)}]{parth_qsc}%
  \BibitemOpen
  \bibfield  {author} {\bibinfo {author} {\bibfnamefont {K.~R.}\ \bibnamefont
  {Parthasarathy}},\ }\href {\doibase 10.1007/978-3-0348-0566-7} {\emph
  {\bibinfo {title} {An Introduction to Quantum Stochastic Calculus}}}\
  (\bibinfo  {publisher} {Birkh\"auser},\ \bibinfo {address} {Basel},\ \bibinfo
  {year} {1992})\BibitemShut {NoStop}%
\bibitem [{\citenamefont {Holevo}(2011)}]{holevo_aspect}%
  \BibitemOpen
  \bibfield  {author} {\bibinfo {author} {\bibfnamefont {Alexander~S.}\
  \bibnamefont {Holevo}},\ }\href {\doibase 10.1007/978-88-7642-378-9} {\emph
  {\bibinfo {title} {Probabilistic and Statistical Aspects of Quantum
  Theory}}}\ (\bibinfo  {publisher} {Scuola Normale Superiore Pisa},\ \bibinfo
  {address} {Pisa, Italy},\ \bibinfo {year} {2011})\BibitemShut {NoStop}%
\bibitem [{\citenamefont {Monras}(2013)}]{monras13}%
  \BibitemOpen
  \bibfield  {author} {\bibinfo {author} {\bibfnamefont {Alex}\ \bibnamefont
  {Monras}},\ }\bibfield  {title} {\enquote {\bibinfo {title} {Phase space
  formalism for quantum estimation of {{Gaussian}} states},}\ }\href
  {http://arxiv.org/abs/1303.3682} {\bibfield  {journal} {\bibinfo  {journal}
  {arXiv:1303.3682 [quant-ph]}\ } (\bibinfo {year} {2013})},\ \Eprint
  {http://arxiv.org/abs/1303.3682} {arXiv:1303.3682 [quant-ph]} \BibitemShut
  {NoStop}%
\bibitem [{\citenamefont {Conway}(2007)}]{conway}%
  \BibitemOpen
  \bibfield  {author} {\bibinfo {author} {\bibfnamefont {John~B.}\ \bibnamefont
  {Conway}},\ }\href {https://link.springer.com/book/10.1007/978-1-4757-4383-8}
  {\emph {\bibinfo {title} {A Course in Functional Analysis}}}\ (\bibinfo
  {publisher} {Springer},\ \bibinfo {address} {New York, NY, USA},\ \bibinfo
  {year} {2007})\BibitemShut {NoStop}%
\bibitem [{\citenamefont {Reed}\ and\ \citenamefont
  {Simon}(1980)}]{reed_simon}%
  \BibitemOpen
  \bibfield  {author} {\bibinfo {author} {\bibfnamefont {Michael}\ \bibnamefont
  {Reed}}\ and\ \bibinfo {author} {\bibfnamefont {Barry}\ \bibnamefont
  {Simon}},\ }\href@noop {} {\emph {\bibinfo {title} {Methods of Modern
  Mathematical Physics. {{I}}: {{Functional}} Analysis}}}\ (\bibinfo
  {publisher} {Academic Press},\ \bibinfo {address} {San Diego},\ \bibinfo
  {year} {1980})\BibitemShut {NoStop}%
\bibitem [{\citenamefont {Debnath}\ and\ \citenamefont
  {Mikusi{\'n}ski}(2005)}]{debnath}%
  \BibitemOpen
  \bibfield  {author} {\bibinfo {author} {\bibfnamefont {Lokenath}\
  \bibnamefont {Debnath}}\ and\ \bibinfo {author} {\bibfnamefont {Piotr}\
  \bibnamefont {Mikusi{\'n}ski}},\ }\href@noop {} {\emph {\bibinfo {title}
  {Introduction to Hilbert Spaces with Applications}}}\ (\bibinfo  {publisher}
  {Elsevier},\ \bibinfo {address} {Amsterdam},\ \bibinfo {year}
  {2005})\BibitemShut {NoStop}%
\bibitem [{\citenamefont {Gross}\ and\ \citenamefont {Caves}(2020)}]{gross20}%
  \BibitemOpen
  \bibfield  {author} {\bibinfo {author} {\bibfnamefont {Jonathan~Arthur}\
  \bibnamefont {Gross}}\ and\ \bibinfo {author} {\bibfnamefont {Carlton~M.}\
  \bibnamefont {Caves}},\ }\bibfield  {title} {\enquote {\bibinfo {title} {One
  from many: {{Estimating}} a function of many parameters},}\ }\href {\doibase
  10.1088/1751-8121/abb9ed} {\bibfield  {journal} {\bibinfo  {journal} {Journal
  of Physics A: Mathematical and Theoretical}\ }\textbf {\bibinfo {volume}
  {54}},\ \bibinfo {pages} {014001} (\bibinfo {year} {2020})}\BibitemShut
  {NoStop}%
\bibitem [{\citenamefont {Nagaoka}(1989)}]{nagaoka89}%
  \BibitemOpen
  \bibfield  {author} {\bibinfo {author} {\bibfnamefont {Hiroshi}\ \bibnamefont
  {Nagaoka}},\ }\bibfield  {title} {\enquote {\bibinfo {title} {A new approach
  to {{Cram\'er-Rao}} bounds for quantum state estimation},}\ }\href@noop {}
  {\bibfield  {journal} {\bibinfo  {journal} {IEICE Technical Report}\ }\textbf
  {\bibinfo {volume} {IT 89-42}},\ \bibinfo {pages} {9--14} (\bibinfo {year}
  {1989})},\ \bibinfo {note} {reprinted in
  \href{http://dx.doi.org/10.1142/9789812563071_0009}{\emph{Asymptotic Theory
  of Quantum Statistical Inference: Selected Papers}, edited by Masahito
  Hayashi (World Scientific, Singapore, 2005) Chap. 8,
  pp.~100--112}.}\BibitemShut {Stop}%
\bibitem [{\citenamefont {Hayashi}(2024)}]{hayashi24}%
  \BibitemOpen
  \bibfield  {author} {\bibinfo {author} {\bibfnamefont {Masahito}\
  \bibnamefont {Hayashi}},\ }\bibfield  {title} {\enquote {\bibinfo {title}
  {Alexander {{S}}. {{Holevo}}'s researches in quantum information theory in
  20th century},}\ }\href {\doibase 10.1142/S0219749924400069} {\bibfield
  {journal} {\bibinfo  {journal} {International Journal of Quantum
  Information}\ }\textbf {\bibinfo {volume} {22}},\ \bibinfo {pages} {2440006}
  (\bibinfo {year} {2024})}\BibitemShut {NoStop}%
\bibitem [{\citenamefont {Yang}\ \emph {et~al.}(2019)\citenamefont {Yang},
  \citenamefont {Chiribella},\ and\ \citenamefont {Hayashi}}]{yang19}%
  \BibitemOpen
  \bibfield  {author} {\bibinfo {author} {\bibfnamefont {Yuxiang}\ \bibnamefont
  {Yang}}, \bibinfo {author} {\bibfnamefont {Giulio}\ \bibnamefont
  {Chiribella}}, \ and\ \bibinfo {author} {\bibfnamefont {Masahito}\
  \bibnamefont {Hayashi}},\ }\bibfield  {title} {\enquote {\bibinfo {title}
  {Attaining the {{Ultimate Precision Limit}} in {{Quantum State
  Estimation}}},}\ }\href {\doibase 10.1007/s00220-019-03433-4} {\bibfield
  {journal} {\bibinfo  {journal} {Communications in Mathematical Physics}\
  }\textbf {\bibinfo {volume} {368}},\ \bibinfo {pages} {223--293} (\bibinfo
  {year} {2019})}\BibitemShut {NoStop}%
\bibitem [{\citenamefont {{Demkowicz-Dobrza{\'n}ski}}\ \emph
  {et~al.}(2020)\citenamefont {{Demkowicz-Dobrza{\'n}ski}}, \citenamefont
  {G{\'o}recki},\ and\ \citenamefont {Gu{\c t}{\u a}}}]{demkowicz20}%
  \BibitemOpen
  \bibfield  {author} {\bibinfo {author} {\bibfnamefont {Rafa{\l}}\
  \bibnamefont {{Demkowicz-Dobrza{\'n}ski}}}, \bibinfo {author} {\bibfnamefont
  {Wojciech}\ \bibnamefont {G{\'o}recki}}, \ and\ \bibinfo {author}
  {\bibfnamefont {M{\u a}d{\u a}lin}\ \bibnamefont {Gu{\c t}{\u a}}},\
  }\bibfield  {title} {\enquote {\bibinfo {title} {Multi-parameter estimation
  beyond quantum {{Fisher}} information},}\ }\href {\doibase
  10.1088/1751-8121/ab8ef3} {\bibfield  {journal} {\bibinfo  {journal} {Journal
  of Physics A: Mathematical and Theoretical}\ }\textbf {\bibinfo {volume}
  {53}},\ \bibinfo {pages} {363001} (\bibinfo {year} {2020})}\BibitemShut
  {NoStop}%
\bibitem [{\citenamefont {Tsang}(2026)}]{tsang26}%
  \BibitemOpen
  \bibfield  {author} {\bibinfo {author} {\bibfnamefont {Mankei}\ \bibnamefont
  {Tsang}},\ }\bibfield  {title} {\enquote {\bibinfo {title} {Approaching the
  ultimate limit of quantum multiparameter estimation by many-body physics},}\
  }\href {\doibase 10.48550/arXiv.2603.17955} {\bibfield  {journal} {\bibinfo
  {journal} {ArXiv e-prints}\ } (\bibinfo {year} {2026}),\
  10.48550/arXiv.2603.17955},\ \Eprint {http://arxiv.org/abs/2603.17955}
  {2603.17955} \BibitemShut {NoStop}%
\bibitem [{\citenamefont {Fujiwara}(2006)}]{fujiwara06}%
  \BibitemOpen
  \bibfield  {author} {\bibinfo {author} {\bibfnamefont {Akio}\ \bibnamefont
  {Fujiwara}},\ }\bibfield  {title} {\enquote {\bibinfo {title} {Strong
  consistency and asymptotic efficiency for adaptive quantum estimation
  problems},}\ }\href {\doibase 10.1088/0305-4470/39/40/014} {\bibfield
  {journal} {\bibinfo  {journal} {Journal of Physics A: Mathematical and
  General}\ }\textbf {\bibinfo {volume} {39}},\ \bibinfo {pages} {12489--12504}
  (\bibinfo {year} {2006})}\BibitemShut {NoStop}%
\bibitem [{\citenamefont {Parthasarathy}(2005{\natexlab{a}})}]{parth_qm}%
  \BibitemOpen
  \bibfield  {author} {\bibinfo {author} {\bibfnamefont {K.~R.}\ \bibnamefont
  {Parthasarathy}},\ }\href
  {https://link.springer.com/book/10.1007/978-93-86279-28-6} {\emph {\bibinfo
  {title} {Mathematical Foundation of Quantum Mechanics}}}\ (\bibinfo
  {publisher} {Hindustan Book Agency},\ \bibinfo {address} {New Delhi, India},\
  \bibinfo {year} {2005})\BibitemShut {NoStop}%
\bibitem [{\citenamefont {Bratteli}\ and\ \citenamefont
  {Robinson}(1987)}]{bratteli}%
  \BibitemOpen
  \bibfield  {author} {\bibinfo {author} {\bibfnamefont {Ola}\ \bibnamefont
  {Bratteli}}\ and\ \bibinfo {author} {\bibfnamefont {Derek~W.}\ \bibnamefont
  {Robinson}},\ }\href {\doibase 10.1007/978-3-662-02520-8} {\emph {\bibinfo
  {title} {Operator Algebras and Quantum Statistical Mechanics 1}}}\ (\bibinfo
  {publisher} {Springer},\ \bibinfo {address} {Berlin, Germany},\ \bibinfo
  {year} {1987})\BibitemShut {NoStop}%
\bibitem [{\citenamefont {Tsang}(2022)}]{gce_pra}%
  \BibitemOpen
  \bibfield  {author} {\bibinfo {author} {\bibfnamefont {Mankei}\ \bibnamefont
  {Tsang}},\ }\bibfield  {title} {\enquote {\bibinfo {title} {Generalized
  conditional expectations for quantum retrodiction and smoothing},}\ }\href
  {\doibase 10.1103/PhysRevA.105.042213} {\bibfield  {journal} {\bibinfo
  {journal} {Physical Review A}\ }\textbf {\bibinfo {volume} {105}},\ \bibinfo
  {pages} {042213} (\bibinfo {year} {2022})}\BibitemShut {NoStop}%
\bibitem [{\citenamefont {Tsang}(2023{\natexlab{b}})}]{gce2}%
  \BibitemOpen
  \bibfield  {author} {\bibinfo {author} {\bibfnamefont {Mankei}\ \bibnamefont
  {Tsang}},\ }\bibfield  {title} {\enquote {\bibinfo {title} {Operational
  meanings of a generalized conditional expectation in quantum metrology},}\
  }\href {\doibase 10.22331/q-2023-11-03-1162} {\bibfield  {journal} {\bibinfo
  {journal} {Quantum}\ }\textbf {\bibinfo {volume} {7}},\ \bibinfo {pages}
  {1162} (\bibinfo {year} {2023}{\natexlab{b}})},\ \Eprint
  {http://arxiv.org/abs/2212.13162v6} {2212.13162v6} \BibitemShut {NoStop}%
\bibitem [{\citenamefont {Stoica}\ and\ \citenamefont
  {Marzetta}(2001)}]{stoica01}%
  \BibitemOpen
  \bibfield  {author} {\bibinfo {author} {\bibfnamefont {P.}~\bibnamefont
  {Stoica}}\ and\ \bibinfo {author} {\bibfnamefont {T.~L.}\ \bibnamefont
  {Marzetta}},\ }\bibfield  {title} {\enquote {\bibinfo {title} {Parameter
  estimation problems with singular information matrices},}\ }\href {\doibase
  10.1109/78.890346} {\bibfield  {journal} {\bibinfo  {journal} {IEEE
  Transactions on Signal Processing}\ }\textbf {\bibinfo {volume} {49}},\
  \bibinfo {pages} {87--90} (\bibinfo {year} {2001})}\BibitemShut {NoStop}%
\bibitem [{\citenamefont {Goldberg}\ \emph {et~al.}(2021)\citenamefont
  {Goldberg}, \citenamefont {Romero}, \citenamefont {Sanz},\ and\ \citenamefont
  {{S{\'a}nchez-Soto}}}]{goldberg21}%
  \BibitemOpen
  \bibfield  {author} {\bibinfo {author} {\bibfnamefont {Aaron~Z.}\
  \bibnamefont {Goldberg}}, \bibinfo {author} {\bibfnamefont {Jos{\'e}~L.}\
  \bibnamefont {Romero}}, \bibinfo {author} {\bibfnamefont {{\'A}ngel~S.}\
  \bibnamefont {Sanz}}, \ and\ \bibinfo {author} {\bibfnamefont {Luis~L.}\
  \bibnamefont {{S{\'a}nchez-Soto}}},\ }\bibfield  {title} {\enquote {\bibinfo
  {title} {Taming singularities of the quantum {{Fisher}} information},}\
  }\href {\doibase 10.1142/S0219749921400049} {\bibfield  {journal} {\bibinfo
  {journal} {International Journal of Quantum Information}\ }\textbf {\bibinfo
  {volume} {19}},\ \bibinfo {pages} {2140004} (\bibinfo {year}
  {2021})}\BibitemShut {NoStop}%
\bibitem [{\citenamefont {Kwon}\ \emph {et~al.}(2025)\citenamefont {Kwon},
  \citenamefont {Tsubouchi}, \citenamefont {Chu},\ and\ \citenamefont
  {Jiang}}]{kwon25}%
  \BibitemOpen
  \bibfield  {author} {\bibinfo {author} {\bibfnamefont {Hyukgun}\ \bibnamefont
  {Kwon}}, \bibinfo {author} {\bibfnamefont {Kento}\ \bibnamefont {Tsubouchi}},
  \bibinfo {author} {\bibfnamefont {Chia-Tung}\ \bibnamefont {Chu}}, \ and\
  \bibinfo {author} {\bibfnamefont {Liang}\ \bibnamefont {Jiang}},\ }\bibfield
  {title} {\enquote {\bibinfo {title} {Criteria for unbiased estimation:
  Applications to noise-agnostic sensing and learnability of quantum
  channel},}\ }\href {\doibase 10.48550/arXiv.2503.17362} {\bibfield  {journal}
  {\bibinfo  {journal} {ArXiv e-prints}\ } (\bibinfo {year} {2025}),\
  10.48550/arXiv.2503.17362},\ \Eprint {http://arxiv.org/abs/2503.17362}
  {2503.17362} \BibitemShut {NoStop}%
\bibitem [{\citenamefont {{Ben-Israel}}\ and\ \citenamefont
  {Greville}(2003)}]{benisrael}%
  \BibitemOpen
  \bibfield  {author} {\bibinfo {author} {\bibfnamefont {Adi}\ \bibnamefont
  {{Ben-Israel}}}\ and\ \bibinfo {author} {\bibfnamefont {Thomas N.~E.}\
  \bibnamefont {Greville}},\ }\href {\doibase 10.1007/b97366} {\emph {\bibinfo
  {title} {Generalized Inverses}}},\ \bibinfo {edition} {2nd}\ ed.\ (\bibinfo
  {publisher} {Springer},\ \bibinfo {address} {New York, NY, USA},\ \bibinfo
  {year} {2003})\BibitemShut {NoStop}%
\bibitem [{\citenamefont {Ibragimov}\ and\ \citenamefont
  {Has'minskii}(1981)}]{ibragimov}%
  \BibitemOpen
  \bibfield  {author} {\bibinfo {author} {\bibfnamefont {I.~A.}\ \bibnamefont
  {Ibragimov}}\ and\ \bibinfo {author} {\bibfnamefont {R.~Z.}\ \bibnamefont
  {Has'minskii}},\ }\href {\doibase 10.1007/978-1-4899-0027-2} {\emph {\bibinfo
  {title} {Statistical {{Estimation}}: {{Asymptotic Theory}}}}}\ (\bibinfo
  {publisher} {Springer},\ \bibinfo {address} {New York},\ \bibinfo {year}
  {1981})\BibitemShut {NoStop}%
\bibitem [{\citenamefont {Parthasarathy}(2005{\natexlab{b}})}]{parth_pm}%
  \BibitemOpen
  \bibfield  {author} {\bibinfo {author} {\bibfnamefont {K.~R.}\ \bibnamefont
  {Parthasarathy}},\ }\href {\doibase 10.1007/978-93-86279-27-9} {\emph
  {\bibinfo {title} {Introduction to {{Probability}} and {{Measure}}}}}\
  (\bibinfo  {publisher} {Hindustan Book Agency},\ \bibinfo {address} {New
  Delhi},\ \bibinfo {year} {2005})\BibitemShut {NoStop}%
\bibitem [{\citenamefont {Holevo}(2001)}]{holevo_structure}%
  \BibitemOpen
  \bibfield  {author} {\bibinfo {author} {\bibfnamefont {Alexander~S.}\
  \bibnamefont {Holevo}},\ }\href {\doibase 10.1007/3-540-44998-1} {\emph
  {\bibinfo {title} {Statistical Structure of Quantum Theory}}}\ (\bibinfo
  {publisher} {Springer-Verlag},\ \bibinfo {address} {Berlin},\ \bibinfo {year}
  {2001})\BibitemShut {NoStop}%
\bibitem [{\citenamefont {Slepian}(1983)}]{slepian83}%
  \BibitemOpen
  \bibfield  {author} {\bibinfo {author} {\bibfnamefont {David}\ \bibnamefont
  {Slepian}},\ }\bibfield  {title} {\enquote {\bibinfo {title} {Some
  {{Comments}} on {{Fourier Analysis}}, {{Uncertainty}} and {{Modeling}}},}\
  }\href {https://www.jstor.org/stable/2029386} {\bibfield  {journal} {\bibinfo
   {journal} {SIAM Review}\ }\textbf {\bibinfo {volume} {25}},\ \bibinfo
  {pages} {379--393} (\bibinfo {year} {1983})},\ \Eprint
  {http://arxiv.org/abs/2029386} {2029386} \BibitemShut {NoStop}%
\bibitem [{\citenamefont {Bertero}\ \emph {et~al.}(1989)\citenamefont
  {Bertero}, \citenamefont {Boccacci}, \citenamefont {Defrise}, \citenamefont
  {Mol},\ and\ \citenamefont {Pike}}]{bertero89a}%
  \BibitemOpen
  \bibfield  {author} {\bibinfo {author} {\bibfnamefont {M.}~\bibnamefont
  {Bertero}}, \bibinfo {author} {\bibfnamefont {P.}~\bibnamefont {Boccacci}},
  \bibinfo {author} {\bibfnamefont {M.}~\bibnamefont {Defrise}}, \bibinfo
  {author} {\bibfnamefont {C.~De}\ \bibnamefont {Mol}}, \ and\ \bibinfo
  {author} {\bibfnamefont {E.~R.}\ \bibnamefont {Pike}},\ }\bibfield  {title}
  {\enquote {\bibinfo {title} {Super-resolution in confocal scanning
  microscopy: {{II}}. {{The}} incoherent case},}\ }\href {\doibase
  10.1088/0266-5611/5/4/003} {\bibfield  {journal} {\bibinfo  {journal}
  {Inverse Problems}\ }\textbf {\bibinfo {volume} {5}},\ \bibinfo {pages} {441}
  (\bibinfo {year} {1989})}\BibitemShut {NoStop}%
\bibitem [{\citenamefont {Bertero}\ \emph {et~al.}(1991)\citenamefont
  {Bertero}, \citenamefont {Boccacci}, \citenamefont {Davies},\ and\
  \citenamefont {Pike}}]{bertero91}%
  \BibitemOpen
  \bibfield  {author} {\bibinfo {author} {\bibfnamefont {M.}~\bibnamefont
  {Bertero}}, \bibinfo {author} {\bibfnamefont {P.}~\bibnamefont {Boccacci}},
  \bibinfo {author} {\bibfnamefont {R.~E.}\ \bibnamefont {Davies}}, \ and\
  \bibinfo {author} {\bibfnamefont {E.~R.}\ \bibnamefont {Pike}},\ }\bibfield
  {title} {\enquote {\bibinfo {title} {Super-resolution in confocal scanning
  microscopy: {{III}}. {{The}} case of circular pupils},}\ }\href {\doibase
  10.1088/0266-5611/7/5/002} {\bibfield  {journal} {\bibinfo  {journal}
  {Inverse Problems}\ }\textbf {\bibinfo {volume} {7}},\ \bibinfo {pages} {655}
  (\bibinfo {year} {1991})}\BibitemShut {NoStop}%
\bibitem [{\citenamefont {Pa{\'u}r}\ \emph {et~al.}(2018)\citenamefont
  {Pa{\'u}r}, \citenamefont {Stoklasa}, \citenamefont {Grover}, \citenamefont
  {Krzic}, \citenamefont {{S{\'a}nchez-Soto}}, \citenamefont {Hradil},\ and\
  \citenamefont {{\v R}eh{\'a}{\v c}ek}}]{paur18}%
  \BibitemOpen
  \bibfield  {author} {\bibinfo {author} {\bibfnamefont {Martin}\ \bibnamefont
  {Pa{\'u}r}}, \bibinfo {author} {\bibfnamefont {Bohumil}\ \bibnamefont
  {Stoklasa}}, \bibinfo {author} {\bibfnamefont {Jai}\ \bibnamefont {Grover}},
  \bibinfo {author} {\bibfnamefont {Andrej}\ \bibnamefont {Krzic}}, \bibinfo
  {author} {\bibfnamefont {Luis~L.}\ \bibnamefont {{S{\'a}nchez-Soto}}},
  \bibinfo {author} {\bibfnamefont {Zden{\v e}k}\ \bibnamefont {Hradil}}, \
  and\ \bibinfo {author} {\bibfnamefont {Jaroslav}\ \bibnamefont {{\v
  R}eh{\'a}{\v c}ek}},\ }\bibfield  {title} {\enquote {\bibinfo {title}
  {Tempering {{Rayleigh}}'s curse with {{PSF}} shaping},}\ }\href {\doibase
  10.1364/OPTICA.5.001177} {\bibfield  {journal} {\bibinfo  {journal} {Optica}\
  }\textbf {\bibinfo {volume} {5}},\ \bibinfo {pages} {1177--1180} (\bibinfo
  {year} {2018})}\BibitemShut {NoStop}%
\bibitem [{\citenamefont {Pa{\'u}r}\ \emph {et~al.}(2019)\citenamefont
  {Pa{\'u}r}, \citenamefont {Stoklasa}, \citenamefont {Koutn{\'y}},
  \citenamefont {{\v R}eh{\'a}{\v c}ek}, \citenamefont {Hradil}, \citenamefont
  {Grover}, \citenamefont {Krzic},\ and\ \citenamefont
  {{S{\'a}nchez-Soto}}}]{paur19}%
  \BibitemOpen
  \bibfield  {author} {\bibinfo {author} {\bibfnamefont {M.}~\bibnamefont
  {Pa{\'u}r}}, \bibinfo {author} {\bibfnamefont {B.}~\bibnamefont {Stoklasa}},
  \bibinfo {author} {\bibfnamefont {D.}~\bibnamefont {Koutn{\'y}}}, \bibinfo
  {author} {\bibfnamefont {J.}~\bibnamefont {{\v R}eh{\'a}{\v c}ek}}, \bibinfo
  {author} {\bibfnamefont {Z.}~\bibnamefont {Hradil}}, \bibinfo {author}
  {\bibfnamefont {J.}~\bibnamefont {Grover}}, \bibinfo {author} {\bibfnamefont
  {A.}~\bibnamefont {Krzic}}, \ and\ \bibinfo {author} {\bibfnamefont {L.~L.}\
  \bibnamefont {{S{\'a}nchez-Soto}}},\ }\bibfield  {title} {\enquote {\bibinfo
  {title} {Reading out {{Fisher}} information from the zeros of the point
  spread function},}\ }\href {\doibase 10.1364/OL.44.003114} {\bibfield
  {journal} {\bibinfo  {journal} {Optics Letters}\ }\textbf {\bibinfo {volume}
  {44}},\ \bibinfo {pages} {3114--3117} (\bibinfo {year} {2019})}\BibitemShut
  {NoStop}%
\bibitem [{\citenamefont {Booth}\ \emph {et~al.}(2026)\citenamefont {Booth},
  \citenamefont {{Clunies-Ross}}, \citenamefont {Amor}, \citenamefont
  {Mauranyapin}, \citenamefont {Huang}, \citenamefont {Taylor},\ and\
  \citenamefont {Bowen}}]{booth26}%
  \BibitemOpen
  \bibfield  {author} {\bibinfo {author} {\bibfnamefont {Larnii}\ \bibnamefont
  {Booth}}, \bibinfo {author} {\bibfnamefont {Kyle}\ \bibnamefont
  {{Clunies-Ross}}}, \bibinfo {author} {\bibfnamefont {Rumelo}\ \bibnamefont
  {Amor}}, \bibinfo {author} {\bibfnamefont {Nicolas}\ \bibnamefont
  {Mauranyapin}}, \bibinfo {author} {\bibfnamefont {Zixin}\ \bibnamefont
  {Huang}}, \bibinfo {author} {\bibfnamefont {Michael~A.}\ \bibnamefont
  {Taylor}}, \ and\ \bibinfo {author} {\bibfnamefont {Warwick~P.}\ \bibnamefont
  {Bowen}},\ }\href {\doibase 10.48550/arXiv.2604.00413} {\enquote {\bibinfo
  {title} {Structured detection microscopy},}\ } (\bibinfo {year} {2026}),\
  \Eprint {http://arxiv.org/abs/2604.00413} {arXiv:2604.00413 [physics.optics]}
  \BibitemShut {NoStop}%
\bibitem [{\citenamefont {Darekar}\ \emph {et~al.}(2026)\citenamefont
  {Darekar}, \citenamefont {Jha}, \citenamefont {Grace}, \citenamefont
  {Sajjad},\ and\ \citenamefont {Guha}}]{darekar26}%
  \BibitemOpen
  \bibfield  {author} {\bibinfo {author} {\bibfnamefont {Parth~Hemant}\
  \bibnamefont {Darekar}}, \bibinfo {author} {\bibfnamefont {Amit~Kumar}\
  \bibnamefont {Jha}}, \bibinfo {author} {\bibfnamefont {Michael~R.}\
  \bibnamefont {Grace}}, \bibinfo {author} {\bibfnamefont {Aqil}\ \bibnamefont
  {Sajjad}}, \ and\ \bibinfo {author} {\bibfnamefont {Saikat}\ \bibnamefont
  {Guha}},\ }\href {\doibase 10.48550/arXiv.2606.00968} {\enquote {\bibinfo
  {title} {Fundamental {{Limit}} for {{One}} versus {{Two Point Sources
  Detection}} using {{Direct Imaging}}},}\ } (\bibinfo {year} {2026}),\ \Eprint
  {http://arxiv.org/abs/2606.00968} {arXiv:2606.00968 [physics.optics]}
  \BibitemShut {NoStop}%
\bibitem [{\citenamefont {Tsang}(2011)}]{stellar}%
  \BibitemOpen
  \bibfield  {author} {\bibinfo {author} {\bibfnamefont {Mankei}\ \bibnamefont
  {Tsang}},\ }\bibfield  {title} {\enquote {\bibinfo {title} {Quantum
  nonlocality in weak-thermal-light interferometry},}\ }\href {\doibase
  10.1103/PhysRevLett.107.270402} {\bibfield  {journal} {\bibinfo  {journal}
  {Physical Review Letters}\ }\textbf {\bibinfo {volume} {107}},\ \bibinfo
  {pages} {270402} (\bibinfo {year} {2011})}\BibitemShut {NoStop}%
\bibitem [{\citenamefont {{\v R}eh{\'a}{\v c}ek}\ \emph
  {et~al.}(2017)\citenamefont {{\v R}eh{\'a}{\v c}ek}, \citenamefont
  {Pa{\'u}r}, \citenamefont {Stoklasa}, \citenamefont {Hradil},\ and\
  \citenamefont {{S{\'a}nchez-Soto}}}]{rehacek17}%
  \BibitemOpen
  \bibfield  {author} {\bibinfo {author} {\bibfnamefont {J.}~\bibnamefont {{\v
  R}eh{\'a}{\v c}ek}}, \bibinfo {author} {\bibfnamefont {M.}~\bibnamefont
  {Pa{\'u}r}}, \bibinfo {author} {\bibfnamefont {B.}~\bibnamefont {Stoklasa}},
  \bibinfo {author} {\bibfnamefont {Z.}~\bibnamefont {Hradil}}, \ and\ \bibinfo
  {author} {\bibfnamefont {L.~L.}\ \bibnamefont {{S{\'a}nchez-Soto}}},\
  }\bibfield  {title} {\enquote {\bibinfo {title} {Optimal measurements for
  resolution beyond the {{Rayleigh}} limit},}\ }\href {\doibase
  10.1364/OL.42.000231} {\bibfield  {journal} {\bibinfo  {journal} {Optics
  Letters}\ }\textbf {\bibinfo {volume} {42}},\ \bibinfo {pages} {231--234}
  (\bibinfo {year} {2017})}\BibitemShut {NoStop}%
\bibitem [{\citenamefont {Dunkl}\ and\ \citenamefont {Xu}(2014)}]{dunkl}%
  \BibitemOpen
  \bibfield  {author} {\bibinfo {author} {\bibfnamefont {Charles~F.}\
  \bibnamefont {Dunkl}}\ and\ \bibinfo {author} {\bibfnamefont {Yuan}\
  \bibnamefont {Xu}},\ }\href {\doibase 10.1017/CBO9781107786134} {\emph
  {\bibinfo {title} {Orthogonal Polynomials of Several Variables}}}\ (\bibinfo
  {publisher} {Cambridge University Press},\ \bibinfo {address} {Cambridge},\
  \bibinfo {year} {2014})\BibitemShut {NoStop}%
\bibitem [{\citenamefont {Bisketzi}\ \emph {et~al.}(2019)\citenamefont
  {Bisketzi}, \citenamefont {Branford},\ and\ \citenamefont
  {Datta}}]{bisketzi19}%
  \BibitemOpen
  \bibfield  {author} {\bibinfo {author} {\bibfnamefont {Evangelia}\
  \bibnamefont {Bisketzi}}, \bibinfo {author} {\bibfnamefont {Dominic}\
  \bibnamefont {Branford}}, \ and\ \bibinfo {author} {\bibfnamefont {Animesh}\
  \bibnamefont {Datta}},\ }\bibfield  {title} {\enquote {\bibinfo {title}
  {Quantum limits of localisation microscopy},}\ }\href {\doibase
  10.1088/1367-2630/ab58a0} {\bibfield  {journal} {\bibinfo  {journal} {New
  Journal of Physics}\ }\textbf {\bibinfo {volume} {21}},\ \bibinfo {pages}
  {123032} (\bibinfo {year} {2019})}\BibitemShut {NoStop}%
\bibitem [{\citenamefont {Lee}\ \emph {et~al.}(2023)\citenamefont {Lee},
  \citenamefont {Gagatsos}, \citenamefont {Guha},\ and\ \citenamefont
  {Ashok}}]{lee23}%
  \BibitemOpen
  \bibfield  {author} {\bibinfo {author} {\bibfnamefont {Kwan~Kit}\
  \bibnamefont {Lee}}, \bibinfo {author} {\bibfnamefont {Christos~N.}\
  \bibnamefont {Gagatsos}}, \bibinfo {author} {\bibfnamefont {Saikat}\
  \bibnamefont {Guha}}, \ and\ \bibinfo {author} {\bibfnamefont {Amit}\
  \bibnamefont {Ashok}},\ }\bibfield  {title} {\enquote {\bibinfo {title}
  {Quantum-inspired multi-parameter adaptive bayesian estimation for sensing
  and imaging},}\ }\href {\doibase 10.1109/JSTSP.2022.3214774} {\bibfield
  {journal} {\bibinfo  {journal} {IEEE Journal of Selected Topics in Signal
  Processing}\ }\textbf {\bibinfo {volume} {17}},\ \bibinfo {pages} {491--501}
  (\bibinfo {year} {2023})}\BibitemShut {NoStop}%
\bibitem [{\citenamefont {Bao}\ \emph {et~al.}(2021)\citenamefont {Bao},
  \citenamefont {Choi}, \citenamefont {Aggarwal},\ and\ \citenamefont
  {Jacob}}]{bao21}%
  \BibitemOpen
  \bibfield  {author} {\bibinfo {author} {\bibfnamefont {Fanglin}\ \bibnamefont
  {Bao}}, \bibinfo {author} {\bibfnamefont {Hyunsoo}\ \bibnamefont {Choi}},
  \bibinfo {author} {\bibfnamefont {Vaneet}\ \bibnamefont {Aggarwal}}, \ and\
  \bibinfo {author} {\bibfnamefont {Zubin}\ \bibnamefont {Jacob}},\ }\bibfield
  {title} {\enquote {\bibinfo {title} {Quantum-accelerated imaging of {{N}}
  stars},}\ }\href {\doibase 10.1364/OL.430404} {\bibfield  {journal} {\bibinfo
   {journal} {Optics Letters}\ }\textbf {\bibinfo {volume} {46}},\ \bibinfo
  {pages} {3045--3048} (\bibinfo {year} {2021})}\BibitemShut {NoStop}%
\bibitem [{\citenamefont {Choi}\ \emph {et~al.}(2024)\citenamefont {Choi},
  \citenamefont {Bao},\ and\ \citenamefont {Jacob}}]{choi24}%
  \BibitemOpen
  \bibfield  {author} {\bibinfo {author} {\bibfnamefont {Hyunsoo}\ \bibnamefont
  {Choi}}, \bibinfo {author} {\bibfnamefont {Fanglin}\ \bibnamefont {Bao}}, \
  and\ \bibinfo {author} {\bibfnamefont {Zubin}\ \bibnamefont {Jacob}},\
  }\bibfield  {title} {\enquote {\bibinfo {title} {Adaptive quantum accelerated
  imaging for space domain awareness},}\ }\href {\doibase
  10.1088/1367-2630/ad668c} {\bibfield  {journal} {\bibinfo  {journal} {New
  Journal of Physics}\ }\textbf {\bibinfo {volume} {26}},\ \bibinfo {pages}
  {073050} (\bibinfo {year} {2024})}\BibitemShut {NoStop}%
\bibitem [{\citenamefont {Sheppard}(1988)}]{sheppard88}%
  \BibitemOpen
  \bibfield  {author} {\bibinfo {author} {\bibfnamefont {C.~J.~R.}\
  \bibnamefont {Sheppard}},\ }\bibfield  {title} {\enquote {\bibinfo {title}
  {Super-resolution in confocal imaging},}\ }\href@noop {} {\bibfield
  {journal} {\bibinfo  {journal} {Optik}\ }\textbf {\bibinfo {volume} {80}},\
  \bibinfo {pages} {53} (\bibinfo {year} {1988})}\BibitemShut {NoStop}%
\bibitem [{\citenamefont {M{\"u}ller}\ and\ \citenamefont
  {Enderlein}(2010)}]{mueller10}%
  \BibitemOpen
  \bibfield  {author} {\bibinfo {author} {\bibfnamefont {Claus~B.}\
  \bibnamefont {M{\"u}ller}}\ and\ \bibinfo {author} {\bibfnamefont {J{\"o}rg}\
  \bibnamefont {Enderlein}},\ }\bibfield  {title} {\enquote {\bibinfo {title}
  {Image {{Scanning Microscopy}}},}\ }\href {\doibase
  10.1103/PhysRevLett.104.198101} {\bibfield  {journal} {\bibinfo  {journal}
  {Physical Review Letters}\ }\textbf {\bibinfo {volume} {104}},\ \bibinfo
  {pages} {198101} (\bibinfo {year} {2010})}\BibitemShut {NoStop}%
\bibitem [{\citenamefont {Gregor}\ and\ \citenamefont
  {Enderlein}(2019)}]{gregor19}%
  \BibitemOpen
  \bibfield  {author} {\bibinfo {author} {\bibfnamefont {Ingo}\ \bibnamefont
  {Gregor}}\ and\ \bibinfo {author} {\bibfnamefont {J{\"o}{\"o}rg}\
  \bibnamefont {Enderlein}},\ }\bibfield  {title} {\enquote {\bibinfo {title}
  {Image scanning microscopy},}\ }\href {\doibase 10.1016/j.cbpa.2019.05.011}
  {\bibfield  {journal} {\bibinfo  {journal} {Current Opinion in Chemical
  Biology}\ }\textbf {\bibinfo {volume} {51}},\ \bibinfo {pages} {74--83}
  (\bibinfo {year} {2019})}\BibitemShut {NoStop}%
\bibitem [{\citenamefont {Suzuki}\ \emph {et~al.}(2020)\citenamefont {Suzuki},
  \citenamefont {Yang},\ and\ \citenamefont {Hayashi}}]{suzuki20}%
  \BibitemOpen
  \bibfield  {author} {\bibinfo {author} {\bibfnamefont {Jun}\ \bibnamefont
  {Suzuki}}, \bibinfo {author} {\bibfnamefont {Yuxiang}\ \bibnamefont {Yang}},
  \ and\ \bibinfo {author} {\bibfnamefont {Masahito}\ \bibnamefont {Hayashi}},\
  }\bibfield  {title} {\enquote {\bibinfo {title} {Quantum state estimation
  with nuisance parameters},}\ }\href {\doibase 10.1088/1751-8121/ab8b78}
  {\bibfield  {journal} {\bibinfo  {journal} {Journal of Physics A:
  Mathematical and Theoretical}\ }\textbf {\bibinfo {volume} {53}},\ \bibinfo
  {pages} {453001} (\bibinfo {year} {2020})}\BibitemShut {NoStop}%
\bibitem [{\citenamefont {Shumway}\ and\ \citenamefont
  {Stoffer}(2017)}]{shumway_stoffer}%
  \BibitemOpen
  \bibfield  {author} {\bibinfo {author} {\bibfnamefont {Robert~H.}\
  \bibnamefont {Shumway}}\ and\ \bibinfo {author} {\bibfnamefont {David~S.}\
  \bibnamefont {Stoffer}},\ }\href {\doibase 10.1007/978-3-319-52452-8} {\emph
  {\bibinfo {title} {Time Series Analysis and Its Applications}}},\ \bibinfo
  {edition} {4th}\ ed.\ (\bibinfo  {publisher} {Springer},\ \bibinfo {address}
  {Cham, Switzerland},\ \bibinfo {year} {2017})\BibitemShut {NoStop}%
\bibitem [{\citenamefont {Ng}\ \emph {et~al.}(2016)\citenamefont {Ng},
  \citenamefont {Ang}, \citenamefont {Wheatley}, \citenamefont {Yonezawa},
  \citenamefont {Furusawa}, \citenamefont {Huntington},\ and\ \citenamefont
  {Tsang}}]{ng16}%
  \BibitemOpen
  \bibfield  {author} {\bibinfo {author} {\bibfnamefont {Shilin}\ \bibnamefont
  {Ng}}, \bibinfo {author} {\bibfnamefont {Shan~Zheng}\ \bibnamefont {Ang}},
  \bibinfo {author} {\bibfnamefont {Trevor~A.}\ \bibnamefont {Wheatley}},
  \bibinfo {author} {\bibfnamefont {Hidehiro}\ \bibnamefont {Yonezawa}},
  \bibinfo {author} {\bibfnamefont {Akira}\ \bibnamefont {Furusawa}}, \bibinfo
  {author} {\bibfnamefont {Elanor~H.}\ \bibnamefont {Huntington}}, \ and\
  \bibinfo {author} {\bibfnamefont {Mankei}\ \bibnamefont {Tsang}},\ }\bibfield
   {title} {\enquote {\bibinfo {title} {Spectrum analysis with quantum
  dynamical systems},}\ }\href {\doibase 10.1103/PhysRevA.93.042121} {\bibfield
   {journal} {\bibinfo  {journal} {Physical Review A}\ }\textbf {\bibinfo
  {volume} {93}},\ \bibinfo {pages} {042121} (\bibinfo {year}
  {2016})}\BibitemShut {NoStop}%
\bibitem [{\citenamefont {Tsang}(2023{\natexlab{c}})}]{noise_spec_pra}%
  \BibitemOpen
  \bibfield  {author} {\bibinfo {author} {\bibfnamefont {Mankei}\ \bibnamefont
  {Tsang}},\ }\bibfield  {title} {\enquote {\bibinfo {title} {Quantum noise
  spectroscopy as an incoherent imaging problem},}\ }\href {\doibase
  10.1103/PhysRevA.107.012611} {\bibfield  {journal} {\bibinfo  {journal}
  {Physical Review A}\ }\textbf {\bibinfo {volume} {107}},\ \bibinfo {pages}
  {012611} (\bibinfo {year} {2023}{\natexlab{c}})}\BibitemShut {NoStop}%
\bibitem [{\citenamefont {Sui}\ and\ \citenamefont {Tsang}(2026)}]{sui26}%
  \BibitemOpen
  \bibfield  {author} {\bibinfo {author} {\bibfnamefont {Xinyi}\ \bibnamefont
  {Sui}}\ and\ \bibinfo {author} {\bibfnamefont {Mankei}\ \bibnamefont
  {Tsang}},\ }\href {\doibase 10.48550/arXiv.2604.11614} {\enquote {\bibinfo
  {title} {Spectrum analysis with quantum dynamical systems. {{II}}.
  {{Finite-time}} analysis},}\ } (\bibinfo {year} {2026}),\ \Eprint
  {http://arxiv.org/abs/2604.11614} {arXiv:2604.11614 [quant-ph]} \BibitemShut
  {NoStop}%
\bibitem [{\citenamefont {Padilla}\ \emph {et~al.}(2026)\citenamefont
  {Padilla}, \citenamefont {Sajjad}, \citenamefont {Saif},\ and\ \citenamefont
  {Guha}}]{padilla26}%
  \BibitemOpen
  \bibfield  {author} {\bibinfo {author} {\bibfnamefont {Isack}\ \bibnamefont
  {Padilla}}, \bibinfo {author} {\bibfnamefont {Aqil}\ \bibnamefont {Sajjad}},
  \bibinfo {author} {\bibfnamefont {Babak~N.}\ \bibnamefont {Saif}}, \ and\
  \bibinfo {author} {\bibfnamefont {Saikat}\ \bibnamefont {Guha}},\ }\bibfield
  {title} {\enquote {\bibinfo {title} {Superresolution {{Imaging}} with
  {{Entanglement-Enhanced Telescopy}}},}\ }\href {\doibase 10.1103/354q-ch63}
  {\bibfield  {journal} {\bibinfo  {journal} {Physical Review Letters}\
  }\textbf {\bibinfo {volume} {136}},\ \bibinfo {pages} {010803} (\bibinfo
  {year} {2026})}\BibitemShut {NoStop}%
\bibitem [{\citenamefont {Van~Trees}(2001)}]{vantrees}%
  \BibitemOpen
  \bibfield  {author} {\bibinfo {author} {\bibfnamefont {Harry~L.}\
  \bibnamefont {Van~Trees}},\ }\href@noop {} {\emph {\bibinfo {title}
  {Detection, Estimation, and Modulation Theory, Part {{I}}.}}}\ (\bibinfo
  {publisher} {John Wiley \& Sons},\ \bibinfo {address} {New York},\ \bibinfo
  {year} {2001})\BibitemShut {NoStop}%
\bibitem [{\citenamefont {Tsybakov}(2009)}]{tsybakov}%
  \BibitemOpen
  \bibfield  {author} {\bibinfo {author} {\bibfnamefont {Alexandre~B.}\
  \bibnamefont {Tsybakov}},\ }\href {\doibase 10.1007/b13794} {\emph {\bibinfo
  {title} {Introduction to Nonparametric Estimation}}}\ (\bibinfo  {publisher}
  {Springer},\ \bibinfo {address} {New York},\ \bibinfo {year}
  {2009})\BibitemShut {NoStop}%
\bibitem [{\citenamefont {Johnstone}(2019)}]{johnstone}%
  \BibitemOpen
  \bibfield  {author} {\bibinfo {author} {\bibfnamefont {Iain~M.}\ \bibnamefont
  {Johnstone}},\ }\href {https://imjohnstone.su.domains/GE_09_16_19.pdf} {\emph
  {\bibinfo {title} {Gaussian Estimation: {{Sequence}} and Wavelet Models}}},\
  \bibinfo {edition} {draft}\ ed.\ (\bibinfo  {publisher} {online},\ \bibinfo
  {year} {2019})\ \bibinfo {note} {{Retrieved 18 July 2026.
  \url{https://imjohnstone.su.domains/GE_09_16_19.pdf}}}\BibitemShut {NoStop}%
\bibitem [{\citenamefont {Meister}(2009)}]{meister}%
  \BibitemOpen
  \bibfield  {author} {\bibinfo {author} {\bibfnamefont {Alexander}\
  \bibnamefont {Meister}},\ }\href {\doibase 10.1007/978-3-540-87557-4} {\emph
  {\bibinfo {title} {Deconvolution Problems in Nonparametric Statistics}}}\
  (\bibinfo  {publisher} {Springer},\ \bibinfo {address} {Berlin},\ \bibinfo
  {year} {2009})\BibitemShut {NoStop}%
\bibitem [{\citenamefont {Tsang}\ \emph {et~al.}(2011)\citenamefont {Tsang},
  \citenamefont {Wiseman},\ and\ \citenamefont {Caves}}]{twc}%
  \BibitemOpen
  \bibfield  {author} {\bibinfo {author} {\bibfnamefont {Mankei}\ \bibnamefont
  {Tsang}}, \bibinfo {author} {\bibfnamefont {Howard~M.}\ \bibnamefont
  {Wiseman}}, \ and\ \bibinfo {author} {\bibfnamefont {Carlton~M.}\
  \bibnamefont {Caves}},\ }\bibfield  {title} {\enquote {\bibinfo {title}
  {Fundamental {{Quantum Limit}} to {{Waveform Estimation}}},}\ }\href
  {\doibase 10.1103/PhysRevLett.106.090401} {\bibfield  {journal} {\bibinfo
  {journal} {Physical Review Letters}\ }\textbf {\bibinfo {volume} {106}},\
  \bibinfo {pages} {090401} (\bibinfo {year} {2011})}\BibitemShut {NoStop}%
\bibitem [{\citenamefont {Chen}\ and\ \citenamefont {Moitra}(2021)}]{chen21}%
  \BibitemOpen
  \bibfield  {author} {\bibinfo {author} {\bibfnamefont {Sitan}\ \bibnamefont
  {Chen}}\ and\ \bibinfo {author} {\bibfnamefont {Ankur}\ \bibnamefont
  {Moitra}},\ }\bibfield  {title} {\enquote {\bibinfo {title} {Algorithmic
  foundations for the diffraction limit},}\ }in\ \href {\doibase
  10.1145/3406325.3451078} {\emph {\bibinfo {booktitle} {Proceedings of the
  53rd Annual {{ACM SIGACT}} Symposium on Theory of Computing}}},\ \bibinfo
  {series and number} {Stoc 2021}\ (\bibinfo  {publisher} {Association for
  Computing Machinery},\ \bibinfo {address} {New York, NY, USA},\ \bibinfo
  {year} {2021})\ pp.\ \bibinfo {pages} {490--503}\BibitemShut {NoStop}%
\bibitem [{\citenamefont {Sch{\"u}tzenberger}(1957)}]{schutzenberger57}%
  \BibitemOpen
  \bibfield  {author} {\bibinfo {author} {\bibfnamefont {M.~P.}\ \bibnamefont
  {Sch{\"u}tzenberger}},\ }\bibfield  {title} {\enquote {\bibinfo {title} {A
  generalization of the {{Fr\'echet-Cram\'er}} inequality to the case of
  {{Bayes}} estimation},}\ }\href {\doibase 10.1090/S0002-9904-1957-10102-5}
  {\bibfield  {journal} {\bibinfo  {journal} {Bull. Amer. Math. Soc.}\ }\textbf
  {\bibinfo {volume} {63}},\ \bibinfo {pages} {142} (\bibinfo {year}
  {1957})}\BibitemShut {NoStop}%
\bibitem [{\citenamefont {Gill}\ and\ \citenamefont {Levit}(1995)}]{gill95}%
  \BibitemOpen
  \bibfield  {author} {\bibinfo {author} {\bibfnamefont {Richard~D.}\
  \bibnamefont {Gill}}\ and\ \bibinfo {author} {\bibfnamefont {Boris~Y.}\
  \bibnamefont {Levit}},\ }\bibfield  {title} {\enquote {\bibinfo {title}
  {Applications of the {{Van Trees}} inequality: A {{Bayesian Cram\'er-Rao}}
  bound},}\ }\href {http://www.jstor.org/stable/3318681} {\bibfield  {journal}
  {\bibinfo  {journal} {Bernoulli. Official Journal of the Bernoulli Society
  for Mathematical Statistics and Probability}\ }\textbf {\bibinfo {volume}
  {1}},\ \bibinfo {pages} {59--79} (\bibinfo {year} {1995})},\ \Eprint
  {http://arxiv.org/abs/3318681} {3318681} \BibitemShut {NoStop}%
\bibitem [{\citenamefont {Tsang}(2020)}]{bcrb_pra}%
  \BibitemOpen
  \bibfield  {author} {\bibinfo {author} {\bibfnamefont {Mankei}\ \bibnamefont
  {Tsang}},\ }\bibfield  {title} {\enquote {\bibinfo {title} {Physics-inspired
  forms of the {{Bayesian Cram\'er-Rao}} bound},}\ }\href {\doibase
  10.1103/PhysRevA.102.062217} {\bibfield  {journal} {\bibinfo  {journal}
  {Physical Review A}\ }\textbf {\bibinfo {volume} {102}},\ \bibinfo {pages}
  {062217} (\bibinfo {year} {2020})}\BibitemShut {NoStop}%
\bibitem [{\citenamefont {Tsang}(2023{\natexlab{d}})}]{zzb_gen}%
  \BibitemOpen
  \bibfield  {author} {\bibinfo {author} {\bibfnamefont {Mankei}\ \bibnamefont
  {Tsang}},\ }\href {\doibase 10.48550/arXiv.2306.08660} {\enquote {\bibinfo
  {title} {Ziv-{{Zakai-type}} error bounds for general statistical models},}\ }
  (\bibinfo {year} {2023}{\natexlab{d}}),\ \Eprint
  {http://arxiv.org/abs/2306.08660} {arXiv:2306.08660 [math.ST]} \BibitemShut
  {NoStop}%
\bibitem [{\citenamefont {Bemis}(2016)}]{bemis21}%
  \BibitemOpen
  \bibfield  {author} {\bibinfo {author} {\bibfnamefont {Robert}\ \bibnamefont
  {Bemis}},\ }\href
  {https://www.mathworks.com/matlabcentral/fileexchange/17555-light-bartlein-color-maps}
  {\enquote {\bibinfo {title} {Light {{Bartlein Color Maps}}},}\ }\bibinfo
  {howpublished} {MATLAB Central File Exchange (online)} (\bibinfo {year}
  {2016}),\ \bibinfo {note}
  {\url{https://www.mathworks.com/matlabcentral/fileexchange/17555-light-bartlein-color-maps}.
  Retrieved May 30, 2021}\BibitemShut {NoStop}%
\bibitem [{\citenamefont {Mandel}\ and\ \citenamefont {Wolf}(1995)}]{mandel}%
  \BibitemOpen
  \bibfield  {author} {\bibinfo {author} {\bibfnamefont {Leonard}\ \bibnamefont
  {Mandel}}\ and\ \bibinfo {author} {\bibfnamefont {Emil}\ \bibnamefont
  {Wolf}},\ }\href {\doibase 10.1017/CBO9781139644105} {\emph {\bibinfo {title}
  {Optical Coherence and Quantum Optics}}}\ (\bibinfo  {publisher} {Cambridge
  University Press},\ \bibinfo {address} {Cambridge},\ \bibinfo {year}
  {1995})\BibitemShut {NoStop}%
\bibitem [{\citenamefont {Fomenko}(1995)}]{fomenko}%
  \BibitemOpen
  \bibfield  {author} {\bibinfo {author} {\bibfnamefont {A.~T.}\ \bibnamefont
  {Fomenko}},\ }\href@noop {} {\emph {\bibinfo {title} {Symplectic
  Geometry}}},\ \bibinfo {edition} {2nd}\ ed.\ (\bibinfo  {publisher} {{Gordon
  and Breach}},\ \bibinfo {address} {Luxembourg},\ \bibinfo {year}
  {1995})\BibitemShut {NoStop}%
\bibitem [{\citenamefont {Tsang}(2018{\natexlab{b}})}]{minimax_jmo}%
  \BibitemOpen
  \bibfield  {author} {\bibinfo {author} {\bibfnamefont {Mankei}\ \bibnamefont
  {Tsang}},\ }\bibfield  {title} {\enquote {\bibinfo {title} {Conservative
  classical and quantum resolution limits for incoherent imaging},}\ }\href
  {\doibase 10.1080/09500340.2017.1377306} {\bibfield  {journal} {\bibinfo
  {journal} {Journal of Modern Optics}\ }\textbf {\bibinfo {volume} {65}},\
  \bibinfo {pages} {1385--1391} (\bibinfo {year}
  {2018}{\natexlab{b}})}\BibitemShut {NoStop}%
\bibitem [{\citenamefont {Tseng}(1949)}]{tseng49}%
  \BibitemOpen
  \bibfield  {author} {\bibinfo {author} {\bibfnamefont {Yuan-Yung}\
  \bibnamefont {Tseng}},\ }\bibfield  {title} {\enquote {\bibinfo {title}
  {Generalized inverses of unbounded operators between two unitary spaces},}\
  }\href@noop {} {\bibfield  {journal} {\bibinfo  {journal} {Doklady Akad. Nauk
  SSSR (N.S.)}\ }\textbf {\bibinfo {volume} {67}},\ \bibinfo {pages} {431--434}
  (\bibinfo {year} {1949})}\BibitemShut {NoStop}%
\bibitem [{\citenamefont {Simon}(2005)}]{simon10}%
  \BibitemOpen
  \bibfield  {author} {\bibinfo {author} {\bibfnamefont {Barry}\ \bibnamefont
  {Simon}},\ }\href {\doibase 10.1090/surv/120} {\emph {\bibinfo {title} {Trace
  {{Ideals}} and {{Their Applications}}}}},\ \bibinfo {edition} {2nd}\ ed.,\
  \bibinfo {series} {Mathematical {{Surveys}} and {{Monographs}}}, Vol.\
  \bibinfo {volume} {120}\ (\bibinfo  {publisher} {American Mathematical
  Society},\ \bibinfo {address} {Providence, Rhode Island},\ \bibinfo {year}
  {2005})\BibitemShut {NoStop}%
\bibitem [{\citenamefont {Horn}\ and\ \citenamefont {Johnson}(1991)}]{horn2}%
  \BibitemOpen
  \bibfield  {author} {\bibinfo {author} {\bibfnamefont {Roger~A.}\
  \bibnamefont {Horn}}\ and\ \bibinfo {author} {\bibfnamefont {Charles~R.}\
  \bibnamefont {Johnson}},\ }\href {\doibase 10.1017/CBO9780511840371} {\emph
  {\bibinfo {title} {Topics in Matrix Analysis}}}\ (\bibinfo  {publisher}
  {Cambridge University Press},\ \bibinfo {address} {Cambridge, England, UK},\
  \bibinfo {year} {1991})\BibitemShut {NoStop}%
\bibitem [{\citenamefont {Tsang}(2025)}]{tsang_intro}%
  \BibitemOpen
  \bibfield  {author} {\bibinfo {author} {\bibfnamefont {Mankei}\ \bibnamefont
  {Tsang}},\ }\href@noop {} {\emph {\bibinfo {title} {Introduction to Quantum
  Optics}}}\ (\bibinfo  {publisher} {Public},\ \bibinfo {address} {Online},\
  \bibinfo {year} {2025})\ \bibinfo {note}
  {\href{https://www.ece.nus.edu.sg/stfpage/tmk/Quantum_Optics_v0.5.pdf}{https://www.ece.nus.edu.sg/stfpage/tmk/Quantum\_Optics\_v0.5.pdf}}\BibitemShut
  {NoStop}%
\bibitem [{\citenamefont {Yang}\ \emph {et~al.}(2017)\citenamefont {Yang},
  \citenamefont {Nair}, \citenamefont {Tsang}, \citenamefont {Simon},\ and\
  \citenamefont {Lvovsky}}]{yang17}%
  \BibitemOpen
  \bibfield  {author} {\bibinfo {author} {\bibfnamefont {Fan}\ \bibnamefont
  {Yang}}, \bibinfo {author} {\bibfnamefont {Ranjith}\ \bibnamefont {Nair}},
  \bibinfo {author} {\bibfnamefont {Mankei}\ \bibnamefont {Tsang}}, \bibinfo
  {author} {\bibfnamefont {Christoph}\ \bibnamefont {Simon}}, \ and\ \bibinfo
  {author} {\bibfnamefont {Alexander~I.}\ \bibnamefont {Lvovsky}},\ }\bibfield
  {title} {\enquote {\bibinfo {title} {Fisher information for far-field linear
  optical superresolution via homodyne or heterodyne detection in a
  higher-order local oscillator mode},}\ }\href {\doibase
  10.1103/PhysRevA.96.063829} {\bibfield  {journal} {\bibinfo  {journal}
  {Physical Review A}\ }\textbf {\bibinfo {volume} {96}},\ \bibinfo {pages}
  {063829} (\bibinfo {year} {2017})}\BibitemShut {NoStop}%
\end{thebibliography}%

\end{document}